\documentclass{amsart}

\usepackage{amsmath,amssymb,amsfonts,amsthm,mathtools}
\usepackage{mathrsfs}
\usepackage{bm}
\usepackage{cancel}
\usepackage{xfrac}
\usepackage{xcolor}
\newcommand{\rank}{\operatorname{rank}}
\DeclareMathOperator{\tr}{tr}

\DeclareMathOperator{\Tr}{Tr}   
\DeclareMathOperator{\range}{range}
\DeclareMathOperator{\vectorise}{vec}

\usepackage{tikz-cd}

\usepackage[a4paper,margin=1.2in]{geometry}
\usepackage{hyperref}

\numberwithin{equation}{section}

\newtheorem{theorem}{Theorem}[section]
\newtheorem{lemma}[theorem]{Lemma}
\newtheorem{proposition}[theorem]{Proposition}
\newtheorem{corollary}[theorem]{Corollary}
\newtheorem{remark}[theorem]{Remark}

\theoremstyle{definition}
\newtheorem{definition}[theorem]{Definition}
\newtheorem{example}[theorem]{Example}

\theoremstyle{plain}
\usepackage[numbers]{natbib}

\title{Continuity of certain quantum channel divergences}
\author{Morgan Jones}
\address{Department of Mathematics, University of Oklahoma, Norman, Oklahoma, USA}
\email{morgan.d.jones-2@ou.edu}
\date{October 2026}
\begin{document}
\begin{abstract}
The geometric channel fidelity, the geometric R\'enyi divergences and the Belavkin--Staszewski relative entropy give converse bounds for capacities, discrimination and estimation. Channels are known only approximately, yet two pairs of channels with equal sums can be arbitrarily close in the diamond norm and have fidelities $0$ and $1$. The remedy is the Arveson--Radon--Nikodym derivative $T$ of one map with respect to the sum $\Omega$, whose transpose is the environment effect selecting that map in a Stinespring dilation of $\Omega$. In $T$, these quantities and all maximal $f$-divergences of operator convex $f$ are continuous, with explicit dimension-free moduli; for two pairs of channels with the same sum the fidelities differ by at most $2\|T_1-T_2\|_\infty^{1/2}$, which gives H\"older continuity in the diamond norm on such splittings. The fidelity is continuous at a pair of completely positive maps exactly when the Choi matrix of the sum is invertible or the fidelity vanishes: noisy channels of full Kraus rank are continuity points, unitary gates are not.
\end{abstract}
\maketitle

\section{Introduction}\label{sec:intro}

The geometric channel fidelity of $\Phi,\Psi\in\mathrm{CP}(\mathcal{X},\mathcal{Y})$ is $\widehat{F}(\Phi,\Psi)=\inf_{\rho}\tr(\Phi_\rho\#\Psi_\rho)$, where $\Phi_\rho=(\Phi\otimes\mathbf{1}_{\mathrm{L}(\mathcal{X})})(\rho)$ for states $\rho$ on $\mathcal{X}\otimes\mathcal{X}$ and $\#$ is the matrix geometric mean (Katariya and Wilde \cite{KatariyaWilde2021} use its square); the geometric R\'enyi divergences and the Belavkin--Staszewski relative entropy arise in the same way from maximal $f$-divergences \cite{Matsumoto2018newf}. These channel divergences give converse bounds for channel capacities \cite{FangFawzi2021}, discrimination and estimation \cite{KatariyaWilde2021}. In practice a channel is known only approximately, and the relevant worst-case error is the diamond distance $|||\cdot|||_1$ \cite{KuengLongDohertyFlammia2016}, so such bounds need continuity in this norm. It fails: if $\dim\mathcal{Y}\geq2$, for every $\eta>0$ there are replacement channels with $\Phi_1+\Psi_1=\Phi_2+\Psi_2$ and $|||\Phi_1-\Phi_2|||_1=|||\Psi_1-\Psi_2|||_1<\eta$, but $\widehat{F}(\Phi_1,\Psi_1)=0$ and $\widehat{F}(\Phi_2,\Psi_2)=1$ (Corollary~\ref{impossibility corollary}); on the same pairs every bounded maximal $f$-divergence with $f(1)=0$ jumps by $f(0^+)+f'(\infty)$ (Corollary~\ref{maxf impossibility}).

The remedy is to record how a pair splits its sum. Let $\Omega=\Phi+\Psi$ have linearly independent Kraus operators $V_1,\dots,V_r$, and let $W$ have columns $\vectorise(V_k)$. By Arveson's Radon--Nikodym theorem (Theorem~\ref{thm:arveson}), $\Phi(X)=\sum_{k,l}T_{kl}V_kXV_l^*$ for a unique $0\leq T\leq\mathbf{1}_{\mathbb{C}^r}$ on $\mathbb{C}^r$, and $\mathbf{1}_{\mathbb{C}^r}-T$ gives $\Psi$. Up to normalisation, $(\Phi,\Psi)$ is a two-outcome instrument implementing $\Omega$, with which-branch effect $T^\top$ on the environment of its Stinespring dilation (Theorem~\ref{effect theorem}). By an identity essentially due to Kian \cite{kian2026relative}, $J(\Phi)\,\sigma\,J(\Psi)=Wg_\sigma(T)W^*$ with $g_\sigma(t)=t\,\sigma\,(1-t)$ for every Kubo--Ando mean $\sigma$ (Theorem~\ref{fundamental principle}), so all quantities above are functions of $T$. Hence, by Theorem~\ref{main theorem},
\begin{equation}\label{eq:intro fidelity}
\big|\widehat{F}(\Phi_1,\Psi_1)-\widehat{F}(\Phi_2,\Psi_2)\big|\leq\|\Omega^*(\mathbf{1}_{\mathcal{Y}})\|_\infty\,\|T_1-T_2\|_\infty^{1/2},
\end{equation}
with no dimensional factor. Here $\|\Omega^*(\mathbf{1}_{\mathcal{Y}})\|_\infty=2$ if $\Phi_i$ and $\Psi_i$ are channels, and $=1$ if $\Omega$ itself is a channel, as for an instrument. In the counterexample $T_1$ is a projection, on which $g_\#$ vanishes, while $T_2=\frac12\mathbf{1}_{\mathbb{C}^r}$. Analogous bounds hold for all means, the geometric R\'enyi quasi-entropies $\widehat{Q}_\alpha$, $\alpha\in(0,1)$, and bounded maximal $f$-divergences (Theorems~\ref{Modulus of continuity results}, \ref{thm:renyi continuity}, \ref{thm:maxf}). As $\nu(\Omega)^2\|T_1-T_2\|_\infty\leq|||\Phi_1-\Phi_2|||_1$, with $\nu(\Omega)$ the least operator norm of a unit combination of the $V_k$ (Theorem~\ref{thm:two sided}), $\widehat{F}$ is H\"older continuous in the diamond norm on the splittings of $\Omega$, with constant $\|\Omega^*(\mathbf{1}_{\mathcal{Y}})\|_\infty/\nu(\Omega)$ (Corollary~\ref{cor:cb continuity}); in the counterexample $\nu(\Omega)\to0$.

On pairs of completely positive maps, Theorem~\ref{thm:dichotomy} shows exactly where continuity survives:
\begin{equation}\label{eq:intro dichotomy}
\widehat{F}\ \text{is continuous at }(\Phi,\Psi)\iff J(\Phi+\Psi)\ \text{is invertible or}\ \widehat{F}(\Phi,\Psi)=0 .
\end{equation}
Likewise for $\widehat{Q}_\alpha$, with analogues for all means other than the weighted arithmetic means and, under a rank condition, for bounded maximal $f$-divergences. Pairs with a channel of full Kraus rank, e.g.\ generalised amplitude damping with nonzero damping and temperature \cite{KhatriSharmaWilde2020}, are continuity points; but an ideal gate $U(\cdot)U^*$ has $\widehat{F}=1$ with itself and $\widehat{F}=0$ with every mis-rotated $U'(\cdot)U'^*$, $U'\notin\mathbb{C}U$, however close (Corollary~\ref{cor:channels dichotomy}).
\begin{itemize}
\item \emph{Unbounded divergences.} If $D_{\max}(\Phi_i\|\Psi_i)\leq\log((1-\epsilon)/\epsilon)$ for both pairs, as for a detector of efficiency $1-\epsilon$, the Belavkin--Staszewski relative entropy and $\widehat{Q}_\alpha$, $\alpha\in(1,2]$, obey continuity estimates in $T$ whose constants blow up as $\epsilon\downarrow0$ (Theorem~\ref{thm:unbounded}, Corollaries~\ref{cor:BS} and~\ref{cor:renyi large alpha}). Every maximal $f$-divergence with operator convex $f$ is continuous in $T$ with values in $(-\infty,+\infty]$ (Theorem~\ref{thm:extended continuity}).
\item \emph{Explicit moduli} for every mean and operator convex $f$, from exact operator Lipschitz constants of one-pole kernels; for operator convex $f$ on $[\epsilon,1-\epsilon]$ the Lipschitz constants obtained are within a factor $3$ of optimal, and in many cases exact (Section~\ref{sec:loewner}).
\item \emph{Different sums and supports.} Changes of the sum enter through the Thompson metric; for arbitrary pairs of channels, $|\widehat{F}(\Phi_1,\Psi_1)-\widehat{F}(\Phi_2,\Psi_2)|\leq(11/2)\,\delta^{1/2}$, with $\delta$ their distance relative to $\Phi_1+\Psi_1+\Phi_2+\Psi_2$ (Theorem~\ref{thm:four maps}, Corollary~\ref{cor:universal}). Support changes, as under leakage \cite{WoodGambetta2018}, enter through the Lebesgue decomposition of the sums (Theorem~\ref{thm:different supports}).
\end{itemize}

\emph{Related work.} Arveson \cite{arveson1969subalgebras} proved the Radon--Nikodym theorem for completely positive maps; Raginsky \cite{Raginsky2003} gave its Kraus form, the upper bound of Lemma~\ref{lipschitz inequality} and a dimension-dependent lower bound. Kian \cite{kian2026relative} defined Kubo--Ando means of such maps via these derivatives, and Okayasu studied their geometric mean and Lebesgue decomposition \cite{Okayasu2026}. Okayasu \cite{Okayasu2026maxf} also introduced operator-valued maximal $f$-divergences of completely positive maps, with criteria for finiteness and boundedness \cite[Proposition~4.11]{Okayasu2026maxf} and joint lower semicontinuity \cite[Theorem~4.7]{Okayasu2026maxf}; in finite dimensions our closed form is the largest eigenvalue of his divergence of the adjoint maps (Section~\ref{sec:all f}). Matsumoto \cite{matsumoto2010reverse,Matsumoto2018newf} introduced his fidelity and maximal $f$-divergences; Katariya and Wilde \cite{KatariyaWilde2021} and Fang and Fawzi \cite{FangFawzi2021} gave closed forms for geometric channel divergences. Maximal $f$-divergences and their lower semicontinuity are studied in \cite{HiaiMosonyi2017,Hiai2019maximal}, continuity of R\'enyi divergences in \cite{MosonyiHiai2024}, and dimension-dependent continuity bounds for Belavkin--Staszewski quantities of states in \cite{BluhmCapelGondolfPerezHernandez2023}. To our knowledge, the quantitative continuity of these divergences and means in $T$, the dimension-free lower bound in the two-sided estimate, the results for different sums and supports, and the dichotomy are new.

Sections~\ref{sec:prelim}--\ref{sec:fidelity} give preliminaries and the counterexample, Sections~\ref{sec:norms}--\ref{sec:all f} continuity in $T$, and Sections~\ref{sec:four maps}--\ref{sec:dichotomy} the diamond norm, explicit moduli, different sums and supports, and the dichotomy. Remarks headed ``Physical interpretation'' explain the main notions operationally.

\section{Preliminaries}\label{sec:prelim}
\emph{Notation.} We follow \cite{watrous2018theory}. Complex Euclidean spaces are denoted by calligraphic letters, $\mathbf{1}_{\mathcal{V}}$ is the identity operator on $\mathcal{V}$, and $\mathbf{1}_{\mathrm{L}(\mathcal{V})}$ is the identity map on $\mathrm{L}(\mathcal{V})$; the space is always indicated, for instance $\mathbf{1}_{\mathbb{C}^r}$ for the identity on the space $\mathbb{C}^r$ of an ARN derivative. We write $\mathbf{j}=(1,\dots,1)^\top$ for the all-ones vector, of the length dictated by the context, so that $\mathbf{j}\mathbf{j}^\top$ is the matrix all of whose entries equal $1$; it should not be confused with an identity operator. Finally, $\boldsymbol{\delta}_x$ denotes the unit point mass at a point $x$.

\subsection{The Arveson--Radon--Nikodym theorem}
Given two completely positive maps, \cite{arveson1969subalgebras} showed that one can construct a kind of Radon--Nikodym derivative analogue. This is used in \cite{kian2026relative} to construct the mean of completely positive maps. We will repeat the construction here in the finite-dimensional setting, adjusting the notation to be compatible with Watrous \cite{watrous2018theory}. For now we recall the important facts for the Arveson--Radon--Nikodym (ARN) theorem. The following follows from \cite{arveson1969subalgebras} with some objects changed to fit the conventions of \cite{kian2026relative}. We show the proof in the finite-dimensional case. This version uses the Kraus form of the ARN theorem \cite{Raginsky2003}. The following lemma is a special case of the Douglas factorisation lemma \cite{Douglas1966}; we include a proof for completeness.
 
\begin{lemma}\label{lem:factor}
Let $W\in \mathrm{L}(\mathbb{C}^r,\mathcal{V})$ be injective and let $Y\in \mathrm{L}(\mathcal{V})$ satisfy $0\leq Y \leq WW^*$.
Then there is a unique $T\in \mathrm{L}(\mathbb{C}^r)$ with $Y=WTW^*$, and it satisfies $0\leq T\leq\mathbf{1}_{\mathbb{C}^r}$.
\end{lemma}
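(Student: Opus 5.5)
Since $W$ is injective, $W^*W$ is invertible on $\mathbb{C}^r$, so $W$ has the left inverse $W^+=(W^*W)^{-1}W^*$ with $W^+W=\mathbf{1}_{\mathbb{C}^r}$. The plan is to define $T$ directly through this left inverse, check that it reproduces $Y$, and then read off uniqueness and the operator bounds.

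\emph{Uniqueness.} If $WTW^*=WT'W^*$, multiplying on the left by $W^+$ and on the right by $(W^+)^*$ gives $T=T'$, because $W^+W=\mathbf{1}_{\mathbb{C}^r}$.

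\emph{Existence.} Set $T=W^+Y(W^+)^*$. To show $WTW^*=Y$, note that $WW^+=P$ is the orthogonal projection onto $\range W$; indeed it is idempotent and self-adjoint. Then $WTW^*=PYP$, so it suffices to show $\range Y\subseteq\range W$, i.e.\ $PY=Y$. From $0\le Y\le WW^*$ we get $\langle v,Yv\rangle\le\|W^*v\|^2$. Hence if $v\perp\range W$, then $W^*v=0$, so $\langle v,Yv\rangle=0$. Positivity of $Y$ then forces $Y^{1/2}v=0$, and so $Yv=0$. Thus $\ker W^*\subseteq\ker Y$. Taking orthogonal complements, and using that $Y$ is self-adjoint, gives $\range Y\subseteq\range W$. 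Therefore $Y(\mathbf{1}-P)=0$ and $(\mathbf{1}-P)Y=0$, which yields $PYP=Y$. This is the only step with real content, namely controlling the range of $Y$ via the order inequality, and it is handled by the kernel argument above.

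\emph{Bounds.} $T\ge0$ is immediate, since $T$ is a congruence of $Y\ge0$. For $T\le\mathbf{1}_{\mathbb{C}^r}$, conjugate the inequality $Y\le WW^*$ by $W^+$: this gives $T=W^+Y(W^+)^*\le W^+WW^*(W^+)^*=(W^+W)(W^+W)^*=\mathbf{1}_{\mathbb{C}^r}$. Congruence preserves the Loewner order, so the inequality holds.
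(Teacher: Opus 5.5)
Your proposal is correct and follows essentially the same route as the paper: you also take $T=W^+Y(W^+)^*$, use the kernel argument $\ker W^*\subseteq\ker Y$ to get $Y=PYP$ with $P=WW^+$, obtain $0\leq T\leq\mathbf{1}_{\mathbb{C}^r}$ by congruence, and prove uniqueness by multiplying by $W^+$ and $(W^+)^*$. The only difference is that you state $\range Y\subseteq\range W$ explicitly via orthogonal complements, where the paper derives $Y=PY$ by taking adjoints of $Y=YP$; both are equivalent.
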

 
\begin{proof}
Since $W$ has full column rank, $W^*W$ is invertible, and the Moore--Penrose pseudoinverse $W^+=(W^*W)^{-1}W^*$ satisfies $W^+W=\mathbf{1}_{\mathbb{C}^r}$. The operator
$P=WW^+$ is a self-adjoint orthogonal projection onto $\range(W)$.
 
If $W^*x=0$, then
$0\leq x^*Yx\leq x^*WW^*x=\|W^*x\|^2=0$, so $Y^{1/2}x=0$ and therefore $Yx=0$. Hence $(\range W)^\perp=\ker W^*\subseteq\ker Y$. This implies $Y(\mathbf{1}_{\mathcal{V}}-P)=0$, that is, $Y=YP$. Taking adjoints and using that $P$ and $Y$ are self-adjoint, we obtain $Y=PY$, so that $Y=PYP$.
 
Let $T=W^+Y(W^+)^*$. This satisfies
\[WTW^*=(WW^+)Y(WW^+)^*=PYP=Y.\]
Since conjugation preserves the Loewner order, $0\leq Y\leq WW^*$ implies
\[0\leq T\leq W^+WW^*(W^+)^*=(W^+W)(W^+W)^*=\mathbf{1}_{\mathbb{C}^r}.\]
Finally, suppose that $T,T'\in \mathrm{L}(\mathbb{C}^r)$ satisfy $WTW^*=WT'W^*$. Multiplying by $W^+$ on the left and by $(W^+)^*$ on the right, and using $W^+W=\mathbf{1}_{\mathbb{C}^r}$, we obtain $T=T'$.
\end{proof}
 
\begin{theorem}[Arveson's Radon--Nikodym theorem, Kraus form]\label{thm:arveson}
Let $\Phi\in\mathrm{CP}(\mathcal{X},\mathcal{Y})$ with $\Phi(X)=\sum_{k=1}^rV_kXV_k^*$, where $V_1,\dots,V_r\in \mathrm{L}(\mathcal{X},\mathcal{Y})$ are
linearly independent. For $T\in \mathrm{L}(\mathbb{C}^r)$ define the linear map
\[\Psi_T(X)=\sum_{k,l=1}^rT_{kl}V_kXV_l^*.\]
A linear map $\Psi:\mathrm{L}(\mathcal{X})\to \mathrm{L}(\mathcal{Y})$ satisfies 
\[0\leq \Psi\leq \Phi\] in the completely positive ordering (that is, $\Psi$ and $\Phi-\Psi$ are completely positive) if and only if $\Psi=\Psi_T$ for some $T\in \mathrm{L}(\mathbb{C}^r)$ satisfying $0\leq T\leq\mathbf{1}_{\mathbb{C}^r}$. This $T$ is unique. Moreover, with the injective operator $W=[\vectorise(V_1)\ \cdots\ \vectorise(V_r)]$, we have $J(\Psi_T)=WTW^*$ for every $T\in\mathrm{L}(\mathbb{C}^r)$.
\end{theorem}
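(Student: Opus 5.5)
The plan is to pass to Choi matrices and reduce everything to Lemma~\ref{lem:factor}, using Choi's theorem: a linear map is completely positive iff its Choi matrix is positive semidefinite.

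The first step is the closing identity $J(\Psi_T)=WTW^*$ for arbitrary $T$. With Watrous's conventions, $J(\Psi)=\sum_{a,b}\Psi(E_{ab})\otimes E_{ab}$. For a single term $X\mapsto AXB^*$ with $A,B\in\mathrm{L}(\mathcal{X},\mathcal{Y})$, the standard identity $J(X\mapsto AXB^*)=\vectorise(A)\vectorise(B)^*$ holds. By linearity,
\[
J(\Psi_T)=\sum_{k,l}T_{kl}\vectorise(V_k)\vectorise(V_l)^*=WTW^*,
\]
since $We_k=\vectorise(V_k)$. Linear independence of the $V_k$ is exactly linear independence of the columns of $W$, so $W$ is injective. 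In particular $J(\Phi)=WW^*$, taking $T=\mathbf{1}_{\mathbb{C}^r}$.

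For the ``if'' direction, take $0\le T\le\mathbf{1}_{\mathbb{C}^r}$. Then $J(\Psi_T)=WTW^*\ge0$ and $J(\Phi-\Psi_T)=W(\mathbf{1}_{\mathbb{C}^r}-T)W^*\ge0$, because conjugation preserves positivity. Choi's theorem then gives that $\Psi_T$ and $\Phi-\Psi_T$ are completely positive.

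For the ``only if'' direction, let $0\le\Psi\le\Phi$ in the completely positive order. Choi's theorem gives $0\le J(\Psi)\le J(\Phi)=WW^*$. Lemma~\ref{lem:factor}, applied with $Y=J(\Psi)$ and $\mathcal{V}=\mathcal{Y}\otimes\mathcal{X}$, produces a unique $T$ with $0\le T\le\mathbf{1}_{\mathbb{C}^r}$ and $J(\Psi)=WTW^*=J(\Psi_T)$. Since $J$ is a linear bijection, $\Psi=\Psi_T$.

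Uniqueness of $T$ follows in the same way. If $\Psi_T=\Psi_{T'}$, then $WTW^*=WT'W^*$, and the uniqueness clause of Lemma~\ref{lem:factor} (equivalently, multiplying by $W^+$ and $(W^+)^*$) gives $T=T'$.

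The only delicate point is bookkeeping rather than substance. The Choi/vectorisation conventions must match, so that the term $T_{kl}V_kXV_l^*$ yields $\vectorise(V_k)\vectorise(V_l)^*$ and not its transpose or conjugate; this is what makes $WTW^*$, rather than $WT^\top W^*$, the correct formula. I would verify this on the elementary case $X\mapsto AXB^*$ with $\vectorise(E_{ij})=e_i\otimes e_j$, and the remaining steps are routine.
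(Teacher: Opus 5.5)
Your proposal is correct and follows the paper's proof essentially step for step: you compute $J(\Psi_T)=WTW^*$ from the Choi matrix of $X\mapsto AXB^*$, use Choi's theorem in both directions, and invoke Lemma~\ref{lem:factor} together with injectivity of $J$ for existence and uniqueness. Your explicit check of the vectorisation convention, which is what gives $WTW^*$ rather than $WT^\top W^*$, is a sensible addition; the paper leaves that check to its citation of Watrous.
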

 
\begin{proof}
Let $W=[\vectorise(V_1)\ \cdots\ \vectorise(V_r)]$. Since $\vectorise$ is a linear bijection and the $V_k$ are
linearly independent, $W$ has full column rank. As the map $X\mapsto AXB^*$ has Choi matrix $\vectorise(A)\vectorise(B)^*$ \cite[Section~2.2.2]{watrous2018theory},
linearity of $J$ gives $J(\Psi_T)=WTW^*$ for every $T$, and in particular $J(\Phi)=WW^*$. If $0\leq\Psi\leq\Phi$, then by Choi's
theorem \cite{Choi1975} $0\leq J(\Psi)\leq J(\Phi)=WW^*$, so Lemma~\ref{lem:factor} yields a unique $T$ with $0\leq T\leq\mathbf{1}_{\mathbb{C}^r}$ and
$J(\Psi)=WTW^*=J(\Psi_T)$; injectivity of $J$ gives $\Psi=\Psi_T$, and uniqueness of $T$ follows from the uniqueness in
Lemma~\ref{lem:factor}. Conversely, if $0\leq T\leq\mathbf{1}_{\mathbb{C}^r}$, then $J(\Psi_T)=WTW^*\geq0$ and $J(\Phi-\Psi_T)=W(\mathbf{1}_{\mathbb{C}^r}-T)W^*\geq0$,
so $\Psi_T$ and $\Phi-\Psi_T$ are completely positive by Choi's theorem, that is, $0\leq\Psi_T\leq\Phi$.
\end{proof}
This theorem really says that we have an order preserving bijection
\[
    \{T\in \mathrm{L}(\mathbb{C}^r):0\leq T\leq \mathbf{1}_{\mathbb{C}^r}\}\leftrightarrow \{\Psi \in \mathrm{CP}(\mathcal{X},\mathcal{Y}):\Psi\leq \Phi\}.
\]
 
We make a definition which differs slightly from the general $C^*$-algebra version, this is purely out of convenience because in the finite-dimensional case, the corresponding commutant elements are decided by positive semidefinite operators \cite{Okayasu2026}. 
\begin{definition}
    Let $\Phi$ and $\Psi$ be as above. We say that the matrix $T$ is the ARN derivative of $\Psi$ with respect to $\Phi$ (relative to the Kraus decomposition $V_1,\dots,V_r$ of $\Phi$). 
\end{definition}
 
\begin{remark}
    Two Kraus decompositions $(V_k)_{k=1}^r$ and $(V'_j)_{j=1}^r$ of $\Phi$ by linearly independent operators are related by a unitary $U\in \mathrm{L}(\mathbb{C}^r)$, $V'_j=\sum_k U_{jk}V_k$ \cite[Section~2.2.2]{watrous2018theory}. A direct computation shows that the corresponding ARN derivatives of a fixed $\Psi\leq\Phi$ are related by $T'=\overline{U}\,T\,U^\top$. In particular $T$ and $T'$ have the same spectrum, and all unitarily invariant norms of $T$, and of differences $T_1-T_2$ of derivatives computed in the same Kraus decomposition, are independent of the decomposition chosen. 
\end{remark}

 For a completely positive map $\Omega$ with Kraus decomposition $\Omega(X)=\sum_{k=1}^rV_kXV_k^*$ by linearly independent operators $V_1,\dots,V_r$, let
\[
\nu(\Omega)=\min\Big\{\Big\|\sum_{k=1}^rc_kV_k\Big\|_\infty:\ c\in\mathbb{C}^r,\ \|c\|_2=1\Big\}>0 .
\]
This quantity relates the operator norm of a derivative to the completely bounded trace norm of the corresponding map; its properties are collected in Lemma~\ref{lem:nu} and Theorem~\ref{thm:two sided} in Section~\ref{sec:four maps}.

\subsection{The effect point of view}
 
One may represent a completely positive map as the compression of an operator which acts on the tensor product of the underlying space and some ancillary space. One can read the ARN derivative as saying that if $\Psi\leq \Phi$, then $\Psi$ can be understood as applying an effect on the ancilla space with respect to which the Stinespring dilation \cite{Stinespring1955} of $\Phi$ is written. 
Let $\Phi\in \mathrm{CP}(\mathcal{X},\mathcal{Y})$ have Kraus decomposition $\Phi(X)=\sum_{k=1}^rV_kXV_k^*$ with linearly independent $V_k$, let $\mathcal{Z}=\mathbb{C}^r$ with standard basis $e_1,\dots,e_r$, and let $S\in \mathrm{L}(\mathcal{X},\mathcal{Y}\otimes\mathcal{Z})$ be the Stinespring operator written in the style of  \cite[Section~2.2.2]{watrous2018theory}
\[Sx=\sum_{k=1}^r V_kx\otimes e_k,\]
so that $\Phi(X)=\Tr_{\mathcal{Z}}(SXS^*)$. We can then consider the complementary completely positive map \cite{DevetakShor2005} 
$\Phi^c(X)=\Tr_{\mathcal{Y}}(SXS^*)$. 
\begin{theorem}\label{effect theorem}
    For Hermitian $T\in \mathrm{L}(\mathcal{Z})$ we have 
    \begin{enumerate}
        \item $\Psi_T(X)=\Tr_{\mathcal{Z}}\left( (\mathbf{1}_{\mathcal{Y}}\otimes T^\top)SXS^* \right)$ for all $X\in \mathrm{L}(\mathcal{X})$;
        \item $\tr(\Psi_T(X))=\tr(T^\top\Phi^c(X))$;
        \item $\Psi_T^*(\mathbf{1}_{\mathcal{Y}})=S^*(\mathbf{1}_{\mathcal{Y}}\otimes T^\top)S=(\Phi^c)^*(T^\top)$.
    \end{enumerate}
 If $0\leq T\leq \mathbf{1}_{\mathcal{Z}}$, one can consider $\Psi_T$ as the outcome of applying $\Phi$ and measuring an effect $T^\top$ on the ancillary environment generated by $\Phi$. In particular, one can decompose each channel $\Phi$ as a sum $\sum_{i} \Psi_{T_i}$ where the $T_i^\top$ form a POVM on the environment $\mathcal{Z}$. Moreover, for $T\geq 0$ the completely bounded trace norm (which coincides with the induced trace norm for completely positive maps \cite{Paulsen2003}) satisfies 
 \[
     |||\Psi_T|||_1=\max_{\rho\in \mathrm{D}(\mathcal{X})}\tr(T^\top\Phi^c(\rho)).
 \]
\end{theorem}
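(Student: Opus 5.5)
The plan is to verify item (1) directly from the definition of $S$, and then obtain (2), (3) and the norm formula from it by taking traces and adjoints.

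For (1), write $SXS^*=\sum_{k,l}V_kXV_l^*\otimes e_ke_l^*$. Multiplying by $\mathbf{1}_{\mathcal{Y}}\otimes T^\top$ gives
\[
(\mathbf{1}_{\mathcal{Y}}\otimes T^\top)SXS^*=\sum_{k,l}V_kXV_l^*\otimes T^\top e_ke_l^*.
\]
The partial trace over $\mathcal{Z}$ then produces the scalars $\tr(T^\top e_ke_l^*)=e_l^*T^\top e_k=(T^\top)_{lk}=T_{kl}$, and this yields exactly $\Psi_T(X)$. The only point needing care is the index bookkeeping that explains why the transpose appears. Hermiticity of $T$ is not needed for this identity, but it makes $T^\top=\overline{T}$ Hermitian, so that $T^\top$ is a legitimate effect whenever $0\le T\le\mathbf{1}$.

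For (2), take the full trace in (1):
\[
\tr\bigl((\mathbf{1}\otimes T^\top)SXS^*\bigr)=\tr\bigl(T^\top\Tr_{\mathcal{Y}}(SXS^*)\bigr)=\tr(T^\top\Phi^c(X)),
\]
which is the defining property of the partial trace applied to an operator of the form $\mathbf{1}\otimes A$.

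For (3), pair with an arbitrary $X$. From (2) and cyclicity,
\[
\tr\bigl(\Psi_T^*(\mathbf{1})X\bigr)=\tr\bigl(\Psi_T(X)\bigr)=\tr\bigl(S^*(\mathbf{1}\otimes T^\top)SX\bigr)=\tr\bigl((\Phi^c)^*(T^\top)X\bigr).
\]
Nondegeneracy of the trace pairing then gives both equalities.

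The interpretive statements follow from (1) once $0\le T\le\mathbf{1}_{\mathcal{Z}}$, since then $0\le T^\top\le\mathbf{1}$. For the decomposition of a channel, take any POVM $\{E_i\}$ on $\mathcal{Z}$ and set $T_i=E_i^\top$. Linearity in $T$ gives $\sum_i\Psi_{T_i}=\Psi_{\mathbf{1}}=\Phi$, because $\Psi_{\mathbf{1}}$ is $\Phi$ by Theorem~\ref{thm:arveson} with $T=\mathbf{1}$.

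For the norm formula, use the cited fact that for completely positive $\Psi$ the cb trace norm equals the induced trace norm. One then checks that this norm equals $\|\Psi^*(\mathbf{1})\|_\infty=\max_{\rho}\tr(\Psi(\rho))$:
\begin{itemize}
\item For Hermitian $X=P-Q$ with $P,Q\ge0$ and $\|X\|_1=\tr P+\tr Q$, positivity gives $\|\Psi(X)\|_1\le\tr\Psi(P)+\tr\Psi(Q)\le\|\Psi^*(\mathbf{1})\|_\infty\|X\|_1$.
\item For general $X$, the cleanest route is the Russo--Dye/Paulsen statement that a CP map attains its cb norm at the identity. Dually, $\|\Psi^*\|_{cb}=\|\Psi^*(\mathbf{1})\|_\infty$.
\item The maximum over density operators is attained at pure states, and equals $\lambda_{\max}(\Psi^*(\mathbf{1}))$.
\end{itemize}
Finally, substitute (2) with $X=\rho$ to get $\tr\Psi_T(\rho)=\tr(T^\top\Phi^c(\rho))$.

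I expect the main obstacle to be this last step, justifying that the cb trace norm of a CP map equals the maximal trace of its output over states. I will invoke the dual form $\|\Psi\|_{1\to1}=\|\Psi^*\|_{\infty\to\infty}=\|\Psi^*(\mathbf{1})\|_\infty$ for positive unital-type maps (Russo--Dye), citing \cite{Paulsen2003}, rather than reproving it.
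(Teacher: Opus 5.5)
Your proposal is correct and follows the paper's proof: the same index computation for (1), the full trace for (2), and Russo--Dye in the form of \cite[Proposition~3.6]{Paulsen2003} for the norm formula. The only variation is in (3), where you pair with $X$ and use nondegeneracy of the trace pairing instead of composing the adjoints of $X\mapsto(\mathbf{1}_{\mathcal{Y}}\otimes T^\top)SXS^*$ and of $\Tr_{\mathcal{Z}}$. Writing $\tr(\Psi_T^*(\mathbf{1}_{\mathcal{Y}})X)$ rather than $\tr(\Psi_T^*(\mathbf{1}_{\mathcal{Y}})^*X)$ there is exactly where Hermiticity of $T$ is used (the identity $\overline{T}=T^\top$ that the paper invokes explicitly), so you should say so.
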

 \begin{proof}
     We have $SXS^*=\sum_{kl}V_kXV_l^*\otimes e_ke_l^*$, hence
     \[\Tr_{\mathcal{Z}}\left((\mathbf{1}_{\mathcal{Y}}\otimes T^\top)SXS^*\right)=\sum_{kl}\tr(T^\top e_ke_l^*)\,V_kXV_l^*=\sum_{kl}T_{kl}V_kXV_l^*=\Psi_T(X),\]
     since $\tr(T^\top e_ke_l^*)=(T^\top)_{lk}=T_{kl}$. This proves (1).

     Taking the full trace in (1) and using \[\tr((\mathbf{1}_{\mathcal{Y}}\otimes T^\top)SXS^*)=\tr(T^\top\Tr_{\mathcal{Y}}(SXS^*))\] gives (2).

     For (3), the adjoint of \[X\mapsto (\mathbf{1}_{\mathcal{Y}}\otimes T^\top)SXS^*\] is \[R\mapsto S^*(\mathbf{1}_{\mathcal{Y}}\otimes \overline{T})RS\] and the adjoint of $\Tr_{\mathcal{Z}}$ is \[Y\mapsto Y\otimes\mathbf{1}_{\mathcal{Z}}.\] Hence we have that $\Psi_T^*(Y)=S^*(\mathbf{1}_{\mathcal{Y}}\otimes \overline{T})(Y\otimes\mathbf{1}_{\mathcal{Z}})S=S^*(Y\otimes \overline{T})S=S^*(Y\otimes T^\top)S$ where we have used that $\overline{T}=T^\top$ for Hermitian $T$, and letting $Y=\mathbf{1}_{\mathcal{Y}}$ and comparing with $(\Phi^c)^*(Z)=S^*(\mathbf{1}_{\mathcal{Y}}\otimes Z)S$. Finally, if $T\geq 0$ then $\Psi_T$ is completely positive, so we may apply the Russo--Dye theorem \cite{RussoDye1966} in the form \cite[Proposition~3.6]{Paulsen2003} to conclude \[|||\Psi_T|||_1=\|\Psi_T^*(\mathbf{1}_{\mathcal{Y}})\|_\infty=\max_{\rho\in \mathrm{D}(\mathcal{X})}\tr(\Psi_T^*(\mathbf{1}_{\mathcal{Y}})\rho)=\max_{\rho\in \mathrm{D}(\mathcal{X})}\tr(\Psi_T(\rho)),\] and (2) gives the last formula.
 \end{proof}
\begin{remark}[Physical interpretation]\label{rem:phys-effect}
Applied to $\Omega=\Phi+\Psi$, the theorem shows that $\Phi=\Psi_T$ and $\Psi=\Psi_{\mathbf{1}_{\mathbb{C}^r}-T}$, where $T$ is the derivative of $\Phi$ with respect to $\Omega$, arise by running a Stinespring dilation of $\Omega$ and measuring the POVM $\{T^\top,\mathbf{1}_{\mathcal{Z}}-T^\top\}$ on its environment. So $(\Phi,\Psi)$ is a two-outcome instrument implementing $\Omega$ (when $\Omega$ is a channel), and $T^\top$ is the which-branch effect hidden in the environment. Classically, $t=d\mathbb{P}/d(\mathbb{P}+\mathbb{Q})$ is the posterior probability of $\mathbb{P}$ under a uniform prior and $D_f(\mathbb{P}\|\mathbb{Q})=\int h_f(t)\,d(\mathbb{P}+\mathbb{Q})$, with $h_f$ as in Lemma~\ref{lem:hf extended}; by analogy, Theorem~\ref{thm:all f}(b) is its channel version.
\end{remark}
 
\subsection{Kubo--Ando means of operators} 
In \cite{kian2026relative}, one considers means of channels through the theory of Kubo--Ando means. Hence we must take a quick detour and list some background facts. 
 
In order to consider any kind of notion of means of channels using Kubo--Ando theory, as performed in \cite{kian2026relative}, we must first recall the Kubo--Ando theory of means of positive semidefinite operators. The following is an amalgamation of facts from \cite{bhatia1997matrix}, \cite{hiai2010matrix}, \cite{hiaipetz2014intro}, and the original paper \cite{KuboAndo1980} and none of these results are claimed as ours. In order to make later steps more compatible with quantum information theory related texts, we rephrase some statements or change notation to be consistent with \cite{watrous2018theory}. This will be consistent across the rest of the text.
 
Matrix means are usually studied through operator connections. One astounding fact is that although operator connections need only satisfy a few general properties, there are several important and extremely useful classification theorems. We will be denoting complex Euclidean spaces using calligraphic letters, $\mathcal{X}, \mathcal{Y}$ and $\mathcal{Z}$, etc. Many of the philosophies in this section carry over to the analogous operator convex setting, where one defines relations between positive semidefinite matrices using operator convex functions (see \cite{hiaipetz2014intro},\cite{bhatia1997matrix}). This is perfectly fine, however we will see by appropriately shifting things around, we can get to the desired results for some divergences without considering the techniques of Section~\ref{sec:unbounded}.
 
\begin{definition}[Loewner order]
Given a complex Euclidean space $\mathcal{X}$, the Loewner order is the partial order on the Hermitian operators given by
\[
    A\leq B\iff B-A\geq 0
\]
in the sense that $B-A$ is positive semidefinite.
\end{definition}
 
\begin{definition}[Operator connection \cite{KuboAndo1980}]
A binary operation $(A,B)\mapsto A\,\sigma\,B$ on $\mathrm{Pos}(\mathcal{X})$ is an operator connection if it satisfies the following 
\begin{itemize}
    \item (monotonicity) $0\leq A\leq A'$ and $0\leq B\leq B'$ imply $A\,\sigma\,B\leq A'\,\sigma\,B'$;
    \item (transformer inequality) for $A,B\geq 0$ and arbitrary $C$,
    \[
        C^*(A\,\sigma\,B)C\leq (C^*AC)\,\sigma\,(C^*BC);
    \]
    \item (continuity from above) if $A_n\searrow A$ and $B_n\searrow B$ then 
    \[
        (A_n\,\sigma\,B_n)\searrow (A\,\sigma\,B),
    \] where $A_n\searrow A$ indicates that 
    $A_n\to A$ in the operator norm and $A_1\geq A_2\geq \dots$.
\end{itemize}
In particular, for $A,B \in \mathrm{Pos}(\mathcal{X})$ one has
\[
    A\,\sigma\,B=\lim_{\eta\downarrow0}A_\eta\,\sigma\,B_\eta,
\]
where $A_\eta= A+\eta \mathbf{1}_{\mathcal{X}}$ and $B_\eta=B+\eta \mathbf{1}_{\mathcal{X}}$, so that a connection is determined by its values on positive definite operators.
\end{definition}
\subsection{Constructing operator connections}
We will see that we can construct operator connections in a standard way. We need to introduce several definitions before we continue. 
 
\begin{definition}[Operator monotone function {\cite[Chapter~V]{bhatia1997matrix}}]
Let $(t_-,t_+)$ be an open interval. A function $f:(t_-,t_+)\to \mathbb{R}$ is said to be operator monotone (or matrix monotone) on $(t_-,t_+)$ if $f(A)\leq f(B)$ whenever $A\leq B$ are Hermitian operators, of any size, with spectra contained in $(t_-,t_+)$. 
\end{definition}
 
\begin{definition}[Pick function {\cite[Chapter~V]{bhatia1997matrix}}]
    Let $\mathbb{C}^+=\{z\in \mathbb{C}: \mathrm{Im}(z)>0\}$. A function $f:\mathbb{C}^+\to \mathbb{C}$ is called a Pick function if it is analytic and $\mathrm{Im}f(z)\geq 0$ for all $z\in\mathbb{C}^+$ (by the open mapping theorem, a nonconstant Pick function maps $\mathbb{C}^+$ into $\mathbb{C}^+$). We denote the set of all Pick functions by $\mathcal{P}$. Given a real interval $(t_-,t_+)$, we denote by $\mathcal{P}(t_-,t_+)$ the set of all Pick functions which extend analytically across $(t_-,t_+)$ with real values on $(t_-,t_+)$.
\end{definition}
 
Finding operator monotone functions may appear to be a tricky business; however, they are characterised by a theorem of Loewner \cite{Loewner1934}; see also \cite[Chapter~V]{bhatia1997matrix}. 
\begin{theorem}[Loewner \cite{Loewner1934}]\label{Loewner}
Let $(t_-,t_+)$ be a real interval (possibly all of $\mathbb{R}$) and $f$ a real valued function on $(t_-,t_+)$. Then $f$ is operator monotone on $(t_-,t_+)$ if and only if $f$ is the restriction to $(t_-,t_+)$ of a function in $\mathcal{P}(t_-,t_+)$. 
\end{theorem}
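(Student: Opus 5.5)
The plan is to prove the two implications separately. First I would normalise the interval. An affine change of variable $t\mapsto at+b$ with $a>0$ preserves operator monotonicity and maps Pick functions to Pick functions. So it suffices to treat a bounded interval, say $(-1,1)$; the unbounded cases follow by exhausting $(t_-,t_+)$ with bounded subintervals and passing to limits in the integral representations below.

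\emph{Pick $\Rightarrow$ operator monotone.} Here I would use the Nevanlinna integral representation of $f\in\mathcal{P}(t_-,t_+)$:
\[
f(z)=\alpha+\beta z+\int_{\mathbb{R}\setminus(t_-,t_+)}\Big(\frac{1}{\lambda-z}-\frac{\lambda}{\lambda^2+1}\Big)\,d\mu(\lambda),
\]
with $\alpha\in\mathbb{R}$, $\beta\geq0$ and $\mu$ a positive measure. The measure lives off the interval because $f$ extends analytically with real values across it (Stieltjes inversion). It then suffices to check that each kernel $t\mapsto(\lambda-t)^{-1}$ is operator monotone on the interval.
\begin{itemize}
\item For $\lambda\geq t_+$: if $A\leq B$ then $0<\lambda-B\leq\lambda-A$, and operator inversion reverses order on positive definite operators, so $(\lambda-A)^{-1}\leq(\lambda-B)^{-1}$.
\item For $\lambda\leq t_-$: write $(\lambda-t)^{-1}=-(t-\lambda)^{-1}$ and argue the same way.
\end{itemize}
Integrating against $\mu\geq0$ and adding $\beta t$ with $\beta\geq0$ preserves monotonicity. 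Dominated convergence handles the integral on matrices with spectrum in a compact subinterval.

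\emph{Operator monotone $\Rightarrow$ Pick.} Mollifying $f$ (convolving with a smooth bump) preserves operator monotonicity on slightly smaller intervals, so I may assume $f$ is $C^1$. Take $A=\mathrm{diag}(t_1,\dots,t_n)$ and $B=A+\varepsilon xx^*$. The Daleckii--Krein formula for the derivative of $\varepsilon\mapsto f(B)$ at $\varepsilon=0$ then shows that every Loewner matrix $\big[f^{[1]}(t_i,t_j)\big]_{i,j}$ of divided differences is positive semidefinite. Next normalise so that $f(0)=0$ and $f'(0)=1$; the case $f'(0)=0$ forces $f$ constant via the Loewner matrices. Following Hansen--Pedersen, I would show that the set $K$ of normalised operator monotone functions on $(-1,1)$ is convex and compact in the topology of pointwise convergence. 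Compactness comes from the bounds on $f$ and $f'$ that the $2\times2$ and $3\times3$ Loewner matrices yield. Krein--Milman/Choquet then gives a representation $f=\int_{[-1,1]}h_\lambda\,d\mu(\lambda)$ over the extreme points $h_\lambda$, with $\mu$ a probability measure.

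The crux, and the step I expect to be the main obstacle, is identifying those extreme points as $h_\lambda(t)=t/(1-\lambda t)$, $\lambda\in[-1,1]$. My first attempt would use the Hansen--Pedersen trick. For $f\in K$, show that $g_\pm(t)=(1\pm t)f(t)/t$ (suitably corrected to stay normalised) are again operator monotone. This would use the positivity of $3\times3$ Loewner matrices involving the point $0$ to control $f''(0)$, which gives $|f''(0)|\leq 2$. It also yields an explicit convex decomposition
\[
f=\tfrac12(1+\tfrac12 f''(0))\,g_+ +\tfrac12(1-\tfrac12 f''(0))\,g_-
\]
that is nontrivial unless $f$ solves $f(t)=t+\lambda t f(t)$, that is, $f=h_\lambda$.

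Once the extreme points are identified, $f(z)=\int z/(1-\lambda z)\,d\mu(\lambda)$ visibly extends analytically to $\mathbb{C}\setminus\big((-\infty,-1]\cup[1,\infty)\big)$. It has $\mathrm{Im}f>0$ on $\mathbb{C}^+$, since each $z/(1-\lambda z)$ maps $\mathbb{C}^+$ into $\mathbb{C}^+$ for $\lambda\in[-1,1]$, and it is real on $(-1,1)$. Undoing the mollification requires a limit argument. Pick functions with uniformly controlled representing measures are closed under locally uniform limits, so the limit $f$ lies in $\mathcal{P}(-1,1)$.
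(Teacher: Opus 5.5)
The paper does not prove this statement. It quotes Loewner's theorem as classical background from \cite{Loewner1934} and \cite[Chapter~V]{bhatia1997matrix}, so there is no internal argument to compare yours with. Your direction ``Pick $\Rightarrow$ operator monotone'' is correct as written: the Nevanlinna measure lives off the interval, each kernel $(\lambda-t)^{-1}$ is operator monotone there, and integration preserves the order. For the converse you follow the Krein--Milman route of Bendat--Sherman and Hansen--Pedersen, which is also the route of the cited chapter of Bhatia.

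The gap is the step you yourself single out. Identifying the extreme points requires a lemma: for every normalised $f$, the functions $g_+(t)=(1+t^{-1})f(t)-1$ and $g_-(t)=(1-t^{-1})f(t)+1$ are operator monotone on $(-1,1)$. Note that $(1-t)f(t)/t$ itself is operator monotone \emph{decreasing}, so your $g_-$ needs a sign change, not just a rescaling. Your proposal states this lemma but does not prove it. The one tool you mention, positivity of small Loewner matrices involving the point $0$, yields $|f''(0)|\le2$. That only says the weights $\tfrac12(1\pm\tfrac12f''(0))$ in your decomposition are nonnegative, which is much weaker than positivity of every Loewner matrix $[g_\pm^{[1]}(t_i,t_j)]$. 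The lemma also cannot follow from the behaviour of $f$ near $0$. For example, $f(t)=t/(1+2t)$ has $f(0)=0$, $f'(0)=1$ and is operator monotone on $(-\tfrac12,\infty)$, yet $g_+(t)=-t/(1+2t)$ is decreasing. So any proof must use the operator monotonicity of $f$ arbitrarily close to the endpoint $-1$ (and, for $g_-$, to $+1$). Without the lemma the extreme points are not identified, and the representation $f=\int t/(1-\lambda t)\,d\mu(\lambda)$ does not follow.

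One way to supply the lemma has two steps.
\begin{enumerate}
\item Show that $(1+t)f(t)$ is operator convex and $(1-t)f(t)$ is operator concave on $(-1,1)$. This is where the endpoints enter. After the substitutions $s=1\pm t$, it follows from the Hansen--Pedersen characterisation: a continuous $\psi$ on $[0,\alpha)$ is operator convex with $\psi(0)\le0$ exactly when $\psi(s)/s$ is operator monotone on $(0,\alpha)$.
\item Use that $t\mapsto(\psi(t)-\psi(0))/t$ is operator monotone whenever $\psi$ is operator convex. This follows from the second-order Daleckii--Krein formula.
\end{enumerate}
Separately, $K$ must be closed under pointwise limits, so its extreme points need not be mollified functions. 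You therefore still have to justify that every $f\in K$ has a second-order expansion at $0$ before $f''(0)$ and $g_\pm$ can be used.
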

The next theorem tells us that operator connections correspond to operator monotone functions, and the general theory of operator means defined in this way shares the name Kubo--Ando with this theorem. 
 
\begin{theorem}[Kubo--Ando theorem \cite{KuboAndo1980}]\label{kubo-ando}
For each operator connection $\sigma$ there exists a unique operator monotone function $f_\sigma:(0,\infty)\to[0,\infty)$, the representing function of $\sigma$, such that 
\[
    f_\sigma(t)\mathbf{1}_{\mathcal{X}}=\mathbf{1}_{\mathcal{X}}\,\sigma\,(t\mathbf{1}_{\mathcal{X}}) 
\]
for every $t>0$, and, for $A>0$ and $B\geq 0$,
\[
    A\,\sigma\,B=A^{1/2}f_\sigma(A^{-1/2}BA^{-1/2})A^{1/2}=f_\sigma(BA^{-1})A
\]
via the functional calculus. If $A$ is not invertible, one perturbs it and takes limits from above. Conversely, every operator monotone $f:(0,\infty)\to[0,\infty)$ defines, by this formula, an operator connection whose representing function is $f$.
\end{theorem}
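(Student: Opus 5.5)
We prove the Kubo--Ando theorem by the classical route of \cite{KuboAndo1980}: extract $f_\sigma$ from the scalar behaviour of $\sigma$, show that $\sigma$ is determined by $f_\sigma$ on invertible first arguments, and then reverse the construction.

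\emph{Step 1 (definition and monotonicity of $f_\sigma$).} For $t>0$, the transformer inequality with $C=U$ unitary, applied also with $U^*$, gives $U^*(\mathbf{1}_{\mathcal{X}}\,\sigma\,t\mathbf{1}_{\mathcal{X}})U=\mathbf{1}_{\mathcal{X}}\,\sigma\,t\mathbf{1}_{\mathcal{X}}$. So this operator commutes with every unitary and is therefore a scalar multiple $f_\sigma(t)\mathbf{1}_{\mathcal{X}}$ with $f_\sigma(t)\geq0$. The scalar does not depend on the dimension: embedding via an isometry $C$ and using the transformer inequality in both directions (equality holds for invertible $C$, and compression by projections commuting with both arguments) makes it consistent.

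\emph{Step 2 (the formula for invertible $A$).} For invertible $C$, applying the transformer inequality to $C$ and then to $C^{-1}$ shows it is an equality. Taking $C=A^{1/2}$ gives
\[
A\,\sigma\,B=A^{1/2}\big(\mathbf{1}_{\mathcal{X}}\,\sigma\,A^{-1/2}BA^{-1/2}\big)A^{1/2}.
\]
This reduces the problem to showing $\mathbf{1}_{\mathcal{X}}\,\sigma\,D=f_\sigma(D)$ for $D\geq0$. For $D=\sum_i\lambda_iP_i$ with spectral projections $P_i$, the transformer inequality with $C=P_i$, together with its reverse obtained by combining over the orthogonal decomposition, shows that $\mathbf{1}_{\mathcal{X}}\,\sigma\,D$ commutes with each $P_i$. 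Its compression to $\range P_i$ is then $\mathbf{1}\,\sigma\,\lambda_i\mathbf{1}=f_\sigma(\lambda_i)$.

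This compression step is the main obstacle: proving that connections respect direct sums. I would handle it with the transformer inequality applied to the unitary $2P_i-\mathbf{1}_{\mathcal{X}}$, which fixes both $\mathbf{1}_{\mathcal{X}}$ and $D$; this gives commutation with $P_i$. Then I would use $C=P_i$ viewed as an isometry from $\range P_i$ to obtain the value on each block. For $\lambda_i=0$, continuity from above defines $f_\sigma(0)$ as $\lim_{t\downarrow0} f_\sigma(t)$. The identity $f_\sigma(BA^{-1})A$ then follows from $A^{1/2}g(A^{-1/2}BA^{-1/2})A^{-1/2}=g(BA^{-1})$, which holds for polynomials $g$ by similarity and hence for all $g$.

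\emph{Step 3 (operator monotonicity, and the converse).} Monotonicity of $\sigma$ in its second argument, combined with Step 2 for $A=\mathbf{1}_{\mathcal{X}}$ in every dimension, gives $D\leq D'\Rightarrow f_\sigma(D)\leq f_\sigma(D')$. Hence $f_\sigma$ is operator monotone on $(0,\infty)$, and uniqueness is immediate. For the converse, given an operator monotone $f\geq0$, define $A\,\sigma\,B$ by the formula for $A>0$ and by limits $\eta\downarrow0$ otherwise. By Theorem~\ref{Loewner} and the standard integral representation $f(t)=\alpha+\beta t+\int_{(0,\infty)}\frac{t(1+s)}{t+s}\,d\mu(s)$, we get $A\,\sigma\,B=\alpha A+\beta B+\int (1+s)\,\big(A\,!\,sB\big)\,d\mu(s)/s$-type terms built from parallel sums. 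Parallel sums are monotone, satisfy the transformer inequality through their variational characterisation $\langle x,(A:B)x\rangle=\inf_{y+z=x}(\langle y,Ay\rangle+\langle z,Bz\rangle)$, and are continuous from above. These three properties pass to positive combinations and to monotone limits, which completes the converse.
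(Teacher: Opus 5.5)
The paper gives no proof of this theorem; it is quoted as background from Kubo and Ando. Your outline is the classical Kubo--Ando argument, and it is sound once the standing hypothesis below is made explicit and the steps below are tightened.

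\emph{Dimensions.} Steps 1 and 3 use $\sigma$ on $\mathrm{Pos}(\mathcal{V})$ for every $\mathcal{V}$, and the transformer inequality for rectangular $C$ (an isometry $V$ and its adjoint). Neither is in the paper's definition, which concerns one $\mathrm{Pos}(\mathcal{X})$ and square $C$. The hypothesis cannot be dropped. On $\mathrm{Pos}(\mathbb{C})$ the operation $a\,\sigma\,b=\min\{a,b\}$ is monotone, positively homogeneous (so the transformer inequality holds) and continuous from above, yet $\min\{t,1\}$ is not operator monotone. More generally, on a fixed $n$-dimensional space your Step 3 only yields monotonicity on $n\times n$ matrices. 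So state the cross-dimensional compatibility as an assumption; Kubo and Ando sidestep the issue by working on an infinite-dimensional Hilbert space. With it, Step 1 works. The choice $C=V$ gives $f_{\mathcal{X}}\leq f_{\mathcal{X}'}$. The choice $C=V^*$ gives $f_{\mathcal{X}'}(t)VV^*\leq(VV^*)\,\sigma\,(tVV^*)\leq\mathbf{1}_{\mathcal{X}}\,\sigma\,t\mathbf{1}_{\mathcal{X}}$, the last step by monotonicity, and hence the reverse inequality.

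\emph{Block values in Step 2.} The isometry $V_i$ onto $\range P_i$ only gives the upper bound $V_i^*(\mathbf{1}_{\mathcal{X}}\,\sigma\,D)V_i\leq f_\sigma(\lambda_i)\mathbf{1}$. ``Combining over the orthogonal decomposition'' does not supply the reverse bound, since that would be superadditivity, which is not an axiom. Use the same two-sided device as in Step 1:
\begin{itemize}
\item $C=V_i^*$ gives $f_\sigma(\lambda_i)P_i\leq P_i\,\sigma\,\lambda_iP_i$;
\item monotonicity, with $P_i\leq\mathbf{1}_{\mathcal{X}}$ and $\lambda_iP_i\leq D$, gives $P_i\,\sigma\,\lambda_iP_i\leq\mathbf{1}_{\mathcal{X}}\,\sigma\,D$;
\item compressing by $P_i$ then gives the lower bound.
\end{itemize}
For $\lambda_i=0$, read $f_\sigma(\lambda_i)$ as $f_\sigma(0^+)$ via continuity from above.

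\emph{Converse.} For the kernel $t(1+s)/(t+s)$ the correct term is $\frac{1+s}{s}\big((sA):B\big)$, with the scaling on the first argument, as in the representation recalled before Theorem~\ref{range theorem}. Continuity from above of the integral does not follow from ``passing to monotone limits''. Argue directly: for $A_n\searrow A$ and $B_n\searrow B$, each integrand decreases to its limit. It is dominated by $2A_1$ for $s\leq1$ and by $2B_1$ for $s\geq1$, because $(sA):B\leq sA$ and $(sA):B\leq B$. Dominated convergence, with $\mu$ finite, then gives the limit.
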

 
\begin{definition}
    If also $f_\sigma(1)=1$, that is $\mathbf{1}_{\mathcal{X}}\,\sigma\,\mathbf{1}_{\mathcal{X}}=\mathbf{1}_{\mathcal{X}}$, we call $\sigma$ an operator mean.
\end{definition}
\begin{remark}
    The term operator mean varies in the literature, and one must take care here. We do not impose on operator means that they be commutative unlike other authors.
\end{remark}
For example, the representing functions $\sqrt{t}$, $t^\alpha$ with $0<\alpha<1$, and $t/(1+t)$ give, respectively, the geometric mean \cite{PuszWoronowicz1975,KuboAndo1980}, the weighted geometric mean \cite{KuboAndo1980} and the parallel sum \cite{AndersonDuffin1969},
\begin{align*}
A\#B&=A^{1/2}\big(A^{-1/2}BA^{-1/2}\big)^{1/2}A^{1/2},\\
A\#_\alpha B&=A^{1/2}\big(A^{-1/2}BA^{-1/2}\big)^{\alpha}A^{1/2},\\
A:B&=\big(A^{-1}+B^{-1}\big)^{-1},
\end{align*}
for positive definite $A,B$, extended to positive semidefinite operators by the regularisation above.
 
We then see by combining Theorems~\ref{kubo-ando} and~\ref{Loewner} that every nonnegative function in $\mathcal{P}(0,\infty)$ gives rise to an operator connection.
 
\subsection{The mean of completely positive maps}
 
We now come to the construction developed by \cite{kian2026relative}. Given two completely positive maps $\Phi,\Psi\in \mathrm{CP}(\mathcal{X},\mathcal{Y})$, let $\Omega=\Phi+\Psi$. If $\Omega$ has Kraus decomposition 
$\Omega(X)=\sum_{k=1}^rV_kXV_k^*$ with linearly independent $V_k$, then \[(\Phi\,\sigma\,\Psi)(X)=\sum_{k,l=1}^r(T_1\,\sigma\,T_2)_{kl}V_kXV_l^*\]
where $T_1$ and $T_2$ are the ARN derivatives of $\Phi$ and $\Psi$ with respect to the Kraus decomposition of $\Omega$. Kian defines the mean of two maps dominated by a common map through their Arveson--Radon--Nikodym derivatives, so that the derivative of the mean is the mean of the derivatives \cite[Proposition~3.5]{kian2026relative}, and the formula above is the finite-dimensional Kraus form of this definition \cite[Proposition~3.13]{kian2026relative}, in our notation. The choice $\Omega=\Phi+\Psi$ gives what Kian calls the intrinsic Kubo--Ando mean \cite[Remark~3.3]{kian2026relative}. In general the relative mean depends on the dominating map \cite[Remark~3.3]{kian2026relative}, but not in our setting: the relative geometric mean does not depend on it \cite[Remark~4.2]{kian2026relative}, and for maps between matrix algebras the same holds for every Kubo--Ando mean \cite[Theorem~5.1]{kian2026relative}. One observation which is easily made is that under the ARN bijection, since the derivative of $\Omega$ with respect to itself is $\mathbf{1}_{\mathbb{C}^r}$, we must have $T_2=\mathbf{1}_{\mathbb{C}^r}-T_1$. We now formulate the fundamental theorem of means of completely positive maps which will inform the rest of the paper.
Let $\sigma$ be a Kubo--Ando mean, so that its representing function $f_\sigma$ is positive on $(0,\infty)$ and $f_\sigma(1)=1$. For scalars $s,t>0$ we have $s\,\sigma\,t=sf_\sigma(t/s)$. We define the function
\[
    g_\sigma(t)=t\,\sigma\,(1-t)=tf_\sigma\left(\frac{1-t}{t}\right),\qquad 0<t<1.
\]
For a function $f$ on $(0,\infty)$ we write $f(0^+)=\lim_{t\downarrow0}f(t)$ and $f'(\infty)=\lim_{t\to\infty}f(t)/t$ whenever these limits exist, following \cite{Okayasu2026maxf}. For $f=f_\sigma$ both limits exist in $[0,\infty)$, since $f_\sigma$ is nondecreasing and concave \cite[Theorem~V.2.5]{bhatia1997matrix}, and we set $g_\sigma(0)=f_\sigma'(\infty)$ and $g_\sigma(1)=f_\sigma(0^+)$, which are the limits of $g_\sigma$ at the endpoints. 
The following theorem is essentially due to Kian: by \cite[Proposition~3.13 and Theorem~5.1]{kian2026relative}, $J(\Phi)\,\sigma\,J(\Psi)=J(\Phi\,\sigma\,\Psi)=W\big(T\,\sigma\,(\mathbf{1}_{\mathbb{C}^r}-T)\big)W^*$ in the notation of the theorem, and for the geometric mean the identity $J(\Phi\#\Psi)=J(\Phi)\#J(\Psi)$ is also \cite[Proposition~4.1]{Okayasu2026}. We include a proof for completeness in the present notation; what we add is the identity $T\,\sigma\,(\mathbf{1}_{\mathbb{C}^r}-T)=g_\sigma(T)$, through the functional calculus, and the resulting description of the spectrum.
 
\begin{theorem}\label{fundamental principle}
    Let $\sigma$ be a Kubo--Ando mean with representing function $f_\sigma$, let $\Phi,\Psi\in\mathrm{CP}(\mathcal{X},\mathcal{Y})$, let $\Omega=\Phi+\Psi$ have a Kraus decomposition with linearly independent operators $V_k$, let $W=[\vectorise(V_1)\ \cdots\ \vectorise(V_r)]$, and let $T$ be the ARN derivative of $\Phi$ with respect to $\Omega$. Then at the Choi level
    \[
        J(\Phi)\,\sigma\,J(\Psi)=Wg_{\sigma}(T)W^*=W\left(T^{1/2}f_\sigma\big(T^{-1/2}(\mathbf{1}_{\mathbb{C}^r}-T)T^{-1/2}\big)T^{1/2}\right)W^*,
    \]
    where in the case of singular $T$ or $\mathbf{1}_{\mathbb{C}^r}-T$ one takes decreasing nonsingular regularisations from above, using the continuity from above of operator connections. Consequently the mean $\Phi\,\sigma\,\Psi$ is a submap of $\Phi+\Psi$ with ARN derivative $g_\sigma(T)$, whose spectrum is $g_\sigma(\mathrm{spec}(T))$.
\end{theorem}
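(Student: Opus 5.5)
We first establish $J(\Phi)\,\sigma\,J(\Psi)=W\big(T\,\sigma\,(\mathbf{1}_{\mathbb{C}^r}-T)\big)W^*$, and afterwards identify $T\,\sigma\,(\mathbf{1}_{\mathbb{C}^r}-T)$ with $g_\sigma(T)$.

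By Theorem~\ref{thm:arveson}, $J(\Phi)=WTW^*$ and $J(\Psi)=W(\mathbf{1}_{\mathbb{C}^r}-T)W^*$. The key tool is the transformer \emph{equality} for invertible congruences. The transformer inequality, applied to $C$ and then to $C^{-1}$, gives $C^*(A\,\sigma\,B)C=(C^*AC)\,\sigma\,(C^*BC)$ for invertible $C$. Since $W$ is only injective, we extend it to an invertible operator. Choose an isometry $W_\perp$ onto $(\range W)^\perp$ and set
\[
\widetilde W=[W\ W_\perp],
\]
which is invertible on $\mathbb{C}^r\oplus\mathbb{C}^{m}$. Then
\[
J(\Phi)=\widetilde W\,(T\oplus 0)\,\widetilde W^*, \qquad J(\Psi)=\widetilde W\,\big((\mathbf{1}_{\mathbb{C}^r}-T)\oplus 0\big)\,\widetilde W^*.
\]
Transformer equality gives
\[
J(\Phi)\,\sigma\,J(\Psi)=\widetilde W\big((T\oplus0)\,\sigma\,((\mathbf{1}_{\mathbb{C}^r}-T)\oplus0)\big)\widetilde W^*.
\]
It remains to check that connections respect direct sums and that $0\,\sigma\,0=0$. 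On positive definite blocks both facts follow from the formula $A^{1/2}f_\sigma(A^{-1/2}BA^{-1/2})A^{1/2}$ of Theorem~\ref{kubo-ando}. For singular blocks we pass to limits through the regularisation $A_\eta,B_\eta$, using continuity from above and $\eta f_\sigma(1)=\eta\to0$. This yields $\big(T\,\sigma\,(\mathbf{1}_{\mathbb{C}^r}-T)\big)\oplus0$, and hence the first identity.

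Next we compute $T\,\sigma\,(\mathbf{1}_{\mathbb{C}^r}-T)$. The operators $T$ and $\mathbf{1}_{\mathbb{C}^r}-T$ commute, so we diagonalise $T=\sum_j t_jP_j$ with $t_j\in[0,1]$. For $0<\eta$, take $T_\eta=T+\eta\mathbf{1}_{\mathbb{C}^r}$ and $(\mathbf{1}_{\mathbb{C}^r}-T)_\eta$. These are positive definite and commuting, and Theorem~\ref{kubo-ando} gives
\[
T_\eta\,\sigma\,(\mathbf{1}_{\mathbb{C}^r}-T)_\eta=\sum_j (t_j+\eta)\,f_\sigma\!\Big(\frac{1-t_j+\eta}{t_j+\eta}\Big)P_j,
\]
which is the middle expression of the theorem up to regularisation. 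Continuity from above lets us take $\eta\downarrow0$ in each eigenvalue separately, reducing the matter to scalars. For $t_j\in(0,1)$ the limit is $g_\sigma(t_j)$ by continuity of $f_\sigma$. For $t_j=0$ the eigenvalue is
\[
\eta f_\sigma\Big(\frac{1+\eta}{\eta}\Big)=(1+\eta)\,\frac{f_\sigma(s)}{s}, \qquad s=\frac{1+\eta}{\eta}\to\infty,
\]
which tends to $f_\sigma'(\infty)=g_\sigma(0)$. For $t_j=1$ the eigenvalue is $(1+\eta)f_\sigma(\eta/(1+\eta))\to f_\sigma(0^+)=g_\sigma(1)$. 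So the limit is $g_\sigma(T)$, and its spectrum is $g_\sigma(\mathrm{spec}(T))$.

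The main obstacle is the endpoint case, which needs care on two points. First, the limit must agree with the value defined by regularisation in the theorem, which holds by monotone decrease and uniqueness of limits. Second, the limits $f_\sigma(0^+)$ and $f_\sigma'(\infty)$ must exist; this follows from concavity and monotonicity, as noted before the theorem.

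Finally, $\Phi\,\sigma\,\Psi$ is by definition the map with Kraus coefficients $T\,\sigma\,(\mathbf{1}_{\mathbb{C}^r}-T)=g_\sigma(T)$. Since $f_\sigma(1)=1$, each connection is bounded by the arithmetic mean, so $0\le g_\sigma(t)\le t\cdot\tfrac12+(1-t)\cdot\tfrac12\le1$, giving $0\le g_\sigma(T)\le\mathbf{1}_{\mathbb{C}^r}$. Theorem~\ref{thm:arveson} then shows that $\Phi\,\sigma\,\Psi$ is a submap of $\Omega$ with ARN derivative $g_\sigma(T)$.
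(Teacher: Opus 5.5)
The step that fails is your final bound. You claim that a mean with $f_\sigma(1)=1$ is dominated by the arithmetic mean, so that $g_\sigma(t)\le\frac12$. That holds only for symmetric means, and the paper explicitly allows non-symmetric ones; it later applies this very theorem to $\#_{1-\alpha}$. Two counterexamples:
\begin{itemize}
\item For the left trivial mean, $f_\sigma\equiv1$, one has $g_\sigma(t)=t$, which exceeds $\frac12$ for $t>\frac12$.
\item For $\#_{1-\alpha}$, the function $g(t)=t^\alpha(1-t)^{1-\alpha}$ attains the value $\alpha^\alpha(1-\alpha)^{1-\alpha}>\frac12$ whenever $\alpha\neq\frac12$.
\end{itemize}
The conclusion $0\le g_\sigma\le1$, which is all that Theorem~\ref{thm:arveson} needs, is still true. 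By monotonicity, $g_\sigma(t)=t\,\sigma\,(1-t)\le1\,\sigma\,1=f_\sigma(1)=1$ for $0<t<1$, and the endpoint values follow by taking limits. This is how the paper argues.

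With that step repaired, your proof is correct. For the identity $J(\Phi)\,\sigma\,J(\Psi)=W\big(T\,\sigma\,(\mathbf{1}_{\mathbb{C}^r}-T)\big)W^*$ you take a different route from the paper.
\begin{itemize}
\item The paper works with the non-invertible $W$ directly. It applies the transformer inequality with $C=W^*$ and with $C=(W^+)^*$. For the second inequality it needs an extra monotonicity argument, $M\le c\,WW^*$ via $X\,\sigma\,X=X$, to show that $M$ vanishes on $\ker W^*$.
\item You complete $W$ to the invertible $\widetilde W=[W\ W_\perp]$, so the two inequalities collapse into the transformer equality. The kernel is then handled by $(A\oplus0)\,\sigma\,(B\oplus0)=(A\,\sigma\,B)\oplus0$, a fact the paper also uses to pass from square to rectangular $C$.
\end{itemize}
Your version is shorter. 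Since $\eta\,\sigma\,\eta=\eta f_\sigma(1)\to0$ for any connection, it also yields Remark~\ref{rem:connections}(1) unchanged. The only implicit point is that you apply the transformer equality to an invertible map between two different spaces of equal dimension, $\mathbb{C}^r\oplus\mathbb{C}^m$ and $\mathcal{Y}\otimes\mathcal{X}$. This reduces to the square case by unitary invariance of the Kubo--Ando formula. Your identification of $T\,\sigma\,(\mathbf{1}_{\mathbb{C}^r}-T)$ with $g_\sigma(T)$ by regularisation is the paper's argument, with the endpoint limits written out explicitly.
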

\begin{proof}
Let $W^+=(W^*W)^{-1}W^*$ and $P=WW^+$ be as in the proof of Lemma~\ref{lem:factor}, so that $W^+W=\mathbf{1}_{\mathbb{C}^r}$ and $P$ is the orthogonal projection onto $\range(W)$. We first show that
\begin{equation}\label{eq:W transfer}
(WAW^*)\,\sigma\,(WBW^*)=W(A\,\sigma\,B)W^*\qquad\text{for all }A,B\in\mathrm{Pos}(\mathbb{C}^r).
\end{equation}
Let $M=(WAW^*)\,\sigma\,(WBW^*)$. We will show that transformer inequality for operators $C\in\mathrm{L}(\mathcal{V}_1,\mathcal{V}_2)$ between different spaces holds (for more suitable matrices this already holds \cite{bhatia1997matrix}). This follows from the square case by regarding all operators as acting on $\mathcal{V}_1\oplus\mathcal{V}_2$ as existing in a larger space by extending by zero. This uses that $(A\oplus0)\,\sigma\,(B\oplus0)=(A\,\sigma\,B)\oplus0$, by the Kubo--Ando formula for positive definite operators and regularisation. The transformer inequality with $C=W^*$ gives $W(A\,\sigma\,B)W^*\leq M$. With $c_{A,B}=\max\{\|A\|_\infty,\|B\|_\infty\}$, monotonicity gives $M\leq(c_{A,B}WW^*)\,\sigma\,(c_{A,B}WW^*)=c_{A,B}WW^*$, since $X\,\sigma\,X=X$ for a mean \cite{KuboAndo1980}; hence $\ker W^*\subseteq\ker M$, and $M=PMP$. The transformer inequality with $C=(W^+)^*$ gives
\[
W^+M(W^+)^*\leq\big(W^+WAW^*(W^+)^*\big)\,\sigma\,\big(W^+WBW^*(W^+)^*\big)=A\,\sigma\,B,
\]
and conjugating by $W$ yields $M=PMP\leq W(A\,\sigma\,B)W^*$, which proves \eqref{eq:W transfer}.

By the proof of Theorem~\ref{thm:arveson}, $J(\Phi)=WTW^*$ and $J(\Psi)=W(\mathbf{1}_{\mathbb{C}^r}-T)W^*$, so \eqref{eq:W transfer} gives $J(\Phi)\,\sigma\,J(\Psi)=W\big(T\,\sigma\,(\mathbf{1}_{\mathbb{C}^r}-T)\big)W^*$, and the second expression in the statement is the Kubo--Ando formula when $T$ and $\mathbf{1}_{\mathbb{C}^r}-T$ are invertible. It remains to show that $T\,\sigma\,(\mathbf{1}_{\mathbb{C}^r}-T)=g_\sigma(T)$. For $\eta>0$ the operators $T+\eta\mathbf{1}_{\mathbb{C}^r}$ and $(\mathbf{1}_{\mathbb{C}^r}-T)+\eta\mathbf{1}_{\mathbb{C}^r}$ are positive definite and commute, so by Theorem~\ref{kubo-ando} and the functional calculus, $(T+\eta\mathbf{1}_{\mathbb{C}^r})\,\sigma\,(\mathbf{1}_{\mathbb{C}^r}-T+\eta\mathbf{1}_{\mathbb{C}^r})=g_{\sigma,\eta}(T)$ with $g_{\sigma,\eta}(t)=(t+\eta)f_\sigma\big(\tfrac{1-t+\eta}{t+\eta}\big)$. As $\eta\downarrow0$ the left-hand side decreases to $T\,\sigma\,(\mathbf{1}_{\mathbb{C}^r}-T)$ by continuity from above, while $g_{\sigma,\eta}(t)\to g_\sigma(t)$ for every $t\in[0,1]$: for $0<t<1$ by continuity of $f_\sigma$, and at $t=0$ and $t=1$ by the definitions of $f_\sigma'(\infty)$ and $f_\sigma(0^+)$. Since $T$ is a positive semidefinite operator and $\sigma$ is a Kubo--Ando mean, $g_{\sigma,\eta}(T)\to g_\sigma(T)$, and hence $T\,\sigma\,(\mathbf{1}_{\mathbb{C}^r}-T)=g_\sigma(T)$.

Finally, for $0<t<1$ monotonicity gives $0\leq g_\sigma(t)=t\,\sigma\,(1-t)\leq 1\,\sigma\,1=1$, and the same holds at the endpoints by taking limits. Thus $0\leq g_\sigma(T)\leq\mathbf{1}_{\mathbb{C}^r}$, and $\Phi\,\sigma\,\Psi=\Psi_{T\,\sigma\,(\mathbf{1}_{\mathbb{C}^r}-T)}=\Psi_{g_\sigma(T)}$ is a submap of $\Omega$ with ARN derivative $g_\sigma(T)$ by Theorem~\ref{thm:arveson}. The statement about the spectrum is the spectral mapping theorem.
\end{proof}

\begin{remark}\label{rem:connections}
The proof of Theorem~\ref{fundamental principle} uses only monotonicity, the transformer inequality, continuity from above and the identity $X\,\sigma\,X=f_\sigma(1)X$. It therefore applies verbatim to every operator connection $\sigma$, with representing function $f_\sigma$ as in Theorem~\ref{kubo-ando}. Put
\[
\varphi_\sigma(t)=t\,\sigma\,(1-t)=tf_\sigma\Big(\frac{1-t}{t}\Big)\quad(0<t<1),\qquad\varphi_\sigma(0)=f_\sigma'(\infty),\qquad\varphi_\sigma(1)=f_\sigma(0^+),
\]
which are the limits of $\varphi_\sigma$ at the endpoints, since $f_\sigma$ is nonnegative, nondecreasing and concave. Then:
\begin{enumerate}
\item \eqref{eq:W transfer} holds for the connection $\sigma$, that is, $(WAW^*)\,\sigma\,(WBW^*)=W(A\,\sigma\,B)W^*$ for all $A,B\in\mathrm{Pos}(\mathbb{C}^r)$;
\item $T\,\sigma\,(\mathbf{1}_{\mathbb{C}^r}-T)=\varphi_\sigma(T)$ for $0\leq T\leq\mathbf{1}_{\mathbb{C}^r}$, in the functional calculus.
\end{enumerate}
If $\sigma$ is a mean, then $\varphi_\sigma=g_\sigma$.
\end{remark}

\section{The mean of channels and geometric channel fidelity}\label{sec:fidelity}
 
In this section, we present a concrete application of the means of completely positive maps in terms of the ARN derivative. In quantum information theory, one can consider many kinds of divergence. One of these is the geometric channel fidelity. This quantity has some annoying properties. In this section, we will build up the definition through the Matsumoto fidelity \cite{matsumoto2010reverse}. We will retranslate some definitions in \cite{matsumoto2010reverse} and \cite{KatariyaWilde2021} in the language of Watrous \cite{watrous2018theory}. 
 
\subsection{Matsumoto fidelity}
\subsubsection{Motivation of the definition}
In quantum information theory, one often wants to consider how much two quantum states ``overlap''. One ubiquitous example of this is the quantum (or Uhlmann) fidelity \cite{Uhlmann1976} of two states. Given a complex Euclidean space $\mathcal{X}$ and density operators $\rho, \xi\in \mathrm{D}(\mathcal{X})$, their fidelity is defined as 
\[F(\rho,\xi)=\tr\left(\sqrt{\sqrt{\xi}\rho\sqrt{\xi}}\right)=\|\sqrt{\rho}\sqrt{\xi}\|_1.\]
The quantum fidelity is the analogue of the Bhattacharyya coefficient in classical probability (see, for example, \cite[Section~3.2]{watrous2018theory}). The fidelity is closely related to the geometric mean: for invertible $\xi$ one checks directly that 
\[F(\rho,\xi)=\tr\big(\xi\,(\xi^{-1}\#\rho)\big).\] In \cite{matsumoto2010reverse}, Matsumoto considers flipping the definition for quantum fidelity, and instead takes the trace of the geometric mean of the operators themselves. The Matsumoto fidelity is defined on states $\rho, \xi\in \mathrm{D}(\mathcal{X})$ by 
\[
    F_M(\rho,\xi)=\tr(\rho\#\xi).
\]
The following interpretation is given in \cite{cree2020fidelity}. Two classical probability distributions $\{p_i\}, \{q_i\}$ prepare states $\rho$ and $\xi$ if there exists a quantum channel $\Phi$ such that $\Phi(\mathrm{diag}(\{p_i\}))=\rho$ and $\Phi(\mathrm{diag}(\{q_i\}))=\xi$. The Matsumoto fidelity is the maximal possible classical fidelity between classical probability distributions that can prepare states $\rho$ and $\xi$. The quantum fidelity is the minimum classical fidelity over quantum-to-classical measurement procedures; the Matsumoto fidelity is the maximum classical fidelity over classical-to-quantum preparations. In this sense, it is a kind of dual to the quantum fidelity \cite{cree2020fidelity}. Comprehensive properties of the Matsumoto fidelity can be found in \cite{cree2020fidelity} and the original paper by Matsumoto \cite{matsumoto2010reverse}.
 
\subsection{Geometric channel fidelity}
Given two channels $\Phi, \Psi\in \mathrm{C}(\mathcal{X},\mathcal{Y})$, one can consider the maximum output fidelity \cite[Section~3.3]{watrous2018theory} given by 
\[
    F_{\max}(\Phi,\Psi)=\max_{\rho,\xi\in \mathrm{D}(\mathcal{X})}F(\Phi(\rho),\Psi(\xi)).
\]
The geometric channel fidelity as defined in \cite{KatariyaWilde2021} (see also \cite[eq.~(19)]{NuradhaWilde2025}) is a kind of opposite to this, using the Matsumoto fidelity. They define the geometric channel fidelity as the square of the following quantity: 
\[
    \widehat{F}(\Phi,\Psi)=\inf_{\rho\in \mathrm{D}(\mathcal{X}\otimes\mathcal{X})}F_M\big((\Phi\otimes\mathbf{1}_{\mathrm{L}(\mathcal{X})})(\rho), (\Psi\otimes\mathbf{1}_{\mathrm{L}(\mathcal{X})})(\rho)\big).
\] 
We will not be using the squared quantity, and will refer to the above as the geometric channel fidelity. 
We notice in this definition that there is only one input state, shared by both channels. This is because if we allowed two, it would be very easy to make this quantity zero. 
 
In this section and the next we work under the assumption that we are considering completely positive maps, not necessarily channels. The definition of the geometric channel fidelity still makes sense, because complete positivity guarantees that $(\Phi\otimes\mathbf{1}_{\mathrm{L}(\mathcal{X})})(\rho)$ and $(\Psi\otimes\mathbf{1}_{\mathrm{L}(\mathcal{X})})(\rho)$ are positive semidefinite, which is all we require for the geometric mean to make sense; by Choi's theorem \cite{Choi1975} the same holds for the Choi matrices $J(\Phi)$ and $J(\Psi)$.  
 
\begin{proposition}\label{lambda min formula}
   For $\Phi,\Psi\in \mathrm{CP}(\mathcal{X},\mathcal{Y})$, $\widehat{F}(\Phi,\Psi)=\lambda_{\min}(\Tr_{\mathcal{Y}}(J(\Phi)\#J(\Psi)))$.
\end{proposition}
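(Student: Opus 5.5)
Any state $\rho\in\mathrm{D}(\mathcal{X}\otimes\mathcal{X})$ can be written as $\rho=(\mathbf{1}_{\mathcal{X}}\otimes A)\,\vectorise(\mathbf{1}_{\mathcal{X}})\vectorise(\mathbf{1}_{\mathcal{X}})^*(\mathbf{1}_{\mathcal{X}}\otimes A)^*$ up to purification, and the plan is to reduce to this form. First, concavity: the map $\rho\mapsto\tr(\Phi_\rho\#\Psi_\rho)$ is concave, since $(A,B)\mapsto A\#B$ is jointly concave (a consequence of the transformer inequality and monotonicity), the maps $\rho\mapsto\Phi_\rho,\Psi_\rho$ are linear and the trace is linear. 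Hence the infimum over the convex set $\mathrm{D}(\mathcal{X}\otimes\mathcal{X})$ is attained at an extreme point, i.e. at a pure state $\rho=uu^*$. Write $u=\vectorise(A)$, or in the convention where the identity acts on the second factor, $u=(\mathbf{1}_{\mathcal{X}}\otimes \overline{B})\vectorise(\mathbf{1}_{\mathcal{X}})$ for some $B\in\mathrm{L}(\mathcal{X})$ with $\tr(B^*B)=1$, i.e. $\|B\|_2=1$.

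Next I express the outputs through Choi matrices. Since $(\Phi\otimes\mathbf{1}_{\mathrm{L}(\mathcal{X})})$ commutes with conjugation by $\mathbf{1}_{\mathcal{Y}}\otimes C$ on the second factor, one gets $\Phi_{uu^*}=(\mathbf{1}_{\mathcal{Y}}\otimes C)J(\Phi)(\mathbf{1}_{\mathcal{Y}}\otimes C)^*$, with $C$ equal to $\overline{B}$ or $B^\top$ depending on the vectorisation convention of \cite{watrous2018theory}, and the same for $\Psi$. This is just $J(\Phi)=\sum_{ij}\Phi(E_{ij})\otimes E_{ij}$ pushed through. With $K=\mathbf{1}_{\mathcal{Y}}\otimes C$, the key identity is
\[
(KJ(\Phi)K^*)\#(KJ(\Psi)K^*)=K\,(J(\Phi)\#J(\Psi))\,K^*.
\]
For invertible $C$ this is the congruence invariance of the geometric mean ($X^*(A\#B)X=(X^*AX)\#(X^*BX)$ for invertible $X$, which follows by applying the transformer inequality to $X$ and $X^{-1}$). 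For singular $C$ I approximate by invertible $C_\eta$ and use continuity. Here lies the main obstacle: $\#$ is only upper semicontinuous in general, so I would rather avoid the limit. One option is to restrict to the support of $C$ and use the transformer equality with an injective factor, as in \eqref{eq:W transfer}. Another is to note that the infimum over invertible $B$ equals the infimum over all $B$. This holds because both sides will turn out to equal the continuous function $\tr(C^*\,\Tr_{\mathcal{Y}}(J(\Phi)\#J(\Psi))\,C)$ on the invertible ones, while the original objective is concave on a compact convex set and thus lower semicontinuous at interior points. The cleanest route is to take the infimum over full-rank $\rho$ only, which is justified by concavity: along a segment to an interior point, a concave function's infimum over a convex set equals its infimum over the relative interior.

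Finally, with $M=J(\Phi)\#J(\Psi)\ge0$,
\[
\tr(KMK^*)=\tr\big((\mathbf{1}_{\mathcal{Y}}\otimes C^*C)M\big)=\tr\big(C^*C\,\Tr_{\mathcal{Y}}(M)\big)\quad(\text{up to transpose}),
\]
and $C^*C$ ranges over all density operators (transposition preserves this set and the spectrum of $\Tr_{\mathcal{Y}}(M)$ after conjugation). Minimising $\tr(\sigma N)$ over density operators $\sigma$ gives $\lambda_{\min}(N)$, attained at an eigenprojection, and the same value arises with the transpose convention because $N^\top$ and $N$ have equal spectra. This yields $\widehat{F}(\Phi,\Psi)=\lambda_{\min}(\Tr_{\mathcal{Y}}(J(\Phi)\#J(\Psi)))$.
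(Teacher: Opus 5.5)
Your reduction to pure states, the Choi-matrix form of the outputs, the congruence identity for invertible $C$, and the final minimisation over density operators all match the paper's proof. The gap is the passage from invertible $C$ to singular $C$. The route you settle on rests on a false claim. Concavity alone does not make the infimum over a convex set equal to the infimum over its relative interior. On $[0,1]$, the function with $g(0)=0$ and $g(t)=1$ for $t>0$ is concave, yet its infimum over $(0,1)$ is $1$ while its infimum over $[0,1]$ is $0$. Concavity gives $g((1-\lambda)x+\lambda y)\geq(1-\lambda)g(x)+\lambda g(y)$, which bounds interior values from below by boundary values; that is the wrong direction for your purpose. The ``full-rank $\rho$'' route is also inconsistent with the rest of your argument. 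Full-rank states on $\mathcal{X}\otimes\mathcal{X}$ are not pure, and your Choi computation applies only to $\rho=uu^*$. The dense set you actually need is the nonconvex set of pure states whose $C$ is invertible, so no segment argument in $\mathrm{D}(\mathcal{X}\otimes\mathcal{X})$ applies as phrased.

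The missing ingredient is exactly the property you set aside as an obstacle: upper semicontinuity. For an infimum over a dense set, upper semicontinuity is precisely what is required. $F_M$ is a decreasing limit of the continuous functions $\tr(\rho_\eta\#\xi_\eta)$, hence upper semicontinuous. So if $u_n\to u$ with invertible $C_n$, then
\[
\tr\big(\Phi_{uu^*}\#\Psi_{uu^*}\big)\geq\limsup_n\tr\big(\Phi_{u_nu_n^*}\#\Psi_{u_nu_n^*}\big)\geq\inf_{C\ \mathrm{invertible}}\tr\big(\Phi_{uu^*}\#\Psi_{uu^*}\big).
\]
This is how the paper closes the argument. You could also avoid limits entirely. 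The transformer inequality $K(A\#B)K^*\leq(KAK^*)\#(KBK^*)$ holds for arbitrary, possibly singular, $K=\mathbf{1}_{\mathcal{Y}}\otimes C$. Hence every pure state satisfies $\tr(\Phi_{uu^*}\#\Psi_{uu^*})\geq\tr\big(C^*C\,\Tr_{\mathcal{Y}}(J(\Phi)\#J(\Psi))\big)\geq\lambda_{\min}\big(\Tr_{\mathcal{Y}}(J(\Phi)\#J(\Psi))\big)$. Invertible $C$ with $C^*C$ close to a minimising eigenprojection then give the matching upper bound. Your other option, the transformer \emph{equality} with an injective factor as in \eqref{eq:W transfer}, does not apply as stated: $\mathbf{1}_{\mathcal{Y}}\otimes C$ is not injective when $C$ is singular, and the equality can then fail. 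Only the inequality survives, but it is the direction you need.
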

\begin{proof}
    For channels this is a consequence of \cite[Proposition~44]{KatariyaWilde2021} at $\alpha=\tfrac12$ (see also the semidefinite program for the geometric fidelity of channels in \cite[eq.~(20)]{NuradhaWilde2025}); we include the short argument since we need it for completely positive maps. Since $F_M$ is jointly concave \cite{cree2020fidelity} and $\rho\mapsto(\Phi\otimes\mathbf{1}_{\mathrm{L}(\mathcal{X})})(\rho)$ is linear, the function being minimised is concave on the compact convex set $\mathrm{D}(\mathcal{X}\otimes\mathcal{X})$, so the infimum may be taken over pure states $uu^*$. Every unit vector $u\in\mathcal{X}\otimes\mathcal{X}$ is of the form $u=\vectorise(A)=(\mathbf{1}_{\mathcal{X}}\otimes A^\top)\vectorise(\mathbf{1}_{\mathcal{X}})$ with $\|A\|_2=1$ \cite[Section~1.1.2]{watrous2018theory}, whence
    \begin{equation}\label{eq:vec identity}(\Phi\otimes\mathbf{1}_{\mathrm{L}(\mathcal{X})})(uu^*)=(\mathbf{1}_{\mathcal{Y}}\otimes A^\top)J(\Phi)(\mathbf{1}_{\mathcal{Y}}\otimes A^\top)^*,\end{equation}
    and similarly for $\Psi$. If $A$ is invertible, the transformer equality for invertible operators \cite{KuboAndo1980} gives
    \begin{align*}
    \tr\big((\Phi\otimes\mathbf{1}_{\mathrm{L}(\mathcal{X})})(uu^*)\#(\Psi\otimes\mathbf{1}_{\mathrm{L}(\mathcal{X})})(uu^*)\big)&=\tr\big((\mathbf{1}_{\mathcal{Y}}\otimes \overline{A}A^\top)\,J(\Phi)\#J(\Psi)\big)\\
    &=\tr\big((AA^*)^\top\,\Tr_{\mathcal{Y}}(J(\Phi)\#J(\Psi))\big),
    \end{align*}
    since $(\mathbf{1}_{\mathcal{Y}}\otimes A^\top)^*(\mathbf{1}_{\mathcal{Y}}\otimes A^\top)=\mathbf{1}_{\mathcal{Y}}\otimes(AA^*)^\top$. As $A$ ranges over invertible operators with $\|A\|_2=1$, $(AA^*)^\top$ ranges over the invertible density operators, so the infimum over such $u$ is
    \[\lambda_{\min}\big(\Tr_{\mathcal{Y}}(J(\Phi)\#J(\Psi))\big).\]
    Finally, $F_M$ is upper semicontinuous (it is a decreasing limit of the continuous functions $\tr(\rho_\eta\#\xi_\eta)$), so the infimum over all pure states equals the infimum over the dense set of pure states with invertible $A$.
\end{proof}
 
\subsection{An alarming property}
 
The Matsumoto fidelity is the trace of a geometric mean of positive semidefinite operators. It is well-known that if one takes the geometric mean of operators whose ranges have intersection $\{0\}$, then this mean is zero. This observation is stated in \cite{cree2020fidelity} as the distinct image property. This distinct image property of course extends to Choi matrices of completely positive maps. This leads to the following proposition. 
 
\begin{proposition}\label{impossibility proposition}
   Suppose $\dim\mathcal{Y}\geq 2$. There exist pairs of completely positive maps $(\Phi_1,\Psi_1)$, $(\Phi_2,\Psi_2)$ in $\mathrm{CP}(\mathcal{X},\mathcal{Y})$ such that $\Phi_1+\Psi_1=\Phi_2+\Psi_2$ but $\widehat{F}(\Phi_1,\Psi_1)=0$ and $\widehat{F}(\Phi_2,\Psi_2)=1$.  
\end{proposition}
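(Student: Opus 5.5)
The plan is to use replacement channels, whose Choi matrices are tensor products, so that both geometric means can be computed by hand. Then Proposition~\ref{lambda min formula} reduces $\widehat{F}$ to a minimal eigenvalue. Fix orthonormal $e_0,e_1\in\mathcal{Y}$, which exist since $\dim\mathcal{Y}\geq2$. Define
\[
\Phi_1(X)=\tr(X)\,e_0e_0^*,\quad \Psi_1(X)=\tr(X)\,e_1e_1^*,\quad \Phi_2(X)=\Psi_2(X)=\tfrac12\tr(X)\,(e_0e_0^*+e_1e_1^*).
\]
These are completely positive, in fact channels, and clearly $\Phi_1+\Psi_1=\Phi_2+\Psi_2$.

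The Choi matrix of $X\mapsto\tr(X)\sigma$ is $\sigma\otimes\mathbf{1}_{\mathcal{X}}$. Hence $J(\Phi_1)=e_0e_0^*\otimes\mathbf{1}_{\mathcal{X}}$ and $J(\Psi_1)=e_1e_1^*\otimes\mathbf{1}_{\mathcal{X}}$; these commute and have product zero. I will compute the geometric mean through the regularisation $A_\eta\#B_\eta$. For commuting positive definite $A_\eta,B_\eta$ the Kubo--Ando formula gives $A_\eta\#B_\eta=(A_\eta B_\eta)^{1/2}$. This tends to $(AB)^{1/2}=0$ as $\eta\downarrow0$ by continuity of the functional calculus, so $J(\Phi_1)\#J(\Psi_1)=0$. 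Proposition~\ref{lambda min formula} then gives $\widehat{F}(\Phi_1,\Psi_1)=\lambda_{\min}(0)=0$.

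For the second pair, $X\#X=X$ for any mean (already used in the proof of Theorem~\ref{fundamental principle}). So $J(\Phi_2)\#J(\Psi_2)=J(\Phi_2)=\tfrac12(e_0e_0^*+e_1e_1^*)\otimes\mathbf{1}_{\mathcal{X}}$, and its partial trace over $\mathcal{Y}$ is $\mathbf{1}_{\mathcal{X}}$. Therefore $\widehat{F}(\Phi_2,\Psi_2)=\lambda_{\min}(\mathbf{1}_{\mathcal{X}})=1$.

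No step is a real obstacle. The only point needing care is the evaluation of the geometric mean of singular commuting operators, which is handled by the regularisation described in the definition of an operator connection. One could alternatively invoke the distinct image property from \cite{cree2020fidelity}.
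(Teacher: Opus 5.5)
Your proof is correct and follows the paper's argument: replacement maps onto two pure states give $J(\Phi_1)\#J(\Psi_1)=0$, and the equal split $(\tfrac12\Omega,\tfrac12\Omega)$ gives $\widehat F=1$ via Proposition~\ref{lambda min formula}. The only difference is that the paper allows any two distinct, not necessarily orthogonal, pure states and cites the distinct image property. That generality is what Corollary~\ref{impossibility corollary} later needs to make the two pairs arbitrarily close in the diamond norm, whereas your orthogonal choice lets you compute the mean directly through the commuting regularisation.
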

\begin{proof}
Let $\phi=y_1y_1^*$ and $\psi=y_2y_2^*$ be distinct pure states on $\mathcal{Y}$, which exist since $\dim\mathcal{Y}\geq2$, and let $\Phi_1=\tr(\cdot)\,\phi$ and $\Psi_1=\tr(\cdot)\,\psi$ be the corresponding replacement maps. Their Choi matrices are
\[
J(\Phi_1)=\phi\otimes\mathbf{1}_{\mathcal{X}},\qquad J(\Psi_1)=\psi\otimes\mathbf{1}_{\mathcal{X}},
\]
with ranges $\mathrm{span}\{y_1\}\otimes\mathcal{X}$ and $\mathrm{span}\{y_2\}\otimes\mathcal{X}$. Since $\phi\neq\psi$, the unit vectors $y_1$ and $y_2$ are linearly independent, so these ranges intersect trivially. Hence $J(\Phi_1)\#J(\Psi_1)=0$ by the distinct image property \cite{cree2020fidelity}, and Proposition~\ref{lambda min formula} gives $\widehat{F}(\Phi_1,\Psi_1)=\lambda_{\min}(0)=0$.
 
Let $\Omega=\Phi_1+\Psi_1$ and $\Phi_2=\Psi_2=\tfrac12\Omega$, so that $\Phi_1+\Psi_1=\Phi_2+\Psi_2$ and $J(\Phi_2)=J(\Psi_2)=\tfrac12(\phi+\psi)\otimes\mathbf{1}_{\mathcal{X}}$. Since $A\#A=A$, we have $J(\Phi_2)\#J(\Psi_2)=J(\Phi_2)$, and therefore
\[
\widehat{F}(\Phi_2,\Psi_2)=\lambda_{\min}\Big(\Tr_{\mathcal{Y}}\big(\tfrac12(\phi+\psi)\otimes\mathbf{1}_{\mathcal{X}}\big)\Big)=\lambda_{\min}\big(\tfrac12\tr(\phi+\psi)\,\mathbf{1}_{\mathcal{X}}\big)=1,
\]
since $\phi$ and $\psi$ are states.
\end{proof}
 
Proposition~\ref{impossibility proposition} suggests something alarming. One may have two pairs of completely positive maps which sum to the same completely positive map, with each pair having opposite geometric fidelities. Stronger still, by rotating $\phi$ slightly, the two splittings can be made arbitrarily close in the completely bounded trace norm. 
\begin{corollary}[The completely bounded trace norm is ineffective for geometric fidelity]\label{impossibility corollary}
    Suppose $\dim\mathcal{Y}\geq 2$. For any $\eta>0$, there exist completely positive maps $\Phi_1,\Psi_1,\Phi_2, \Psi_2\in \mathrm{CP}(\mathcal{X},\mathcal{Y})$ such that $\Phi_1+\Psi_1=\Phi_2+\Psi_2$, \[|||\Phi_1-\Phi_2|||_1 =|||\Psi_1-\Psi_2|||_1<\eta\] and 
    \[\widehat{F}(\Phi_1,\Psi_1)=0, \,\, \widehat{F}(\Phi_2,\Psi_2)=1. \]
\end{corollary}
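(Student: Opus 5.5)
We modify the construction in the proof of Proposition~\ref{impossibility proposition} by rotating one of the two pure states. Fix orthonormal vectors $y_1,y_2\in\mathcal{Y}$; they exist because $\dim\mathcal{Y}\geq2$. For a small angle $\theta>0$ put $y_\theta=\cos\theta\,y_1+\sin\theta\,y_2$, and let $\phi=y_1y_1^*$ and $\psi_\theta=y_\theta y_\theta^*$. These are distinct pure states.

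\emph{The four maps.} Take the replacement channels $\Phi_1=\tr(\cdot)\,\phi$ and $\Psi_1=\tr(\cdot)\,\psi_\theta$, and set $\Phi_2=\Psi_2=\tfrac12(\Phi_1+\Psi_1)$. The proof of Proposition~\ref{impossibility proposition} applies verbatim, since it only uses that $\phi\neq\psi_\theta$ are pure states. It gives $\Phi_1+\Psi_1=\Phi_2+\Psi_2$, $\widehat{F}(\Phi_1,\Psi_1)=0$ and $\widehat{F}(\Phi_2,\Psi_2)=1$, and it holds for every $\theta\in(0,\pi/2]$.

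\emph{The two distances agree.} We have
\[
\Phi_1-\Phi_2=\tfrac12(\Phi_1-\Psi_1)=\tr(\cdot)\,\tfrac12(\phi-\psi_\theta),
\qquad
\Psi_1-\Psi_2=-(\Phi_1-\Phi_2).
\]
So the two differences have the same norm, and it remains to compute $|||\tr(\cdot)\,\Delta|||_1$ for a Hermitian $\Delta$.

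\emph{The norm of a replacement-type map.} For $\rho\in\mathrm{D}(\mathcal{X}\otimes\mathcal{Z})$,
\[
\big(\tr(\cdot)\,\Delta\otimes\mathbf{1}_{\mathrm{L}(\mathcal{Z})}\big)(\rho)=\Delta\otimes\Tr_{\mathcal{X}}(\rho),
\]
which has trace norm $\|\Delta\|_1\,\|\Tr_{\mathcal{X}}\rho\|_1$. The completely bounded trace norm is a maximum over states, where this equals $\|\Delta\|_1$ since $\|\Tr_{\mathcal{X}}\rho\|_1=1$. Hence $|||\tr(\cdot)\,\Delta|||_1=\|\Delta\|_1$.

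\emph{Conclusion.} For pure states, $\|\phi-\psi_\theta\|_1=2\sqrt{1-|\langle y_1,y_\theta\rangle|^2}=2\sin\theta$. Therefore
\[
|||\Phi_1-\Phi_2|||_1=|||\Psi_1-\Psi_2|||_1=\sin\theta.
\]
Choosing $\theta$ with $0<\sin\theta<\eta$ finishes the proof.

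There is no real obstacle; the only point needing care is the completely bounded norm of $\tr(\cdot)\,\Delta$, which the tensor factorisation above settles.
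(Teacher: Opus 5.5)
Your proposal is correct and follows the paper's proof essentially verbatim: the same rotated pure states and replacement maps, the same midpoint splitting $\Phi_2=\Psi_2=\tfrac12(\Phi_1+\Psi_1)$, and the same tensor factorisation $\Delta\otimes\Tr_{\mathcal{X}}(\rho)$, which gives $|||\Phi_1-\Phi_2|||_1=\sin\theta$. (Restricting the supremum to states is justified because $\tr(\cdot)\,\Delta$ is Hermitian-preserving, as the paper notes. Alternatively, $\|\Tr_{\mathcal{X}}X\|_1\leq\|X\|_1$ holds for every operator $X$, which gives the bound directly.)
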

\begin{proof}
Choose $\theta\in(0,\pi/2)$ with $\sin\theta<\eta$, and let $\phi=y_1y_1^*$ and $\psi=y_2y_2^*$ be pure states on $\mathcal{Y}$ with $|\langle y_1,y_2\rangle|=\cos\theta$; they are distinct since $\theta>0$. Define $\Phi_1,\Psi_1,\Phi_2,\Psi_2$ as in the proof of Proposition~\ref{impossibility proposition}. That proof gives $\Phi_1+\Psi_1=\Phi_2+\Psi_2$, $\widehat{F}(\Phi_1,\Psi_1)=0$ and $\widehat{F}(\Phi_2,\Psi_2)=1$.
 
It remains to compute the distances. We have $\Phi_1-\Phi_2=\tfrac12(\Phi_1-\Psi_1)=-(\Psi_1-\Psi_2)$. For $X_1,X_2\in\mathrm{L}(\mathcal{X})$,
\[
\big((\Phi_1-\Psi_1)\otimes\mathbf{1}_{\mathrm{L}(\mathcal{X})}\big)(X_1\otimes X_2)=\tr(X_1)\,(\phi-\psi)\otimes X_2,
\]
so by linearity, for every $\rho\in\mathrm{D}(\mathcal{X}\otimes\mathcal{X})$,
\[
\big((\Phi_1-\Psi_1)\otimes\mathbf{1}_{\mathrm{L}(\mathcal{X})}\big)(\rho)=(\phi-\psi)\otimes\rho_R,\qquad \rho_R=(\tr\otimes\mathbf{1}_{\mathrm{L}(\mathcal{X})})(\rho)\in\mathrm{D}(\mathcal{X}).
\]
Since $\|A\otimes B\|_1=\|A\|_1\|B\|_1$ and $\|\rho_R\|_1=1$, this has trace norm $\|\phi-\psi\|_1=2\sqrt{1-|\langle y_1,y_2\rangle|^2}=2\sin\theta$ for every $\rho$. As $\Phi_1-\Psi_1$ is Hermitian-preserving, its completely bounded trace norm is the supremum of this quantity over $\rho\in\mathrm{D}(\mathcal{X}\otimes\mathcal{X})$ \cite[Section~3.3]{watrous2018theory}, so $|||\Phi_1-\Psi_1|||_1=2\sin\theta$ and
\[
|||\Phi_1-\Phi_2|||_1=|||\Psi_1-\Psi_2|||_1=\sin\theta<\eta. \qedhere
\]
\end{proof}
 
\begin{remark}
The mechanism behind Proposition~\ref{impossibility proposition} and Corollary~\ref{impossibility corollary} is known at the level of states: the Matsumoto fidelity of two distinct pure states vanishes \cite{cree2020fidelity}, and maximal $f$-divergences are lower semicontinuous \cite[Section~5]{Hiai2019maximal}, \cite[Theorem~A.4]{MosonyiHiai2024} (see also \cite{Matsumoto2018newf}) but in general not continuous. Section~\ref{sec:dichotomy} determines exactly where the quantities considered here are continuous.
\end{remark}
 
We seek to address this problem through the ARN derivative, in particular to explain why it is allowed to happen. First we must take a detour and develop some norm inequalities. 
 
\section{Norm and continuity estimates}\label{sec:norms}
 
Throughout this section $|||\cdot|||_1$ denotes the completely bounded trace norm \cite{Kitaev1997}, \cite[Section~3.3.2]{watrous2018theory}. For completely positive maps it coincides with the induced trace norm and equals $\|\Phi^*(\mathbf{1}_{\mathcal{Y}})\|_\infty$; this follows from the Russo--Dye theorem \cite{RussoDye1966}, in the form \cite[Proposition~3.6]{Paulsen2003}, together with the duality between the completely bounded trace norm and the completely bounded operator norm \cite[Theorem~3]{johnston2009computing}.
 
\begin{lemma}[Sharp Lipschitz inequality in ARN derivatives]\label{lipschitz inequality}
Let $0\leq \Psi_1,\Psi_2\leq \Phi$ with ARN derivatives $T_1,T_2$ with respect to a fixed Kraus decomposition of $\Phi$. Then 
\[
    |||\Psi_{T_1}-\Psi_{T_2}|||_1\leq |||\Psi_{|T_1-T_2|}|||_1\leq |||\Phi|||_1\|T_1-T_2\|_\infty.
\]
\end{lemma}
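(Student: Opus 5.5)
Write $D=T_1-T_2$, which is Hermitian, and split it as $D=D_+-D_-$ with $D_\pm\geq0$ and $|D|=D_++D_-$. The map $T\mapsto\Psi_T$ is linear by its Kraus formula, so $\Psi_{T_1}-\Psi_{T_2}=\Psi_{D_+}-\Psi_{D_-}$. By Theorem~\ref{thm:arveson} (Choi matrices $WD_\pm W^*\geq0$), both $\Psi_{D_+}$ and $\Psi_{D_-}$ are completely positive.

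\emph{First inequality.} The triangle inequality in the completely bounded trace norm gives
\[
|||\Psi_{T_1}-\Psi_{T_2}|||_1\leq|||\Psi_{D_+}|||_1+|||\Psi_{D_-}|||_1 .
\]
For completely positive maps, $|||\cdot|||_1=\|(\cdot)^*(\mathbf{1}_{\mathcal{Y}})\|_\infty$ (Theorem~\ref{effect theorem}). By item (3) there,
\[
\Psi_{D_\pm}^*(\mathbf{1}_{\mathcal{Y}})=S^*(\mathbf{1}_{\mathcal{Y}}\otimes D_\pm^\top)S,
\]
and both are positive semidefinite. For positive semidefinite $P,Q$ we have $\|P\|_\infty+\|Q\|_\infty\geq\|P+Q\|_\infty$, which is the wrong direction, so the plain triangle inequality does not reach $|||\Psi_{|D|}|||_1$. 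I would instead use the duality form: $|||\Xi|||_1=\max\|(\Xi\otimes\mathbf{1})(uu^*)\|_1$ for Hermitian-preserving $\Xi$. For each input, the trace norm of $(\Psi_{D_+}-\Psi_{D_-})\otimes\mathbf{1}$ applied to $uu^*$ is at most the sum of the traces of the two positive outputs. That sum equals $\tr((\Psi_{D_+}+\Psi_{D_-})\otimes\mathbf{1})(uu^*)=\tr(\Psi_{|D|}\otimes\mathbf{1})(uu^*)$. Maximising over $u$, and using that for a completely positive map the maximum of the output trace is its $|||\cdot|||_1$ norm (Theorem~\ref{effect theorem}, last display, applied to $\Psi_{|D|}\otimes\mathbf{1}$, which still has positive derivative), gives $|||\Psi_{T_1}-\Psi_{T_2}|||_1\leq|||\Psi_{|D|}|||_1$. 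This pointwise-in-$u$ argument is the main step; everything else is monotonicity.

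\emph{Second inequality.} Since $0\leq|D|\leq\|D\|_\infty\mathbf{1}_{\mathbb{C}^r}$, hence $0\leq|D|^\top\leq\|D\|_\infty\mathbf{1}$, conjugation by $S$ gives
\[
0\leq\Psi_{|D|}^*(\mathbf{1}_{\mathcal{Y}})=S^*(\mathbf{1}_{\mathcal{Y}}\otimes|D|^\top)S\leq\|D\|_\infty S^*S=\|D\|_\infty\Phi^*(\mathbf{1}_{\mathcal{Y}}),
\]
using Theorem~\ref{effect theorem}(3) with $T=\mathbf{1}$. Taking operator norms, which are monotone on positive semidefinite operators, and applying the Russo--Dye identity to $\Psi_{|D|}$ and $\Phi$ yields
\[
|||\Psi_{|D|}|||_1\leq|||\Phi|||_1\,\|T_1-T_2\|_\infty .
\]
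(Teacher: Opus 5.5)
Your proposal is correct and follows the paper's proof essentially step for step: the Jordan decomposition $T_1-T_2=D_+-D_-$, attainment of $|||\cdot|||_1$ on a pure input for the Hermitian-preserving difference, and the pointwise bound of $\|X_+-X_-\|_1$ by $\tr X_++\tr X_-=\tr\big((\Psi_{|D|}\otimes\mathbf{1}_{\mathrm{L}(\mathcal{X})})(uu^*)\big)$, followed by the domination $S^*(\mathbf{1}_{\mathcal{Y}}\otimes|D|^\top)S\leq\|D\|_\infty S^*S$ and Russo--Dye. The only cosmetic difference is in bounding that trace: the paper writes it as $\tr(\Psi_{|D|}^*(\mathbf{1}_{\mathcal{Y}})\rho)\leq\|\Psi_{|D|}^*(\mathbf{1}_{\mathcal{Y}})\|_\infty$ using the reduced state $\rho$ of $uu^*$, which is more direct than invoking Theorem~\ref{effect theorem} for $\Psi_{|D|}\otimes\mathbf{1}_{\mathrm{L}(\mathcal{X})}$. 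For the inequality direction you need, your version only uses $|||\Psi_{|D|}\otimes\mathbf{1}_{\mathrm{L}(\mathcal{X})}|||_1\leq|||\Psi_{|D|}|||_1$, which is immediate from the definition of $|||\cdot|||_1$.
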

\begin{remark}
    The outer inequality was shown by Raginsky in \cite{Raginsky2003}.
\end{remark}
\begin{proof}
    Let $\Theta=\Psi_{T_1}-\Psi_{T_2}$ and write $T_1-T_2=A_+-A_-$ with $A_\pm\geq 0$ and $A_+A_-=0$ (the Jordan decomposition), so that $|T_1-T_2|=A_++A_-$. By linearity of $T\mapsto\Psi_T$ we have 
    \[
        \Psi_{T_1}-\Psi_{T_2}=\Psi_{A_+}-\Psi_{A_-}.
    \]
    The map $\Theta$ is Hermitian-preserving, so by \cite[Section~3.3]{watrous2018theory}, the completely bounded trace norm is attained on a pure state $uu^*$ with $u\in \mathcal{X}\otimes \mathcal{X}$ a unit vector. Since $uu^*$ is positive semidefinite, and each $\Psi_{A_{\pm}}$ is completely positive, we have 
    \[
        (\Theta\otimes \mathbf{1}_{\mathrm{L}(\mathcal{X})})(uu^*)=X_+-X_- 
    \]
    where $X_+=(\Psi_{A_+}\otimes \mathbf{1}_{\mathrm{L}(\mathcal{X})})(uu^*)\geq 0$ and $X_-=(\Psi_{A_-}\otimes \mathbf{1}_{\mathrm{L}(\mathcal{X})})(uu^*)\geq0$. Let $\rho$ be the reduced state of $uu^*$ on the first tensor factor, so that $\tr((\Xi\otimes\mathbf{1}_{\mathrm{L}(\mathcal{X})})(uu^*))=\tr(\Xi^*(\mathbf{1}_{\mathcal{Y}})\rho)$ for every map $\Xi$. By the triangle inequality and the fact that the trace norm of a positive operator is its trace, we obtain
    \[
        \begin{aligned}
        |||\Theta|||_1=\|X_+-X_-\|_1&\leq \tr(X_+)+\tr(X_-)=\tr\big((\Psi_{A_+}+\Psi_{A_-})^*(\mathbf{1}_{\mathcal{Y}})\rho\big)\\
        &=\tr\big(\Psi_{|T_1-T_2|}^*(\mathbf{1}_{\mathcal{Y}})\rho\big)\leq \|\Psi_{|T_1-T_2|}^*(\mathbf{1}_{\mathcal{Y}})\|_\infty,
        \end{aligned}
    \]
    the last step because $\Psi_{|T_1-T_2|}^*(\mathbf{1}_{\mathcal{Y}})\geq0$ and $\rho$ is a state. Since $\Psi_{|T_1-T_2|}$ is completely positive, \cite[Proposition~3.6]{Paulsen2003} and \cite[Theorem~3]{johnston2009computing} give
    \[
        \|\Psi_{|T_1-T_2|}^*(\mathbf{1}_{\mathcal{Y}})\|_\infty =|||\Psi_{|T_1-T_2|}|||_1,
    \]
    which proves the first inequality. For the second, Theorem~\ref{effect theorem} gives \[\Psi_{|T_1-T_2|}^*(\mathbf{1}_{\mathcal{Y}})=S^*(\mathbf{1}_{\mathcal{Y}}\otimes |T_1-T_2|^\top)S.\] Since $0\leq |T_1-T_2|^\top\leq \|T_1-T_2\|_\infty\mathbf{1}_{\mathcal{Z}}$ and conjugation preserves the Loewner order,
    \[0\leq S^*(\mathbf{1}_{\mathcal{Y}}\otimes|T_1-T_2|^\top)S\leq \|T_1-T_2\|_\infty S^*S=\|T_1-T_2\|_\infty\Phi^*(\mathbf{1}_{\mathcal{Y}}).\] By the Weyl monotonicity principle \cite[Section~III.2]{bhatia1997matrix}, this inequality between positive operators carries over to the operator norm, so 
    \[
        \|S^*(\mathbf{1}_{\mathcal{Y}}\otimes |T_1-T_2|^\top) S\|_\infty\leq \|T_1-T_2\|_\infty\|\Phi^*(\mathbf{1}_{\mathcal{Y}})\|_\infty=|||\Phi|||_1\|T_1-T_2\|_\infty,
    \]
    where we have applied \cite[Proposition~3.6]{Paulsen2003} again in the last equality. 
\end{proof}
 
\begin{proposition}\label{K6}
\begin{enumerate}
    \item For a complex Euclidean space $\mathcal{Z}$ and $0\leq E,E'\leq \mathbf{1}_{\mathcal{Z}}$, 
    \[\|(E-E^2)-(E'-E'^2)\|_\infty\leq \|E-E'\|_\infty.\]
    \item For $E,E'$ as above, with $g_\#(t)=\sqrt{t(1-t)}$, we have 
    \[
        \|g_\#(E)-g_\#(E')\|_\infty\leq \|E-E'\|_\infty^{1/2}.
    \]
\end{enumerate}
Hence, for completely positive pairs $(\Phi_1,\Psi_1), (\Phi_2,\Psi_2)$ with the same sum $\Omega$ and derivatives $T_1$ and $T_2$ of $\Phi_1$ and $\Phi_2$ with respect to $\Omega$, we have that 
\[
    |||\Phi_1\#\Psi_1-\Phi_2\#\Psi_2|||_1\leq |||\Omega|||_1\|T_1-T_2\|_\infty^{1/2}.
\]
\end{proposition}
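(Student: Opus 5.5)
Part (1) is an operator Lipschitz estimate for $h(t)=t-t^2$ on $[0,1]$. Write
\[
(E-E^2)-(E'-E'^2)=(E-E')-\tfrac12\big[(E-E')(E+E')+(E+E')(E-E')\big],
\]
which follows from $E^2-E'^2=\tfrac12[(E-E')(E+E')+(E+E')(E-E')]$. Put $D=E-E'$ and $S=E+E'$. The right-hand side is $\tfrac12\big[D(\mathbf{1}_{\mathcal{Z}}-S)+(\mathbf{1}_{\mathcal{Z}}-S)D\big]$. Since $0\leq S\leq 2\cdot\mathbf{1}_{\mathcal{Z}}$, we have $\|\mathbf{1}_{\mathcal{Z}}-S\|_\infty\leq1$. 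Submultiplicativity then gives the bound $\tfrac12(\|D\|+\|D\|)=\|D\|_\infty$.

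For (2), note that $g_\#(E)=(E-E^2)^{1/2}$ with $E-E^2\geq0$. I would invoke the standard operator Hölder continuity of the square root, $\|A^{1/2}-B^{1/2}\|_\infty\leq\|A-B\|_\infty^{1/2}$ for $A,B\geq0$. This is itself elementary: take a unit eigenvector $x$ of $A^{1/2}-B^{1/2}$ with eigenvalue $\lambda$ of largest modulus, say $\lambda\geq0$. Then
\[
x^*(A-B)x=x^*\big(A^{1/2}(A^{1/2}-B^{1/2})+(A^{1/2}-B^{1/2})B^{1/2}\big)x=\lambda\, x^*(A^{1/2}+B^{1/2})x\geq\lambda^2,
\]
because $x^*A^{1/2}x\geq\lambda$. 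Applying this with $A=E-E^2$ and $B=E'-E'^2$ and combining with (1) gives (2). This step is the only non-routine one, and the trick above handles it.

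For the consequence, Theorem~\ref{fundamental principle} shows that $\Phi_i\#\Psi_i$ is a submap of $\Omega$ with ARN derivative $g_\#(T_i)$, taken in the same Kraus decomposition of $\Omega$, and $0\leq T_i\leq\mathbf{1}_{\mathbb{C}^r}$. Lemma~\ref{lipschitz inequality}, applied to $\Psi_{g_\#(T_1)}$ and $\Psi_{g_\#(T_2)}$ as submaps of $\Omega$, then yields
\[
|||\Phi_1\#\Psi_1-\Phi_2\#\Psi_2|||_1\leq|||\Omega|||_1\,\|g_\#(T_1)-g_\#(T_2)\|_\infty,
\]
and (2) with $E=T_1$, $E'=T_2$ finishes the argument.
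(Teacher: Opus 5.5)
Your proof is correct and follows the paper's argument essentially step for step. You use the same symmetrised factorisation $\tfrac12\big(D(\mathbf{1}_{\mathcal{Z}}-S)+(\mathbf{1}_{\mathcal{Z}}-S)D\big)$ for (1), the square-root H\"older bound combined with (1) for (2), and Theorem~\ref{fundamental principle} together with Lemma~\ref{lipschitz inequality} for the consequence. The only difference is that you prove $\|A^{1/2}-B^{1/2}\|_\infty\leq\|A-B\|_\infty^{1/2}$ directly by the extremal-eigenvector argument, which is valid, whereas the paper cites Ando's inequality for this step.
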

\begin{proof}
    \begin{enumerate}
        \item Let $A=E-E'$ and $B=\mathbf{1}_{\mathcal{Z}}-E-E'$. Since 
        \[
            E^2-E'^2=\frac{1}{2}(A(E+E')+(E+E')A),
        \]
        we have 
        \[
            (E-E^2)-(E'-E'^2)=\frac{1}{2}(AB+BA).
        \]
        Since $0\leq E+E'\leq 2\cdot\mathbf{1}_{\mathcal{Z}}$, we have $-\mathbf{1}_{\mathcal{Z}}\leq B\leq\mathbf{1}_{\mathcal{Z}}$, so $\|B\|_\infty\leq 1$. By the triangle inequality and submultiplicativity of the operator norm, 
        \[\|(E-E^2)-(E'-E'^2)\|_\infty\leq \|A\|_\infty\|B\|_\infty\leq \|A\|_\infty=\|E-E'\|_\infty.\]
        \item Since $E$ and $\mathbf{1}_{\mathcal{Z}}-E$ commute, we have $g_{\#}(E)=(E-E^2)^{1/2}\geq 0$. Since the square root is operator monotone on $[0,\infty)$ and $\sqrt{0}=0$, we have by \cite{Ando1988} (see also \cite[Theorem~X.1.1]{bhatia1997matrix}) that $\|\sqrt{X}-\sqrt{Y}\|_\infty\leq\|X-Y\|_\infty^{1/2}$ for $X,Y\geq 0$, and therefore, using part (1),
        \[
            \|g_\#(E)-g_\#(E')\|_\infty\leq \|(E-E^2)-(E'-E'^2)\|_\infty^{1/2}\leq \|E-E'\|_\infty^{1/2}.
        \] 
    \end{enumerate}
    Suppose now that $\Phi_1+\Psi_1=\Phi_2+\Psi_2=\Omega$ and let $T_1$ and $T_2$ be the respective ARN derivatives of $\Phi_1$ and $\Phi_2$ with respect to $\Omega$. By Theorem~\ref{fundamental principle}, $\Phi_i\#\Psi_i=\Psi_{g_\#(T_i)}$, so Lemma~\ref{lipschitz inequality} (applied with $\Phi=\Omega$) gives
    \[
        |||\Phi_1\#\Psi_1-\Phi_2\#\Psi_2|||_1\leq |||\Omega|||_1\|g_{\#}(T_1)-g_{\#}(T_2)\|_\infty.
    \]
    Using part (2), we obtain 
    \[
        |||\Phi_1\#\Psi_1-\Phi_2\#\Psi_2|||_1\leq |||\Omega|||_1\|T_1-T_2\|_\infty^{1/2}.
    \]
\end{proof}
 
\begin{remark}\label{rem:sharp}
Both the constant and the exponent in part (2) are sharp: for $1\times1$ matrices $E=\delta$ and $E'=0$ one has $\|g_\#(E)-g_\#(E')\|_\infty=\sqrt{\delta(1-\delta)}$, so the ratio to $\|E-E'\|_\infty^{1/2}=\sqrt{\delta}$ tends to $1$ as $\delta\to 0$.
\end{remark}
 
\begin{theorem}\label{Modulus of continuity results}
    Let $\Omega=\Phi_1+\Psi_1=\Phi_2+\Psi_2$ be completely positive maps and let $T_i$ be the ARN derivative of $\Phi_i$ with respect to $\Omega$. For a continuous function $g:[0,1]\to\mathbb{R}$ let \[\omega_g(\delta)=\sup\big\{\|g(E)-g(E')\|_\infty:\ 0\leq E,E'\leq\mathbf{1}_{\mathcal{Z}},\ \|E-E'\|_\infty\leq\delta\big\},\qquad\delta\geq0,\] the supremum being over all finite-dimensional $\mathcal{Z}$. Then, for every Kubo--Ando mean $\sigma$,
    \[
        |||\Phi_1\,\sigma\,\Psi_1-\Phi_2\,\sigma\,\Psi_2|||_1\leq |||\Omega|||_1\,\omega_{g_\sigma}(\|T_1-T_2\|_\infty). 
    \] 
\end{theorem}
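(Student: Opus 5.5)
The plan follows the template of the last part of Proposition~\ref{K6}, with $g_\#$ replaced by $g_\sigma$. The first step is Theorem~\ref{fundamental principle}, applied to each pair $(\Phi_i,\Psi_i)$ with the same Kraus decomposition $V_1,\dots,V_r$ of $\Omega$. It shows that $\Phi_i\,\sigma\,\Psi_i$ is a submap of $\Omega$ whose ARN derivative is $g_\sigma(T_i)$, that is, $\Phi_i\,\sigma\,\Psi_i=\Psi_{g_\sigma(T_i)}$. It also gives $0\leq g_\sigma(T_i)\leq\mathbf{1}_{\mathbb{C}^r}$, so both means satisfy the hypotheses of Lemma~\ref{lipschitz inequality}: they are completely positive submaps of $\Omega$ whose derivatives are taken in a fixed Kraus decomposition.

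The second step is to apply Lemma~\ref{lipschitz inequality} with $\Phi=\Omega$ and the derivatives $g_\sigma(T_1)$ and $g_\sigma(T_2)$. This gives
\[
|||\Phi_1\,\sigma\,\Psi_1-\Phi_2\,\sigma\,\Psi_2|||_1\leq|||\Omega|||_1\,\|g_\sigma(T_1)-g_\sigma(T_2)\|_\infty .
\]

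The third step is to bound $\|g_\sigma(T_1)-g_\sigma(T_2)\|_\infty$ by the modulus. The operators $E=T_1$ and $E'=T_2$ act on $\mathcal{Z}=\mathbb{C}^r$ and satisfy $0\leq E,E'\leq\mathbf{1}_{\mathcal{Z}}$. They are admissible in the supremum defining $\omega_{g_\sigma}(\delta)$ with $\delta=\|T_1-T_2\|_\infty$, so by definition $\|g_\sigma(T_1)-g_\sigma(T_2)\|_\infty\leq\omega_{g_\sigma}(\|T_1-T_2\|_\infty)$. Combining this with the second step proves the theorem.

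The only point needing care is that $g_\sigma$ is a continuous function on $[0,1]$, which the definition of $\omega_g$ requires. On $(0,1)$ this follows from continuity of $f_\sigma$. At the endpoints, the values $g_\sigma(0)=f_\sigma'(\infty)$ and $g_\sigma(1)=f_\sigma(0^+)$ were chosen in the text preceding Theorem~\ref{fundamental principle} as the one-sided limits, and these limits are finite because $f_\sigma$ is nonnegative, nondecreasing and concave. With continuity in hand, $g_\sigma(T_i)$ in the functional calculus agrees with the derivative produced by Theorem~\ref{fundamental principle}, including when $T_i$ is singular. Finiteness of $\omega_{g_\sigma}$ is not needed for the inequality, though it holds trivially since $0\leq g_\sigma\leq1$ gives $\omega_{g_\sigma}\leq1$.
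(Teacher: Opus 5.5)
Your proposal is correct and is essentially the paper's proof: Theorem~\ref{fundamental principle} identifies $\Phi_i\,\sigma\,\Psi_i$ with $\Psi_{g_\sigma(T_i)}$ for the effects $g_\sigma(T_i)$, then Lemma~\ref{lipschitz inequality} with $\Phi=\Omega$ bounds the difference by $|||\Omega|||_1\,\|g_\sigma(T_1)-g_\sigma(T_2)\|_\infty$, and the definition of $\omega_{g_\sigma}$ finishes the argument. Your remark on the continuity of $g_\sigma$ at the endpoints is a harmless addition, which the paper records in Lemma~\ref{modulus to zero}.
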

    \begin{proof}
        By Theorem~\ref{fundamental principle},
        \[
            \Phi_1\,\sigma\,\Psi_1-\Phi_2\,\sigma\,\Psi_2= \Psi_K
        \]
        where $K=g_\sigma(T_1)-g_\sigma(T_2)$. By definition of $\omega_{g_\sigma}$, 
    \[\|K\|_\infty\leq \omega_{g_\sigma}(\|T_1-T_2\|_\infty).\] Since $g_\sigma(T_1)$ and $g_\sigma(T_2)$ are effects, Lemma~\ref{lipschitz inequality} gives 
   $|||\Psi_K|||_1\leq |||\Omega|||_1\|K\|_\infty$,
     which implies the claim.
     \end{proof}

We now come to the main result, which suggests that in order to resolve this problematic behaviour of the geometric channel fidelity, one has to look at the ARN derivatives of the completely positive maps. 
\begin{theorem}\label{main theorem}
 Suppose we are given completely positive maps $\Phi_1, \Psi_1 ,\Phi_2, \Psi_2$ such that 
    $\Omega=\Phi_1+\Psi_1=\Phi_2+\Psi_2$. Let $\widehat{F}_1=\widehat{F}(\Phi_1,\Psi_1)$ and $\widehat{F}_2=\widehat{F}(\Phi_2,\Psi_2)$. Let $T_1$ and $T_2$ be the respective ARN derivatives of $\Phi_1$ and $\Phi_2$ with respect to $\Omega$. Then 
    \[
        |\widehat{F}_1-\widehat{F}_2|\leq \|\Omega^*(\mathbf{1}_{\mathcal{Y}})\|_\infty \|T_1-T_2\|_\infty^{1/2}.
    \]
\end{theorem}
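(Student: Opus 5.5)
The plan is to combine Proposition~\ref{lambda min formula} with Theorem~\ref{fundamental principle} and the estimate of Proposition~\ref{K6}(2). By Proposition~\ref{lambda min formula}, $\widehat{F}_i=\lambda_{\min}(\Tr_{\mathcal{Y}}(J(\Phi_i)\#J(\Psi_i)))$. By Theorem~\ref{fundamental principle}, $J(\Phi_i)\#J(\Psi_i)=Wg_\#(T_i)W^*=J(\Psi_{g_\#(T_i)})$, with $W$ built from a fixed Kraus decomposition of $\Omega$. So both fidelities are minimal eigenvalues of partial traces of Choi matrices of submaps of $\Omega$. Since $\lambda_{\min}$ is $1$-Lipschitz in operator norm (Weyl), it suffices to bound
\[
\big\|\Tr_{\mathcal{Y}}\big(W(g_\#(T_1)-g_\#(T_2))W^*\big)\big\|_\infty .
\]

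Next I identify the partial trace of a Choi matrix with the adjoint evaluated at the identity. For any linear map $\Xi$ one has $\Tr_{\mathcal{Y}}J(\Xi)=\Xi^*(\mathbf{1}_{\mathcal{Y}})^\top$; this is a direct check on $X\mapsto AXB^*$, where both sides equal $(B^*A)^\top$. Hence the quantity above is $\|\Psi_K^*(\mathbf{1}_{\mathcal{Y}})\|_\infty$ with $K=g_\#(T_1)-g_\#(T_2)$ Hermitian, since transposition preserves the operator norm. Theorem~\ref{effect theorem}(3) gives $\Psi_K^*(\mathbf{1}_{\mathcal{Y}})=S^*(\mathbf{1}_{\mathcal{Y}}\otimes K^\top)S$; its proof only uses that $K$ is Hermitian. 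From $-\|K\|_\infty\mathbf{1}_{\mathcal{Z}}\leq K^\top\leq\|K\|_\infty\mathbf{1}_{\mathcal{Z}}$ and conjugation by $S$,
\[
-\|K\|_\infty\,\Omega^*(\mathbf{1}_{\mathcal{Y}})\leq\Psi_K^*(\mathbf{1}_{\mathcal{Y}})\leq\|K\|_\infty\,\Omega^*(\mathbf{1}_{\mathcal{Y}}).
\]
Therefore $\|\Psi_K^*(\mathbf{1}_{\mathcal{Y}})\|_\infty\leq\|K\|_\infty\|\Omega^*(\mathbf{1}_{\mathcal{Y}})\|_\infty$.

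Finally, Proposition~\ref{K6}(2) gives $\|K\|_\infty\leq\|T_1-T_2\|_\infty^{1/2}$, and chaining the three inequalities yields the claim. Alternatively, one could route through Proposition~\ref{K6}'s diamond-norm bound and use $|||\Omega|||_1=\|\Omega^*(\mathbf{1}_{\mathcal{Y}})\|_\infty$. That route still needs $\|\Tr_{\mathcal{Y}}J(\Theta)\|_\infty\leq|||\Theta|||_1$ for a Hermitian-preserving difference $\Theta$, so the direct sandwich argument above is cleaner.

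The main point needing care is the middle step: the transpose/partial-trace identity in the paper's conventions, and the two-sided sandwich for the non-positive Hermitian $K$, where the first-order Loewner bound is needed rather than the Jordan decomposition, which would cost a factor of $2$. The remaining steps are direct applications of earlier results.
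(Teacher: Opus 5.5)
Your proof is correct and is essentially the paper's argument: Weyl's inequality, the closed form $Wg_\#(T_i)W^*$, a two-sided Loewner sandwich by $\pm\|K\|_\infty$, and Proposition~\ref{K6}(2). The only cosmetic difference is where the sandwich happens. You sandwich $\Psi_K^*(\mathbf{1}_{\mathcal{Y}})=S^*(\mathbf{1}_{\mathcal{Y}}\otimes K^\top)S$ between $\pm\|K\|_\infty S^*S$, while the paper sandwiches $WKW^*$ between $\pm\|K\|_\infty WW^*$ and then applies the positive map $\Tr_{\mathcal{Y}}$; up to transposition this is the same inequality.

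One small correction: for a general linear map the identity is $\Tr_{\mathcal{Y}}J(\Xi)=\overline{\Xi^*(\mathbf{1}_{\mathcal{Y}})}$. For $X\mapsto AXB^*$ the two sides are $(B^*A)^\top$ and $\Xi^*(\mathbf{1}_{\mathcal{Y}})^\top=(A^*B)^\top$, not both $(B^*A)^\top$. So the transpose form holds only when $\Xi^*(\mathbf{1}_{\mathcal{Y}})$ is Hermitian. That is the case for $\Xi=\Psi_K$ with $K$ Hermitian, so your proof is unaffected.
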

\begin{proof}
    By Weyl's inequality \cite[Section~III.2]{bhatia1997matrix}, $\lambda_{\min}$ is $1$-Lipschitz with respect to the operator norm on Hermitian operators. Let $K=g_\#(T_1)-g_\#(T_2)$, which is Hermitian. By Proposition~\ref{lambda min formula} and Theorem~\ref{fundamental principle},
    \begin{align*}
           |\widehat{F}_1-\widehat{F}_2|&\leq \|\Tr_{\mathcal{Y}}(J(\Phi_1)\#J(\Psi_1)-J(\Phi_2)\#J(\Psi_2))\|_\infty\\
           &=\|\Tr_{\mathcal{Y}}(WKW^*)\|_\infty.
    \end{align*}
    Since $-\|K\|_\infty WW^*\leq WKW^*\leq\|K\|_\infty WW^*$ and the partial trace is a positive map, 
    \[-\|K\|_\infty\Tr_{\mathcal{Y}}(WW^*)\leq\Tr_{\mathcal{Y}}(WKW^*)\leq\|K\|_\infty\Tr_{\mathcal{Y}}(WW^*),\]
    and for a Hermitian $H$ and positive $M$ with $-sM\leq H\leq sM$ one has $\|H\|_\infty\leq s\|M\|_\infty$. Moreover $\Tr_{\mathcal{Y}}(WW^*)=\Tr_{\mathcal{Y}}(J(\Omega))=\Omega^*(\mathbf{1}_{\mathcal{Y}})^\top$, whose operator norm is $\|\Omega^*(\mathbf{1}_{\mathcal{Y}})\|_\infty$. Hence, by Proposition~\ref{K6}(2),
    \[
        |\widehat{F}_1-\widehat{F}_2|\leq\|K\|_\infty\|\Omega^*(\mathbf{1}_{\mathcal{Y}})\|_\infty\leq\|\Omega^*(\mathbf{1}_{\mathcal{Y}})\|_\infty\|T_1-T_2\|_\infty^{1/2}.\qedhere
    \]
\end{proof}
 
What Theorem~\ref{main theorem} says is that if two pairs of completely positive maps have the same sum, then their respective geometric channel fidelities are controlled by the operator norm of the difference of their ARN derivatives. Coming back to Corollary~\ref{impossibility corollary}, in the Kraus decomposition $\{V_i=y_1e_i^*,\ V'_i=y_2e_i^*\}_{i=1}^{n}$ of $\Omega$, where $n=\dim\mathcal{X}$ and $e_1,\dots,e_n$ is an orthonormal basis of $\mathcal{X}$ (these $2n$ operators are linearly independent because $y_1$ and $y_2$ are), the derivative of $\Phi_1$ is the projection $T_1=\mathbf{1}_{\mathbb{C}^n}\oplus 0_{\mathbb{C}^n}$, while the splitting $\left(\frac{\Omega}{2}, \frac{\Omega}{2}\right)$ has ARN derivative $T_2=\frac{1}{2}\mathbf{1}_{\mathbb{C}^{2n}}$. Hence $\|T_1-T_2\|_\infty=\frac12$ for every $\theta>0$, however small. The adjoint of $\Omega$ evaluated at $\mathbf{1}_{\mathcal{Y}}$ is 
\[\Omega^*(\mathbf{1}_{\mathcal{Y}})=(\tr(\phi)+\tr(\psi))\mathbf{1}_{\mathcal{X}}=2\cdot\mathbf{1}_{\mathcal{X}},\] so the right-hand side of Theorem~\ref{main theorem} equals
\[
    \|\Omega^*(\mathbf{1}_{\mathcal{Y}})\|_\infty\|T_1-T_2\|_\infty^{1/2}=2\cdot(\tfrac12)^{1/2}=\sqrt{2}\geq 1=|\widehat{F}_1-\widehat{F}_2|.
\]
The completely bounded trace norm between the splittings tends to zero, but the distance between the derivatives does not, which is exactly what allows the fidelity to jump.
\begin{remark}[Physical interpretation]\label{rem:phys-readout}
By Remark~\ref{rem:phys-effect}, pairs with the same sum are one device $\Omega$ with different which-branch effects. A classical readout that reports the outcome of $\Phi$ as that of $\Psi$ with probability $p_{01}$, and conversely with probability $p_{10}$, gives the pair $\big((1-p_{01})\Phi+p_{10}\Psi,\,p_{01}\Phi+(1-p_{10})\Psi\big)$. The sum is unchanged, and $T$ becomes $p_{10}\mathbf{1}_{\mathbb{C}^r}+(1-p_{01}-p_{10})T$, a change of norm at most $\max\{p_{01},p_{10}\}$ because $\mathrm{spec}(T)\subseteq[0,1]$. So, for every pair, Theorem~\ref{main theorem} bounds the change of $\widehat{F}$ by $\|\Omega^*(\mathbf{1}_{\mathcal{Y}})\|_\infty\max\{p_{01},p_{10}\}^{1/2}$. Such classical noise was found to dominate readout errors on superconducting devices \cite{MaciejewskiZimborasOszmaniec2020}. If $p_{01}+p_{10}\leq1$, the new spectrum lies in $[p_{10},1-p_{01}]$: imperfect readout gives the two-sided bounds of Sections~\ref{sec:all f} and~\ref{sec:loewner}.
\end{remark}
We now verify the modulus of continuity does indeed behave properly under the right circumstances.
\begin{lemma}\label{modulus to zero}
For a compact interval $I\subseteq\mathbb{R}$ and a continuous function $g:I\to\mathbb{R}$ put
\[
\omega^I_g(\delta)=\sup\big\{\|g(E)-g(E')\|_\infty:\ E,E'\text{ Hermitian with spectra in }I,\ \|E-E'\|_\infty\leq\delta\big\},\qquad\delta\geq0,
\]
the supremum being over all finite dimensions, so that $\omega_g=\omega^{[0,1]}_g$ is the modulus of Theorem~\ref{Modulus of continuity results}; for $\epsilon\in(0,1)$ we also write $\omega^{(\epsilon)}_g=\omega^{[0,1-\epsilon]}_g$. Then $\omega^I_g(\delta)\to0$ as $\delta\to0$.
In particular this holds for $g=g_\sigma$, where $\sigma$ is any Kubo--Ando mean, and for $g=h_f$, where $f$ is as in
Lemma~\ref{lem:maxf decomposition}.
\end{lemma}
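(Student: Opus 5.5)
The plan is to approximate $g$ uniformly by polynomials and use the fact that polynomials are operator Lipschitz on bounded sets of Hermitian operators, with constants that do not depend on the dimension. Fix $\eta>0$. Since $g$ is continuous on the compact interval $I$, the Weierstrass approximation theorem gives a polynomial $p$ with $\sup_{t\in I}|g(t)-p(t)|\leq\eta$. For Hermitian $E$ with spectrum in $I$, the spectral theorem gives $\|g(E)-p(E)\|_\infty=\max_{\lambda\in\mathrm{spec}(E)}|g(\lambda)-p(\lambda)|\leq\eta$, whatever the dimension. So for $E,E'$ as in the definition of $\omega^I_g$,
\[
\|g(E)-g(E')\|_\infty\leq 2\eta+\|p(E)-p(E')\|_\infty .
\]

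Next I bound the polynomial term. Write $p(t)=\sum_{k=0}^{m}a_kt^k$ and let $R=\max\{|a|,|b|\}$ where $I=[a,b]$, so that $\|E\|_\infty,\|E'\|_\infty\leq R$. The telescoping identity $E^k-E'^k=\sum_{j=0}^{k-1}E^{j}(E-E')E'^{\,k-1-j}$ together with submultiplicativity gives $\|E^k-E'^k\|_\infty\leq kR^{k-1}\|E-E'\|_\infty$. Hence
\[
\|p(E)-p(E')\|_\infty\leq L_p\,\|E-E'\|_\infty,\qquad L_p=\sum_{k=1}^m k|a_k|R^{k-1},
\]
and $L_p$ depends only on $p$ and $I$, not on the dimension. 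Therefore $\omega^I_g(\delta)\leq 2\eta+L_p\delta$, so $\limsup_{\delta\to0}\omega^I_g(\delta)\leq2\eta$. Since $\eta$ was arbitrary and $\omega^I_g\geq0$, this proves $\omega^I_g(\delta)\to0$.

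It remains to check the two particular cases. For $g=g_\sigma$, continuity on $[0,1]$ holds on $(0,1)$ because $f_\sigma$ is continuous there. At the endpoints, the values $g_\sigma(0)=f_\sigma'(\infty)$ and $g_\sigma(1)=f_\sigma(0^+)$ were chosen as the limits of $g_\sigma$, as noted before Theorem~\ref{fundamental principle}. For $g=h_f$, I would invoke the continuity of $h_f$ on the relevant compact interval, which I take to be provided by Lemma~\ref{lem:maxf decomposition} and Lemma~\ref{lem:hf extended}; this is where the boundedness hypothesis on $f$ enters, and it is the only step that depends on material stated later in the paper.

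The main obstacle is making the estimate dimension-free. Pointwise uniform continuity of $g$ does not transfer directly to operators, since $g$ need not be operator Lipschitz. The polynomial approximation avoids this, because both the error $2\eta$ and the constant $L_p$ are independent of the dimension.
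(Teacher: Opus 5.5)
Your proof is correct and matches the paper's argument essentially line by line: Weierstrass approximation, the telescoping bound $\|E^k-E'^k\|_\infty\leq kR^{k-1}\|E-E'\|_\infty$ giving a dimension-free Lipschitz constant for the polynomial, and then $\limsup_{\delta\to0}\omega^I_g(\delta)\leq2\eta$. The only difference is in the citation for $h_f$: the paper cites the continuity established in the first part of the proof of Theorem~\ref{thm:maxf}, while you appeal to Lemma~\ref{lem:hf extended} together with finiteness of $f(0^+)$ and $f'(\infty)$, which gives the same conclusion.
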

\begin{proof}
Let $\eta>0$ and $R=\max\{|x|:x\in I\}$. By the Weierstrass approximation theorem there is a polynomial $\tilde g(t)=\sum_{k=0}^nc_kt^k$ with
$\sup_{t\in I}|g(t)-\tilde g(t)|\leq\eta$. If $E,E'$ are Hermitian with spectra in $I$, the spectral theorem gives $\|g(E)-\tilde g(E)\|_\infty\leq\eta$ and
$\|g(E')-\tilde g(E')\|_\infty\leq\eta$. Since $\|E\|_\infty,\|E'\|_\infty\leq R$ and
$E^k-E'^k=\sum_{j=0}^{k-1}E^j(E-E')E'^{\,k-1-j}$, we have $\|E^k-E'^k\|_\infty\leq kR^{k-1}\|E-E'\|_\infty$, and hence
$\|\tilde g(E)-\tilde g(E')\|_\infty\leq L_{\tilde g}\|E-E'\|_\infty$ with $L_{\tilde g}=\sum_{k}k|c_k|R^{k-1}$, a constant independent of the dimension.
Therefore
\[
\|g(E)-g(E')\|_\infty\leq2\eta+L_{\tilde g}\|E-E'\|_\infty,
\]
so $\omega^I_g(\delta)\leq2\eta+L_{\tilde g}\delta$ and $\limsup_{\delta\to0}\omega^I_g(\delta)\leq2\eta$. Since $\eta>0$
was arbitrary, $\omega^I_g(\delta)\to0$.
 
For the two special cases it remains to check continuity on $[0,1]$. The function $g_\sigma$ is continuous on $(0,1)$,
and at the endpoints its values were defined as the limits $g_\sigma(0)=f_\sigma'(\infty)$ and
$g_\sigma(1)=f_\sigma(0^+)$. The function $h_f$ is continuous on $[0,1]$ by the first part of the proof of Theorem~\ref{thm:maxf}.
\end{proof}
 
Continuity results regarding maximal divergences have also been studied by \cite{MosonyiHiai2024}, and more recently in the von Neumann algebra context via lower semicontinuity by Okayasu in \cite{Okayasu2026maxf}. 

 We conclude this section with a lower bound for the maps themselves.
 \begin{proposition}\label{bilipschitz}
Let $\Omega(X)=\sum_{k=1}^rV_kXV_k^*$ with linearly independent $V_1,\dots,V_r$, let
$W=[\vectorise(V_1)\ \cdots\ \vectorise(V_r)]$ and $G=W^*W$. For Hermitian $T_1,T_2\in\mathrm{L}(\mathbb{C}^r)$,
\begin{align*}
\frac{\lambda_{\min}(G)}{\dim\mathcal{X}}\,\|T_1-T_2\|_1&\leq|||\Psi_{T_1}-\Psi_{T_2}|||_1\\
&\leq\|\Omega^*(\mathbf{1}_{\mathcal{Y}})\|_\infty\,\|T_1-T_2\|_\infty .
\end{align*}
\end{proposition}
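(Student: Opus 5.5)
The upper bound is the outer inequality of Lemma~\ref{lipschitz inequality}. That lemma is stated for effects $0\leq T_i\leq\mathbf{1}_{\mathbb{C}^r}$, but its proof uses only the Jordan decomposition of the Hermitian difference $T_1-T_2$. So it applies verbatim to arbitrary Hermitian $T_1,T_2$, using $|||\Omega|||_1=\|\Omega^*(\mathbf{1}_{\mathcal{Y}})\|_\infty$.

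For the lower bound, set $D=T_1-T_2$ (Hermitian) and $\Theta=\Psi_D$. I would test the completely bounded trace norm on the normalised maximally entangled state $\rho=\frac{1}{n}\vectorise(\mathbf{1}_{\mathcal{X}})\vectorise(\mathbf{1}_{\mathcal{X}})^*$, with $n=\dim\mathcal{X}$. By construction
\[
(\Theta\otimes\mathbf{1}_{\mathrm{L}(\mathcal{X})})(\rho)=\tfrac1n J(\Theta)=\tfrac1n WDW^*,
\]
and Theorem~\ref{thm:arveson} gives the last equality. Hence
\[
|||\Theta|||_1\geq\tfrac1n\|WDW^*\|_1 .
\]

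It then remains to show $\|WDW^*\|_1\geq\lambda_{\min}(G)\|D\|_1$. I would write the polar decomposition $W=U G^{1/2}$, with $U$ an isometry from $\mathbb{C}^r$ into $\mathcal{V}$. Then
\[
\|WDW^*\|_1=\|G^{1/2}DG^{1/2}\|_1 ,
\]
because the trace norm is invariant under isometric conjugation: the nonzero singular values of $UXU^*$ coincide with those of $X$.

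The key step is to show $\|G^{1/2}DG^{1/2}\|_1\geq\lambda_{\min}(G)\|D\|_1$. I would use
\[
D=G^{-1/2}(G^{1/2}DG^{1/2})G^{-1/2}
\]
together with the Hölder-type inequality $\|AXB\|_1\leq\|A\|_\infty\|X\|_1\|B\|_\infty$. This gives
\[
\|D\|_1\leq\|G^{-1/2}\|_\infty^2\,\|G^{1/2}DG^{1/2}\|_1=\lambda_{\min}(G)^{-1}\|G^{1/2}DG^{1/2}\|_1 .
\]
The quantity $\lambda_{\min}(G)$ is positive because $W$ is injective, which makes $G$ invertible. Combining the three displays yields the lower bound.

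The main point to get right is this middle step: justifying the passage from $WDW^*$ to $G^{1/2}DG^{1/2}$, and checking that the maximally entangled test state costs exactly the factor $1/\dim\mathcal{X}$ and nothing more. Everything else is bookkeeping with earlier results.
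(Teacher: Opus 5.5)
Your proposal is correct and matches the paper's proof step for step. The upper bound comes from the proof of Lemma~\ref{lipschitz inequality} with $\Phi=\Omega$, which needs only that $T_1-T_2$ is Hermitian. The lower bound uses the normalised maximally entangled input $\vectorise(\mathbf{1}_{\mathcal{X}})\vectorise(\mathbf{1}_{\mathcal{X}})^*/\dim\mathcal{X}$, the polar decomposition $W=UG^{1/2}$ to pass to $\|G^{1/2}(T_1-T_2)G^{1/2}\|_1$, and H\"older's inequality with $\|G^{-1/2}\|_\infty^2=\lambda_{\min}(G)^{-1}$.
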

\begin{proof}
The upper bound follows from the proof of Lemma~\ref{lipschitz inequality} with $\Phi=\Omega$, which uses only that $T_1-T_2$ is Hermitian, together with $|||\Omega|||_1=\|\Omega^*(\mathbf{1}_{\mathcal{Y}})\|_\infty$. For the lower bound put $K=T_1-T_2$, so that
$\Psi_{T_1}-\Psi_{T_2}=\Psi_K$ and $J(\Psi_K)=WKW^*$. Since
\[
J(\Psi_K)=(\Psi_K\otimes\mathbf{1}_{\mathrm{L}(\mathcal{X})})\big(\vectorise(\mathbf{1}_{\mathcal{X}})\vectorise(\mathbf{1}_{\mathcal{X}})^*\big)
\]
and $\vectorise(\mathbf{1}_{\mathcal{X}})\vectorise(\mathbf{1}_{\mathcal{X}})^*/\dim\mathcal{X}$ is a density operator,
$\|WKW^*\|_1\leq\dim\mathcal{X}\,|||\Psi_K|||_1$. Write $W=UG^{1/2}$ with $U$ an isometry (polar decomposition); then
$\|WKW^*\|_1=\|G^{1/2}KG^{1/2}\|_1$, and since $G$ is invertible,
by H\"older's inequality \cite[Chapter~IV]{bhatia1997matrix},
\begin{align*}
\|K\|_1&=\|G^{-1/2}(G^{1/2}KG^{1/2})G^{-1/2}\|_1\\
&\leq\|G^{-1/2}\|_\infty^2\,\|G^{1/2}KG^{1/2}\|_1=\lambda_{\min}(G)^{-1}\|G^{1/2}KG^{1/2}\|_1.
\end{align*} Combining these estimates gives the lower bound.
\end{proof}

We record the estimate used in the proof of Theorem~\ref{main theorem} in the form in which it is used later.

\begin{lemma}\label{lem:partial trace}
Let $\Omega$, $V_1,\dots,V_r$ and $W$ be as in Proposition~\ref{bilipschitz}. For Hermitian $N_1,N_2\in\mathrm{L}(\mathbb{C}^r)$,
\[
\big\|\Tr_{\mathcal{Y}}(WN_1W^*)-\Tr_{\mathcal{Y}}(WN_2W^*)\big\|_\infty\leq\|\Omega^*(\mathbf{1}_{\mathcal{Y}})\|_\infty\,\|N_1-N_2\|_\infty ,
\]
and the same bound holds for $|\lambda_{\min}(\Tr_{\mathcal{Y}}(WN_1W^*))-\lambda_{\min}(\Tr_{\mathcal{Y}}(WN_2W^*))|$ and for the corresponding difference of largest eigenvalues.
\end{lemma}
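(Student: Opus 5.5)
The estimate is the one already carried out inside the proof of Theorem~\ref{main theorem}, now with a general Hermitian $N=N_1-N_2$ in place of $K=g_\#(T_1)-g_\#(T_2)$. I would first use linearity of $N\mapsto WNW^*$ and of the partial trace to write the difference as $\Tr_{\mathcal{Y}}(WNW^*)$, and set $s=\|N\|_\infty$.

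Since $N$ is Hermitian, $-s\mathbf{1}_{\mathbb{C}^r}\leq N\leq s\mathbf{1}_{\mathbb{C}^r}$. Conjugation by $W$ preserves the Loewner order, so $-sWW^*\leq WNW^*\leq sWW^*$. Applying the positive map $\Tr_{\mathcal{Y}}$ then gives
\[
-sM\leq\Tr_{\mathcal{Y}}(WNW^*)\leq sM,\qquad M=\Tr_{\mathcal{Y}}(WW^*)=\Tr_{\mathcal{Y}}(J(\Omega)).
\]

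Next I would identify $M=\Omega^*(\mathbf{1}_{\mathcal{Y}})^\top$, as was done in the proof of Theorem~\ref{main theorem}. Transposition preserves the operator norm, so $\|M\|_\infty=\|\Omega^*(\mathbf{1}_{\mathcal{Y}})\|_\infty$. For Hermitian $H$ with $-sM\leq H\leq sM$ and $M\geq0$, every unit vector $x$ satisfies $|x^*Hx|\leq s\,x^*Mx\leq s\|M\|_\infty$. Because $H$ is Hermitian, its norm is the supremum of $|x^*Hx|$, which yields the first inequality.

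For the eigenvalue statements I would apply Weyl's inequality \cite[Section~III.2]{bhatia1997matrix}: $|\lambda_{\min}(A)-\lambda_{\min}(B)|\leq\|A-B\|_\infty$, and likewise for $\lambda_{\max}$, whenever $A,B$ are Hermitian. Here $A=\Tr_{\mathcal{Y}}(WN_1W^*)$ and $B=\Tr_{\mathcal{Y}}(WN_2W^*)$ are Hermitian because the $N_i$ are, so the first bound transfers directly.

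There is no real obstacle. The only point needing care is the identification $\Tr_{\mathcal{Y}}(J(\Omega))=\Omega^*(\mathbf{1}_{\mathcal{Y}})^\top$, which I would check from $J(\Omega)=\sum_{a,b}\Omega(E_{ab})\otimes E_{ab}$, giving $\Tr_{\mathcal{Y}}J(\Omega)=\sum_{a,b}\tr(\Omega(E_{ab}))E_{ab}$.
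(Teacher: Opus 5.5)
Your proposal is correct and follows the paper's proof essentially line for line. You sandwich $N_1-N_2$ between $\pm\|N_1-N_2\|_\infty\mathbf{1}_{\mathbb{C}^r}$, conjugate by $W$, and apply the positive map $\Tr_{\mathcal{Y}}$. You then identify $\Tr_{\mathcal{Y}}(WW^*)=\Omega^*(\mathbf{1}_{\mathcal{Y}})^\top$, bound the norm through the quadratic form, and finish with Weyl's inequality.
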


\begin{proof}
Put $K=N_1-N_2$. Since $-\|K\|_\infty\mathbf{1}_{\mathbb{C}^r}\leq K\leq\|K\|_\infty\mathbf{1}_{\mathbb{C}^r}$ and the partial trace is positive, $-\|K\|_\infty\Tr_{\mathcal{Y}}(WW^*)\leq\Tr_{\mathcal{Y}}(WKW^*)\leq\|K\|_\infty\Tr_{\mathcal{Y}}(WW^*)$, and $\Tr_{\mathcal{Y}}(WW^*)=\Tr_{\mathcal{Y}}J(\Omega)=\Omega^*(\mathbf{1}_{\mathcal{Y}})^\top\geq0$. For a Hermitian $H$ and positive $M$ with $-sM\leq H\leq sM$ one has $\|H\|_\infty\leq s\|M\|_\infty$, which gives the first claim; the second follows from Weyl's inequality \cite[Section~III.2]{bhatia1997matrix}.
\end{proof}
\section{A general approach}\label{sec:general}
\subsection{Geometric R\'enyi relative entropy}
The problem presented in Corollary~\ref{impossibility corollary} is not unique to the geometric channel fidelity. In general, any channel divergence which can be expressed through a Choi-level Kubo--Ando mean that vanishes on pairs with trivially intersecting ranges will have this problem as well; Theorem~\ref{range theorem} below characterises these means.
 
In \cite{KatariyaWilde2021}, the following definitions are made which we recall (see also \cite{FangFawzi2021}; up to sign, the underlying state quantity $\widehat{Q}_\alpha$ is the maximal $f$-divergence of \cite{Matsumoto2018newf} for the power function $f(t)=t^\alpha$). 
\begin{definition}[Geometric R\'enyi relative entropy]
Let $\rho\in \mathrm{D}(\mathcal{V})$ be a quantum state and let $\xi\in\mathrm{Pos}(\mathcal{V})$ be a positive semidefinite operator. Let $\alpha\in (0,1)\cup(1,\infty)$. The geometric R\'enyi relative quasi-entropy is defined as 
\[
    \widehat{Q}_\alpha(\rho\|\xi)=\lim_{\eta\downarrow0}\tr\big(\xi_\eta^{1/2}(\xi_\eta^{-1/2}\rho\xi_\eta^{-1/2})^\alpha\xi_\eta^{1/2}\big)
\]
where $\xi_\eta=\xi+\eta \mathbf{1}_{\mathcal{V}}$. The geometric R\'enyi relative entropy is then defined as
\[
    \widehat{D}_\alpha(\rho\|\xi)=\frac{1}{\alpha-1}\log\widehat{Q}_\alpha(\rho\|\xi).
\]
\end{definition}
We note that in the above definition, we do not even require $\rho$ to be a state for the definition to make sense. We consequently do not require that $\rho$ is a state in the rest of this text.
 
If $\alpha\in (0,1)$, then $t\mapsto t^\alpha$ is operator monotone on $(0,\infty)$ and takes the value $1$ at $t=1$; it is the representing function of the weighted geometric mean $\#_\alpha$. In this notation $\widehat{Q}_\alpha(\rho\|\xi)=\tr(\xi\#_\alpha\rho)=\tr(\rho\#_{1-\alpha}\xi)$, with the regularisation from above built into the definition of the mean. This will prove important shortly. 
The next definition we recall from \cite{KatariyaWilde2021} is the following; again we drop the requirement that the maps be channels and replace it with complete positivity.
\begin{definition}[Geometric R\'enyi relative entropy of completely positive maps]
Let $\Phi,\Psi\in \mathrm{CP}(\mathcal{X},\mathcal{Y})$ be completely positive. The geometric R\'enyi relative entropy is defined for $\alpha\in (0,1)\cup (1,\infty)$ as 
\[
    \widehat{D}_\alpha(\Phi\|\Psi)=\sup_{\rho\in \mathrm{D}(\mathcal{X}\otimes \mathcal{X})}\widehat{D}_\alpha\big((\Phi\otimes\mathbf{1}_{\mathrm{L}(\mathcal{X})})(\rho)\,\|\,(\Psi\otimes\mathbf{1}_{\mathrm{L}(\mathcal{X})})(\rho)\big).
\]
\end{definition}
For $\alpha\in(0,1)\cup(1,2]$ this quantity admits a closed form in terms of the Choi matrices \cite{KatariyaWilde2021}. We state it for $\alpha\in (0,1)$, where, since $\frac{1}{\alpha-1}<0$, the supremum of $\widehat{D}_\alpha$ corresponds to an infimum of $\widehat{Q}_\alpha$. 
\begin{lemma}[{\cite[Proposition~44]{KatariyaWilde2021}}]\label{renyi closed form}
Let $\Phi,\Psi\in \mathrm{CP}(\mathcal{X},\mathcal{Y})$ and $\alpha\in (0,1)$, and define 
\[\widehat{Q}_\alpha(\Phi\|\Psi)=\inf_{\rho\in \mathrm{D}(\mathcal{X}\otimes \mathcal{X})}\widehat{Q}_\alpha\big((\Phi\otimes\mathbf{1}_{\mathrm{L}(\mathcal{X})})(\rho)\,\|\,(\Psi\otimes\mathbf{1}_{\mathrm{L}(\mathcal{X})})(\rho)\big).\] Then
\[
\widehat{D}_\alpha(\Phi\|\Psi)=\frac{1}{\alpha-1}\log\widehat{Q}_\alpha(\Phi\|\Psi),\qquad \widehat{Q}_\alpha(\Phi\|\Psi)=\lambda_{\min}\left(\Tr_{\mathcal{Y}}\left(J(\Psi)\#_\alpha J(\Phi)\right)\right),
\]
where the weighted geometric mean is understood with the regularisation from above.
\end{lemma}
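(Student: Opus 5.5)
The plan is to mirror the proof of Proposition~\ref{lambda min formula} with $\#$ replaced by $\#_\alpha$. The first identity, $\widehat{D}_\alpha(\Phi\|\Psi)=\frac{1}{\alpha-1}\log\widehat{Q}_\alpha(\Phi\|\Psi)$, is immediate. Since $\alpha-1<0$, the map $x\mapsto\frac{1}{\alpha-1}\log x$ is strictly decreasing and continuous on $[0,\infty)$, with value $+\infty$ at $0$. A supremum of $\widehat{D}_\alpha$ over $\rho$ is therefore this function applied to the infimum of $\widehat{Q}_\alpha$ over $\rho$.

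For the closed form, I would first reduce to pure inputs. The state quantity $\widehat{Q}_\alpha(\rho\|\xi)=\tr(\xi\#_\alpha\rho)$ is jointly concave in $(\rho,\xi)$. This follows because Kubo--Ando means are jointly concave, a consequence of the transformer inequality and monotonicity that Kubo and Ando prove for every connection. The map $\rho\mapsto\big((\Phi\otimes\mathbf{1})(\rho),(\Psi\otimes\mathbf{1})(\rho)\big)$ is linear, so the objective is concave on the compact convex set $\mathrm{D}(\mathcal{X}\otimes\mathcal{X})$. Its infimum is therefore attained in the limit along extreme points, that is, it equals the infimum over pure states $uu^*$.

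Next, write $u=\vectorise(A)$ with $\|A\|_2=1$ and use \eqref{eq:vec identity}, which gives $(\Phi\otimes\mathbf{1})(uu^*)=C\,J(\Phi)\,C^*$ with $C=\mathbf{1}_{\mathcal{Y}}\otimes A^\top$, and likewise for $\Psi$. For invertible $A$ the transformer equality for invertible $C$ gives $(CJ(\Psi)C^*)\#_\alpha(CJ(\Phi)C^*)=C(J(\Psi)\#_\alpha J(\Phi))C^*$. This requires checking that the regularisation from above is compatible with conjugation. I would argue this from the inequality holding in both directions, applying the transformer inequality with $C$ and with $C^{-1}$, exactly as for \eqref{eq:W transfer}. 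Taking traces then yields $\tr\big((AA^*)^\top\,\Tr_{\mathcal{Y}}(J(\Psi)\#_\alpha J(\Phi))\big)$. Minimising over invertible density operators $(AA^*)^\top$ gives $\lambda_{\min}$ of the partial trace.

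The main obstacle is passing from invertible $A$ to all pure states. Here I would use upper semicontinuity of $(\rho,\xi)\mapsto\tr(\xi\#_\alpha\rho)$: it is a decreasing limit of the continuous functions $\tr(\xi_\eta\#_\alpha\rho_\eta)$, by monotonicity and continuity from above. An upper semicontinuous function composed with a continuous map is upper semicontinuous. Its infimum over the pure states equals the infimum over the dense subset of pure states with invertible $A$, because for each $u$ and each $\varepsilon>0$ there are nearby invertible $u'$ whose values are at most the value at $u$ plus $\varepsilon$.
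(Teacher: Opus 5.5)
Your proposal is correct and follows the paper's proof essentially step for step. Like the paper, you use monotonicity of $x\mapsto\frac{1}{\alpha-1}\log x$ for the first identity, then the argument of Proposition~\ref{lambda min formula} with $\#$ replaced by $\#_\alpha$: joint concavity to reduce to pure states, the transformer equality for invertible $A$ (which you justify explicitly by applying the transformer inequality with $C$ and $C^{-1}$), and upper semicontinuity with density of invertible $A$ to cover the remaining pure states.
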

\begin{proof}
The first identity follows from the definitions, since $t\mapsto\frac{1}{\alpha-1}\log t$ is decreasing for $\alpha\in(0,1)$. For the second, the proof of Proposition~\ref{lambda min formula} applies verbatim with $\#$ replaced by $\#_\alpha$: joint concavity of $(\rho,\xi)\mapsto\tr(\xi\#_\alpha\rho)$ \cite{KuboAndo1980} reduces the infimum to pure states, the transformer equality \cite{KuboAndo1980} handles invertible $A$, and upper semicontinuity handles the closure; compare \cite[Lemma~5]{FangFawzi2021} for the case $\alpha\in(1,2]$.
\end{proof}
The careful reader will notice that $\widehat{D}_\alpha(\Phi\|\Psi)$ may take the value $+\infty$ for $\alpha\in(0,1)$. This is not a problem; it corresponds to $\widehat{Q}_\alpha(\Phi\|\Psi)=0$, which occurs for instance when the Choi matrices have ranges with trivial intersection.
 
It is not much of a stretch to see that the problem of Proposition~\ref{impossibility proposition} extends to $\widehat{Q}_\alpha$, because this quantity is induced by a Kubo--Ando mean of Choi matrices; indeed the same replacement maps work, since $A\#_\alpha B=0$ whenever the ranges of $A$ and $B$ intersect trivially (Theorem~\ref{range theorem} below). We will again demonstrate that we can control such problems through the ARN derivative. 
Since $J(\Psi)\#_\alpha J(\Phi)=J(\Phi)\#_{1-\alpha}J(\Psi)$, Theorem~\ref{fundamental principle} applied to the mean $\#_{1-\alpha}$ shows that $J(\Psi)\#_\alpha J(\Phi)=Wh_\alpha(T)W^*$, where $T$ is the ARN derivative of $\Phi$ with respect to $\Omega=\Phi+\Psi$ and 
\begin{equation}\label{eq:h alpha}
    h_\alpha(t)=g_{\#_{1-\alpha}}(t)=t\#_{1-\alpha}(1-t)=t^{\alpha}(1-t)^{1-\alpha},\qquad t\in[0,1].
\end{equation}
\begin{lemma}\label{holder lemma}
    Let $\mathcal{Z}$ be a complex Euclidean space, let $0\leq E,E'\leq \mathbf{1}_{\mathcal{Z}}$ and let $\alpha\in (0,1)$. Then
    \[
        \|h_\alpha(E)-h_\alpha(E')\|_\infty\leq \|E-E'\|_\infty^\alpha+\|E-E'\|_\infty^{1-\alpha}.
    \]
\end{lemma}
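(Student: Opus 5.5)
The plan is to split $h_\alpha(E)=E^\alpha(\mathbf{1}_{\mathcal{Z}}-E)^{1-\alpha}$ into its two factors and estimate each factor separately, using the same operator H\"older inequality for operator monotone powers that was used for the square root in Proposition~\ref{K6}(2). Since $E$ and $\mathbf{1}_{\mathcal{Z}}-E$ commute, the functional calculus gives $h_\alpha(E)=E^\alpha(\mathbf{1}_{\mathcal{Z}}-E)^{1-\alpha}$, a product of commuting positive semidefinite operators. The same holds for $E'$. Put $\delta=\|E-E'\|_\infty$.

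\emph{Step 1: telescoping.} I will write the difference as
\[
h_\alpha(E)-h_\alpha(E')=\big(E^\alpha-E'^\alpha\big)(\mathbf{1}_{\mathcal{Z}}-E)^{1-\alpha}+E'^\alpha\big((\mathbf{1}_{\mathcal{Z}}-E)^{1-\alpha}-(\mathbf{1}_{\mathcal{Z}}-E')^{1-\alpha}\big).
\]
Neither summand need be Hermitian. This causes no difficulty, because only the triangle inequality and submultiplicativity of $\|\cdot\|_\infty$ will be used.

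\emph{Step 2: bounding the factors.} Since $0\leq E,E'\leq\mathbf{1}_{\mathcal{Z}}$, the spectra of $E'$ and $\mathbf{1}_{\mathcal{Z}}-E$ lie in $[0,1]$. Hence $\|E'^\alpha\|_\infty\leq1$ and $\|(\mathbf{1}_{\mathcal{Z}}-E)^{1-\alpha}\|_\infty\leq1$.

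For the differences I will use that for $\beta\in(0,1)$ the function $t\mapsto t^\beta$ is operator monotone on $[0,\infty)$ and vanishes at $0$. The result of \cite{Ando1988} (see \cite[Theorem~X.1.1]{bhatia1997matrix}), already invoked in Proposition~\ref{K6}(2), then gives
\[
\|X^\beta-Y^\beta\|_\infty\leq\|X-Y\|_\infty^\beta\qquad\text{for }X,Y\geq0.
\]
Applied with $\beta=\alpha$ to $X=E$, $Y=E'$, this yields $\|E^\alpha-E'^\alpha\|_\infty\leq\delta^\alpha$. Applied with $\beta=1-\alpha$ to $X=\mathbf{1}_{\mathcal{Z}}-E$, $Y=\mathbf{1}_{\mathcal{Z}}-E'$, whose difference is $E'-E$ and so has norm $\delta$, it yields $\|(\mathbf{1}_{\mathcal{Z}}-E)^{1-\alpha}-(\mathbf{1}_{\mathcal{Z}}-E')^{1-\alpha}\|_\infty\leq\delta^{1-\alpha}$.

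\emph{Step 3: conclusion.} Combining Steps 1 and 2 gives $\|h_\alpha(E)-h_\alpha(E')\|_\infty\leq\delta^\alpha\cdot1+1\cdot\delta^{1-\alpha}$, which is the claim.

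The only substantive input is the operator H\"older estimate for fractional powers (the concave operator monotone case of Ando's theorem). The remaining point to check is the identification of $h_\alpha(E)$ with the product $E^\alpha(\mathbf{1}_{\mathcal{Z}}-E)^{1-\alpha}$ at spectral points $0$ and $1$. This follows from the convention $0^\beta=0$ together with the endpoint values $h_\alpha(0)=h_\alpha(1)=0$ in \eqref{eq:h alpha}.
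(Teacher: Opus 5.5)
Your proposal is correct and follows the paper's proof essentially step for step. You use the same telescoping decomposition, the same bounds $\|E'^\alpha\|_\infty,\|(\mathbf{1}_{\mathcal{Z}}-E)^{1-\alpha}\|_\infty\leq1$, and the same application of Ando's H\"older estimate with $\beta=\alpha$ and $\beta=1-\alpha$.
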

\begin{proof}
    Since $E$ and $\mathbf{1}_{\mathcal{Z}}-E$ commute, we have
    \[
        h_\alpha(E) = E^{\alpha}(\mathbf{1}_{\mathcal{Z}}-E)^{1-\alpha} \geq 0,
    \]
    and therefore
    \[
        h_\alpha(E)-h_\alpha(E') = \big(E^{\alpha}-E'^{\alpha}\big)(\mathbf{1}_{\mathcal{Z}}-E)^{1-\alpha} + E'^{\alpha}\Big((\mathbf{1}_{\mathcal{Z}}-E)^{1-\alpha}-(\mathbf{1}_{\mathcal{Z}}-E')^{1-\alpha}\Big).
    \]
    By the triangle inequality and submultiplicativity of $\|\cdot\|_\infty$,
    \begin{align*}
        \|h_\alpha(E)-h_\alpha(E')\|_\infty \leq\;& \|E^{\alpha}-E'^{\alpha}\|_\infty\,\|(\mathbf{1}_{\mathcal{Z}}-E)^{1-\alpha}\|_\infty \\
        &+ \|E'^{\alpha}\|_\infty\,\big\|(\mathbf{1}_{\mathcal{Z}}-E)^{1-\alpha}-(\mathbf{1}_{\mathcal{Z}}-E')^{1-\alpha}\big\|_\infty.
    \end{align*}
    Since $\|(\mathbf{1}_{\mathcal{Z}}-E)^{1-\alpha}\|_\infty\leq 1$ and $\|E'^{\alpha}\|_\infty\leq 1$,
    \[
        \|h_\alpha(E)-h_\alpha(E')\|_\infty \leq \|E^{\alpha}-E'^{\alpha}\|_\infty + \big\|(\mathbf{1}_{\mathcal{Z}}-E)^{1-\alpha}-(\mathbf{1}_{\mathcal{Z}}-E')^{1-\alpha}\big\|_\infty.
    \]
    Now apply the operator-norm H\"older continuity of positive power functions \cite{Ando1988}, \cite[Theorem~X.1.1]{bhatia1997matrix}: for $A,B\geq 0$ and $\beta\in(0,1]$,
    \[
        \|A^\beta - B^\beta\|_\infty \leq \|A-B\|_\infty^\beta.
    \]
    Taking $\beta=\alpha$, $A=E$, $B=E'$ gives $\|E^{\alpha}-E'^{\alpha}\|_\infty \leq \|E-E'\|_\infty^{\alpha}$, and taking $\beta=1-\alpha$, $A=\mathbf{1}_{\mathcal{Z}}-E$, $B=\mathbf{1}_{\mathcal{Z}}-E'$ gives
    \[
        \big\|(\mathbf{1}_{\mathcal{Z}}-E)^{1-\alpha}-(\mathbf{1}_{\mathcal{Z}}-E')^{1-\alpha}\big\|_\infty \leq \|(\mathbf{1}_{\mathcal{Z}}-E)-(\mathbf{1}_{\mathcal{Z}}-E')\|_\infty^{1-\alpha} = \|E-E'\|_\infty^{1-\alpha}.
    \]
    Combining the two bounds,
    \[
        \|h_\alpha(E)-h_\alpha(E')\|_\infty \leq \|E-E'\|_\infty^{\alpha}+\|E-E'\|_\infty^{1-\alpha},
    \]
    as claimed.
\end{proof}
\begin{theorem}\label{thm:renyi continuity}
    Let $\alpha\in(0,1)$ and let $\Phi_1,\Psi_1,\Phi_2,\Psi_2\in \mathrm{CP}(\mathcal{X},\mathcal{Y})$ satisfy $\Omega=\Phi_1+\Psi_1=\Phi_2+\Psi_2$. Let $T_1$ and $T_2$ denote the ARN derivatives of $\Phi_1$ and $\Phi_2$ with respect to $\Omega$. Then
    \[
        |\widehat{Q}_\alpha(\Phi_1\|\Psi_1)-\widehat{Q}_\alpha(\Phi_2\|\Psi_2)|\leq \|\Omega^*(\mathbf{1}_{\mathcal{Y}})\|_\infty\big(\|T_1-T_2\|_\infty^\alpha+\|T_1-T_2\|_\infty^{1-\alpha}\big).
    \]
\end{theorem}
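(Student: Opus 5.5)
The plan mirrors the proof of Theorem~\ref{main theorem}, with $\#$ replaced by $\#_\alpha$ and Proposition~\ref{K6}(2) replaced by Lemma~\ref{holder lemma}.

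First, Lemma~\ref{renyi closed form} gives the closed form
\[
\widehat{Q}_\alpha(\Phi_i\|\Psi_i)=\lambda_{\min}\big(\Tr_{\mathcal{Y}}(J(\Psi_i)\#_\alpha J(\Phi_i))\big),\qquad i=1,2.
\]
Both pairs share the sum $\Omega$, so we fix one Kraus decomposition of $\Omega$ by linearly independent operators and the corresponding $W$. The identity $J(\Psi)\#_\alpha J(\Phi)=J(\Phi)\#_{1-\alpha}J(\Psi)$, combined with Theorem~\ref{fundamental principle} for the mean $\#_{1-\alpha}$ (the computation leading to \eqref{eq:h alpha}), then gives
\[
J(\Psi_i)\#_\alpha J(\Phi_i)=W h_\alpha(T_i)W^*,\qquad i=1,2.
\]
Thus both quantities are smallest eigenvalues of $\Tr_{\mathcal{Y}}(WN_iW^*)$ with the Hermitian operators $N_i=h_\alpha(T_i)$.

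Next, we apply Lemma~\ref{lem:partial trace} (the $\lambda_{\min}$ version) to obtain
\[
|\widehat{Q}_\alpha(\Phi_1\|\Psi_1)-\widehat{Q}_\alpha(\Phi_2\|\Psi_2)|\leq\|\Omega^*(\mathbf{1}_{\mathcal{Y}})\|_\infty\,\|h_\alpha(T_1)-h_\alpha(T_2)\|_\infty .
\]
Since $0\leq T_i\leq\mathbf{1}_{\mathbb{C}^r}$ by Theorem~\ref{thm:arveson}, Lemma~\ref{holder lemma} with $E=T_1$ and $E'=T_2$ bounds the last factor by $\|T_1-T_2\|_\infty^\alpha+\|T_1-T_2\|_\infty^{1-\alpha}$, which is the claim.

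The only delicate point is the identification $J(\Psi)\#_\alpha J(\Phi)=Wh_\alpha(T)W^*$ when $T$ or $\mathbf{1}-T$ is singular. The endpoint values of $g_{\#_{1-\alpha}}$ must agree with $h_\alpha(0)=h_\alpha(1)=0$. This holds because $f(t)=t^{1-\alpha}$ has $f(0^+)=0$ and $f'(\infty)=0$. The regularised definition of $\widehat{Q}_\alpha$ must also match the regularisation from above used in Theorem~\ref{fundamental principle}, which Lemma~\ref{renyi closed form} already asserts. Everything else is a direct citation.
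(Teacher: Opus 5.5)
Your proposal is correct and follows the paper's proof: the closed form of Lemma~\ref{renyi closed form} with $J(\Psi_i)\#_\alpha J(\Phi_i)=Wh_\alpha(T_i)W^*$, the partial-trace and Weyl estimate from the proof of Theorem~\ref{main theorem} (recorded as Lemma~\ref{lem:partial trace}), and Lemma~\ref{holder lemma} in place of Proposition~\ref{K6}(2). Your check that $g_{\#_{1-\alpha}}(0)=g_{\#_{1-\alpha}}(1)=0$ agrees with $h_\alpha$ at the endpoints is correct, and it is a detail the paper leaves implicit.
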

    \begin{proof}
        By Lemma~\ref{renyi closed form}, $\widehat{Q}_\alpha(\Phi_i\|\Psi_i)=\lambda_{\min}(\Tr_{\mathcal{Y}}(Wh_\alpha(T_i)W^*))$. The proof of Theorem~\ref{main theorem} now applies verbatim with $g_\#$ replaced by $h_\alpha$ and Proposition~\ref{K6}(2) replaced by Lemma~\ref{holder lemma}. 
    \end{proof}
    \subsection{The meta problem}
In this subsection we describe in general terms when the problem of Proposition~\ref{impossibility proposition} occurs. Let $\sigma$ be a Kubo--Ando mean with representing function $f_\sigma$. By the integral representation of operator connections \cite[Theorem~5.11]{hiaipetz2014intro}, for all positive semidefinite $A,B$,
\[
A\,\sigma\,B=a_\sigma\,A+b_\sigma\,B+\int_{(0,\infty)}\frac{1+\lambda}{\lambda}\,\big((\lambda A):B\big)\,dm_\sigma(\lambda),
\]
where $m_\sigma$ is a finite positive measure, $a_\sigma=f_\sigma(0^+)=1\,\sigma\,0$ and $b_\sigma=f_\sigma'(\infty)=0\,\sigma\,1$. Evaluating the representing function at $1$ gives $a_\sigma+b_\sigma+m_\sigma\big((0,\infty)\big)=f_\sigma(1)=1$. We also use that $\range(A:B)=\range(A)\cap\range(B)$ for positive semidefinite $A,B$ \cite{AndersonDuffin1969}.
 
\begin{theorem}\label{range theorem}
Let $\sigma$ be a Kubo--Ando mean with representing function $f_\sigma$.
\begin{enumerate}
\item If $f_\sigma(0^+)=f_\sigma'(\infty)=0$, then $\range(A\,\sigma\,B)=\range(A)\cap\range(B)$ for all positive semidefinite $A,B$. In particular, $A\,\sigma\,B=0$ if and only if $\range(A)\cap\range(B)=\{0\}$.
\item Conversely, if $A\,\sigma\,B=0$ for all positive semidefinite $A,B$ on a space of dimension at least $2$ with $\range(A)\cap\range(B)=\{0\}$, then $f_\sigma(0^+)=f_\sigma'(\infty)=0$.
\end{enumerate}
\end{theorem}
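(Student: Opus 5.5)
The plan is to work directly from the integral representation recalled just before the statement,
\[
A\,\sigma\,B=a_\sigma A+b_\sigma B+\int_{(0,\infty)}\frac{1+\lambda}{\lambda}\big((\lambda A):B\big)\,dm_\sigma(\lambda),
\]
together with the Anderson--Duffin fact $\range(A:B)=\range(A)\cap\range(B)$.

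For part (1), assume $a_\sigma=b_\sigma=0$. Then $m_\sigma((0,\infty))=1$, so $m_\sigma$ is a nonzero positive measure. Each integrand $(\lambda A):B$ is positive semidefinite, and for $\lambda>0$ its range is $\range(\lambda A)\cap\range(B)=\range(A)\cap\range(B)=:\mathcal{R}$. The integral is therefore a positive semidefinite operator, and its range is the span of the ranges of the integrands, via kernels.

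Concretely, I would show that $x\in\ker(A\,\sigma\,B)$ if and only if $x^*\big((\lambda A):B\big)x=0$ for $m_\sigma$-almost every $\lambda$. Since $\lambda\mapsto x^*((\lambda A):B)x$ is continuous and nonnegative on $(0,\infty)$, this vanishes at some $\lambda$ in the support of $m_\sigma$, which is nonempty. Hence $x\in\ker((\lambda A):B)=\mathcal{R}^\perp$ for that $\lambda$. Conversely, $x\in\mathcal{R}^\perp$ lies in every kernel, so it lies in the kernel of the integral. Thus $\ker(A\,\sigma\,B)=\mathcal{R}^\perp$, and taking orthogonal complements gives $\range(A\,\sigma\,B)=\mathcal{R}$. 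The "in particular" statement is immediate. The one point I would state carefully is that a positive semidefinite operator is zero on $x$ exactly when its quadratic form vanishes at $x$, and that integrating a nonnegative continuous function against a positive measure with nonempty support gives zero only if the function vanishes on the support.

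For part (2), I would pick orthonormal $e_1,e_2$ and set $A=e_1e_1^*$, $B=e_2e_2^*$, whose ranges intersect trivially. These commute, and the parallel sum $(\lambda A):B$ vanishes because the ranges are disjoint. The integral representation then gives
\[
0=A\,\sigma\,B=a_\sigma e_1e_1^*+b_\sigma e_2e_2^*.
\]
Since $e_1e_1^*$ and $e_2e_2^*$ are linearly independent, $a_\sigma=b_\sigma=0$, that is, $f_\sigma(0^+)=f_\sigma'(\infty)=0$. Alternatively, one can compute the same thing via the regularisation: $(A+\eta)\sigma(B+\eta)$ is a function of commuting operators, with eigenvalue tending to $1\,\sigma\,0=f_\sigma(0^+)$ on $e_1$ and $0\,\sigma\,1=f_\sigma'(\infty)$ on $e_2$.

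I expect the main obstacle to be the range identity for the integral in part (1). The measure-theoretic kernel argument above handles it, provided the integral converges in norm. This holds because $\frac{1+\lambda}{\lambda}\big((\lambda A):B\big)\le(1+\lambda)\min(\|A\|,\|B\|/\lambda)\mathbf{1}$ is bounded uniformly in $\lambda$, and $m_\sigma$ is finite.
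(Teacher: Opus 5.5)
Your proposal is correct and matches the paper's proof. In (1) you use the integral representation with $a_\sigma=b_\sigma=0$ and the Anderson--Duffin identity to see that every integrand $(\lambda A):B$ has kernel $(\range(A)\cap\range(B))^\perp$; in (2) you test on $e_1e_1^*,e_2e_2^*$, where all parallel sums vanish and $A\,\sigma\,B=a_\sigma A+b_\sigma B$. The only difference is minor: the paper notes directly that for $v\notin(\range(A)\cap\range(B))^\perp$ the quadratic form of every integrand is strictly positive, so the integral against the nonzero measure $m_\sigma$ is positive, which makes your detour through $m_\sigma$-almost-everywhere vanishing and continuity in $\lambda$ unnecessary, though harmless.
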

\begin{proof}
(1) Here $a_\sigma=b_\sigma=0$ and $m_\sigma\big((0,\infty)\big)=1$. Since $\range(\lambda A)=\range(A)$ for $\lambda>0$, each of the positive semidefinite operators $(\lambda A):B$ has range $R=\range(A)\cap\range(B)$, and hence kernel $R^\perp$. If $v\in R^\perp$ then $\langle v,(A\,\sigma\,B)v\rangle=0$. If $v\notin R^\perp$ then $\langle v,((\lambda A):B)v\rangle>0$ for every $\lambda>0$, and since $m_\sigma\neq0$ the integral is strictly positive. Thus $\ker(A\,\sigma\,B)=R^\perp$, that is, $\range(A\,\sigma\,B)=R$.
 
(2) Let $e_1,e_2$ be orthonormal and $A=e_1e_1^*$, $B=e_2e_2^*$. Then $\range(A)\cap\range(B)=\{0\}$, so every $(\lambda A):B$ vanishes and $A\,\sigma\,B=a_\sigma\,A+b_\sigma\,B$. If this is zero, then $a_\sigma=b_\sigma=0$.
\end{proof}
 
Consequently, suppose that a real-valued function $\mathsf{D}$ on $\mathrm{CP}(\mathcal{X},\mathcal{Y})\times\mathrm{CP}(\mathcal{X},\mathcal{Y})$ has the form $\mathsf{D}(\Phi,\Psi)=\mathsf{H}\big(J(\Phi)\,\sigma\,J(\Psi)\big)$ for a function $\mathsf{H}$ with $\mathsf{H}(0)=0$, where $\sigma$ satisfies the hypothesis of Theorem~\ref{range theorem}(1). Then $\mathsf{D}(\Phi,\Psi)=0$ whenever $\range(J(\Phi))\cap\range(J(\Psi))=\{0\}$. By Proposition~\ref{lambda min formula} this applies to $\widehat{F}$, with $\mathsf{H}(X)=\lambda_{\min}(\Tr_{\mathcal{Y}}X)$, and by Lemma~\ref{renyi closed form} it applies to $\widehat{Q}_\alpha$ with the roles of $\Phi$ and $\Psi$ exchanged. The converse implication fails in general, because $\lambda_{\min}(\Tr_{\mathcal{Y}}X)$ can vanish for nonzero $X\geq0$. For instance, if $\Phi=\Psi$ and $\Phi^*(\mathbf{1}_{\mathcal{Y}})$ is singular, then $\widehat{F}(\Phi,\Phi)=\lambda_{\min}(\Phi^*(\mathbf{1}_{\mathcal{Y}})^\top)=0$, although the Choi ranges coincide.
 
We can consider this in relation to the ARN derivative. Let $\Omega=\Phi+\Psi$ and let $T$ be the ARN derivative of $\Phi$ with respect to $\Omega$. The condition $f_\sigma(0^+)=f_\sigma'(\infty)=0$ says exactly that $g_\sigma(0)=g_\sigma(1)=0$, while $g_\sigma(t)=tf_\sigma\big(\tfrac{1-t}{t}\big)>0$ for $0<t<1$. Since $W$ is injective, Theorem~\ref{fundamental principle} shows that $J(\Phi)\,\sigma\,J(\Psi)=Wg_\sigma(T)W^*$ vanishes if and only if $\mathrm{spec}(T)\subseteq\{0,1\}$, that is, if and only if $T$ is a projection. For the geometric mean this vanishing criterion was obtained by Kian \cite{kian2026relative}, in the form that $\Phi\#\Psi=0$ exactly when $\Phi$ and $\Psi$ have no nonzero common completely positive submap; submaps whose derivative is a projection are the sharp submaps studied in \cite{Kian2026sharpness}. The replacement maps of Proposition~\ref{impossibility proposition} are of exactly this kind: in the Kraus decomposition described after Theorem~\ref{main theorem}, the ARN derivative of $\Phi_1$ is $\mathbf{1}_{\mathbb{C}^n}\oplus0_{\mathbb{C}^n}$.
 
\subsection{Operator convex functions and divergences}
In this subsection, we show that the Kubo--Ando framework from above extends further to other channel level divergences. Many channel divergences are defined through operator convex functions instead of Kubo--Ando means.
\begin{definition}[Operator convex function {\cite[Chapter~V]{bhatia1997matrix}}]
Let $(t_-,t_+)$ be an open interval. A function $f:(t_-,t_+)\to\mathbb{R}$ is said to be operator convex on $(t_-,t_+)$ if
\[
    f\big(tA+(1-t)B\big)\leq t f(A)+(1-t)f(B)
\]
for all $t\in[0,1]$ and all Hermitian operators $A,B$ of the same (arbitrary) size with spectra contained in $(t_-,t_+)$. It is operator concave if $-f$ is operator convex.
\end{definition}
 
Multiple quantum channel divergences can be framed in a similar way to those defined by Kubo--Ando means. A broad class of these channel divergences are the maximal $f$-divergences, introduced by Matsumoto \cite{Matsumoto2018newf} and studied systematically in \cite{HiaiMosonyi2017,Hiai2019maximal}. 
\begin{definition}[Maximal $f$-divergence \cite{Matsumoto2018newf,HiaiMosonyi2017}]\label{def:maxf}
Let $f:(0,\infty)\to\mathbb{R}$ be operator convex; the limits $f(0^+)$ and $f'(\infty)$ (notation as before Theorem~\ref{fundamental principle}) exist in $(-\infty,+\infty]$ (Lemma~\ref{lem:hf extended}).
For positive definite $A,B\in \mathrm{L}(\mathcal{V})$ the \emph{maximal $f$-divergence} is
\[
\widehat{D}_f(A\|B)=\tr\!\Big(B^{1/2}f\big(B^{-1/2}AB^{-1/2}\big)B^{1/2}\Big),
\]
and for positive semidefinite $A,B$ it is defined by
$\widehat{D}_f(A\|B)=\lim_{\eta\downarrow0}\widehat{D}_f(A+\eta\mathbf{1}_{\mathcal{V}}\|B+\eta\mathbf{1}_{\mathcal{V}})$.
We call $f$ \emph{bounded} if $f(0^+)<\infty$ and $f'(\infty)<\infty$.
\end{definition}
 
\begin{definition}[Maximal $f$-divergence of completely positive maps; cf.\ \cite{KatariyaWilde2021,FangFawzi2021}]\label{def:maxf maps}
For $\Phi,\Psi\in \mathrm{CP}(\mathcal{X},\mathcal{Y})$,
\[
\widehat{D}_f(\Phi\|\Psi)=\sup_{\rho\in \mathrm{D}(\mathcal{X}\otimes\mathcal{X})}
\widehat{D}_f\big(\Phi_\rho\,\|\,\Psi_\rho\big),
\]
where $\Phi_\rho=(\Phi\otimes\mathbf{1}_{\mathrm{L}(\mathcal{X})})(\rho)$ and $\Psi_\rho=(\Psi\otimes\mathbf{1}_{\mathrm{L}(\mathcal{X})})(\rho)$.
\end{definition}
 
\begin{definition}[Examples]
For positive semidefinite $A,B$ and $0<\alpha<1$:
\begin{enumerate}
\item \emph{Hellinger divergence} \cite{BhatiaGaubertJain2019,PitrikVirosztek2020}: $f_{\mathrm{H}}(t)=(1-\sqrt{t})^2$, and
\[
\widehat{D}_{\mathrm{H}}(A\|B)=\tr A+\tr B-2\tr(A\#B).
\]
\item \emph{Hellinger divergence of order $\alpha$} \cite{LieseVajda2006}: $f_{\mathrm{H},\alpha}(t)=\dfrac{1-t^\alpha}{1-\alpha}$, and
\[
\widehat{D}_{\mathrm{H},\alpha}(A\|B)=\frac{\tr B-\tr(B\#_\alpha A)}{1-\alpha}.
\]
\item \emph{Vincze--Le Cam (triangular) divergence} \cite{LieseVajda2006}: $f_{\mathrm{LC}}(t)=\dfrac{(t-1)^2}{t+1}$, and
\[
\widehat{D}_{\mathrm{LC}}(A\|B)=\tr A+\tr B-4\tr(A:B).
\]
\item \emph{Maximal Jensen--Shannon divergence} \cite{Lin1991}: $f_{\mathrm{JS}}(t)=t\log t-(1+t)\log\dfrac{1+t}{2}$, and
\[
\widehat{D}_{\mathrm{JS}}(A\|B)=\tr\!\Big(B^{1/2}f_{\mathrm{JS}}\big(B^{-1/2}AB^{-1/2}\big)B^{1/2}\Big).
\]
\end{enumerate}
Each $f$ above is bounded, with $(f(0^+),f'(\infty))$ equal to $(1,1)$, $(\tfrac{1}{1-\alpha},0)$, $(1,1)$ and $(\log2,\log2)$ respectively.
\end{definition}
 
\begin{remark}
If $\Phi=\Psi_T$ and $\Psi=\Psi_{\mathbf{1}_{\mathbb{C}^r}-T}$, where $T$ is the ARN derivative of $\Phi$ with respect to $\Omega=\Phi+\Psi$, then the maximal $f$-divergence of the pair is governed by $h_f(T)$, where $h_f(t)=(1-t)f\big(\tfrac{t}{1-t}\big)$ on $(0,1)$, extended by $h_f(0)=f(0^+)$ and $h_f(1)=f'(\infty)$. For the examples above,
\[
h_{\mathrm{H}}(T)=\mathbf{1}_{\mathbb{C}^r}-2\sqrt{T(\mathbf{1}_{\mathbb{C}^r}-T)},\qquad
h_{\mathrm{H},\alpha}(T)=\frac{(\mathbf{1}_{\mathbb{C}^r}-T)-T^\alpha(\mathbf{1}_{\mathbb{C}^r}-T)^{1-\alpha}}{1-\alpha},
\]
\[
h_{\mathrm{LC}}(T)=(2T-\mathbf{1}_{\mathbb{C}^r})^2,\qquad
h_{\mathrm{JS}}(T)=(\log 2)\,\mathbf{1}_{\mathbb{C}^r}+T\log T+(\mathbf{1}_{\mathbb{C}^r}-T)\log(\mathbf{1}_{\mathbb{C}^r}-T).
\]
\end{remark}
 
We find under certain circumstances, the issue of Proposition~\ref{impossibility proposition} rears its head. We first decompose a bounded maximal $f$-divergence into an affine part and a Kubo--Ando connection. Divergences of this form, built from Kubo--Ando means, appear in \cite{PitrikVirosztek2020}; the closed-form expression in Theorem~\ref{thm:maxf} extends those of \cite{KatariyaWilde2021,FangFawzi2021}, and an operator-valued analogue is the divergence of \cite[Definition~3.1]{Okayasu2026maxf}.

\begin{lemma}\label{lem:maxf decomposition}
Let $f:(0,\infty)\to\mathbb{R}$ be operator convex with $f(1)=0$ and with $f(0^+)$ and $f'(\infty)$ finite. Set
\[
h(t)=f(0^+)+f'(\infty)\,t-f(t),\qquad t>0.
\]
Then $h$ is a nonnegative operator monotone function on $(0,\infty)$ with
$h(0^+)=0$ and $h(t)/t\to0$ as $t\to\infty$. Writing $\sigma_h$ for the Kubo--Ando connection with representing
function $h$, we have, for all positive semidefinite $A,B\in \mathrm{L}(\mathcal{V})$,
\[
\widehat{D}_f(A\|B)=f(0^+)\tr B+f'(\infty)\tr A-\tr(B\,\sigma_h\,A).
\]
\end{lemma}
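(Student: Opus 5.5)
The plan has three steps: first establish the properties of $h$ from the integral representation of operator convex functions, then verify the identity for positive definite $A,B$, and finally extend it to positive semidefinite $A,B$ by regularisation.

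For the properties of $h$, I use the standard integral representation of an operator convex function on $(0,\infty)$ (as in \cite{HiaiMosonyi2017,hiaipetz2014intro}):
\[
f(t)=f(1)+f'(1)(t-1)+c(t-1)^2+\int_{(0,\infty)}\frac{(t-1)^2}{t+s}\,d\mu(s),
\]
with $c\geq0$ and $\mu$ a positive measure. Finiteness of $f'(\infty)$ forces $c=0$ and $\int d\mu(s)<\infty$. Finiteness of $f(0^+)$ forces $\int s^{-1}d\mu(s)<\infty$. Under these conditions $-f$ is, up to an affine function, a nonnegative combination of the operator monotone functions $-(t-1)^2/(t+s)$. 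Equivalently, $f'(\infty)t+f(0^+)-f(t)$ is an integral of the functions $\frac{(1+s)^2\,t}{s(t+s)}$, which are operator monotone on $(0,\infty)$, plus a possible atom at $s=0$ (or at $\infty$) that would carry $f(0^+)$ and $f'(\infty)$. A quicker route avoids the explicit kernel: $h$ is operator concave on $(0,\infty)$ because $-f$ is and affine terms do not matter. Moreover $h$ is concave with $h(0^+)=0$ and $h(t)/t\to f'(\infty)-f'(\infty)=0$. A concave function with slope tending to $0$ at infinity is nondecreasing, so $h\ge h(0^+)=0$. Finally, a nonnegative operator concave function on $(0,\infty)$ is operator monotone (\cite[Chapter~V]{bhatia1997matrix}, Hansen--Pedersen). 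This step is the main point to check carefully; I would cite the concave-implies-monotone result rather than redo the kernel computation.

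For the identity, take $A,B>0$. Then
\[
B^{1/2}f(B^{-1/2}AB^{-1/2})B^{1/2}=f(0^+)B+f'(\infty)A-B^{1/2}h(B^{-1/2}AB^{-1/2})B^{1/2},
\]
and by Theorem~\ref{kubo-ando} the last term is exactly $B\,\sigma_h\,A$. Taking traces gives the formula.

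For positive semidefinite $A,B$, replace them by $A_\eta,B_\eta$. The affine terms $f(0^+)\tr B_\eta+f'(\infty)\tr A_\eta$ converge. The term $B_\eta\,\sigma_h\,A_\eta$ decreases to $B\,\sigma_h\,A$ by continuity from above of connections, so its trace converges. Hence the limit in Definition~\ref{def:maxf} exists and equals the stated expression. The only subtlety here is that Definition~\ref{def:maxf} presupposes this limit exists, and the argument supplies it.
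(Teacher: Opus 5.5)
Your proposal is correct and follows the paper's proof essentially step for step. You get operator concavity of $h$ from $-f$, nonnegativity from concavity and the boundary limits, operator monotonicity from the concave-implies-monotone theorem, the identity from the Kubo--Ando formula for positive definite $A,B$, and the semidefinite case from continuity from above; your argument that $h$ is nondecreasing and the paper's chord-slope bound $f(t)\leq f(0^+)+f'(\infty)t$ express the same fact. Since Bhatia's Theorem~V.2.5 is stated for continuous functions on $[0,\infty)$, apply it to the extension $h(0)=0$, as the paper does; the kernel aside is not needed, and under the hypotheses it has no extra atoms, since $h(t)=\int_{(0,\infty)}\frac{(1+s)^2t}{s(t+s)}\,d\mu(s)$ exactly.
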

\begin{proof}
Extend $f$ to $[0,\infty)$ by $f(0)=f(0^+)$; the extension is convex, so $t\mapsto (f(t)-f(0))/t$ is
nondecreasing on $(0,\infty)$ with limit $f'(\infty)$. Hence $f(t)\leq f(0^+)+f'(\infty)t$, that is, $h\geq 0$.
As an affine function minus an operator convex function, $h$ is operator concave, and extended by $h(0)=0$ it is
continuous, nonnegative and operator concave on $[0,\infty)$; it is therefore operator monotone
\cite[Theorem~V.2.5]{bhatia1997matrix}. The two limits follow from the definitions of $f(0^+)$ and $f'(\infty)$.
For positive definite $A,B$, applying the functional calculus to $f=f(0^+)+f'(\infty)t-h$ and using
$B^{1/2}(B^{-1/2}AB^{-1/2})B^{1/2}=A$ and $B\,\sigma_h\,A=B^{1/2}h(B^{-1/2}AB^{-1/2})B^{1/2}$ gives
\[
B^{1/2}f\big(B^{-1/2}AB^{-1/2}\big)B^{1/2}=f(0^+)B+f'(\infty)A-B\,\sigma_h\,A .
\]
For positive semidefinite $A,B$, replace $A,B$ by $A+\eta\mathbf{1}_{\mathcal{V}}$, $B+\eta\mathbf{1}_{\mathcal{V}}$, take traces
and let $\eta\downarrow0$: the left-hand side converges to $\widehat{D}_f(A\|B)$ by definition, and
$(B+\eta\mathbf{1}_{\mathcal{V}})\,\sigma_h\,(A+\eta\mathbf{1}_{\mathcal{V}})$ decreases to $B\,\sigma_h\,A$ \cite{KuboAndo1980}.
\end{proof}
 
\begin{theorem}\label{thm:maxf}
Let $f$ be as in Lemma~\ref{lem:maxf decomposition}. Let $\Omega\in \mathrm{CP}(\mathcal{X},\mathcal{Y})$ have a Kraus
decomposition $\Omega(X)=\sum_{k=1}^rV_kXV_k^*$ with linearly independent $V_1,\dots,V_r$, and let $W=[\vectorise(V_1)\ \cdots\ \vectorise(V_r)]$.
Define $h_f:[0,1]\to\mathbb{R}$ by
\[
h_f(t)=(1-t)\,f\Big(\frac{t}{1-t}\Big)\quad(0<t<1),\qquad h_f(0)=f(0^+),\qquad h_f(1)=f'(\infty).
\]
\begin{enumerate}
\item[(a)] If $0\leq T\leq\mathbf{1}_{\mathbb{C}^r}$, $\Phi=\Psi_T$ and $\Psi=\Psi_{\mathbf{1}_{\mathbb{C}^r}-T}$, then
\[
\widehat{D}_f(\Phi\|\Psi)=\lambda_{\max}\Big(\Tr_{\mathcal{Y}}\big(W\,h_f(T)\,W^*\big)\Big).
\]
\item[(b)] If $0\leq T_1,T_2\leq\mathbf{1}_{\mathbb{C}^r}$, then
\[
\Big|\widehat{D}_f\big(\Psi_{T_1}\big\|\Psi_{\mathbf{1}_{\mathbb{C}^r}-T_1}\big)-\widehat{D}_f\big(\Psi_{T_2}\big\|\Psi_{\mathbf{1}_{\mathbb{C}^r}-T_2}\big)\Big|
\leq\|\Omega^*(\mathbf{1}_{\mathcal{Y}})\|_\infty\;\omega_{h_f}\big(\|T_1-T_2\|_\infty\big),
\]
where $\omega_{h_f}$ is as in Theorem~\ref{Modulus of continuity results},
and $\omega_{h_f}(\delta)\to0$ as $\delta\to0$.
\end{enumerate}
\end{theorem}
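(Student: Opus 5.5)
The plan is to reduce everything to the Kubo--Ando connection $\sigma_h$ of Lemma~\ref{lem:maxf decomposition} and then reuse the machinery already built for means. First I will establish continuity of $h_f$ on $[0,1]$. Note that for $0<t<1$,
\[
h_f(t)=(1-t)f\Big(\frac{t}{1-t}\Big)=f(0^+)(1-t)+f'(\infty)t-(1-t)h\Big(\frac{t}{1-t}\Big),
\]
and $(1-t)h(t/(1-t))=(1-t)\,\sigma_h\,t=\varphi_{\sigma_h}(1-t)$ in the notation of Remark~\ref{rem:connections}. As $t\downarrow0$ this tends to $h(0^+)=0$. As $t\uparrow1$ it tends to $\lim_{s\to\infty}h(s)/s=0$. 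Hence $h_f$ is continuous on $[0,1]$ with the stated endpoint values, so Lemma~\ref{modulus to zero} gives $\omega_{h_f}(\delta)\to0$.

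For (a), I first reduce the channel quantity to Choi matrices. The map $\rho\mapsto\widehat{D}_f(\Phi_\rho\|\Psi_\rho)$ is convex, since maximal $f$-divergences are jointly convex \cite{Matsumoto2018newf,HiaiMosonyi2017}; equivalently, by Lemma~\ref{lem:maxf decomposition}, it is affine minus $\tr(\Psi_\rho\,\sigma_h\,\Phi_\rho)$, which is concave. So the supremum is attained on pure states $uu^*$ with $u=\vectorise(A)$. For invertible $A$, \eqref{eq:vec identity} together with the transformer equality for $\sigma_h$ gives
\[
\widehat{D}_f(\Phi_\rho\|\Psi_\rho)=\tr\big((AA^*)^\top\,\Tr_{\mathcal{Y}}M\big),\qquad M=f(0^+)J(\Psi)+f'(\infty)J(\Phi)-J(\Psi)\,\sigma_h\,J(\Phi).
\]
Taking the supremum over invertible density operators $(AA^*)^\top$ yields $\lambda_{\max}(\Tr_{\mathcal{Y}}M)$.

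The non-invertible $A$ are the main obstacle. The previous proofs used upper semicontinuity to pass to the dense set, but here the direction is reversed. What is needed is that the function is lower semicontinuous, so that its values at singular $A$ cannot exceed limits along invertible ones. This is exactly the known lower semicontinuity of maximal $f$-divergences \cite{Hiai2019maximal}. Alternatively it can be seen directly: $\tr(B\,\sigma_h\,A)$ is a decreasing limit of continuous functions, hence upper semicontinuous, so its negative plus an affine part is lower semicontinuous. Therefore the supremum over all pure states equals the supremum over invertible $A$.

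Next I will identify $M$ with $Wh_f(T)W^*$. By Remark~\ref{rem:connections}(1), $J(\Psi)\,\sigma_h\,J(\Phi)=W\big((\mathbf{1}-T)\,\sigma_h\,T\big)W^*$. Applying Remark~\ref{rem:connections}(2) with $\mathbf{1}-T$ in place of $T$ gives
\[
(\mathbf{1}-T)\,\sigma_h\,T=\varphi_{\sigma_h}(\mathbf{1}-T),
\]
which is the functional calculus of $t\mapsto(1-t)h(t/(1-t))$ at $T$. Adding $f(0^+)W(\mathbf{1}-T)W^*+f'(\infty)WTW^*$ and using the identity for $h_f$ from the first step, including its endpoint values, gives $M=Wh_f(T)W^*$. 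This proves (a).

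For (b), apply (a) to $T_1$ and $T_2$. Then use the largest-eigenvalue version of Lemma~\ref{lem:partial trace} with $N_i=h_f(T_i)$, which gives the bound $\|\Omega^*(\mathbf{1}_{\mathcal{Y}})\|_\infty\|h_f(T_1)-h_f(T_2)\|_\infty$. Finally, bound $\|h_f(T_1)-h_f(T_2)\|_\infty$ by $\omega_{h_f}(\|T_1-T_2\|_\infty)$ using the definition of the modulus; its convergence to $0$ was shown in the first step.
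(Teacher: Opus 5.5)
Your proposal is correct and follows essentially the same route as the paper. You use the same splitting $h_f(t)=f(0^+)(1-t)+f'(\infty)t-(1-t)h\big(\tfrac{t}{1-t}\big)$ to get continuity, the same reduction to invertible pure inputs (convexity, the transformer equality, and upper semicontinuity of $\tr(\Psi_\rho\,\sigma_h\,\Phi_\rho)$), and the same Weyl and partial-trace estimate for (b). The only cosmetic difference is that you apply Remark~\ref{rem:connections} directly to the connection $\sigma_h$, whereas the paper normalises $\sigma_h=h(1)\,\sigma_{h/h(1)}$ to a mean and applies Theorem~\ref{fundamental principle} to the pair $(\Psi,\Phi)$.
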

 
\begin{proof}
The argument follows the strategy of \cite[Lemma~5]{FangFawzi2021}. With $h$ as in Lemma~\ref{lem:maxf decomposition}, put
$k(t)=(1-t)\,h\big(\tfrac{t}{1-t}\big)$ for $0<t<1$ and $k(0)=k(1)=0$. Substituting $f=f(0^+)+f'(\infty)t-h$ into
the definition of $h_f$ gives
\[
h_f(t)=f(0^+)(1-t)+f'(\infty)\,t-k(t),\qquad 0\leq t\leq 1 .
\]
The function $k$ is continuous on $[0,1]$: as $t\downarrow0$ we have $k(t)\to h(0^+)=0$, and as $t\uparrow1$, writing
$x=t/(1-t)\to\infty$, we have $k(t)=t\,h(x)/x\to0$. Hence $h_f$ is continuous on $[0,1]$, and
$\omega_{h_f}(\delta)\to0$ by Lemma~\ref{modulus to zero}.
 
 The ARN derivatives of $\Psi$ and $\Phi$ with respect to $\Omega$ are
$\mathbf{1}_{\mathbb{C}^r}-T$ and $T$, and the scalar connection satisfies $s\,\sigma_h\,(1-s)=s\,h\big(\tfrac{1-s}{s}\big)$, which equals
$k(t)$ at $s=1-t$. Theorem~\ref{fundamental principle}, applied to the pair $(\Psi,\Phi)$, therefore gives
$J(\Psi)\,\sigma_h\,J(\Phi)=W\,k(T)\,W^*$. (If $h\not\equiv0$ then $h(1)>0$ and $\sigma_h=h(1)\,\sigma_{h/h(1)}$ with
$\sigma_{h/h(1)}$ a mean; if $h\equiv0$ there is nothing to prove.) Since $J(\Phi)=WTW^*$ and
$J(\Psi)=W(\mathbf{1}_{\mathbb{C}^r}-T)W^*$, we then have
\[
Z_f=f(0^+)J(\Psi)+f'(\infty)J(\Phi)-J(\Psi)\,\sigma_h\,J(\Phi)=W\,h_f(T)\,W^* .
\]
 
 Let $M=\Tr_{\mathcal{Y}}(W\,h_f(T)\,W^*)$, a Hermitian operator on $\mathcal{X}$. Let
$u=\vectorise(A)$ with $\|A\|_2=1$ and $A$ invertible, and let $C=\mathbf{1}_{\mathcal{Y}}\otimes A^\top$, which is
invertible. As in the proof of Proposition~\ref{lambda min formula},
$(\Phi\otimes\mathbf{1}_{\mathrm{L}(\mathcal{X})})(uu^*)=C\,J(\Phi)\,C^*$ and similarly for $\Psi$. By the transformer equality
for invertible operators \cite{KuboAndo1980},
\[
(CJ(\Psi)C^*)\,\sigma_h\,(CJ(\Phi)C^*)=C\big(J(\Psi)\,\sigma_h\,J(\Phi)\big)C^*,
\]
so Lemma~\ref{lem:maxf decomposition} gives
\begin{align*}
\widehat{D}_f\big((\Phi\otimes\mathbf{1}_{\mathrm{L}(\mathcal{X})})(uu^*)\big\|(\Psi\otimes\mathbf{1}_{\mathrm{L}(\mathcal{X})})(uu^*)\big)
&=\tr(CZ_fC^*)=\tr\big((\mathbf{1}_{\mathcal{Y}}\otimes(AA^*)^\top)Z_f\big)\\
&=\tr\big((AA^*)^\top M\big),
\end{align*}
using $C^*C=\mathbf{1}_{\mathcal{Y}}\otimes(AA^*)^\top$ and the previous part. 
 Let
$d(\rho)=\widehat{D}_f\big((\Phi\otimes\mathbf{1}_{\mathrm{L}(\mathcal{X})})(\rho)\big\|(\Psi\otimes\mathbf{1}_{\mathrm{L}(\mathcal{X})})(\rho)\big)$ for
$\rho\in \mathrm{D}(\mathcal{X}\otimes\mathcal{X})$. By Lemma~\ref{lem:maxf decomposition}, $d$ is a linear function minus
$\rho\mapsto\tr\big((\Psi\otimes\mathbf{1}_{\mathrm{L}(\mathcal{X})})(\rho)\,\sigma_h\,(\Phi\otimes\mathbf{1}_{\mathrm{L}(\mathcal{X})})(\rho)\big)$, which is concave because
$\sigma_h$ is jointly concave \cite{KuboAndo1980} and $\rho\mapsto\big((\Phi\otimes\mathbf{1}_{\mathrm{L}(\mathcal{X})})(\rho),(\Psi\otimes\mathbf{1}_{\mathrm{L}(\mathcal{X})})(\rho)\big)$
is linear. Thus $d$ is convex, and writing $\rho=\sum_ip_iu_iu_i^*$ in its spectral decomposition gives
$d(\rho)\leq\max_id(u_iu_i^*)$; so the supremum defining $\widehat{D}_f(\Phi\|\Psi)$ may be taken over pure states.
Moreover $d$ is lower semicontinuous: for $\eta>0$ the map $(A,B)\mapsto\tr\big((B+\eta\mathbf{1}_{\mathcal{Y}\otimes\mathcal{X}})\,\sigma_h\,(A+\eta\mathbf{1}_{\mathcal{Y}\otimes\mathcal{X}})\big)$
is continuous and decreases to $\tr(B\,\sigma_h\,A)$ as $\eta\downarrow0$, so $(A,B)\mapsto\tr(B\,\sigma_h\,A)$ is upper
semicontinuous. Every unit vector $u=\vectorise(A)$ is a limit of unit vectors $u_n=\vectorise(A_n)$ with $A_n$
invertible, and then $d(uu^*)\leq\liminf_nd(u_nu_n^*)$. Hence the supremum may be taken over $u=\vectorise(A)$ with
$A$ invertible. As $A$ ranges over such operators, $(AA^*)^\top$ ranges over the invertible density operators on
$\mathcal{X}$, which are dense in $\mathrm{D}(\mathcal{X})$. By the previous parts,
\[
\widehat{D}_f(\Phi\|\Psi)=\sup\big\{\tr(\xi M):\xi\in \mathrm{D}(\mathcal{X})\ \text{invertible}\big\}
=\max_{\xi\in \mathrm{D}(\mathcal{X})}\tr(\xi M)=\lambda_{\max}(M),
\]
which is (a).
 
Let $M_i=\Tr_{\mathcal{Y}}(W\,h_f(T_i)\,W^*)$ and $K=h_f(T_1)-h_f(T_2)$. By (a) and
Weyl's inequality \cite[Section~III.2]{bhatia1997matrix}, the left-hand side of (b) is
$|\lambda_{\max}(M_1)-\lambda_{\max}(M_2)|\leq\|M_1-M_2\|_\infty$. Since
$-\|K\|_\infty\mathbf{1}_{\mathbb{C}^r}\leq K\leq\|K\|_\infty\mathbf{1}_{\mathbb{C}^r}$, conjugating by $W$ and applying the positive map
$\Tr_{\mathcal{Y}}$ gives
\[
-\|K\|_\infty\Tr_{\mathcal{Y}}(WW^*)\leq M_1-M_2\leq\|K\|_\infty\Tr_{\mathcal{Y}}(WW^*),
\]
and $\Tr_{\mathcal{Y}}(WW^*)=\Tr_{\mathcal{Y}}J(\Omega)=\Omega^*(\mathbf{1}_{\mathcal{Y}})^\top\geq0$. Hence
\[
\|M_1-M_2\|_\infty\leq\|\Omega^*(\mathbf{1}_{\mathcal{Y}})\|_\infty\|K\|_\infty
\leq\|\Omega^*(\mathbf{1}_{\mathcal{Y}})\|_\infty\,\omega_{h_f}(\|T_1-T_2\|_\infty),
\]
which is (b).
\end{proof}
 
Theorem~\ref{thm:maxf} also explains how the problem of Proposition~\ref{impossibility proposition} arises for maximal $f$-divergences. By Lemma~\ref{lem:maxf decomposition}, the connection $\sigma_h$ inside $\widehat{D}_f$ has representing function $h$ with $h(0^+)=0$ and $h'(\infty)=0$, so, by Theorem~\ref{range theorem}, it vanishes on Choi matrices whose ranges intersect trivially.
 
\begin{corollary}\label{maxf impossibility}
Let $f$ be as in Lemma~\ref{lem:maxf decomposition} and let $\Phi,\Psi\in\mathrm{CP}(\mathcal{X},\mathcal{Y})$. If $\range(J(\Phi))\cap\range(J(\Psi))=\{0\}$, then (cf.\ \cite{Okayasu2026maxf})
\[
\widehat{D}_f(\Phi\|\Psi)=\lambda_{\max}\big(f(0^+)\,\Psi^*(\mathbf{1}_{\mathcal{Y}})+f'(\infty)\,\Phi^*(\mathbf{1}_{\mathcal{Y}})\big).
\]
In particular, for the maps of Corollary~\ref{impossibility corollary} we have $\widehat{D}_f(\Phi_1\|\Psi_1)=f(0^+)+f'(\infty)$ and $\widehat{D}_f(\Phi_2\|\Psi_2)=0$, although $|||\Phi_1-\Phi_2|||_1=|||\Psi_1-\Psi_2|||_1<\eta$.
\end{corollary}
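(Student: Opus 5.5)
The plan is to read off the closed form of Theorem~\ref{thm:maxf}(a) and evaluate $h_f$ on a projection. Let $\Omega=\Phi+\Psi$, with Kraus operators $V_k$ and matrix $W$ as before, and let $T$ be the ARN derivative of $\Phi$ with respect to $\Omega$. By the discussion following Theorem~\ref{range theorem}, the condition $\range(J(\Phi))\cap\range(J(\Psi))=\{0\}$ forces $T$ to be a projection. To see this directly, apply Theorem~\ref{range theorem}(1) to any mean with $f_\sigma(0^+)=f_\sigma'(\infty)=0$, for instance $\#$. This gives $J(\Phi)\#J(\Psi)=0$, hence $Wg_\#(T)W^*=0$ by Theorem~\ref{fundamental principle}. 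Injectivity of $W$ then gives $g_\#(T)=0$, so $\mathrm{spec}(T)\subseteq\{0,1\}$.

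For a projection $T$ the functional calculus only sees the values of $h_f$ at $0$ and $1$. It gives
\[
h_f(T)=h_f(0)(\mathbf{1}_{\mathbb{C}^r}-T)+h_f(1)T=f(0^+)(\mathbf{1}_{\mathbb{C}^r}-T)+f'(\infty)T .
\]
Therefore $Wh_f(T)W^*=f(0^+)J(\Psi)+f'(\infty)J(\Phi)$. Theorem~\ref{thm:maxf}(a) then gives $\widehat{D}_f(\Phi\|\Psi)=\lambda_{\max}$ of the partial trace of this operator. Using $\Tr_{\mathcal{Y}}J(\Xi)=\Xi^*(\mathbf{1}_{\mathcal{Y}})^\top$ and the fact that transposition preserves the spectrum of a Hermitian operator, we obtain the displayed formula.

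For the example, $\Phi_1^*(\mathbf{1}_{\mathcal{Y}})=\Psi_1^*(\mathbf{1}_{\mathcal{Y}})=\mathbf{1}_{\mathcal{X}}$ for the replacement channels. Their Choi ranges intersect trivially, as shown in Proposition~\ref{impossibility proposition}, so the formula gives $f(0^+)+f'(\infty)$. For $\Phi_2=\Psi_2=\Omega/2$ we have $T_2=\tfrac12\mathbf{1}$ and $h_f(\tfrac12)=\tfrac12 f(1)=0$. Hence $Wh_f(T_2)W^*=0$ and $\widehat{D}_f(\Phi_2\|\Psi_2)=\lambda_{\max}(0)=0$. The distance bound is exactly the one in Corollary~\ref{impossibility corollary}.

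The only step needing care is the projection claim. Once it is established through the injectivity of $W$ and the strict positivity of $g_\#$ on $(0,1)$, the rest is immediate bookkeeping.
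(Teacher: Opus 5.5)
Your proof is correct. It differs from the paper's only in where the range hypothesis enters.

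The paper keeps the decomposition of Lemma~\ref{lem:maxf decomposition}, namely $Wh_f(T)W^*=Z_f=f(0^+)J(\Psi)+f'(\infty)J(\Phi)-J(\Psi)\,\sigma_h\,J(\Phi)$. It kills the connection term directly at the Choi level with Theorem~\ref{range theorem}(1). That needs a case split: either $h\equiv0$, or $\sigma_h=h(1)\,\sigma_{h/h(1)}$ with $\sigma_{h/h(1)}$ a mean that satisfies the hypothesis because $h(0^+)=0$ and $h(t)/t\to0$. For $\Phi_2=\Psi_2$ the paper uses $J(\Phi_2)\,\sigma_h\,J(\Phi_2)=h(1)J(\Phi_2)$ together with $f(0^+)+f'(\infty)-h(1)=f(1)=0$.

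You instead use $\#$ only as a detector, converting the hypothesis into $\mathrm{spec}(T)\subseteq\{0,1\}$. After that the functional calculus sees only the boundary values $h_f(0)=f(0^+)$ and $h_f(1)=f'(\infty)$, and the second example reduces to $h_f(\tfrac12)=\tfrac12 f(1)=0$.

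What each route buys: yours avoids normalising $\sigma_h$ and checking its limits at $0$ and $\infty$. It also makes the mechanism explicit: the derivative is a projection, so only the chord of $h_f$ matters. This is the viewpoint the paper itself adopts in Section~\ref{sec:dichotomy}. The paper's route never passes to the spectrum of $T$.

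You could shorten your projection step further. If $u$ is an eigenvector of $T$ with eigenvalue in $(0,1)$, then $Wu\neq0$ lies in $W\range T\cap W\range(\mathbf{1}_{\mathbb{C}^r}-T)=\range J(\Phi)\cap\range J(\Psi)$. This argument needs neither $\#$ nor the positivity of $g_\#$.
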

\begin{proof}
Apply Theorem~\ref{thm:maxf} with $\Omega=\Phi+\Psi$. Its proof shows that $\widehat{D}_f(\Phi\|\Psi)=\lambda_{\max}(\Tr_{\mathcal{Y}}Z_f)$, where
\[
Z_f=f(0^+)J(\Psi)+f'(\infty)J(\Phi)-J(\Psi)\,\sigma_h\,J(\Phi).
\]
If $h\not\equiv0$, then $\sigma_h=h(1)\,\sigma_{h/h(1)}$ and $\sigma_{h/h(1)}$ is a mean satisfying the hypothesis of Theorem~\ref{range theorem}(1); if $h\equiv0$ then $\sigma_h=0$. In either case $J(\Psi)\,\sigma_h\,J(\Phi)=0$, so $\Tr_{\mathcal{Y}}Z_f=\big(f(0^+)\Psi^*(\mathbf{1}_{\mathcal{Y}})+f'(\infty)\Phi^*(\mathbf{1}_{\mathcal{Y}})\big)^\top$, using $\Tr_{\mathcal{Y}}J(\Xi)=\Xi^*(\mathbf{1}_{\mathcal{Y}})^\top$. Transposition does not change eigenvalues, which proves the formula.
 
For the maps of Corollary~\ref{impossibility corollary}, the Choi ranges of $\Phi_1$ and $\Psi_1$ intersect trivially by the proof of Proposition~\ref{impossibility proposition}, and $\Phi_1^*(\mathbf{1}_{\mathcal{Y}})=\Psi_1^*(\mathbf{1}_{\mathcal{Y}})=\mathbf{1}_{\mathcal{X}}$, so $\widehat{D}_f(\Phi_1\|\Psi_1)=f(0^+)+f'(\infty)$. For $\Phi_2=\Psi_2$ we have $J(\Phi_2)\,\sigma_h\,J(\Phi_2)=h(1)J(\Phi_2)$, hence $Z_f=\big(f(0^+)+f'(\infty)-h(1)\big)J(\Phi_2)=f(1)J(\Phi_2)=0$ and $\widehat{D}_f(\Phi_2\|\Psi_2)=0$.
\end{proof}
 
Thus every bounded maximal $f$-divergence exhibits the phenomenon of Corollary~\ref{impossibility corollary}, while Theorem~\ref{thm:maxf}(b) shows that it is controlled by the ARN derivative.

\begin{remark}\label{rem:normalisation}
The normalisation $f(1)=0$ is used only in the last statement of Corollary~\ref{maxf impossibility}. The proofs of Lemma~\ref{lem:maxf decomposition}, of Theorem~\ref{thm:maxf} and of the first statement of Corollary~\ref{maxf impossibility} do not use it. These results therefore hold for every bounded operator convex $f$, that is, every operator convex $f$ with $f(0^+)$ and $f'(\infty)$ finite, and we use them in this generality.
\end{remark}

\section{Unbounded maximal \texorpdfstring{$f$}{f}-divergences}\label{sec:unbounded}

Theorem~\ref{thm:maxf} requires both $f(0^+)$ and $f'(\infty)$ to be finite. Several of the most widely used maximal $f$-divergences have $f'(\infty)=+\infty$: the Belavkin--Staszewski relative entropy, with $f(t)=t\log t$ \cite{BelavkinStaszewski1982}; the geometric R\'enyi quasi-entropies with $\alpha\in(1,2]$, with $f(t)=t^\alpha$ \cite{KatariyaWilde2021,FangFawzi2021}; and the maximal $\chi^2$-divergence, with $f(t)=(t-1)^2$. For these the function $h_f$ of Theorem~\ref{thm:maxf} is continuous on $[0,1)$ but unbounded near $t=1$, which is where the ARN derivative $\mathbf{1}_{\mathbb{C}^r}-T$ of $\Psi$ degenerates. In this section we show that the closed form and the continuity estimates of Theorem~\ref{thm:maxf} persist under a bound on the max-relative entropy, which keeps the spectrum of $T$ away from $1$, and we compute explicit moduli of continuity for the first two examples. We first make a note of why the max-relative entropy is the natural object to discuss. Let $\Phi$ and $\Psi$ be completely positive maps and suppose that $\Psi\leq \Phi$. Let $W$ be the column matrix of a Kraus decomposition of $\Phi$ by linearly independent operators, as in Theorem~\ref{thm:arveson}, and let $T$ be the ARN derivative of $\Psi$ with respect to $\Phi$. Then the top eigenvalue $t_1$ of $T$ is the quantity \[\min\{\lambda\geq 0:\Psi\leq \lambda \Phi\}.\] This is because by Choi's theorem, we have 
\[
    \Psi\leq \lambda\Phi\iff J(\Psi)\leq \lambda J(\Phi).
\]
By transferring to the positive semidefinite picture by the ARN correspondence, this means that 
\[
    \Psi\leq\lambda\Phi \iff W(\lambda \mathbf{1}_{\mathbb{C}^r}-T)W^*\geq 0.
\]
But this is only true if and only if $\lambda \mathbf{1}_{\mathbb{C}^r}-T\geq 0$. Since $T$ is positive semidefinite, we then have $\lambda\mathbf{1}_{\mathbb{C}^r}-T\geq 0$ if and only if $\lambda\geq t$ for every eigenvalue $t$ of $T$, that is, if and only if $\lambda \geq t_1$. The max-relative entropy $D_{\max}(\Psi\|\Phi)$ is the logarithm of this value. So this is a natural object with which to control things.

Throughout this section $f:(0,\infty)\to\mathbb{R}$ is operator convex with $f(0^+)<\infty$. We do not assume that $f(1)=0$ or that $f'(\infty)<\infty$, and we extend $f$ continuously to $[0,\infty)$ by $f(0)=f(0^+)$. The maximal $f$-divergences $\widehat{D}_f(A\|B)$ and $\widehat{D}_f(\Phi\|\Psi)$ are defined by the formulas of Definitions~\ref{def:maxf} and~\ref{def:maxf maps}; Lemma~\ref{lem:unb basic} below shows that the limit in Definition~\ref{def:maxf} exists in $(-\infty,+\infty]$. For positive definite $A,B$ we write
\[
\mathsf{P}_f(A,B)=B^{1/2}f\big(B^{-1/2}AB^{-1/2}\big)B^{1/2}
\]
for the operator perspective of $f$, so that $\widehat{D}_f(A\|B)=\tr\mathsf{P}_f(A,B)$ in that case.

\subsection{The operator perspective on supports}
In this subsection, we show similar results to the Kubo--Ando case for operator convex $f$. This is necessary for the unbounded case, as we cannot exploit the rewriting of unbounded maximal $f$-divergences in the same way as the bounded case through the mean. We present a few well-known facts. The first is the invariance under congruences of operator perspective functions (the analogous objects to the connections induced by operator monotone functions).
\begin{lemma}\label{lem:unb congruence pd}
Let $A,B,R\in\mathrm{L}(\mathcal{V})$ with $A\geq0$, $B>0$ and $R$ invertible. Then
\[
\mathsf{P}_f(RAR^*,RBR^*)=R\,\mathsf{P}_f(A,B)\,R^*.
\]
\end{lemma}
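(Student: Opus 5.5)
The plan is to reduce the statement to a unitary-invariance property of the functional calculus, after moving $R$ past the square roots of $B$. First I note that $RBR^*$ is positive definite, because $R$ is invertible and $B>0$. Hence both perspectives are given by the defining formula and no regularisation is needed.

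The key step is a polar decomposition. Put $C=(RBR^*)^{1/2}$ and $U=C^{-1}RB^{1/2}$. Then
\[
UU^*=C^{-1}RBR^*C^{-1}=\mathbf{1}_{\mathcal{V}},
\]
and $U$ is square, so $U$ is unitary. From the definition of $U$,
\[
C=RB^{1/2}U^*,\qquad C^{-1}=UB^{-1/2}R^{-1},
\]
where the second identity uses $(U^*)^{-1}=U$.

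Next I compute the inner argument:
\[
C^{-1}(RAR^*)C^{-1}=UB^{-1/2}R^{-1}RAR^*(R^*)^{-1}B^{-1/2}U^*=U\big(B^{-1/2}AB^{-1/2}\big)U^*.
\]
This operator is positive semidefinite, so its spectrum lies in $[0,\infty)$. That is where $f$ is defined, using the continuous extension $f(0)=f(0^+)$ fixed at the start of the section.

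The functional calculus commutes with unitary conjugation, so
\[
f\big(U X U^*\big)=Uf(X)U^*, \qquad X=B^{-1/2}AB^{-1/2}.
\]
Putting the pieces together,
\[
\mathsf{P}_f(RAR^*,RBR^*)=C\,Uf(X)U^*\,C=RB^{1/2}U^*Uf(X)U^*UB^{1/2}R^*=R\,\mathsf{P}_f(A,B)\,R^*.
\]

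I do not expect a real obstacle here. The only points needing care are that $C$ is Hermitian, so the same $C$ appears on both sides of the perspective, and that $A$ may be singular. For singular $A$, the spectrum of $X$ may contain $0$, and this is handled by the extension of $f$ to $[0,\infty)$.
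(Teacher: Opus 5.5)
Your proof is correct and is essentially the paper's argument: your $U=C^{-1}RB^{1/2}$ is exactly the unitary factor in the polar decomposition $RB^{1/2}=(RBR^*)^{1/2}U$ that the paper uses, and both proofs then conclude by unitary invariance of the functional calculus. Your remark that a singular $A$ is handled by the extension $f(0)=f(0^+)$ is the right justification in the setting of that section.
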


\begin{proof}
Let $Y=RB^{1/2}$, which is invertible, and let $Y=(YY^*)^{1/2}U$ be its polar decomposition, with $U$ unitary. Since $YY^*=RBR^*$ and $RAR^*=YXY^*$ with $X=B^{-1/2}AB^{-1/2}$, we have $(RBR^*)^{-1/2}RAR^*(RBR^*)^{-1/2}=UXU^*$, and hence $f\big((RBR^*)^{-1/2}RAR^*(RBR^*)^{-1/2}\big)=Uf(X)U^*$. Therefore
\begin{align*}
\mathsf{P}_f(RAR^*,RBR^*)&=(YY^*)^{1/2}Uf(X)U^*(YY^*)^{1/2}\\
&=Yf(X)Y^*=R\,\mathsf{P}_f(A,B)\,R^*.\qedhere
\end{align*}
\end{proof}

\begin{definition}\label{def:unb perspective}
Let $A,B\in\mathrm{Pos}(\mathcal{V})$ with $\range(A)\subseteq\range(B)$ and $B\neq0$. Let $m=\dim\range(B)$, let $U\in\mathrm{L}(\mathbb{C}^m,\mathcal{V})$ be an isometry with range $\range(B)$, and put $A_0=U^*AU$ and $B_0=U^*BU$, so that $B_0>0$, $A=UA_0U^*$ and $B=UB_0U^*$. We set
\[
\mathsf{P}_f(A,B)=U\,\mathsf{P}_f(A_0,B_0)\,U^*,
\]
and $\mathsf{P}_f(0,0)=0$.
\end{definition}

The identity $A=UA_0U^*$ holds because $A$ is Hermitian with $\range(A)\subseteq\range(U)$, so that $A=UU^*AUU^*$. Two isometries with range $\range(B)$ differ by a unitary on $\mathbb{C}^m$, so by Lemma~\ref{lem:unb congruence pd} the definition does not depend on the choice of $U$; for $B>0$ we may take $U=\mathbf{1}_{\mathcal{V}}$, and the definition agrees with the previous one.

\begin{lemma}\label{lem:unb congruence}
Let $A,B\in\mathrm{Pos}(\mathcal{V})$ with $\range(A)\subseteq\range(B)$, and let $C\in\mathrm{L}(\mathcal{V},\mathcal{V}')$ be injective on $\range(B)$. Then $\range(CAC^*)\subseteq\range(CBC^*)$ and
\[
\mathsf{P}_f(CAC^*,CBC^*)=C\,\mathsf{P}_f(A,B)\,C^*.
\]
\end{lemma}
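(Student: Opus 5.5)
The plan is to reduce to Lemma~\ref{lem:unb congruence pd} on the compressed spaces by choosing isometries adapted to $\range(B)$ and $\range(CBC^*)$. If $B=0$ then $A=0$ and both sides vanish, so assume $B\neq0$. Let $U\in\mathrm{L}(\mathbb{C}^m,\mathcal{V})$ be an isometry with range $\range(B)$, and write $A=UA_0U^*$, $B=UB_0U^*$ with $B_0>0$, as in Definition~\ref{def:unb perspective}.

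Then $CAC^*=(CU)A_0(CU)^*$ and $CBC^*=(CU)B_0(CU)^*$. Since $C$ is injective on $\range(B)=\range(U)$, the operator $CU\in\mathrm{L}(\mathbb{C}^m,\mathcal{V}')$ is injective. The range inclusion follows at once: $\range(CBC^*)=\range(CUB_0^{1/2})=CU(\mathbb{C}^m)$ because $B_0$ is invertible, and $\range(CAC^*)\subseteq\range(CU)$. In particular $\dim\range(CBC^*)=m$.

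To compute $\mathsf{P}_f(CAC^*,CBC^*)$ I will choose an isometry $U'\in\mathrm{L}(\mathbb{C}^m,\mathcal{V}')$ with range $\range(CU)$. I take it from the polar decomposition $CU=U'R$, where $R=((CU)^*CU)^{1/2}$ is invertible on $\mathbb{C}^m$ by injectivity. Then $CBC^*=U'(RB_0R^*)U'^*$ and $CAC^*=U'(RA_0R^*)U'^*$. Hence the compressions are $U'^*CBC^*U'=RB_0R^*>0$ and $U'^*CAC^*U'=RA_0R^*$.

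By Definition~\ref{def:unb perspective}, which is independent of the chosen isometry, and Lemma~\ref{lem:unb congruence pd} with the invertible $R$,
\[
\mathsf{P}_f(CAC^*,CBC^*)=U'\,\mathsf{P}_f(RA_0R^*,RB_0R^*)\,U'^*=U'R\,\mathsf{P}_f(A_0,B_0)\,R^*U'^*=CU\,\mathsf{P}_f(A_0,B_0)\,U^*C^*.
\]
The right-hand side equals $C\,\mathsf{P}_f(A,B)\,C^*$.

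The only delicate point is making the injectivity of $C$ on $\range(B)$ produce an invertible $R$ and the correct range for $U'$. This is handled by the polar decomposition of the injective operator $CU$, so I expect no real obstacle.
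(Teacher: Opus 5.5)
Your proof is correct and follows the same route as the paper. You compress to $\range(B)$ via $U$, take the polar decomposition $CU=U'R$ with $R$ invertible by injectivity, identify the compressions $RA_0R$ and $RB_0R$, and conclude with Definition~\ref{def:unb perspective} and Lemma~\ref{lem:unb congruence pd}; the only differences are cosmetic, such as writing $R^*$ for $R=R^*\geq0$ and spelling out the case $B=0$.
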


\begin{proof}
We may assume $B\neq0$. With $U$, $A_0$ and $B_0$ as in Definition~\ref{def:unb perspective}, the operator $Y=CU$ is injective. Its polar decomposition is $Y=U'R$, where $R=(Y^*Y)^{1/2}$ is invertible and $U'=YR^{-1}$ is an isometry with range $\range(Y)$. Then $CBC^*=U'(RB_0R)U'^*$ and $CAC^*=U'(RA_0R)U'^*$ with $RB_0R>0$, so $\range(CBC^*)=\range(U')\supseteq\range(CAC^*)$. Using Definition~\ref{def:unb perspective} with the isometry $U'$ and then Lemma~\ref{lem:unb congruence pd},
\begin{align*}
\mathsf{P}_f(CAC^*,CBC^*)&=U'\,\mathsf{P}_f(RA_0R,RB_0R)\,U'^*=U'R\,\mathsf{P}_f(A_0,B_0)\,RU'^*\\
&=Y\,\mathsf{P}_f(A_0,B_0)\,Y^*=C\,\mathsf{P}_f(A,B)\,C^*.\qedhere
\end{align*}
\end{proof}

\begin{lemma}\label{lem:unb basic}
Let $A,B\in\mathrm{Pos}(\mathcal{V})$ and $n_{\mathcal{V}}=\dim\mathcal{V}$.
\begin{enumerate}
\item The function $\varphi_\eta(A,B)=\tr\mathsf{P}_f(A+\eta\mathbf{1}_{\mathcal{V}},B+\eta\mathbf{1}_{\mathcal{V}})-\eta n_{\mathcal{V}}\,f(1)$ is nonincreasing in $\eta\in(0,\infty)$. Consequently the limit defining $\widehat{D}_f(A\|B)$ exists in $(-\infty,+\infty]$, and $\widehat{D}_f(A\|B)=\sup_{\eta>0}\varphi_\eta(A,B)$.
\item The map $(A,B)\mapsto\widehat{D}_f(A\|B)$ is jointly convex and lower semicontinuous on $\mathrm{Pos}(\mathcal{V})\times\mathrm{Pos}(\mathcal{V})$.
\item If $\range(A)\subseteq\range(B)$, then $\widehat{D}_f(A\|B)=\tr\mathsf{P}_f(A,B)$; in particular, $\widehat{D}_f(A\|B)$ is finite.
\end{enumerate}
\end{lemma}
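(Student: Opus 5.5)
The plan is to derive all three parts from two standard facts about the operator perspective of an operator convex function on positive definite arguments. The first is that $(A,B)\mapsto\mathsf{P}_f(A,B)$ is jointly operator convex on pairs with $A\geq0$ and $B>0$ (Effros; Ebadian--Nikoufar--Gordji). Here I use the extension $f(0)=f(0^+)<\infty$ and continuity of the functional calculus to allow singular $A$. The second is that $\mathsf{P}_f$ is positively homogeneous, $\mathsf{P}_f(\lambda A,\lambda B)=\lambda\,\mathsf{P}_f(A,B)$ for $\lambda>0$, which is immediate from the definition. Together these give subadditivity:
\[
\mathsf{P}_f(A+A',B+B')=2\,\mathsf{P}_f\big(\tfrac{A+A'}2,\tfrac{B+B'}2\big)\leq\mathsf{P}_f(A,B)+\mathsf{P}_f(A',B').
\]

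For (1), I apply subadditivity with $(A+\eta\mathbf{1}_{\mathcal{V}},B+\eta\mathbf{1}_{\mathcal{V}})$ and $(\varepsilon\mathbf{1}_{\mathcal{V}},\varepsilon\mathbf{1}_{\mathcal{V}})$. Since $\mathsf{P}_f(\varepsilon\mathbf{1}_{\mathcal{V}},\varepsilon\mathbf{1}_{\mathcal{V}})=\varepsilon f(1)\mathbf{1}_{\mathcal{V}}$, taking traces gives $\varphi_{\eta+\varepsilon}(A,B)\leq\varphi_\eta(A,B)$. So $\varphi_\eta$ increases as $\eta\downarrow0$, and its limit exists in $(-\infty,+\infty]$ and equals $\sup_{\eta>0}\varphi_\eta$. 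The correction term $\eta n_{\mathcal{V}}f(1)$ tends to $0$, so this limit is also the limit $\widehat{D}_f(A\|B)$ of Definition~\ref{def:maxf}.

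For (2), each $\varphi_\eta$ is jointly convex, because $(A,B)\mapsto(A+\eta\mathbf{1}_{\mathcal{V}},B+\eta\mathbf{1}_{\mathcal{V}})$ is affine, $\mathsf{P}_f$ is jointly convex and the trace is linear. Each $\varphi_\eta$ is also continuous on $\mathrm{Pos}(\mathcal{V})\times\mathrm{Pos}(\mathcal{V})$. This holds because $B+\eta\mathbf{1}_{\mathcal{V}}\geq\eta\mathbf{1}_{\mathcal{V}}$ keeps the inverse square root continuous, the spectrum of $(B+\eta)^{-1/2}(A+\eta)(B+\eta)^{-1/2}$ stays in a compact subset of $(0,\infty)$ on bounded sets, and $f$ is continuous there. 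A pointwise supremum of continuous convex functions is convex and lower semicontinuous, so (1) gives (2).

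For (3), I may assume $B\neq0$ and take $U$, $A_0$, $B_0$ as in Definition~\ref{def:unb perspective}. I complete $U$ to a unitary $[U\ U']$. In this basis $A+\eta\mathbf{1}_{\mathcal{V}}=(A_0+\eta\mathbf{1})\oplus\eta\mathbf{1}$ and $B+\eta\mathbf{1}_{\mathcal{V}}=(B_0+\eta\mathbf{1})\oplus\eta\mathbf{1}$. By unitary covariance (Lemma~\ref{lem:unb congruence pd}) and the fact that the functional calculus respects direct sums,
\[
\mathsf{P}_f(A+\eta\mathbf{1}_{\mathcal{V}},B+\eta\mathbf{1}_{\mathcal{V}})=U\,\mathsf{P}_f(A_0+\eta\mathbf{1},B_0+\eta\mathbf{1})\,U^*+\eta f(1)\,U'U'^*.
\]
The second term contributes $\eta(n_{\mathcal{V}}-m)f(1)\to0$ to the trace. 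Since $B_0>0$, the map $(X,Y)\mapsto\mathsf{P}_f(X,Y)$ is continuous at $(A_0,B_0)$ by the same argument as in (2). The first term therefore converges to $U\,\mathsf{P}_f(A_0,B_0)\,U^*=\mathsf{P}_f(A,B)$. Hence $\widehat{D}_f(A\|B)=\tr\mathsf{P}_f(A,B)$, which is finite.

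The main obstacle is justifying joint convexity of $\mathsf{P}_f$ when $A$ is only positive semidefinite. The cited results are stated for $A,B>0$. I would first pass to $A+\delta\mathbf{1}_{\mathcal{V}}$ and let $\delta\downarrow0$, using continuity of $f$ on $[0,\infty)$ with $B$ fixed invertible; this continuity is where $f(0^+)<\infty$ enters.
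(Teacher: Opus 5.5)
Your proposal is correct and follows the paper's proof essentially step for step: subadditivity from joint convexity and positive homogeneity gives the monotonicity in $\eta$; a pointwise supremum of continuous jointly convex functions gives (2); and the $\range(B)\oplus\ker(B)$ splitting, together with continuity of $f$ on a compact subset of $[0,\infty)$, gives (3). The obstacle you flag at the end does not arise, since in (1) and (2) every argument of $\mathsf{P}_f$ has the form $X+\eta\mathbf{1}_{\mathcal{V}}$ with $\eta>0$ and is therefore positive definite, which is exactly the setting of the cited joint convexity; the extension $f(0)=f(0^+)$ is needed only in (3), where $A_0$ may be singular, and the paper uses it in the same place.
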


\begin{proof}
(1) On pairs of positive definite operators, $\mathsf{P}_f$ is jointly operator convex \cite{EbadianNikoufarGordji2011} (see also \cite{Effros2009}) and positively homogeneous, $\mathsf{P}_f(\lambda A,\lambda B)=\lambda\,\mathsf{P}_f(A,B)$ for $\lambda>0$. It is therefore subadditive:
\[
\mathsf{P}_f(A_1+A_2,B_1+B_2)=2\,\mathsf{P}_f\big(\tfrac12(A_1+A_2),\tfrac12(B_1+B_2)\big)\leq\mathsf{P}_f(A_1,B_1)+\mathsf{P}_f(A_2,B_2).
\]
For $0<\eta'<\eta$, apply this with $(A_1,B_1)=(A+\eta'\mathbf{1}_{\mathcal{V}},B+\eta'\mathbf{1}_{\mathcal{V}})$ and $A_2=B_2=(\eta-\eta')\mathbf{1}_{\mathcal{V}}$, and use $\mathsf{P}_f(c\mathbf{1}_{\mathcal{V}},c\mathbf{1}_{\mathcal{V}})=cf(1)\mathbf{1}_{\mathcal{V}}$; taking traces gives $\varphi_\eta(A,B)\leq\varphi_{\eta'}(A,B)$. Hence $\varphi_\eta(A,B)$ increases as $\eta\downarrow0$, and since $\eta n_{\mathcal{V}} f(1)\to0$, the limit of $\tr\mathsf{P}_f(A+\eta\mathbf{1}_{\mathcal{V}},B+\eta\mathbf{1}_{\mathcal{V}})$ exists in $(-\infty,+\infty]$ and equals $\sup_{\eta>0}\varphi_\eta(A,B)$.

(2) For fixed $\eta>0$, the map $\varphi_\eta$ is continuous, by continuity of square roots, inverses and the continuous functional calculus on positive definite operators, and it is jointly convex, as the trace of a jointly operator convex map plus a constant. A pointwise supremum of continuous jointly convex functions is lower semicontinuous and jointly convex, so (1) gives (2).

(3) We may assume $B\neq0$. Decompose $\mathcal{V}=\range(B)\oplus\ker(B)$. Since $\range(A)\subseteq\range(B)$ and $A$ is Hermitian, $\ker(B)\subseteq\ker(A)$, so $A=A_0\oplus0$ and $B=B_0\oplus0$ with $B_0>0$, identifying $\range(B)$ with $\mathbb{C}^m$ as in Definition~\ref{def:unb perspective}. The functional calculus respects direct sums, so
\[
\tr\mathsf{P}_f(A+\eta\mathbf{1}_{\mathcal{V}},B+\eta\mathbf{1}_{\mathcal{V}})=\tr\mathsf{P}_f(A_0+\eta\mathbf{1}_{\mathbb{C}^m},B_0+\eta\mathbf{1}_{\mathbb{C}^m})+\eta f(1)\dim\ker(B).
\]
As $\eta\downarrow0$, the operators $X_\eta=(B_0+\eta\mathbf{1}_{\mathbb{C}^m})^{-1/2}(A_0+\eta\mathbf{1}_{\mathbb{C}^m})(B_0+\eta\mathbf{1}_{\mathbb{C}^m})^{-1/2}$ converge to $X_0=B_0^{-1/2}A_0B_0^{-1/2}$, and for $\eta\leq1$ their spectra lie in a fixed compact interval $I_0\subseteq[0,\infty)$. Since $f$ is continuous on $I_0$, $f(X_\eta)\to f(X_0)$, and therefore $\mathsf{P}_f(A_0+\eta\mathbf{1}_{\mathbb{C}^m},B_0+\eta\mathbf{1}_{\mathbb{C}^m})\to\mathsf{P}_f(A_0,B_0)$. The second term tends to $0$, so $\widehat{D}_f(A\|B)=\tr\mathsf{P}_f(A_0,B_0)=\tr\mathsf{P}_f(A,B)$.
\end{proof}

\subsection{Closed form and continuity}

We now come to a statement further relating the max-relative entropy and the ARN derivative. We notice that this then relates to the Lebesgue decomposition of Proposition~\ref{prop:lebesgue cp}. 
\begin{lemma}\label{lem:dmax}
Let $\Omega$ and $W$ be as in Theorem~\ref{thm:maxf}, let $0\leq T\leq\mathbf{1}_{\mathbb{C}^r}$, and put $\Phi=\Psi_T$ and $\Psi=\Psi_{\mathbf{1}_{\mathbb{C}^r}-T}$. For $c\geq0$, we have $\Phi\leq c\Psi$ in the completely positive order if and only if $T\leq\frac{c}{1+c}\mathbf{1}_{\mathbb{C}^r}$. Consequently,
\[
D_{\max}(\Phi\|\Psi):=\log\inf\{c\geq0:\ \Phi\leq c\Psi\}=\log\frac{\|T\|_\infty}{1-\|T\|_\infty},
\]
with $\log0=-\infty$ and $D_{\max}(\Phi\|\Psi)=+\infty$ when $\|T\|_\infty=1$. In particular, for $\epsilon\in(0,1)$ we have $D_{\max}(\Phi\|\Psi)\leq\log\frac{1-\epsilon}{\epsilon}$ if and only if $T\leq(1-\epsilon)\mathbf{1}_{\mathbb{C}^r}$.
\end{lemma}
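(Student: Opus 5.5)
The plan is to reduce the completely positive ordering to the Loewner order on $\mathbb{C}^r$ through the ARN correspondence, exactly as in the discussion preceding the lemma. By Choi's theorem, $\Phi\leq c\Psi$ holds if and only if $J(c\Psi-\Phi)\geq0$. By Theorem~\ref{thm:arveson}, $J(\Psi_N)=WNW^*$ for every $N$, so this Choi matrix equals
\[
W\big(c(\mathbf{1}_{\mathbb{C}^r}-T)-T\big)W^*=W\big(c\mathbf{1}_{\mathbb{C}^r}-(1+c)T\big)W^*.
\]
Since $W$ is injective, $WNW^*\geq0$ is equivalent to $N\geq0$ for Hermitian $N$. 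Indeed, conjugating by $W^+$ with $W^+W=\mathbf{1}_{\mathbb{C}^r}$ recovers $N$, and conjugation preserves positivity in both directions. Hence $\Phi\leq c\Psi$ if and only if $(1+c)T\leq c\mathbf{1}_{\mathbb{C}^r}$, that is, $T\leq\frac{c}{1+c}\mathbf{1}_{\mathbb{C}^r}$, since $1+c>0$.

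Next I compute the infimum. As $T\geq0$, the condition $T\leq\frac{c}{1+c}\mathbf{1}_{\mathbb{C}^r}$ is equivalent to $\|T\|_\infty\leq\frac{c}{1+c}$. The map $c\mapsto\frac{c}{1+c}$ is a strictly increasing bijection from $[0,\infty)$ onto $[0,1)$. There are two cases.
\begin{itemize}
\item If $\|T\|_\infty<1$, the admissible set is $\{c\geq\frac{\|T\|_\infty}{1-\|T\|_\infty}\}$. Its infimum is $\frac{\|T\|_\infty}{1-\|T\|_\infty}$, which is $0$ when $T=0$, giving $\log0=-\infty$.
\item If $\|T\|_\infty=1$, no $c\geq0$ is admissible. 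The infimum of the empty set is $+\infty$, so $D_{\max}=+\infty$.
\end{itemize}

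Finally, for $\epsilon\in(0,1)$, the function $s\mapsto\log\frac{s}{1-s}$ is strictly increasing on $[0,1)$, with the value $-\infty$ at $0$. So $D_{\max}(\Phi\|\Psi)\leq\log\frac{1-\epsilon}{\epsilon}$ holds if and only if $\|T\|_\infty\leq1-\epsilon$. The case $\|T\|_\infty=1$ is automatically excluded because the left side is then $+\infty$. By positivity of $T$, the condition $\|T\|_\infty\leq1-\epsilon$ is the same as $T\leq(1-\epsilon)\mathbf{1}_{\mathbb{C}^r}$.

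I expect no serious obstacle. The only care needed is the equivalence $WNW^*\geq0\iff N\geq0$, which is the injectivity argument from Lemma~\ref{lem:factor}, together with the bookkeeping of the conventions $\log0=-\infty$ and $\inf\emptyset=+\infty$ at the endpoints.
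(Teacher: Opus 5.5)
Your proposal is correct and follows the paper's proof essentially verbatim. Both compute $J(c\Psi-\Phi)=W\big(c\mathbf{1}_{\mathbb{C}^r}-(1+c)T\big)W^*$, use Choi's theorem and injectivity of $W$ to reduce to $c\mathbf{1}_{\mathbb{C}^r}-(1+c)T\geq0$, and then read the admissible set of $c$ off $\|T\|_\infty$; your explicit justification of $WNW^*\geq0\iff N\geq0$ via $W^+$ fills in a step the paper only asserts.
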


\begin{proof}
By linearity of $T\mapsto\Psi_T$, $c\Psi-\Phi=\Psi_{c\mathbf{1}_{\mathbb{C}^r}-(1+c)T}$, whose Choi matrix is \[W\big(c\mathbf{1}_{\mathbb{C}^r}-(1+c)T\big)W^*\] by Theorem~\ref{thm:arveson}. By Choi's theorem, $c\Psi-\Phi$ is completely positive if and only if this operator is positive semidefinite, and this happens if and only if $c\mathbf{1}_{\mathbb{C}^r}-(1+c)T\geq0$.

Reasoning about the spectrum of $T$, we then obtain that this translates to 
\[
    \frac{\|T\|_\infty}{1-\|T\|_\infty}\leq c< \infty.
\]
This then implies that \[\{ c\geq 0:\Phi\leq c\Psi\}=\left[\frac{\|T\|_\infty}{1-\|T\|_\infty},\infty\right)\]
if $\|T\|_\infty<1$, and is empty otherwise. This gives the formula for $D_{\max}$. The last statement follows because $t\mapsto t/(1-t)$ is increasing on $[0,1)$.
\end{proof}
\begin{remark}[Physical interpretation]\label{rem:phys-dmax}
For channels and $c>1$, $\Phi\leq c\Psi$ holds if and only if $\Psi=c^{-1}\Phi+(1-c^{-1})\Xi$ with $\Xi=(c\Psi-\Phi)/(c-1)$ a channel, so $e^{-D_{\max}(\Phi\|\Psi)}$ is the largest probability with which $\Psi$ secretly runs $\Phi$. Equivalently, by complete positivity of $c\Psi-\Phi$, on every input, ancilla allowed, every measurement outcome is at most $e^{D_{\max}(\Phi\|\Psi)}$ times likelier under $\Phi$ than under $\Psi$, as for states \cite{ReebKastoryanoWolf2011}. Detector inefficiency $\epsilon\in(0,1)$ gives such bounds: an instrument $(\Phi,\Psi)$ is observed as $((1-\epsilon)\Phi,\Psi+\epsilon\Phi)$, the case $p_{01}=\epsilon$, $p_{10}=0$ of Remark~\ref{rem:phys-readout}, with derivative $(1-\epsilon)T\leq(1-\epsilon)\mathbf{1}_{\mathbb{C}^r}$, so $D_{\max}((1-\epsilon)\Phi\|\Psi+\epsilon\Phi)\leq\log((1-\epsilon)/\epsilon)$ by the lemma, whatever the ideal pair.
\end{remark}

 For $\epsilon\in(0,1)$ and a continuous function $g:[0,1-\epsilon]\to\mathbb{R}$ we use the modulus $\omega^{(\epsilon)}_g=\omega^{[0,1-\epsilon]}_g$ of Lemma~\ref{modulus to zero}, which tends to $0$ as $\delta\to0$.

\begin{theorem}\label{thm:unbounded}
Let $\Omega\in\mathrm{CP}(\mathcal{X},\mathcal{Y})$ have linearly independent Kraus operators $V_1,\dots,V_r$, so that $\Omega(X)=\sum_{k=1}^rV_kXV_k^*$, let $W$ be the matrix with columns $\vectorise(V_1)$, \dots, $\vectorise(V_r)$, and define the continuous function $h_f:[0,1)\to\mathbb{R}$ by
\[
h_f(t)=(1-t)\,f\Big(\frac{t}{1-t}\Big),\qquad\text{so that } h_f(0)=f(0^+).
\]
Let $\epsilon\in(0,1)$.
\begin{enumerate}
\item[(a)] If $0\leq T\leq(1-\epsilon)\mathbf{1}_{\mathbb{C}^r}$, $\Phi=\Psi_T$ and $\Psi=\Psi_{\mathbf{1}_{\mathbb{C}^r}-T}$, then $\widehat{D}_f(\Phi\|\Psi)$ is finite and
\[
\widehat{D}_f(\Phi\|\Psi)=\lambda_{\max}\Big(\Tr_{\mathcal{Y}}\big(W\,h_f(T)\,W^*\big)\Big).
\]
\item[(b)] If $0\leq T_1,T_2\leq(1-\epsilon)\mathbf{1}_{\mathbb{C}^r}$, then
\[
\Big|\widehat{D}_f\big(\Psi_{T_1}\big\|\Psi_{\mathbf{1}_{\mathbb{C}^r}-T_1}\big)-\widehat{D}_f\big(\Psi_{T_2}\big\|\Psi_{\mathbf{1}_{\mathbb{C}^r}-T_2}\big)\Big|
\leq\|\Omega^*(\mathbf{1}_{\mathcal{Y}})\|_\infty\;\omega^{(\epsilon)}_{h_f}\big(\|T_1-T_2\|_\infty\big),
\]
and $\omega^{(\epsilon)}_{h_f}(\delta)\to0$ as $\delta\to0$.
\end{enumerate}
By Lemma~\ref{lem:dmax}, the hypotheses say exactly that the pairs involved satisfy $D_{\max}\leq\log\frac{1-\epsilon}{\epsilon}$.
\end{theorem}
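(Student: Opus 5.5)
The plan is to follow the proof of Theorem~\ref{thm:maxf}, with the Kubo--Ando decomposition of Lemma~\ref{lem:maxf decomposition} replaced by the support-level operator perspective of Definition~\ref{def:unb perspective} and Lemmas~\ref{lem:unb congruence}, \ref{lem:unb basic}.

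\emph{Step 1: Choi-level identity.} Since $T\leq(1-\epsilon)\mathbf{1}_{\mathbb{C}^r}$, we have $\mathbf{1}_{\mathbb{C}^r}-T\geq\epsilon\mathbf{1}_{\mathbb{C}^r}>0$. Hence $\range J(\Psi)=\range W\supseteq\range J(\Phi)$. Apply Lemma~\ref{lem:unb congruence} with $C=W$, which is injective, and $A=T$, $B=\mathbf{1}_{\mathbb{C}^r}-T$. Because $T$ and $\mathbf{1}_{\mathbb{C}^r}-T$ commute and $B$ is invertible, the functional calculus gives $\mathsf{P}_f(T,\mathbf{1}_{\mathbb{C}^r}-T)=h_f(T)$. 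Therefore
\[
\mathsf{P}_f(J(\Phi),J(\Psi))=W\,h_f(T)\,W^*.
\]

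\emph{Step 2: reduction to pure inputs with invertible $A$.} For $u=\vectorise(A)$ with $A$ invertible, put $C=\mathbf{1}_{\mathcal{Y}}\otimes A^\top$. The range inclusion is preserved, and Lemma~\ref{lem:unb basic}(3) together with Lemma~\ref{lem:unb congruence} gives
\[
d(uu^*)=\tr\bigl(C\,Wh_f(T)W^*C^*\bigr)=\tr\bigl((AA^*)^\top M\bigr),\qquad M=\Tr_{\mathcal{Y}}\bigl(Wh_f(T)W^*\bigr).
\]
Next I show that $\sup_\rho d(\rho)$ equals the supremum over such $u$. By joint convexity (Lemma~\ref{lem:unb basic}(2)) and linearity of $\rho\mapsto(\Phi_\rho,\Psi_\rho)$, $d$ is convex, so the supremum may be taken over pure states. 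By lower semicontinuity, every pure state is a limit of states $u_nu_n^*$ with $A_n$ invertible and $d(uu^*)\leq\liminf_n d(u_nu_n^*)$, so we may restrict to invertible $A$. The quantities $\tr(\xi M)$ over invertible densities $\xi$ have supremum $\lambda_{\max}(M)$, which is finite. This proves (a).

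The main point needing care is convexity. Lemma~\ref{lem:unb basic}(2) gives convexity on all of $\mathrm{Pos}\times\mathrm{Pos}$ with values in $(-\infty,+\infty]$, and this is what is used. Finiteness of $d(\rho)$ for arbitrary $\rho$ is not needed: $\range\Phi_\rho\subseteq\range\Psi_\rho$ holds anyway, since $\Phi\leq c\Psi$ by Lemma~\ref{lem:dmax}.

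\emph{Step 3: part (b).} Apply (a) to both $T_i$. Then Lemma~\ref{lem:partial trace} (the $\lambda_{\max}$ version) with $N_i=h_f(T_i)$ bounds the difference by $\|\Omega^*(\mathbf{1}_{\mathcal{Y}})\|_\infty\|h_f(T_1)-h_f(T_2)\|_\infty$. Since the spectra of $T_1,T_2$ lie in $[0,1-\epsilon]$, this is at most $\|\Omega^*(\mathbf{1}_{\mathcal{Y}})\|_\infty\,\omega^{(\epsilon)}_{h_f}(\|T_1-T_2\|_\infty)$. The function $h_f$ is continuous on $[0,1-\epsilon]$ (continuity at $0$ uses $f(0)=f(0^+)$), so Lemma~\ref{modulus to zero} gives $\omega^{(\epsilon)}_{h_f}(\delta)\to0$. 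The final sentence of the theorem is Lemma~\ref{lem:dmax}.
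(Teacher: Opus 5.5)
Your proposal is correct and follows the paper's proof essentially step for step. It obtains the Choi-level identity $\mathsf{P}_f(J(\Phi),J(\Psi))=W\,h_f(T)\,W^*$ via Lemma~\ref{lem:unb congruence}, evaluates $d(uu^*)=\tr((AA^*)^\top M)$ on pure inputs with invertible $A$, reduces the supremum using the convexity and lower semicontinuity of Lemma~\ref{lem:unb basic}(2), and proves part (b) via Lemma~\ref{lem:partial trace} and Lemma~\ref{modulus to zero}. The only difference is cosmetic: the paper also records that $d(\rho)$ is finite for every $\rho$ (from $\Phi\leq c\Psi$ and Lemma~\ref{lem:unb basic}(3)), which, as you correctly observe, the argument does not need.
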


\begin{proof}
(a) The structure of the first part is that we aim to use the conditions to transform into the absolute continuity form of Lemma~\ref{lem:unb basic}(3). Let $c=(1-\epsilon)/\epsilon$. We can clearly then see that $\Phi\leq c\Psi$ by Lemma~\ref{lem:dmax}. We proceed in a similar way to the bounded case by letting $\rho\in\mathrm{D}(\mathcal{X}\otimes\mathcal{X})$ write $\Phi_\rho=(\Phi\otimes\mathbf{1}_{\mathrm{L}(\mathcal{X})})(\rho)$ and $\Psi_\rho=(\Psi\otimes\mathbf{1}_{\mathrm{L}(\mathcal{X})})(\rho)$, and let $d(\rho)=\widehat{D}_f(\Phi_\rho\|\Psi_\rho)$.

Since $c\Psi-\Phi$ is completely positive, $\Phi_\rho\leq c\Psi_\rho$; if $\Psi_\rho v=0$ then $0\leq v^*\Phi_\rho v\leq c\,v^*\Psi_\rho v=0$, so $\ker(\Psi_\rho)\subseteq\ker(\Phi_\rho)$ and $\range(\Phi_\rho)\subseteq\range(\Psi_\rho)$. By Lemma~\ref{lem:unb basic}(3), $d(\rho)=\tr\mathsf{P}_f(\Phi_\rho,\Psi_\rho)$ is finite for every $\rho$.

We now need to inspect these quantities at the Choi level, following that we can reduce the maximal $f$-divergences to partial traces of Choi relations. By Theorem~\ref{thm:arveson}, $J(\Phi)=WTW^*$ and $J(\Psi)=W(\mathbf{1}_{\mathbb{C}^r}-T)W^*$. Since $\mathbf{1}_{\mathbb{C}^r}-T\geq\epsilon\mathbf{1}_{\mathbb{C}^r}$ is invertible and $W$ is injective, Lemma~\ref{lem:unb congruence} with $C=W$ gives $\range(J(\Phi))\subseteq\range(J(\Psi))$ and
\[
\mathsf{P}_f\big(J(\Phi),J(\Psi)\big)=W\,\mathsf{P}_f\big(T,\mathbf{1}_{\mathbb{C}^r}-T\big)\,W^*.
\]
As $T$ and $\mathbf{1}_{\mathbb{C}^r}-T$ commute, $(\mathbf{1}_{\mathbb{C}^r}-T)^{-1/2}T(\mathbf{1}_{\mathbb{C}^r}-T)^{-1/2}=T(\mathbf{1}_{\mathbb{C}^r}-T)^{-1}$, and the functional calculus for operator perspectives gives $\mathsf{P}_f(T,\mathbf{1}_{\mathbb{C}^r}-T)=(\mathbf{1}_{\mathbb{C}^r}-T)f\big(T(\mathbf{1}_{\mathbb{C}^r}-T)^{-1}\big)=h_f(T)$. Hence $\mathsf{P}_f(J(\Phi),J(\Psi))=W\,h_f(T)\,W^*$. 

We want to show that by considering the definition of the maximal $f$-divergence, by considering this over density operators, that this corresponds to the largest eigenvalue of the relation of the Choi matrices. This is the same line of reasoning as in the Kubo--Ando case previously. We again look at invertible operators first.

Let $M=\Tr_{\mathcal{Y}}(W\,h_f(T)\,W^*)$, a Hermitian operator on $\mathcal{X}$. Let $u=\vectorise(A)$ with $\|A\|_2=1$ and $A$ invertible, and put $C=\mathbf{1}_{\mathcal{Y}}\otimes A^\top$, which is invertible. By \eqref{eq:vec identity}, $\Phi_{uu^*}=C\,J(\Phi)\,C^*$ and $\Psi_{uu^*}=C\,J(\Psi)\,C^*$. By Lemma~\ref{lem:unb basic}(3), Lemma~\ref{lem:unb congruence} and the previous step,
\begin{align*}
d(uu^*)&=\tr\big(C\,\mathsf{P}_f(J(\Phi),J(\Psi))\,C^*\big)\\&=\tr\big((\mathbf{1}_{\mathcal{Y}}\otimes(AA^*)^\top)\,W\,h_f(T)\,W^*\big)\\
&=\tr\big((AA^*)^\top M\big),
\end{align*}
using $C^*C=\mathbf{1}_{\mathcal{Y}}\otimes(AA^*)^\top$.

 By Lemma~\ref{lem:unb basic}(2), $d$ is convex and lower semicontinuous on $\mathrm{D}(\mathcal{X}\otimes\mathcal{X})$, being the composition of a jointly convex, lower semicontinuous function with the linear map $\rho\mapsto(\Phi_\rho,\Psi_\rho)$. Writing $\rho=\sum_ip_iu_iu_i^*$ in its spectral decomposition, convexity gives $d(\rho)\leq\max_id(u_iu_i^*)$, so the supremum defining $\widehat{D}_f(\Phi\|\Psi)$ may be taken over pure states. Every unit vector $u=\vectorise(A)$ is a limit of unit vectors $u_n=\vectorise(A_n)$ with $A_n$ invertible, and lower semicontinuity gives $d(uu^*)\leq\liminf_nd(u_nu_n^*)$, so the supremum may be taken over $u=\vectorise(A)$ with $A$ invertible. As $A$ ranges over such operators, $(AA^*)^\top$ ranges over the invertible density operators on $\mathcal{X}$, which are dense in $\mathrm{D}(\mathcal{X})$. Hence
\[
\widehat{D}_f(\Phi\|\Psi)=\sup\big\{\tr(\xi M):\ \xi\in\mathrm{D}(\mathcal{X})\text{ invertible}\big\}=\max_{\xi\in\mathrm{D}(\mathcal{X})}\tr(\xi M)=\lambda_{\max}(M).
\]

(b) Let $M_i=\Tr_{\mathcal{Y}}(W\,h_f(T_i)\,W^*)$ and $K=h_f(T_1)-h_f(T_2)$, so that $\|K\|_\infty\leq\omega^{(\epsilon)}_{h_f}(\|T_1-T_2\|_\infty)$ by definition. By (a), the left-hand side of (b) is $|\lambda_{\max}(M_1)-\lambda_{\max}(M_2)|$, which is at most $\|\Omega^*(\mathbf{1}_{\mathcal{Y}})\|_\infty\|K\|_\infty$ by Lemma~\ref{lem:partial trace}. Finally, $h_f$ is continuous on $[0,1-\epsilon]$, so $\omega^{(\epsilon)}_{h_f}(\delta)\to0$ as $\delta\to0$.
\end{proof}
\begin{remark}[Physical interpretation]\label{rem:phys-reverse}
For states, $\widehat{D}_f$ is the least classical $f$-divergence of a pair of distributions that prepares them through one classical-to-quantum channel \cite{Matsumoto2018newf}. For channels and an input $u=\vectorise(A)$ with $A$ invertible, let $T=\sum_it_iP_i$ spectrally, $\widetilde{W}=(\mathbf{1}_{\mathcal{Y}}\otimes A^\top)W$ and $w_i=\tr(\widetilde{W}P_i\widetilde{W}^*)$. The distributions $p_i=t_iw_i$, $q_i=(1-t_i)w_i$ and the preparation $\boldsymbol{\delta}_i\mapsto \widetilde{W}P_i\widetilde{W}^*/w_i$ give the outputs $\widetilde{W}T\widetilde{W}^*$ and $\widetilde{W}(\mathbf{1}_{\mathbb{C}^r}-T)\widetilde{W}^*$ (see \eqref{eq:vec identity}), and $\sum_iq_if(p_i/q_i)=\sum_iw_ih_f(t_i)$ is the quantum value by Lemma~\ref{lem:mixing}, so they are optimal. Their likelihood ratios $t_i/(1-t_i)$ do not depend on the input, and $\widehat{D}_f(\Phi\|\Psi)$ is the worst case over the weights $w_i$. The bound $D_{\max}\leq\log((1-\epsilon)/\epsilon)$ means that all ratios lie in $[0,(1-\epsilon)/\epsilon]$, and the constants of Corollaries~\ref{cor:BS} and~\ref{cor:renyi large alpha} measure how steep $h_f$ is on $[0,1-\epsilon]$.
\end{remark}

\subsection{The relative R\'enyi entropies continued}
In Section~\ref{sec:general}, we considered the bounds for the R\'enyi relative entropies for $\alpha\in (0,1)$. We now have the machinery in place to compute explicit bounds for $\alpha\in (1,2]$. We first need to list some useful bounds. Using the subsequent results, we will also show continuity bounds for the Belavkin--Staszewski divergence. It is important to note that in the following lemma, we are considering integrals of matrices in the functional calculus sense. 

\begin{lemma}\label{lem:unb moduli}
Let $\mathcal{V}$ be a complex Euclidean space, $\delta\in(0,1]$, $\beta\in(0,1]$ and $\epsilon>0$.
\begin{enumerate}
\item If $X_1,X_2\geq\epsilon\mathbf{1}_{\mathcal{V}}$, then $\|\log X_1-\log X_2\|_\infty\leq\|X_1-X_2\|_\infty/\epsilon$.
\item If $X_1,X_2\geq\epsilon\mathbf{1}_{\mathcal{V}}$, then $\|X_1^{-\beta}-X_2^{-\beta}\|_\infty\leq\beta\,\epsilon^{-\beta-1}\|X_1-X_2\|_\infty$.
\item If $0\leq A,B\leq\mathbf{1}_{\mathcal{V}}$ and $\|A-B\|_\infty\leq\delta$, then \[\|A\log A-B\log B\|_\infty\leq\delta\log\frac1\delta+(1+\delta)\log(1+\delta)\leq\delta\log\frac1\delta+2\log2\cdot\delta\]
\item If $0\leq A,B\leq\mathbf{1}_{\mathcal{V}}$, then $\|A^{1+\beta}-B^{1+\beta}\|_\infty\leq(3-\beta)\|A-B\|_\infty$.
\end{enumerate}
\end{lemma}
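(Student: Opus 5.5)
All four parts would be proved the same way. Represent each function by an integral over resolvents $(X+s)^{-1}$. Control differences of resolvents with the identity $(X_1+s)^{-1}-(X_2+s)^{-1}=(X_1+s)^{-1}(X_2-X_1)(X_2+s)^{-1}$, or with Ando's bound $\|g(A)-g(B)\|_\infty\leq g(\|A-B\|_\infty)$ for operator monotone $g\geq0$ with $g(0)=0$ \cite{Ando1988}. Then integrate the resulting scalar bounds.

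For (1) I use $\log X=\int_0^\infty\big((1+s)^{-1}-(X+s)^{-1}\big)\,ds$. The resolvent identity gives an integrand of norm at most $\|X_1-X_2\|_\infty(\epsilon+s)^{-2}$. Since $\int_0^\infty(\epsilon+s)^{-2}ds=1/\epsilon$, this proves (1).

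For (2), with $0<\beta<1$, I use $X^{-\beta}=\frac{\sin\pi\beta}{\pi}\int_0^\infty s^{-\beta}(X+s)^{-1}\,ds$. The same estimate leaves $\int_0^\infty s^{-\beta}(\epsilon+s)^{-2}ds=\epsilon^{-\beta-1}\Gamma(1-\beta)\Gamma(1+\beta)=\epsilon^{-\beta-1}\pi\beta/\sin\pi\beta$. Multiplying by $\sin\pi\beta/\pi$ gives exactly $\beta\epsilon^{-\beta-1}$. The case $\beta=1$ is the resolvent identity at $s=0$.

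For (3), multiply the representation of $\log$ by $x$ to get $x\log x=\int_0^\infty\big(\tfrac{x}{1+s}-g_s(x)\big)\,ds$ with $g_s(x)=x/(x+s)$. Put $\Delta=A-B$, with $\|\Delta\|_\infty\leq\delta$. The integrand of $A\log A-B\log B$ is then $\Delta/(1+s)-(g_s(A)-g_s(B))$. I split the integral at a threshold $s_0$, which I expect to be $s_0=\delta$.
\begin{itemize}
\item For $s\leq s_0$, each $g_s$ is operator monotone with $g_s(0)=0$, so Ando gives $\|g_s(A)-g_s(B)\|_\infty\leq\delta/(\delta+s)$. Together with the term $\delta/(1+s)$, integration over $[0,\delta]$ produces logarithmic terms.
\item For $s>s_0$, I write $g_s(A)-g_s(B)=s(A+s)^{-1}\Delta(B+s)^{-1}$. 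The integrand becomes $\Delta/(1+s)-s(A+s)^{-1}\Delta(B+s)^{-1}$. Its norm is at most $\delta\big(s^{-1}-(1+s)^{-1}\big)$ plus a correction, using $s(A+s)^{-1}\geq\frac{s}{1+s}$ and $(B+s)^{-1}\leq s^{-1}$. Integrating from $\delta$ to $\infty$ yields $\delta\log\frac{1+\delta}{\delta}$-type terms.
\end{itemize}
Collecting the pieces should give $\delta\log\frac1\delta+(1+\delta)\log(1+\delta)$. The second inequality in (3) follows because $\delta\mapsto(1+\delta)\log(1+\delta)$ is convex, vanishes at $0$, and equals $2\log2$ at $\delta=1$.

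For (4) I use $x^{1+\beta}=\frac{\sin\pi\beta}{\pi}\int_0^\infty s^{\beta-1}\frac{x^2}{x+s}\,ds$, valid for $0<\beta<1$; the case $\beta=1$ is direct from $A^2-B^2=\frac12\big(\Delta(A+B)+(A+B)\Delta\big)$. Write $\frac{x^2}{x+s}=x-s+\frac{s^2}{x+s}$. The integrand difference is then $\Delta-s^2(A+s)^{-1}\Delta(B+s)^{-1}$.
\begin{itemize}
\item For large $s$, I rewrite it as $\Delta-s^2R_A\Delta R_B=AR_A\Delta+sR_A\Delta BR_B$, where $R_X=(X+s)^{-1}$. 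Its norm is at most $\frac{2\delta}{1+s}$, using $\|XR_X\|_\infty\leq\frac1{1+s}$ on $[0,1]$.
\item For small $s$, I bound it by $\delta\big(1+\frac{s^2}{s^2}\big)$ or, better, by $\delta$ times a bounded factor.
\end{itemize}
I then integrate against $s^{\beta-1}\sin(\pi\beta)/\pi$ on $[0,1]$ and $[1,\infty)$ and simplify to $3-\beta$.

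The main obstacle is the exact constants in (3) and (4): the choice of splitting point and which crude bound to use on each piece. I expect that in (4) the constant $3-\beta$ arises as $1+(2-\beta)$ from a decomposition like $A^{1+\beta}-B^{1+\beta}=\Delta A^\beta+B(A^\beta-B^\beta)$. The second term would be handled by the resolvent integral for $x^\beta$, weighted by $B$. I would try that route first if the direct integral does not produce $3-\beta$.
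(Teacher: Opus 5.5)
\textbf{The gap is in part (3).} It occurs on both sides of your split at $s=\delta$.

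\emph{Small $s$.} For $s\leq\delta$, Ando's bound $\frac{\delta}{\delta+s}$ plus the triangle inequality with $\frac{\delta}{1+s}$ integrates to $\delta\log2+\delta\log(1+\delta)$. Take even a perfect tail, $\delta\log\frac{1+\delta}{\delta}$. Your total then exceeds the claimed bound by $\delta\log2-(1-\delta)\log(1+\delta)$, which is positive once $\delta$ exceeds about $0.23$. At $\delta=1$ you get $3\log2$ against $2\log2$, so even the weaker second inequality fails. No choice of split point repairs this: taking the pointwise minimum of your two bounds still gives about $1.91>2\log2$ at $\delta=1$. The paper does not split the integrand on this range. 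The function $G_s(t)=\frac{t}{1+s}-\frac{t}{t+s}=\frac{t(t-1)}{(1+s)(t+s)}$ maps $[0,1]$ into $[-\frac1{1+s},0]$. Hence $\|G_s(A)-G_s(B)\|_\infty\leq\frac1{1+s}$, which integrates to exactly $\log(1+\delta)$.

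\emph{Large $s$.} For $s>\delta$ you need exactly $\|G_s(A)-G_s(B)\|_\infty\leq\frac{\delta}{s(1+s)}$; any ``correction'' destroys the stated constant. The Loewner inequalities $sR_A\geq\frac{s}{1+s}$ and $R_B\leq s^{-1}$ do not give this norm bound, because $\Delta$ is indefinite and does not commute with the resolvents. The triangle inequality alone gives $\frac{\delta}{1+s}+\frac{\delta}{s}$, which is not even integrable at infinity. What you need is that the operator Lipschitz constant of $G_s$ on $[0,1]$ equals its scalar Lipschitz constant $\frac{1}{s(1+s)}$. This is not automatic for one-pole functions (Theorem~\ref{thm:one pole}(d)); the paper takes it from Corollary~\ref{cor:kernels}(d), which rests on the Schur-multiplier analysis of Theorem~\ref{thm:one pole}(c).

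You can also get it by a centring argument.
\begin{itemize}
\item Put $P=sR_A$ and $Q=sR_B$. Their spectra lie in $[c,1]$ with $c=\frac{s}{1+s}$, and the integrand is $\frac1s(c\Delta-P\Delta Q)$.
\item Write $P=m+P'$ and $Q=m+Q'$ with $m=\frac{1+c}{2}$, so that $\|P'\|_\infty,\|Q'\|_\infty\leq\frac{1-c}{2}$, and note $c-m^2=-\frac{(1-c)^2}{4}$.
\item Expanding gives $\|c\Delta-P\Delta Q\|_\infty\leq\frac{(1-c)^2}{4}\delta+(1+c)\frac{1-c}{2}\delta+\frac{(1-c)^2}{4}\delta=(1-c)\delta$. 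This is the required bound $\frac{\delta}{s(1+s)}$.
\end{itemize}

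\textbf{The other parts.} Parts (1) and (2) are the paper's argument. Part (4) goes through, and is cleaner than you think. Your identity $\Delta-s^2R_A\Delta R_B=AR_A\Delta+sR_A\Delta BR_B$ gives the bound $\frac{2\delta}{1+s}$ for every $s>0$, not only for large $s$. Integrating against $\frac{\sin\pi\beta}{\pi}s^{\beta-1}$ then gives the constant $2\leq3-\beta$, with no need to split the integral or ``simplify to $3-\beta$''. The paper bounds the terms $Y_sH$, $HZ_s$ and $Y_sHZ_s$ separately, and the third is where its extra $1-\beta$ comes from. Your fallback route $\Delta A^\beta+B(A^\beta-B^\beta)$ also gives $2$, since $B(AR_A-BR_B)=sBR_B\Delta R_A$ has norm at most $\frac{\delta}{1+s}$.
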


\begin{proof}
(1) For $X>0$, the scalar identity \[\log x=\int_0^\infty\big(\frac{1}{1+s}-\frac{1}{x+s}\big)\,ds\] and the functional calculus give \[\log X=\int_0^\infty\big((1+s)^{-1}\mathbf{1}_{\mathcal{V}}-(X+s\mathbf{1}_{\mathcal{V}})^{-1}\big)\,ds.\] Hence
\[
\log X_1-\log X_2=\int_0^\infty(X_1+s\mathbf{1}_{\mathcal{V}})^{-1}(X_1-X_2)(X_2+s\mathbf{1}_{\mathcal{V}})^{-1}\,ds,
\]
and $\|(X_i+s\mathbf{1}_{\mathcal{V}})^{-1}\|_\infty\leq(\epsilon+s)^{-1}$ gives the bound $\|X_1-X_2\|_\infty\int_0^\infty(\epsilon+s)^{-2}\,ds=\|X_1-X_2\|_\infty/\epsilon$.

(2) For $\beta=1$ this follows from $X_1^{-1}-X_2^{-1}=X_1^{-1}(X_2-X_1)X_2^{-1}$. For $0<\beta<1$ we use the representation $x^{-\beta}=\frac{\sin\beta\pi}{\pi}\int_0^\infty s^{-\beta}(x+s)^{-1}\,ds$, valid for $x>0$ \cite[Chapter~V]{bhatia1997matrix}, which gives
\[
X_1^{-\beta}-X_2^{-\beta}=\frac{\sin\beta\pi}{\pi}\int_0^\infty s^{-\beta}(X_1+s\mathbf{1}_{\mathcal{V}})^{-1}(X_2-X_1)(X_2+s\mathbf{1}_{\mathcal{V}})^{-1}\,ds .
\]
Since $\|(X_i+s\mathbf{1}_{\mathcal{V}})^{-1}\|_\infty\leq(\epsilon+s)^{-1}$ and \[\int_0^\infty s^{-\beta}(\epsilon+s)^{-2}\,ds=\epsilon^{-\beta-1}\,\Gamma(1-\beta)\Gamma(1+\beta)=\epsilon^{-\beta-1}\,\frac{\beta\pi}{\sin\beta\pi}\] the norm of the right-hand side is at most $\beta\epsilon^{-\beta-1}\|X_1-X_2\|_\infty$ where $\Gamma$ is the gamma function.

(3) The case $A=B$ is trivial, and since the right-hand side is increasing in $\delta$ we may assume $\delta=\|A-B\|_\infty>0$. For $s>0$ let $g_s(t)=\frac{t}{1+s}-\frac{t}{t+s}$. The scalar identity $t\log t=\int_0^\infty g_s(t)\,ds$, valid for $t\geq0$, gives $A\log A-B\log B=\int_0^\infty\big(g_s(A)-g_s(B)\big)\,ds$. We claim $\|g_s(A)-g_s(B)\|_\infty\leq\min\big\{\frac{\delta}{s(1+s)},\frac1{1+s}\big\}$. Indeed, the first bound is $\mathrm{Lip}^{[0,1]}(g_s)=\frac1{s(1+s)}$ from Corollary~\ref{cor:kernels}(d), whose proof does not use the present lemma, and the second holds because, by the same corollary, $g_s(A)$ and $g_s(B)$ have spectra in $[-\frac1{1+s},0]$. Integrating,
\[
\|A\log A-B\log B\|_\infty\leq\int_0^\delta\frac{ds}{1+s}+\int_\delta^\infty\frac{\delta\,ds}{s(1+s)}=\log(1+\delta)+\delta\log\frac{1+\delta}{\delta},
\]
which is the first bound. The second follows from $(1+\delta)\log(1+\delta)\leq2\log2\cdot\delta$ for $\delta\in[0,1]$, by convexity.

(4) For $\beta=1$ the claim is $\|A^2-B^2\|_\infty\leq2\|A-B\|_\infty$, which follows from $A^2-B^2=\frac12\big((A+B)(A-B)+(A-B)(A+B)\big)$. For $0<\beta<1$, the representation \[x^\beta=\frac{\sin\beta\pi}{\pi}\int_0^\infty s^{\beta-1}\,x(x+s)^{-1}\,ds\] for $x\geq0$ \cite[Chapter~V]{bhatia1997matrix} gives \[A^{1+\beta}=\frac{\sin\beta\pi}{\pi}\int_0^\infty s^{\beta-1}A^2(A+s\mathbf{1}_{\mathcal{V}})^{-1}\,ds\] and similarly for $B$. Writing \[A^2(A+s\mathbf{1}_{\mathcal{V}})^{-1}=A-s\mathbf{1}_{\mathcal{V}}+s^2(A+s\mathbf{1}_{\mathcal{V}})^{-1}\] and \[Y_s=A(A+s\mathbf{1}_{\mathcal{V}})^{-1}\] \[Z_s=B(B+s\mathbf{1}_{\mathcal{V}})^{-1}\] so that \[s(A+s\mathbf{1}_{\mathcal{V}})^{-1}=\mathbf{1}_{\mathcal{V}}-Y_s\] and \[s(B+s\mathbf{1}_{\mathcal{V}})^{-1}=\mathbf{1}_{\mathcal{V}}-Z_s\] we obtain with $H=A-B$
\[
A^2(A+s\mathbf{1}_{\mathcal{V}})^{-1}-B^2(B+s\mathbf{1}_{\mathcal{V}})^{-1}=H-(\mathbf{1}_{\mathcal{V}}-Y_s)H(\mathbf{1}_{\mathcal{V}}-Z_s)=Y_sH+HZ_s-Y_sHZ_s .
\]
Since $0\leq A,B\leq\mathbf{1}_{\mathcal{V}}$, we have $\|Y_s\|_\infty,\|Z_s\|_\infty\leq(1+s)^{-1}$, so the norm of this operator is at most $\big(\frac{2}{1+s}+\frac{1}{(1+s)^2}\big)\|H\|_\infty$. As $\int_0^\infty\frac{s^{\beta-1}}{1+s}\,ds=\frac{\pi}{\sin\beta\pi}$ and $\int_0^\infty\frac{s^{\beta-1}}{(1+s)^2}\,ds=\Gamma(\beta)\Gamma(2-\beta)=\frac{(1-\beta)\pi}{\sin\beta\pi}$, integrating gives $\|A^{1+\beta}-B^{1+\beta}\|_\infty\leq(2+1-\beta)\|H\|_\infty$.
\end{proof}

\begin{remark}\label{rem:bhatia sinha}
For $f(t)=t^\alpha$, Bhatia and Sinha \cite{BhatiaSinha1994} showed that the Fr\'echet derivative of $A\mapsto A^\alpha$ at a positive operator $A\in\mathrm{L}(\mathcal{V})$ has norm $\|Df(A)\|=\|\alpha A^{\alpha-1}\|_\infty$ when $\alpha<0$ or $\alpha\geq2$, extending the case $0\leq\alpha\leq1$ obtained earlier, but that this fails when $1<\alpha<\sqrt2$. Since $\|Df(A)\|\geq\|Df(A)(\mathbf{1}_{\mathcal{V}})\|_\infty=\|\alpha A^{\alpha-1}\|_\infty$ and $Df(cA)=c^{\alpha-1}Df(A)$ for $c>0$, for each $1<\alpha<\sqrt2$ there is a positive $A$ with $\|A\|_\infty=1$ and $\|Df(A)\|>\alpha$; as the best constant in Lemma~\ref{lem:unb moduli}(4) is $\mathrm{Lip}^{[0,1]}(t^{1+\beta})\geq\|Df(A)\|$ for such $A$ and $\alpha=1+\beta$ (Lemma~\ref{lem:DK} and the Daleckii--Krein formula), the constant $3-\beta$ there cannot be lowered to $1+\beta$, the maximum of the derivative of $t^{1+\beta}$ on $[0,1]$, when $0<\beta<\sqrt2-1$. For $\beta=1$ the constant $3-\beta=2$ is optimal, since $Df(\mathbf{1}_{\mathcal{V}})(H)=2H$ for $f(t)=t^2$.
\end{remark}

Using these bounds, we now obtain continuity bounds for the Belavkin--Staszewski relative entropy for channels with the same sum with max-relative entropy upper bounds on the ARN derivative.
\begin{corollary}\label{cor:BS}
Let $\widehat{D}_{\mathrm{BS}}=\widehat{D}_f$ with $f(t)=t\log t$, the Belavkin--Staszewski relative entropy \cite{BelavkinStaszewski1982}, and let $\Omega$, $W$, $\epsilon$, $T_1$, $T_2$ be as in Theorem~\ref{thm:unbounded}(b). Let $\Phi_i=\Psi_{T_i}$, $\Psi_i=\Psi_{\mathbf{1}_{\mathbb{C}^r}-T_i}$ and $\delta=\|T_1-T_2\|_\infty$. Then
\[
\big|\widehat{D}_{\mathrm{BS}}(\Phi_1\|\Psi_1)-\widehat{D}_{\mathrm{BS}}(\Phi_2\|\Psi_2)\big|\leq\|\Omega^*(\mathbf{1}_{\mathcal{Y}})\|_\infty\,\delta\Big(\log\frac{4}{\epsilon\delta}+\frac1\epsilon\Big),
\]
the right-hand side being interpreted as $0$ for $\delta=0$.
\end{corollary}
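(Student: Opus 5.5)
Apply Theorem~\ref{thm:unbounded}(b) with $f(t)=t\log t$, which is operator convex with $f(0^+)=0$. The task then reduces to bounding the modulus $\omega^{(\epsilon)}_{h_f}(\delta)$. On $[0,1)$,
\[
h_f(t)=(1-t)\,\frac{t}{1-t}\log\frac{t}{1-t}=t\log t-t\log(1-t).
\]
So for effects $E,E'$ with spectra in $[0,1-\epsilon]$ and $\|E-E'\|_\infty\le\delta$, we have
\[
h_f(E)-h_f(E')=\big(E\log E-E'\log E'\big)-\big(E\log(\mathbf{1}-E)-E'\log(\mathbf{1}-E')\big),
\]
all functions commuting within each operator. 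For $\delta\ge1$ one may replace $\delta$ by $\min\{\delta,1\}$ where needed, or note $\delta\le1$ automatically since $\|T_1-T_2\|_\infty\le1$; for $\delta=0$ the claim is trivial.

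The first term is controlled by Lemma~\ref{lem:unb moduli}(3). It gives
\[
\|E\log E-E'\log E'\|_\infty\le\delta\log\frac1\delta+2\log2\cdot\delta=\delta\log\frac4\delta .
\]
For the second term, split as
\[
E\log(\mathbf{1}-E)-E'\log(\mathbf{1}-E')=(E-E')\log(\mathbf{1}-E)+E'\big(\log(\mathbf{1}-E)-\log(\mathbf{1}-E')\big).
\]
Since $\mathbf{1}-E\ge\epsilon\mathbf{1}$ and $\le\mathbf{1}$, we have $\|\log(\mathbf{1}-E)\|_\infty\le\log\frac1\epsilon$, so the first piece is at most $\delta\log\frac1\epsilon$. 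The second piece is at most $\|E'\|_\infty\cdot\delta/\epsilon\le\delta/\epsilon$ by Lemma~\ref{lem:unb moduli}(1), applied with $X_1=\mathbf{1}-E$ and $X_2=\mathbf{1}-E'$, both $\ge\epsilon\mathbf{1}$.

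Summing the three contributions gives
\[
\omega^{(\epsilon)}_{h_f}(\delta)\le\delta\Big(\log\frac4\delta+\log\frac1\epsilon+\frac1\epsilon\Big)=\delta\Big(\log\frac4{\epsilon\delta}+\frac1\epsilon\Big).
\]
Inserting this into Theorem~\ref{thm:unbounded}(b) gives the claim.

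The only delicate points are the following.
\begin{enumerate}
\item Verifying the decomposition of $h_f$ at $t=0$, where $h_f(0)=f(0^+)=0$ is consistent with $0\log0=0$.
\item Noting that the product $E\log(\mathbf{1}-E)$ is the functional calculus of the scalar function. The noncommutativity of $E$ and $E'$ is handled by the add-and-subtract splitting above, not by any commutation.
\end{enumerate}
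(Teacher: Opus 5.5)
Your proposal is correct and follows the paper's proof essentially step for step. Both use the same splitting $h_f(T_1)-h_f(T_2)=(T_1\log T_1-T_2\log T_2)-(T_1-T_2)\log(\mathbf{1}_{\mathbb{C}^r}-T_1)-T_2\big(\log(\mathbf{1}_{\mathbb{C}^r}-T_1)-\log(\mathbf{1}_{\mathbb{C}^r}-T_2)\big)$, bounded by Lemma~\ref{lem:unb moduli}(3) and~(1), and then feed the result into Theorem~\ref{thm:unbounded}(b).
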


\begin{proof}
Here $f(0^+)=0$ and $h_f(t)=t\log\frac{t}{1-t}=t\log t-t\log(1-t)$. With $L_i=\log(\mathbf{1}_{\mathbb{C}^r}-T_i)$,
\[
h_f(T_1)-h_f(T_2)=\big(T_1\log T_1-T_2\log T_2\big)-(T_1-T_2)L_1-T_2(L_1-L_2).
\]
Since $\delta\leq1$, the first term has $\infty$-norm at most $\delta\log\frac1\delta+2\log2\cdot\delta$ by Lemma~\ref{lem:unb moduli}(3). The spectrum of $\mathbf{1}_{\mathbb{C}^r}-T_1$ lies in $[\epsilon,1]$, so $\|L_1\|_\infty\leq\log\frac1\epsilon$, and Lemma~\ref{lem:unb moduli}(1) with $X_i=\mathbf{1}_{\mathbb{C}^r}-T_i$ gives $\|L_1-L_2\|_\infty\leq\delta/\epsilon$; also $\|T_2\|_\infty\leq1$. Thus $\|h_f(T_1)-h_f(T_2)\|_\infty\leq\delta\log\frac1\delta+2\log2\cdot\delta+\log\frac1\epsilon\cdot\delta+\frac{\delta}{\epsilon}=\delta\big(\log\frac{4}{\epsilon\delta}+\frac1\epsilon\big)$, and the proof of Theorem~\ref{thm:unbounded}(b) gives the claim.
\end{proof}

\begin{corollary}\label{cor:renyi large alpha}
Let $\alpha\in(1,2]$ and $\beta=\alpha-1$, and let $\Omega$, $W$, $\epsilon$, $T_1$, $T_2$, $\Phi_i$, $\Psi_i$ and $\delta$ be as in Corollary~\ref{cor:BS}. Define $\widehat{Q}_\alpha(\Phi\|\Psi)=\sup_{\rho\in\mathrm{D}(\mathcal{X}\otimes\mathcal{X})}\widehat{Q}_\alpha(\Phi_\rho\|\Psi_\rho)$, so that $\widehat{D}_\alpha(\Phi\|\Psi)=\frac{1}{\alpha-1}\log\widehat{Q}_\alpha(\Phi\|\Psi)$. Then
\[
\widehat{Q}_\alpha(\Phi_i\|\Psi_i)=\lambda_{\max}\Big(\Tr_{\mathcal{Y}}\big(W\,h_\alpha(T_i)\,W^*\big)\Big),\qquad h_\alpha(t)=t^\alpha(1-t)^{1-\alpha},
\]
and
\[
\big|\widehat{Q}_\alpha(\Phi_1\|\Psi_1)-\widehat{Q}_\alpha(\Phi_2\|\Psi_2)\big|\leq\|\Omega^*(\mathbf{1}_{\mathcal{Y}})\|_\infty\Big((4-\alpha)\,\epsilon^{1-\alpha}+(\alpha-1)\,\epsilon^{-\alpha}\Big)\,\delta .
\]
If moreover $\Phi_1,\Psi_1,\Phi_2,\Psi_2$ are channels, then $|\widehat{D}_\alpha(\Phi_1\|\Psi_1)-\widehat{D}_\alpha(\Phi_2\|\Psi_2)|$ is at most $\frac{1}{\alpha-1}$ times the right-hand side.
\end{corollary}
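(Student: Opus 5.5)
We apply Theorem~\ref{thm:unbounded} with $f(t)=t^\alpha$. This $f$ is operator convex on $(0,\infty)$ for $\alpha\in(1,2]$, and $f(0^+)=0<\infty$. Moreover
\[
h_f(t)=(1-t)\Big(\frac{t}{1-t}\Big)^\alpha=t^\alpha(1-t)^{1-\alpha}=h_\alpha(t).
\]
For positive semidefinite operators, $\widehat{D}_f(A\|B)$ is $\widehat{Q}_\alpha(A\|B)$ as defined earlier: both are the limit of $\tr\big(B_\eta^{1/2}(B_\eta^{-1/2}A_\eta B_\eta^{-1/2})^\alpha B_\eta^{1/2}\big)$. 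The only difference is whether $A$ is regularised as well as $B$. Under the range inclusion $\range(A)\subseteq\range(B)$ both limits equal $\tr\mathsf{P}_f(A,B)$, by the argument of Lemma~\ref{lem:unb basic}(3), and we only need them in that case. Theorem~\ref{thm:unbounded}(a) then gives the closed form $\lambda_{\max}\big(\Tr_{\mathcal{Y}}(Wh_\alpha(T_i)W^*)\big)$. Theorem~\ref{thm:unbounded}(b), in the form of Lemma~\ref{lem:partial trace}, reduces the continuity claim to the bound
\[
\|h_\alpha(T_1)-h_\alpha(T_2)\|_\infty\leq\big((4-\alpha)\epsilon^{1-\alpha}+(\alpha-1)\epsilon^{-\alpha}\big)\delta .
\]

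To prove this bound, write $X_i=\mathbf{1}_{\mathbb{C}^r}-T_i$, so that $\epsilon\mathbf{1}_{\mathbb{C}^r}\leq X_i\leq\mathbf{1}_{\mathbb{C}^r}$. Since $T_i$ and $X_i$ commute, $h_\alpha(T_i)=T_i^{1+\beta}X_i^{-\beta}$. Telescope:
\[
h_\alpha(T_1)-h_\alpha(T_2)=\big(T_1^{1+\beta}-T_2^{1+\beta}\big)X_1^{-\beta}+T_2^{1+\beta}\big(X_1^{-\beta}-X_2^{-\beta}\big).
\]
For the first term, use Lemma~\ref{lem:unb moduli}(4) together with $\|X_1^{-\beta}\|_\infty\leq\epsilon^{-\beta}$. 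This gives at most $(3-\beta)\epsilon^{-\beta}\delta=(4-\alpha)\epsilon^{1-\alpha}\delta$. For the second term, use $\|T_2^{1+\beta}\|_\infty\leq1$ and Lemma~\ref{lem:unb moduli}(2), noting $\|X_1-X_2\|_\infty=\delta$. This gives at most $\beta\epsilon^{-\beta-1}\delta=(\alpha-1)\epsilon^{-\alpha}\delta$. Summing yields the displayed bound.

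For the final statement, assume all four maps are channels. We need a lower bound $\widehat{Q}_\alpha(\Phi_i\|\Psi_i)\geq1$, so that the logarithm is $1$-Lipschitz on the relevant range. For any state $\rho$, the outputs $\Phi_\rho$ and $\Psi_\rho$ are states. The data-processing inequality for the trace map, or simply operator convexity of $t^\alpha$ in the form of Jensen's inequality applied to $\tr$, gives $\widehat{Q}_\alpha(\Phi_\rho\|\Psi_\rho)\geq(\tr\Phi_\rho)^\alpha(\tr\Psi_\rho)^{1-\alpha}=1$. Concretely, with $\Psi_\rho$ of full support on its range, use $\tr\big(\Psi_\rho^{1/2}f(X)\Psi_\rho^{1/2}\big)\geq f\big(\tr\Phi_\rho\big)$ via Jensen for the density $\Psi_\rho$. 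Taking the supremum over $\rho$ gives $\widehat{Q}_\alpha(\Phi_i\|\Psi_i)\geq1$. Then $|\log a-\log b|\leq|a-b|$ for $a,b\geq1$, and dividing by $\alpha-1$ gives the claim.

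The main obstacle is this last lower bound in the degenerate-support case. We handle it by applying Jensen to the compressed pair $(A_0,B_0)$ of Definition~\ref{def:unb perspective}, which is legitimate because the range inclusion holds for every $\rho$, as shown in the proof of Theorem~\ref{thm:unbounded}(a).
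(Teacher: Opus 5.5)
Your proposal is correct and follows the paper's proof essentially step for step. The steps match: you identify $\widehat{Q}_\alpha$ with $\widehat{D}_f$ for $f(t)=t^\alpha$ under the range inclusion, take the closed form from Theorem~\ref{thm:unbounded}(a), use the same splitting of $h_\alpha(T_1)-h_\alpha(T_2)$ with Lemma~\ref{lem:unb moduli}(4) and~(2), and use $\widehat{Q}_\alpha\geq1$ for channels together with $|\log x-\log y|\leq|x-y|$. Your Jensen argument on the compressed pair is a valid self-contained substitute for the paper's citation of data processing under the trace, and scalar convexity of $t^\alpha$ already suffices there.
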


\begin{proof}
The function $f(t)=t^\alpha$ is operator convex on $(0,\infty)$ for $\alpha\in[1,2]$ \cite[Chapter~V]{bhatia1997matrix}, with $f(0^+)=0$, and $h_f=h_\alpha$ is given by the formula \eqref{eq:h alpha}, now considered on $[0,1)$. The definition of $\widehat{Q}_\alpha(\rho\|\xi)$ regularises only $\xi$, but when $\range(\rho)\subseteq\range(\xi)$ the direct-sum argument in the proof of Lemma~\ref{lem:unb basic}(3) shows that it also equals $\tr\mathsf{P}_f(\rho,\xi)$, since $0^\alpha=0$. By the proof of Theorem~\ref{thm:unbounded}(a), $\range(\Phi_{i,\rho})\subseteq\range(\Psi_{i,\rho})$ for every $\rho$, so $\widehat{Q}_\alpha(\Phi_i\|\Psi_i)=\widehat{D}_f(\Phi_i\|\Psi_i)$, and Theorem~\ref{thm:unbounded}(a) gives the closed form.

For the estimate, put $X_i=\mathbf{1}_{\mathbb{C}^r}-T_i\geq\epsilon\mathbf{1}_{\mathbb{C}^r}$, so that $h_\alpha(T_i)=T_i^\alpha X_i^{-\beta}$ and
\[
h_\alpha(T_1)-h_\alpha(T_2)=\big(T_1^\alpha-T_2^\alpha\big)X_1^{-\beta}+T_2^\alpha\big(X_1^{-\beta}-X_2^{-\beta}\big).
\]
By Lemma~\ref{lem:unb moduli}(4) and~(2), together with $\|X_1^{-\beta}\|_\infty\leq\epsilon^{-\beta}$ and $\|T_2^\alpha\|_\infty\leq1$, the norm of this operator is at most $(3-\beta)\epsilon^{-\beta}\delta+\beta\epsilon^{-\beta-1}\delta=\big((4-\alpha)\epsilon^{1-\alpha}+(\alpha-1)\epsilon^{-\alpha}\big)\delta$, and the proof of Theorem~\ref{thm:unbounded}(b) gives the estimate. If all four maps are channels, then $\Phi_{i,\rho}$ and $\Psi_{i,\rho}$ are states, so $\widehat{Q}_\alpha(\Phi_{i,\rho}\|\Psi_{i,\rho})\geq1$ by the data-processing inequality for $\widehat{D}_\alpha$ under the trace \cite{KatariyaWilde2021}; hence $\widehat{Q}_\alpha(\Phi_i\|\Psi_i)\geq1$, and the last claim follows from $|\log x-\log y|\leq|x-y|$ for $x,y\geq1$.
\end{proof}

\begin{remark}
The same argument applies to the maximal $\chi^2$-divergence, $f(t)=(t-1)^2$, for which $h_f(t)=(2t-1)^2(1-t)^{-1}$. Writing $h_f(T_1)-h_f(T_2)=\big((2T_1-\mathbf{1}_{\mathbb{C}^r})^2-(2T_2-\mathbf{1}_{\mathbb{C}^r})^2\big)X_1^{-1}+(2T_2-\mathbf{1}_{\mathbb{C}^r})^2\big(X_1^{-1}-X_2^{-1}\big)$ with $X_i=\mathbf{1}_{\mathbb{C}^r}-T_i$, the identity $A^2-B^2=\frac12\big((A+B)(A-B)+(A-B)(A+B)\big)$ gives $\omega^{(\epsilon)}_{h_f}(\delta)\leq(4\epsilon^{-1}+\epsilon^{-2})\delta$. By Corollary~\ref{cor:kernels}(b), this improves to the optimal $\omega^{(\epsilon)}_{h_f}(\delta)\leq(\epsilon^{-2}-4)\delta$ for $\epsilon\leq\frac13$.
\end{remark}

\begin{remark}
The hypothesis $D_{\max}\leq\log\frac{1-\epsilon}{\epsilon}$ cannot simply be dropped when $f'(\infty)=+\infty$, since then $\widehat{D}_f(A\|B)=+\infty$ whenever $\range(A)\not\subseteq\range(B)$ \cite{HiaiMosonyi2017}. In the example of Corollary~\ref{impossibility corollary}, the derivative $T_1$ of $\Phi_1$ is a projection, so $D_{\max}(\Phi_1\|\Psi_1)=+\infty$ by Lemma~\ref{lem:dmax}, and $\widehat{D}_f(\Phi_1\|\Psi_1)=+\infty$, whereas $\widehat{D}_f(\Phi_2\|\Psi_2)=f(1)$ by Theorem~\ref{thm:unbounded}(a) with $T=\frac12\mathbf{1}_{\mathbb{C}^r}$. The constants in Corollaries~\ref{cor:BS} and~\ref{cor:renyi large alpha} blow up as $\epsilon\downarrow0$, that is, as the max-relative entropy within the pairs grows, and the impossibility example is the limiting case. If instead $f(0^+)=+\infty$ and $f'(\infty)<\infty$, the transpose $\tilde f(t)=tf(1/t)$ is operator convex with $\tilde f(0^+)=f'(\infty)$, and $\widehat{D}_f(A\|B)=\widehat{D}_{\tilde f}(B\|A)$ \cite{HiaiMosonyi2017}; the results of this section then apply with the roles of $\Phi$ and $\Psi$ exchanged, under the hypothesis $T\geq\epsilon\mathbf{1}_{\mathbb{C}^r}$.
\end{remark}

\section{All operator convex functions}\label{sec:all f}

Section~\ref{sec:unbounded} assumed $f(0^+)<\infty$, and the remark at its end treated the mirror case $f'(\infty)<\infty$. Several natural divergences have $f(0^+)=f'(\infty)=+\infty$. Examples are the Jeffreys divergence, with $f(t)=(t-1)\log t$; the symmetrised $\chi^2$-divergence, with $f(t)=(t-1)^2(1+t^{-1})$; and $f(t)=t^{\alpha}+t^{-\beta}$ with $\alpha\in(1,2]$, $\beta\in(0,1]$. In this section we assume only that $f$ is operator convex. The closed form of Theorem~\ref{thm:unbounded} then holds verbatim, with $h_f$ extended to $[0,1]$ by the boundary values $f(0^+)$ and $f'(\infty)$. It turns out, as we  will show, that finiteness is the key to determining when values of $\infty$ are attained.

\subsection{The Lebesgue decomposition of a pair}

For $A,B\in\mathrm{Pos}(\mathcal{V})$, Ando's Lebesgue decomposition \cite{Ando1976} writes $A=A_{\rm ac}+A_{\rm s}$. Here $A_{\rm ac}$ is the largest positive operator below $A$ that is dominated by a multiple of $B$; in finite dimensions it is the shorted operator of $A$ to $\range(B)$ \cite{Anderson1971,AndersonTrapp1975}, and $\range(A_{\rm s})\cap\range(B)=\{0\}$. For completely positive maps we apply this to Choi matrices. Okayasu has recently considered the Lebesgue decomposition in \cite{Okayasu2026}.

\begin{proposition}\label{prop:lebesgue cp}
Let $\Omega$, $V_1,\dots,V_r$ and $W$ be as in Theorem~\ref{fundamental principle}, let $0\leq T\leq\mathbf{1}_{\mathbb{C}^r}$, $\Phi=\Psi_T$ and $\Psi=\Psi_{\mathbf{1}_{\mathbb{C}^r}-T}$, and let $P_0$ and $P_1$ be the spectral projections of $T$ for the eigenvalues $0$ and $1$ (zero if these are not eigenvalues). Relative to $J(\Psi)$, the Lebesgue decomposition of $J(\Phi)$ is $J(\Phi)_{\rm ac}=W\,T(\mathbf{1}_{\mathbb{C}^r}-P_1)\,W^*$ and $J(\Phi)_{\rm s}=WP_1W^*$. Symmetrically, relative to $J(\Phi)$, $J(\Psi)_{\rm s}=WP_0W^*$. We write $\Phi_{\rm s}=\Psi_{P_1}$ and $\Psi_{\rm s}=\Psi_{P_0}$. In particular, the following are equivalent: $\Phi_{\rm s}=0$; $1\notin\mathrm{spec}(T)$; $\range J(\Phi)\subseteq\range J(\Psi)$; and $D_{\max}(\Phi\|\Psi)<\infty$.
\end{proposition}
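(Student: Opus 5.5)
The plan is to transport Ando's Lebesgue decomposition through the injective map $W$, since everything here lives in $\range(W)$. Write $A=J(\Phi)=WTW^*$ and $B=J(\Psi)=W(\mathbf{1}_{\mathbb{C}^r}-T)W^*$. Decompose $T$ spectrally as
\[
T=TQ+P_1,\qquad Q=\mathbf{1}_{\mathbb{C}^r}-P_1,
\]
where $TQ=T(\mathbf{1}_{\mathbb{C}^r}-P_1)$ has spectrum in $[0,1)$ on $\range Q$. The candidate decomposition is
\[
A_{\rm ac}=WTQW^*,\qquad A_{\rm s}=WP_1W^*,
\]
and their sum is $A$.

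\emph{Domination of $A_{\rm ac}$.} Let $\tau=\|TQ\|_\infty<1$. On $\range Q$ the operator $\mathbf{1}-T$ is at least $(1-\tau)Q$, so $TQ\le\frac{\tau}{1-\tau}(\mathbf{1}-T)$. Conjugating by $W$ gives $A_{\rm ac}\le cB$ with $c=\tau/(1-\tau)$.

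\emph{Singularity of $A_{\rm s}$.} Since $W$ is injective, $\range(WXW^*)=W\range(X)$ for $X\ge0$. Hence
\[
\range(A_{\rm s})\cap\range(B)=W\big(\range P_1\cap\range(\mathbf{1}-T)\big).
\]
This is $\{0\}$ because $\range(\mathbf{1}-T)=\range(\mathbf{1}-P_1)$.

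\emph{Maximality.} This is the main obstacle, because the two properties above do not by themselves single out Ando's decomposition. I would use the characterisation $A_{\rm ac}=\max\{X:0\le X\le A,\ X\le cB\text{ for some }c\}$. Suppose $0\le X\le A$ and $X\le cB$. Then $X\le A\le WW^*$, so Lemma~\ref{lem:factor} gives $X=WSW^*$ with $0\le S\le T$ and $S\le c(\mathbf{1}-T)$ (conjugate by $W^+$). The second inequality kills $\range P_1$: for $v\in\range P_1$ we get $v^*Sv\le0$, so $Sv=0$. Therefore $S=QSQ$. Next, $S\le T$ compressed by $Q$ gives $S=QSQ\le QTQ=TQ$. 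Conjugating by $W$ yields $X\le A_{\rm ac}$, which settles maximality.

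\emph{The remaining claims.} The statement for $J(\Psi)$ relative to $J(\Phi)$ follows by replacing $T$ with $\mathbf{1}-T$, whose eigenvalue-$1$ projection is $P_0$. For the equivalences, $\Phi_{\rm s}=\Psi_{P_1}=0$ holds iff $P_1=0$, by injectivity of $T\mapsto\Psi_T$, i.e.\ iff $1\notin\mathrm{spec}(T)$. Next, $\range J(\Phi)=W\range T$ is contained in $W\range(\mathbf{1}-T)$ iff $\range T\subseteq\range Q$, iff $P_1=0$ (since $\range P_1\subseteq\range T$). Finally, $D_{\max}(\Phi\|\Psi)<\infty$ iff $\|T\|_\infty<1$ by Lemma~\ref{lem:dmax}, which is the same as $1\notin\mathrm{spec}(T)$.
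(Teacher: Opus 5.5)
Your proposal is correct and follows essentially the same route as the paper: pull every competitor back through $W$ via Lemma~\ref{lem:factor}, then compress by $\mathbf{1}_{\mathbb{C}^r}-P_1$ to get $S\le T(\mathbf{1}_{\mathbb{C}^r}-P_1)$. The only difference is one of phrasing. The paper states admissibility as the range condition $\range C\subseteq\range J(\Psi)$ (the shorted operator), while you use domination $X\le cB$; these are equivalent in finite dimensions, and your explicit checks of $A_{\rm ac}\le cB$ and of the singularity of $A_{\rm s}$ are harmless extras.
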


\begin{proof}
We repeat a similar argument to the Douglas factorisation used to prove Theorem~\ref{thm:arveson}.
If $0\leq C\leq WTW^*$, then $\range C\subseteq\range W$, so $C=WC'W^*$ with $C'=W^+C(W^+)^*$ and $0\leq C'\leq T$. Since $W$ is injective, the short of $J(\Phi)$ to $\range J(\Psi)=W\range(\mathbf{1}_{\mathbb{C}^r}-T)$ is $W$ times the short of $T$ to $\range(\mathbf{1}_{\mathbb{C}^r}-T)=\range(\mathbf{1}_{\mathbb{C}^r}-P_1)$, times $W^*$. Write $P_1^\perp=\mathbf{1}_{\mathbb{C}^r}-P_1$; it commutes with $T$. If $0\leq C'\leq T$ and $\range C'\subseteq\range P_1^\perp$, then $C'=P_1^\perp C'P_1^\perp\leq P_1^\perp TP_1^\perp=TP_1^\perp$. Moreover $TP_1^\perp\leq T$, because $T-TP_1^\perp=TP_1\geq0$. So the short is $TP_1^\perp$, and $J(\Phi)_{\rm s}=WTP_1W^*=WP_1W^*$. The last equivalences follow from \[\range J(\Phi)\subseteq\range J(\Psi)\iff\range T\subseteq\range(\mathbf{1}_{\mathbb{C}^r}-T)\iff P_1=0\] and from Lemma~\ref{lem:dmax}.
\end{proof}

\subsection{Boundary values}

\begin{lemma}\label{lem:hf extended}
Let $f:(0,\infty)\to\mathbb{R}$ be operator convex. The limits $f(0^+)=\lim_{t\downarrow0}f(t)$ and $f'(\infty)=\lim_{t\to\infty}f(t)/t$ exist in $(-\infty,+\infty]$. The function $h_f(t)=(1-t)f\big(\frac{t}{1-t}\big)$ on $(0,1)$, extended by $h_f(0)=f(0^+)$ and $h_f(1)=f'(\infty)$, is convex and continuous as a map $[0,1]\to(-\infty,+\infty]$. In particular $m_f=\min_{[0,1]}h_f$ is finite.
\end{lemma}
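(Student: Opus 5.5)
The plan is to use only scalar convexity. Operator convexity implies ordinary convexity of $f$ on $(0,\infty)$, and that is all the lemma needs. First, the two limits. A convex function on $(0,\infty)$ is monotone near $0$ (either nonincreasing on some $(0,a)$ or nondecreasing there), so $f(0^+)$ exists in $[-\infty,+\infty]$. To exclude $-\infty$, fix $s\in(0,\infty)$ and use the chord inequality: for $t<s<u$ the slope $(f(s)-f(t))/(s-t)$ is bounded above by $(f(u)-f(s))/(u-s)$. Hence $f(t)\geq f(s)-(s-t)\,\frac{f(u)-f(s)}{u-s}$, which stays bounded below as $t\downarrow0$.

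For $f'(\infty)$, the function $t\mapsto(f(t)-f(1))/(t-1)$ is nondecreasing on $(1,\infty)$, so it has a limit in $(-\infty,+\infty]$; it cannot be $-\infty$, being nondecreasing. Its limit equals $\lim f(t)/t$, since $f(1)/t\to0$ and $(t-1)/t\to1$.

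Next, convexity of $h_f$ on $(0,1)$. The key observation is that $h_f$ is the perspective: $h_f(t)=\mathsf{p}(t,1-t)$, where $\mathsf{p}(x,y)=yf(x/y)$ is the scalar perspective of $f$ on $(0,\infty)^2$. The perspective of a convex function is jointly convex, and $t\mapsto(t,1-t)$ is affine, so $h_f$ is convex on $(0,1)$. A direct verification is also possible: for $t=\lambda t_1+(1-\lambda)t_2$, take weights $\mu=\lambda(1-t_1)/(1-t)$ and $1-\mu$; then $t/(1-t)$ is the $\mu$-combination of the $t_i/(1-t_i)$, and Jensen's inequality gives the claim.

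Then the boundary values. As $t\downarrow0$ we have $x=t/(1-t)\to0$ and $1-t\to1$, so $h_f(t)\to f(0^+)$, in the extended sense as well. As $t\uparrow1$, write $h_f(t)=(1-t)f(x)=t\,f(x)/x$ with $x\to\infty$, so $h_f(t)\to f'(\infty)$. Thus, with the stated boundary values, $h_f$ is continuous into $(-\infty,+\infty]$ with the order topology. Convexity on the closed interval follows: a function on $[0,1]$ that is convex on $(0,1)$ and whose endpoint values are the one-sided limits is convex on $[0,1]$, because the convexity inequalities pass to the limit in $(-\infty,+\infty]$. Lower semicontinuity is automatic, so this step is routine.

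Finally, $m_f$. Since $h_f$ is continuous into $(-\infty,+\infty]$ on the compact set $[0,1]$, it is lower semicontinuous, with no value equal to $-\infty$. It therefore attains its minimum, and that minimum is $>-\infty$. It is also $<+\infty$, because $h_f(1/2)=\tfrac12f(1)$ is finite.

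The main obstacle I expect is bookkeeping with extended values, namely that continuity and convexity at endpoints where $h_f=+\infty$ are interpreted correctly. I would handle this by phrasing continuity as convergence in $(-\infty,+\infty]$ and convexity via the limit of finite inequalities.
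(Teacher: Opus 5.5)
Your proof is correct and follows the paper's argument step for step: scalar convexity, monotone chord slopes from $1$ for $f'(\infty)$, the perspective representation $h_f(t)=(1-t)f\big(\tfrac{t}{1-t}\big)$ of a jointly convex function for convexity on $(0,1)$, and the two boundary limits. The differences are only in detail. You bound $f$ below near $0$ with a three-slope inequality, where the paper uses the supporting line $f(t)\geq f(1)+f'_-(1)(t-1)$. You also spell out three points the paper leaves implicit: the monotonicity of $f$ near $0$, convexity on the closed interval, and the lower-semicontinuity argument showing that $m_f$ is attained and finite.
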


\begin{proof}
For $t\in(0,1)$, the convex function $f$ satisfies $f(t)\geq f(1)+f'_-(1)(t-1)$, so $\lim_{t\downarrow0}f(t)$ exists in $(-\infty,+\infty]$. The chord slopes $(f(t)-f(1))/(t-1)$ increase in $t$, so their limit as $t\to\infty$, which is also the limit of $f(t)/t$, exists in $(-\infty,+\infty]$. On $(0,1)$ we have $h_f(t)=g(t,1-t)$, where $g(x,y)=yf(x/y)$ is the perspective of $f$, which is jointly convex on $(0,\infty)^2$; hence $h_f$ is convex on $(0,1)$. As $t\downarrow0$, $h_f(t)\to f(0^+)$. As $t\uparrow1$, $h_f(t)=t\,f(x)/x$ with $x=t/(1-t)\to\infty$, so $h_f(t)\to f'(\infty)$. A convex function on $(0,1)$ with limits in $(-\infty,+\infty]$ at both ends is bounded below and extends continuously to $[0,1]$ with values in $(-\infty,+\infty]$.
\end{proof}

For Hermitian $T$ with spectrum in $[0,1]$, we say that $h_f(T)$ is finite if $h_f$ is finite on $\mathrm{spec}(T)$; then $h_f(T)$ is defined by the functional calculus. By Proposition~\ref{prop:lebesgue cp}, $h_f(T)$ is finite exactly when $f'(\infty)<\infty$ or $\Phi_{\rm s}=0$, and $f(0^+)<\infty$ or $\Psi_{\rm s}=0$.

\subsection{The closed form}
We want to derive a closed form for all operator convex functions with appropriate conditions. Lemma~\ref{lem:unb basic}(3) says that we have a closed form when $\range(A)\subseteq \range(B)$. This is not always true, but we can perform congruence transformations to fix this issue as the below lemma shows.
\begin{lemma}\label{lem:equal range}
Let $f$ be operator convex. Parts~(1) and~(2) of Lemma~\ref{lem:unb basic} hold for $f$, with the same proofs. If $A,B\in\mathrm{Pos}(\mathcal{V})$ satisfy $\range(A)=\range(B)$, then, with $U,A_0,B_0$ as in Definition~\ref{def:unb perspective}, both $A_0$ and $B_0$ are invertible and $\widehat{D}_f(A\|B)=\tr\mathsf{P}_f(A,B)$. If moreover $C$ is injective on $\range(B)$, then $\widehat{D}_f(CAC^*\|CBC^*)=\tr\big(C\,\mathsf{P}_f(A,B)\,C^*\big)$.
\end{lemma}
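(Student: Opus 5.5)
For parts (1) and (2) of Lemma~\ref{lem:unb basic}, I will check that their proofs never used $f(0^+)<\infty$. Part (1) uses only two facts about $\mathsf{P}_f$ on positive definite pairs: joint operator convexity \cite{EbadianNikoufarGordji2011}, which holds for every operator convex $f$ on $(0,\infty)$, and positive homogeneity. From these, subadditivity and hence monotonicity of $\varphi_\eta$ in $\eta$ follow exactly as before. The limit then exists in $(-\infty,+\infty]$ and equals $\sup_{\eta>0}\varphi_\eta$. Part (2) follows from part (1), because each $\varphi_\eta$ is continuous and jointly convex. The only place the extension $f(0)=f(0^+)$ entered the earlier section was part (3), and we do not claim that part here. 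So (1) and (2) carry over verbatim.

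For the equal-range statement, I will take $U$ to be an isometry onto $\range(B)=\range(A)$. Since $A_0=U^*AU$ has range $U^*\range(A)=\mathbb{C}^m$, it is invertible, and likewise $B_0$ is invertible. I then decompose $\mathcal{V}=\range(B)\oplus\ker(B)$, so that $A=A_0\oplus0$ and $B=B_0\oplus0$. This gives
\[
\tr\mathsf{P}_f(A+\eta\mathbf{1}_{\mathcal{V}},B+\eta\mathbf{1}_{\mathcal{V}})=\tr\mathsf{P}_f(A_0+\eta\mathbf{1}_{\mathbb{C}^m},B_0+\eta\mathbf{1}_{\mathbb{C}^m})+\eta f(1)\dim\ker B .
\]
Next I look at $X_\eta=(B_0+\eta)^{-1/2}(A_0+\eta)(B_0+\eta)^{-1/2}$. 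It converges to $X_0=B_0^{-1/2}A_0B_0^{-1/2}$, which is positive definite. For small $\eta$, the spectra of the $X_\eta$ lie in a compact interval $[a,b]\subseteq(0,\infty)$, because $A_0+\eta\geq\lambda_{\min}(A_0)$ and $B_0+\eta\leq\|B_0\|_\infty+1$. Since $f$ is continuous on $[a,b]$, $f(X_\eta)\to f(X_0)$. Hence $\widehat{D}_f(A\|B)=\tr\mathsf{P}_f(A_0,B_0)=\tr\mathsf{P}_f(A,B)$ by Definition~\ref{def:unb perspective}.

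This step is where an unbounded $f$ could cause trouble, since $f$ may blow up at $0$. Invertibility of $A_0$ is exactly what keeps the spectra away from $0$, and invertibility of $B_0$ keeps them bounded. I expect this to be the main point to check carefully.

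For the last claim, let $C$ be injective on $\range(B)$. Then $\range(CAC^*)=C\range(A)=C\range(B)=\range(CBC^*)$, so the pair $(CAC^*,CBC^*)$ again has equal ranges and the previous paragraph applies to it.

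It remains to identify $\mathsf{P}_f(CAC^*,CBC^*)$ with $C\,\mathsf{P}_f(A,B)\,C^*$. I will rerun the proof of Lemma~\ref{lem:unb congruence}. That proof used only the polar decomposition $CU=U'R$ and Lemma~\ref{lem:unb congruence pd}. Lemma~\ref{lem:unb congruence pd} is applied to the pair $(RA_0R,RB_0R)$, whose first entry is now positive definite. Its proof is a pure unitary-conjugation identity, which holds for any $f$ defined on the spectrum, here contained in $(0,\infty)$. So the identity goes through, and I conclude $\widehat{D}_f(CAC^*\|CBC^*)=\tr\bigl(C\,\mathsf{P}_f(A,B)\,C^*\bigr)$.
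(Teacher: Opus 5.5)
Your proposal is correct and follows the paper's proof essentially step for step. You transfer parts~(1) and~(2) by noting that they use only joint operator convexity, positive homogeneity and continuity of $\varphi_\eta$. You then rerun the direct-sum argument of part~(3), with invertibility of $A_0$ and $B_0$ keeping the spectra of $X_\eta$ in a compact subinterval of $(0,\infty)$, and rerun the polar-decomposition congruence argument, which uses $f$ only on a spectrum now contained in $(0,\infty)$. The one cosmetic difference is that you bound the spectra of $X_\eta$ away from $0$ directly through $\lambda_{\min}(A_0)$ and $\|B_0\|_\infty$, whereas the paper uses $X_\eta\to X_0$ with $X_0$ invertible.
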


\begin{proof}
Parts~(1) and~(2) of Lemma~\ref{lem:unb basic} use only the joint operator convexity and positive homogeneity of $\mathsf{P}_f$ on positive definite pairs and the continuity of $\varphi_\eta$, which hold for every operator convex $f$. For the second statement we follow the proof of part~(3): now $X_0=B_0^{-1/2}A_0B_0^{-1/2}$ is invertible, so for small $\eta$ the spectra of $X_\eta$ lie in a fixed compact interval $I_0\subseteq[\frac12\lambda_{\min}(X_0),\infty)$, where $f$ is continuous, and $\mathsf{P}_f(A_0+\eta\mathbf{1}_{\mathbb{C}^m},B_0+\eta\mathbf{1}_{\mathbb{C}^m})\to\mathsf{P}_f(A_0,B_0)$. For the last statement, $CAC^*$ and $CBC^*$ again have equal ranges. The proofs of Lemmas~\ref{lem:unb congruence pd} and~\ref{lem:unb congruence} use $f$ only on the spectrum of $B_0^{-1/2}A_0B_0^{-1/2}$, which is now contained in $(0,\infty)$, so $\mathsf{P}_f(CAC^*,CBC^*)=C\,\mathsf{P}_f(A,B)\,C^*$, and the second statement applies.
\end{proof}

Since $T$ is positive semidefinite, and congruence transformations are injective, we obtain a specific closed form for the above by applying the positive semidefinite functional calculus.  

\begin{lemma}\label{lem:mixing}
Let $f$ be operator convex, let $\widetilde{W}\in\mathrm{L}(\mathbb{C}^r,\mathcal{V})$ be injective, and let $0\leq T\leq\mathbf{1}_{\mathbb{C}^r}$ have spectral decomposition $T=\sum_\lambda\lambda P_\lambda$. Then $\tr(\widetilde{W}P_\lambda \widetilde{W}^*)>0$ for every $\lambda\in\mathrm{spec}(T)$, and
\[
\widehat{D}_f\big(\widetilde{W}T\widetilde{W}^*\,\big\|\,\widetilde{W}(\mathbf{1}_{\mathbb{C}^r}-T)\widetilde{W}^*\big)=\sum_{\lambda\in\mathrm{spec}(T)}h_f(\lambda)\,\tr(\widetilde{W}P_\lambda \widetilde{W}^*),
\]
with values in $(-\infty,+\infty]$.
In particular the left-hand side is finite if and only if $h_f(T)$ is finite, and then it equals $\tr\big(\widetilde{W}\,h_f(T)\,\widetilde{W}^*\big)$.
\end{lemma}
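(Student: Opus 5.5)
Positivity of $\tr(\widetilde{W}P_\lambda\widetilde{W}^*)$ is immediate: it equals $\|\widetilde{W}P_\lambda\|_2^2$, which is nonzero since $P_\lambda\neq0$ and $\widetilde{W}$ is injective. For the formula, the plan is to decompose $\mathbb{C}^r$ along the eigenspaces of $T$ into three blocks: $P_0$, $P_1$, and the middle part $Q=\sum_{0<\lambda<1}P_\lambda$. Put $A=\widetilde{W}T\widetilde{W}^*$ and $B=\widetilde{W}(\mathbf{1}_{\mathbb{C}^r}-T)\widetilde{W}^*$. Then
\[
A=A_Q+\widetilde{W}P_1\widetilde{W}^*,\qquad B=B_Q+\widetilde{W}P_0\widetilde{W}^*,
\]
with $A_Q=\widetilde{W}TQ\widetilde{W}^*$ and $B_Q=\widetilde{W}(\mathbf{1}_{\mathbb{C}^r}-T)Q\widetilde{W}^*$. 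The middle pair has equal ranges $\widetilde{W}\range Q$. So by Lemma~\ref{lem:equal range}, applied with $C=\widetilde{W}Q$ (injective on $\range Q$), together with the commuting functional calculus $\mathsf{P}_f(TQ,(\mathbf{1}-T)Q)=h_f(T)Q$, we get $\widehat{D}_f(A_Q\|B_Q)=\sum_{0<\lambda<1}h_f(\lambda)\tr(\widetilde{W}P_\lambda\widetilde{W}^*)$.

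The next step is to handle the singular parts. We need the additivity statement
\[
\widehat{D}_f(A\|B)=\widehat{D}_f(A_Q\|B_Q)+\widehat{D}_f(\widetilde{W}P_1\widetilde{W}^*\|0)+\widehat{D}_f(0\|\widetilde{W}P_0\widetilde{W}^*),
\]
where for $R\geq0$ we have $\widehat{D}_f(R\|0)=f'(\infty)\tr R$ and $\widehat{D}_f(0\|R)=f(0^+)\tr R$. The three ranges $\widetilde{W}\range Q$, $\widetilde{W}\range P_1$ and $\widetilde{W}\range P_0$ are linearly independent but not orthogonal, so we cannot simply split a direct sum. The idea is to reduce to the orthogonal case by a congruence. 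Choose an invertible $R$ on $\mathcal{V}$ with $R\widetilde{W}=U$ an isometry; this exists because $\widetilde{W}$ is injective (extend $\widetilde{W}$ to an invertible map and use its inverse, composed with an isometric embedding). The ranges then become orthogonal, $UQ\mathbb{C}^r\perp UP_1\mathbb{C}^r\perp UP_0\mathbb{C}^r$.

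The main obstacle is that $\widehat{D}_f$ is not congruence-invariant in trace, only in operator form. We therefore keep track of operators rather than traces. The regularised $\mathsf{P}_f(A+\eta\mathbf{1},B+\eta\mathbf{1})$ satisfies the transformer property only for invertible congruences applied to both arguments, and $R(A+\eta\mathbf{1})R^*\neq RAR^*+\eta\mathbf{1}$. Instead, we would use monotonicity of $\varphi_\eta$ together with Lemma~\ref{lem:unb basic}(1), writing $\widehat{D}_f(A\|B)=\lim_\eta\tr\mathsf{P}_f(A+\eta G,B+\eta G)$ for any fixed $G>0$. This limit is independent of the choice of $G$ by a sandwich $c\mathbf{1}\leq G\leq c'\mathbf{1}$, combined with the subadditivity and positive homogeneity of $\mathsf{P}_f$ used in the proof of Lemma~\ref{lem:unb basic}(1). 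Taking $G=(R^*R)^{-1}$ gives
\[
\mathsf{P}_f(A+\eta G,B+\eta G)=R^{-1}\,\mathsf{P}_f\big(UTU^*+\eta\mathbf{1},\,U(\mathbf{1}-T)U^*+\eta\mathbf{1}\big)\,R^{-*}.
\]
In the orthogonal picture the right-hand side splits as a direct sum over the eigenspaces of $T$ and $\ker U^*$. On the $P_\lambda$-block it equals $(1-\lambda+\eta)f\big(\frac{\lambda+\eta}{1-\lambda+\eta}\big)UP_\lambda U^*$, and on $\ker U^*$ it equals $\eta f(1)$. Taking $\tr(R^{-1}(\cdot)R^{-*})$ gives weights $\tr(R^{-1}UP_\lambda U^*R^{-*})=\tr(\widetilde{W}P_\lambda\widetilde{W}^*)$ on the blocks. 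The $\ker U^*$ term vanishes as $\eta\to0$, and the scalar coefficients converge to $h_f(\lambda)$ in $(-\infty,+\infty]$ by Lemma~\ref{lem:hf extended}. Since all weights are positive, the limit of the sum is the sum of the limits (a term equal to $+\infty$ dominates, and the others are bounded below), which yields the stated formula.

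The final claims then follow. The sum is finite exactly when $h_f(\lambda)<\infty$ on $\mathrm{spec}(T)$, i.e.\ when $h_f(T)$ is finite, and in that case the sum equals $\tr(\widetilde{W}h_f(T)\widetilde{W}^*)$ by the spectral theorem. I expect the justification that the regularisation by a general $G>0$ gives the same limit to be the delicate step.
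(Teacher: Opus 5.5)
Your proof is correct, but it takes a different route from the paper's.

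\textbf{The paper's route.} The paper never changes the regulariser. It perturbs the derivative itself, $T_\theta=(1-\theta)T+\frac\theta2\mathbf{1}_{\mathbb{C}^r}$, so that $\widetilde{W}T_\theta\widetilde{W}^*$ and $\widetilde{W}(\mathbf{1}_{\mathbb{C}^r}-T_\theta)\widetilde{W}^*$ share the range $\range\widetilde{W}$. Lemma~\ref{lem:equal range} then gives the value $\sum_\lambda h_f\big((1-\theta)\lambda+\frac\theta2\big)\tr(\widetilde{W}P_\lambda\widetilde{W}^*)$ exactly. This value is then squeezed between joint convexity, $\widehat{D}_f(A_\theta\|B_\theta)\leq(1-\theta)\widehat{D}_f(A\|B)+\theta f(1)\tr N$ with $N=\frac12\widetilde{W}\widetilde{W}^*$, and lower semicontinuity. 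Continuity of $h_f$ into $(-\infty,+\infty]$ finishes the argument.

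\textbf{Your route.} You keep $T$ fixed and regularise by $\eta(R^*R)^{-1}$. This makes the regularised pair an invertible congruence of a commuting pair, via Lemma~\ref{lem:unb congruence pd} applied with $R^{-1}$. The whole computation, singular blocks included, then becomes explicit.

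The price is the regulariser-independence step, and it does go through as you indicate. Take $c<\lambda_{\min}(G)$ and $c'>\lambda_{\max}(G)$ strictly, so that the auxiliary pairs $\eta(G-c\mathbf{1})$ and $\eta(c'\mathbf{1}-G)$ are positive definite and subadditivity applies exactly as in Lemma~\ref{lem:unb basic}(1). Together with $\mathsf{P}_f(X,X)=f(1)X$ this gives
\[
\tr\mathsf{P}_f(A+\eta c'\mathbf{1},B+\eta c'\mathbf{1})-\eta f(1)\tr(c'\mathbf{1}-G)\leq\tr\mathsf{P}_f(A+\eta G,B+\eta G)\leq\tr\mathsf{P}_f(A+\eta c\mathbf{1},B+\eta c\mathbf{1})+\eta f(1)\tr(G-c\mathbf{1}).
\]
Both outer terms tend to $\widehat{D}_f(A\|B)$ in $(-\infty,+\infty]$.

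\textbf{What each buys.} Your route gives a direct computation along the regularisation itself. It does not use lower semicontinuity, and it shows that any positive definite regulariser may be used. The paper's route is shorter, because it only invokes properties already established.

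\textbf{A redundancy.} Your opening step can be dropped: the middle block via Lemma~\ref{lem:equal range}, the three-term additivity, and the values $\widehat{D}_f(R\|0)$ and $\widehat{D}_f(0\|R)$. The $G$-regularised computation yields the full formula at once, so that additivity is a consequence rather than an input.
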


\begin{proof}
Since $\widetilde{W}$ is injective, $\tr(\widetilde{W}P_\lambda \widetilde{W}^*)=\|\widetilde{W}P_\lambda\|_2^2>0$. Put $A=\widetilde{W}T\widetilde{W}^*$, $B=\widetilde{W}(\mathbf{1}_{\mathbb{C}^r}-T)\widetilde{W}^*$ and, for $\theta\in(0,1]$, $T_\theta=(1-\theta)T+\frac\theta2\mathbf{1}_{\mathbb{C}^r}$. Then $T_\theta$ and $\mathbf{1}_{\mathbb{C}^r}-T_\theta$ are invertible, so $A_\theta=\widetilde{W}T_\theta \widetilde{W}^*$ and $B_\theta=\widetilde{W}(\mathbf{1}_{\mathbb{C}^r}-T_\theta)\widetilde{W}^*$ both have range $\range \widetilde{W}$. Lemma~\ref{lem:equal range}, applied with $C=\widetilde{W}$ to the invertible pair $T_\theta$, $\mathbf{1}_{\mathbb{C}^r}-T_\theta$, and the commuting positive semidefinite calculus as in the proof of Theorem~\ref{thm:unbounded}(a), give
\[
\widehat{D}_f(A_\theta\|B_\theta)=\tr\big(\widetilde{W}\,\mathsf{P}_f(T_\theta,\mathbf{1}_{\mathbb{C}^r}-T_\theta)\,\widetilde{W}^*\big)=\tr\big(\widetilde{W}\,h_f(T_\theta)\,\widetilde{W}^*\big)=\sum_\lambda h_f\big((1-\theta)\lambda+\tfrac\theta2\big)\tr(\widetilde{W}P_\lambda \widetilde{W}^*).
\]
As $\theta\downarrow0$, the right-hand side tends to $\sum_\lambda h_f(\lambda)\tr(\widetilde{W}P_\lambda \widetilde{W}^*)$ in $(-\infty,+\infty]$, by Lemma~\ref{lem:hf extended}. On the other hand, $(A_\theta,B_\theta)=(1-\theta)(A,B)+\theta(N,N)$ with $N=\frac12\widetilde{W}\widetilde{W}^*$, and $\widehat{D}_f(N\|N)=f(1)\tr N$ because $\mathsf{P}_f(X,X)=f(1)X$ for $X>0$. Joint convexity (Lemma~\ref{lem:unb basic}(2), valid by Lemma~\ref{lem:equal range}) gives $\widehat{D}_f(A_\theta\|B_\theta)\leq(1-\theta)\widehat{D}_f(A\|B)+\theta f(1)\tr N$, and lower semicontinuity gives $\liminf_{\theta\downarrow0}\widehat{D}_f(A_\theta\|B_\theta)\geq\widehat{D}_f(A\|B)$. Hence $\widehat{D}_f(A_\theta\|B_\theta)\to\widehat{D}_f(A\|B)$ in $(-\infty,+\infty]$, and the two limits agree.
\end{proof}
We now combine the previous two lemmas in the following formula. 
\begin{theorem}\label{thm:all f}
Let $f:(0,\infty)\to\mathbb{R}$ be operator convex, let $\Omega$, $V_1,\dots,V_r$ and $W$ be as in Theorem~\ref{fundamental principle}, let $0\leq T\leq\mathbf{1}_{\mathbb{C}^r}$, and put $\Phi=\Psi_T$ and $\Psi=\Psi_{\mathbf{1}_{\mathbb{C}^r}-T}$.
\begin{enumerate}
\item[(a)] $\widehat{D}_f(\Phi\|\Psi)<\infty$ if and only if $h_f(T)$ is finite, that is, if and only if
\[
\big[f'(\infty)<\infty\ \text{or}\ \Phi_{\rm s}=0\big]\quad\text{and}\quad\big[f(0^+)<\infty\ \text{or}\ \Psi_{\rm s}=0\big].
\]
\item[(b)] In that case $\widehat{D}_f(\Phi\|\Psi)=\lambda_{\max}\Big(\Tr_{\mathcal{Y}}\big(W\,h_f(T)\,W^*\big)\Big)$.
\item[(c)] In that case, with $P_{\rm a}=\mathbf{1}_{\mathbb{C}^r}-P_0-P_1$, $\Phi_{\rm ac}=\Psi_{TP_{\rm a}}$ and $\Psi_{\rm ac}=\Psi_{(\mathbf{1}_{\mathbb{C}^r}-T)P_{\rm a}}$,
\[
\Tr_{\mathcal{Y}}\big(W\,h_f(T)\,W^*\big)=\Tr_{\mathcal{Y}}\mathsf{P}_f\big(J(\Phi_{\rm ac}),J(\Psi_{\rm ac})\big)+f'(\infty)\Tr_{\mathcal{Y}}J(\Phi_{\rm s})+f(0^+)\Tr_{\mathcal{Y}}J(\Psi_{\rm s}),
\]
where a term with an infinite coefficient is absent because the corresponding singular part vanishes.
\end{enumerate}
\end{theorem}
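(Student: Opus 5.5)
The proof runs the pure-state reduction of Theorem~\ref{thm:unbounded}(a) again, with Lemma~\ref{lem:mixing} in place of the absolute-continuity argument. Put $d(\rho)=\widehat{D}_f(\Phi_\rho\|\Psi_\rho)$ for $\rho\in\mathrm{D}(\mathcal{X}\otimes\mathcal{X})$. By Lemma~\ref{lem:equal range}, parts (1) and (2) of Lemma~\ref{lem:unb basic} hold for every operator convex $f$. So $d$ is convex and lower semicontinuous with values in $(-\infty,+\infty]$. As before, the supremum defining $\widehat{D}_f(\Phi\|\Psi)$ may therefore be taken over pure states $u=\vectorise(A)$ with $A$ invertible. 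For such $u$ set $C=\mathbf{1}_{\mathcal{Y}}\otimes A^\top$ and $\widetilde{W}=CW$, which is injective. By \eqref{eq:vec identity} we have $\Phi_{uu^*}=\widetilde{W}T\widetilde{W}^*$ and $\Psi_{uu^*}=\widetilde{W}(\mathbf{1}_{\mathbb{C}^r}-T)\widetilde{W}^*$. Lemma~\ref{lem:mixing} then gives
\[
d(uu^*)=\sum_{\lambda\in\mathrm{spec}(T)}h_f(\lambda)\,\tr(\widetilde{W}P_\lambda\widetilde{W}^*),
\]
and every weight $\tr(\widetilde{W}P_\lambda\widetilde{W}^*)$ is strictly positive.

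For (a), suppose first that $h_f(T)$ is not finite, so that $h_f(\lambda)=+\infty$ for some $\lambda\in\mathrm{spec}(T)$. Since that weight is positive, $d(uu^*)=+\infty$, and hence $\widehat{D}_f(\Phi\|\Psi)=+\infty$. Now suppose $h_f(T)$ is finite. Then $d(uu^*)=\tr(\widetilde{W}h_f(T)\widetilde{W}^*)=\tr((AA^*)^\top M)$ with $M=\Tr_{\mathcal{Y}}(Wh_f(T)W^*)$, exactly as in Theorem~\ref{thm:unbounded}(a). Taking the supremum over invertible density operators $(AA^*)^\top$, which are dense in $\mathrm{D}(\mathcal{X})$, gives $\lambda_{\max}(M)<\infty$. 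This proves (b) and the finiteness direction of (a).

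The reformulation in (a) is the remark after Lemma~\ref{lem:hf extended}. Since $h_f$ is finite on $(0,1)$, finiteness can fail only at $\lambda=1$, where $h_f(1)=f'(\infty)$ and $1\in\mathrm{spec}(T)$ iff $P_1\neq0$ iff $\Phi_{\rm s}\neq0$ by Proposition~\ref{prop:lebesgue cp}. Symmetrically it can fail at $\lambda=0$, where $h_f(0)=f(0^+)$ and $0\in\mathrm{spec}(T)$ iff $\Psi_{\rm s}\neq0$.

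For (c), decompose $h_f(T)$ with the commuting projections $P_{\rm a}$, $P_1$ and $P_0$:
\[
h_f(T)=h_f(T)P_{\rm a}+f'(\infty)P_1+f(0^+)P_0,
\]
where a term is omitted when its projection vanishes. The last two terms give $f'(\infty)J(\Phi_{\rm s})$ and $f(0^+)J(\Psi_{\rm s})$ after conjugating by $W$, since $J(\Psi_{P_i})=WP_iW^*$. For the first term, restrict to $\range P_{\rm a}$. Let $U_{\rm a}$ be an isometry onto $\range P_{\rm a}$ and put $W_{\rm a}=WU_{\rm a}$, which is injective. On $\range P_{\rm a}$ both $T_{\rm a}=U_{\rm a}^*TU_{\rm a}$ and $\mathbf{1}-T_{\rm a}$ are invertible. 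Then $J(\Phi_{\rm ac})=W_{\rm a}T_{\rm a}W_{\rm a}^*$ and $J(\Psi_{\rm ac})=W_{\rm a}(\mathbf{1}-T_{\rm a})W_{\rm a}^*$, and these have equal ranges. Lemma~\ref{lem:unb congruence} (in the form of Lemma~\ref{lem:equal range}, so that only the spectrum in $(0,\infty)$ is used) gives
\[
\mathsf{P}_f\big(J(\Phi_{\rm ac}),J(\Psi_{\rm ac})\big)=W_{\rm a}\,\mathsf{P}_f(T_{\rm a},\mathbf{1}-T_{\rm a})\,W_{\rm a}^*=W\,h_f(T)P_{\rm a}\,W^*
\]
by the commuting functional calculus. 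Taking $\Tr_{\mathcal{Y}}$ of the three pieces finishes (c).

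The step needing most care is the reduction to pure states with invertible $A$ when $d$ may equal $+\infty$. Convexity and lower semicontinuity with values in $(-\infty,+\infty]$ still handle this: convexity gives $d(\rho)\leq\max_i d(u_iu_i^*)$ over the spectral decomposition, and lower semicontinuity gives $d(uu^*)\leq\liminf_n d(u_nu_nu_n^*)$ along invertible approximants. In the infinite case no reduction is needed at all, because a single invertible-$A$ pure state already gives $+\infty$.
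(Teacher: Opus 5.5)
Your proposal is correct and follows the paper's proof essentially step for step. Both arguments apply Lemma~\ref{lem:mixing} at pure states $\vectorise(A)$ with $A$ invertible, use the convexity and lower semicontinuity supplied by Lemma~\ref{lem:equal range} to reduce the supremum to those states and obtain $\lambda_{\max}(M)$, invoke Proposition~\ref{prop:lebesgue cp} for the reformulation in (a), and for (c) split $h_f(T)=h_f(T)P_{\rm a}+f'(\infty)P_1+f(0^+)P_0$ and apply Lemma~\ref{lem:equal range} on $\range P_{\rm a}$ (the only slip is the typo $d(u_nu_nu_n^*)$, which should read $d(u_nu_n^*)$).
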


\begin{proof}
We prove (a) and (b) together in a very similar manner to the Kubo--Ando case.

Let $u=\vectorise(A)$ with $\|A\|_2=1$ and $A$ invertible, put \[C=\mathbf{1}_{\mathcal{Y}}\otimes A^\top\] and $\widetilde{W}=CW$, which is injective. By \eqref{eq:vec identity} and Theorem~\ref{thm:arveson}, $\Phi_{uu^*}=\widetilde{W}T\widetilde{W}^*$ and \[\Psi_{uu^*}=\widetilde{W}(\mathbf{1}_{\mathbb{C}^r}-T)\widetilde{W}^*\] If $h_f(T)$ is not finite, Lemma~\ref{lem:mixing} gives $\widehat{D}_f(\Phi_{uu^*}\|\Psi_{uu^*})=+\infty$, hence $\widehat{D}_f(\Phi\|\Psi)=+\infty$. If $h_f(T)$ is finite, Lemma~\ref{lem:mixing} gives \[\widehat{D}_f(\Phi_{uu^*}\|\Psi_{uu^*})=\tr(\widetilde{W}\,h_f(T)\,\widetilde{W}^*)=\tr\big((AA^*)^\top M\big)\] with $M=\Tr_{\mathcal{Y}}(W\,h_f(T)\,W^*)$, as in the proof of Theorem~\ref{thm:unbounded}(a). The rest of that proof uses only the convexity and lower semicontinuity of $\rho\mapsto\widehat{D}_f(\Phi_\rho\|\Psi_\rho)$, which hold for every operator convex $f$ by Lemma~\ref{lem:equal range}, and gives $\widehat{D}_f(\Phi\|\Psi)=\lambda_{\max}(M)$. The reformulation in (a) is Proposition~\ref{prop:lebesgue cp}.

(c) \[h_f(T)=h_f(T)P_{\rm a}+f'(\infty)P_1+f(0^+)P_0\] and $J(\Phi_{\rm s})=WP_1W^*$, $J(\Psi_{\rm s})=WP_0W^*$. On $\range P_{\rm a}$ the operators $T$ and $\mathbf{1}_{\mathbb{C}^r}-T$ are invertible. Lemma~\ref{lem:equal range}, with $C=W$ restricted to $\range P_{\rm a}$, gives $\mathsf{P}_f(WTP_{\rm a}W^*,W(\mathbf{1}_{\mathbb{C}^r}-T)P_{\rm a}W^*)=W\,h_f(T)P_{\rm a}\,W^*$.
\end{proof}

Part~(c) is the channel analogue of the classical formula \cite{LieseVajda2006}
\[
D_f(\mathbb{P}\|\mathbb{Q})=\int f\Big(\frac{d\mathbb{P}_{\rm ac}}{d\mathbb{Q}}\Big)d\mathbb{Q}+f'(\infty)\mathbb{P}_{\rm s}(\mathcal{S})+f(0)\mathbb{Q}_{\rm s}(\mathcal{S}).
\] Theorem~\ref{thm:all f} contains Theorem~\ref{thm:maxf}(a), where both boundary values are finite and $f(1)=0$, and Theorem~\ref{thm:unbounded}(a). It also contains the mirror case discussed after Corollary~\ref{cor:renyi large alpha}. When both boundary values are infinite, a case not covered by those results, part~(a) requires $\Phi_{\rm s}=\Psi_{\rm s}=0$, that is, $\range J(\Phi)=\range J(\Psi)$. The operator-valued counterpart of part~(a) is \cite[Proposition~4.11]{Okayasu2026maxf}, which characterises when the divergence of \cite[Definition~3.1]{Okayasu2026maxf} has dense domain and gives sufficient conditions for it to be bounded; when $f(0^+)=f'(\infty)=+\infty$, a dense domain requires $\Phi$ and $\Psi$ to be mutually absolutely continuous \cite[Definition~4.10]{Okayasu2026maxf}, and boundedness holds if each is dominated by a multiple of the other, which in our setting is again the condition $\range J(\Phi)=\range J(\Psi)$, by Proposition~\ref{prop:lebesgue cp}. In finite dimensions the two are directly related. In the Stinespring representation $Y\mapsto S^*(Y\otimes\mathbf{1}_{\mathcal{Z}})S$ of $\Omega^*$ given by Theorem~\ref{effect theorem}, the Radon--Nikodym derivatives of $\Phi^*$ and $\Psi^*$ are $\mathbf{1}_{\mathcal{Y}}\otimes T^\top$ and $\mathbf{1}_{\mathcal{Y}}\otimes(\mathbf{1}_{\mathbb{C}^r}-T)^\top$, so, when $h_f(T)$ is finite, the divergence of \cite[Definition~3.1]{Okayasu2026maxf} of the adjoint maps is $S^*(\mathbf{1}_{\mathcal{Y}}\otimes h_f(T)^\top)S=\Psi_{h_f(T)}^*(\mathbf{1}_{\mathcal{Y}})$, the transpose of $\Tr_{\mathcal{Y}}(W\,h_f(T)\,W^*)$. Part~(b) thus identifies $\widehat{D}_f(\Phi\|\Psi)$ with the largest eigenvalue of Okayasu's operator-valued divergence of $\Phi^*$ and $\Psi^*$; to our knowledge this scalar identification, and the closed form when both boundary values are infinite, are new. Quantitatively, $\epsilon\mathbf{1}_{\mathbb{C}^r}\leq T\leq(1-\epsilon)\mathbf{1}_{\mathbb{C}^r}$ is equivalent to $\max\{D_{\max}(\Phi\|\Psi),D_{\max}(\Psi\|\Phi)\}\leq\log\frac{1-\epsilon}{\epsilon}$ by Lemma~\ref{lem:dmax}, so the natural parameter for such $f$ is the Thompson distance between $\Phi$ and $\Psi$.

\subsection{Continuity in the derivative}
We now show continuity results in this more general setting described in the previous theorem. 
\begin{theorem}\label{thm:extended continuity}
Let $f$ be operator convex and let $\Omega$ and $W$ be as in Theorem~\ref{fundamental principle}. Write $\mathcal{D}(T)=\widehat{D}_f(\Psi_T\|\Psi_{\mathbf{1}_{\mathbb{C}^r}-T})$ for $0\leq T\leq\mathbf{1}_{\mathbb{C}^r}$.
\begin{enumerate}
\item[(a)] Let $I\subseteq[0,1]$ be a compact interval on which $h_f$ is finite. If $\mathrm{spec}(T_1),\mathrm{spec}(T_2)\subseteq I$, then
\[
|\mathcal{D}(T_1)-\mathcal{D}(T_2)|\leq\|\Omega^*(\mathbf{1}_{\mathcal{Y}})\|_\infty\,\omega^I_{h_f}(\|T_1-T_2\|_\infty).
\]
\item[(b)] For every $T$ and every $\lambda\in\mathrm{spec}(T)$,
\[
\mathcal{D}(T)\geq\nu(\Omega)^2\big(h_f(\lambda)-m_f\big)+\min\{m_f,0\}\,\|\Omega^*(\mathbf{1}_{\mathcal{Y}})\|_\infty .
\]
\item[(c)] The map $T\mapsto\mathcal{D}(T)$ is continuous from $\{0\leq T\leq\mathbf{1}_{\mathbb{C}^r}\}$ to $(-\infty,+\infty]$.
\end{enumerate}
\end{theorem}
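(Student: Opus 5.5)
Part (a) follows from the closed form plus the Lipschitz estimate for partial traces. Part (b) is a one-eigenvector lower bound on $\lambda_{\max}$. Part (c) splits into the finite and infinite cases, using (a) for the first and (b) for the second.

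\emph{Part (a).} Since $h_f$ is finite on $I\supseteq\mathrm{spec}(T_i)$, each $h_f(T_i)$ is finite. Theorem~\ref{thm:all f}(b) then gives $\mathcal{D}(T_i)=\lambda_{\max}(\Tr_{\mathcal{Y}}(W h_f(T_i)W^*))$. Lemma~\ref{lem:partial trace}, applied to $N_i=h_f(T_i)$ in its largest-eigenvalue form, bounds the difference by $\|\Omega^*(\mathbf{1}_{\mathcal{Y}})\|_\infty\|h_f(T_1)-h_f(T_2)\|_\infty$. By Lemma~\ref{lem:hf extended}, $h_f$ is continuous and real-valued on $I$, so the definition of $\omega^I_{h_f}$ in Lemma~\ref{modulus to zero} bounds this by $\omega^I_{h_f}(\|T_1-T_2\|_\infty)$.

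\emph{Part (b).} If $h_f(T)$ is not finite, Theorem~\ref{thm:all f}(a) gives $\mathcal{D}(T)=+\infty$ and there is nothing to prove. Otherwise let $M=\Tr_{\mathcal{Y}}(W h_f(T)W^*)$, and split
\[
h_f(T)=\big(h_f(T)-m_f\mathbf{1}_{\mathbb{C}^r}\big)+m_f\mathbf{1}_{\mathbb{C}^r}.
\]
For a unit eigenvector $v$ of $T$ with eigenvalue $\lambda$, the first summand is at least $(h_f(\lambda)-m_f)vv^*\geq0$, because $h_f-m_f\geq0$ on $[0,1]$. Put $V=\sum_kv_kV_k$, so that $Wv=\vectorise(V)$. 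Then
\[
\Tr_{\mathcal{Y}}(Wvv^*W^*)=(V^*V)^\top,
\]
whose largest eigenvalue is $\|V\|_\infty^2\geq\nu(\Omega)^2$. The second summand contributes $m_f\Omega^*(\mathbf{1}_{\mathcal{Y}})^\top$, which is $\geq\min\{m_f,0\}\|\Omega^*(\mathbf{1}_{\mathcal{Y}})\|_\infty\mathbf{1}$. Positivity of the partial trace, together with $\lambda_{\max}(A+B)\geq\lambda_{\max}(A)+\lambda_{\min}(B)$, gives the bound.

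\emph{Part (c), finite case.} Suppose $\mathcal{D}(T)<\infty$, so $h_f$ is finite on $\mathrm{spec}(T)$. By Lemma~\ref{lem:hf extended}, $h_f$ is convex and continuous into $(-\infty,+\infty]$. Hence $\{h_f<\infty\}$ is an interval that is relatively open in $[0,1]$. Choose a compact interval $I$ inside it containing a neighbourhood (relative to $[0,1]$) of $\mathrm{spec}(T)$. By Weyl's inequality, every $T'$ close to $T$ has spectrum in $I$. Then (a) and Lemma~\ref{modulus to zero} give $\mathcal{D}(T')\to\mathcal{D}(T)$.

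\emph{Part (c), infinite case.} Suppose $\mathcal{D}(T)=+\infty$, so $h_f(\lambda)=+\infty$ for some $\lambda\in\mathrm{spec}(T)$. If $T_n\to T$, Weyl's inequality gives $\lambda_n\in\mathrm{spec}(T_n)$ with $\lambda_n\to\lambda$. Continuity of $h_f$ into $(-\infty,+\infty]$ gives $h_f(\lambda_n)\to+\infty$. Since $\nu(\Omega)>0$ and $m_f$ is finite, (b) forces $\mathcal{D}(T_n)\to+\infty$.

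The main obstacle is this infinite case. It is precisely why (b) has to be stated with the dimension-free constant $\nu(\Omega)^2$ attached to a single eigenvalue: that makes the lower bound uniform along the sequence $T_n$. A secondary point is checking that the finite region is relatively open, which rests on the continuity statement of Lemma~\ref{lem:hf extended}.
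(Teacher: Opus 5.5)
Your proposal is correct and follows the paper's proof essentially step for step. Part (a) uses Theorem~\ref{thm:all f}(b) with Lemma~\ref{lem:partial trace}; part (b) pushes $h_f(T)\geq m_f\mathbf{1}_{\mathbb{C}^r}+(h_f(\lambda)-m_f)vv^*$ through $\Tr_{\mathcal{Y}}(W\cdot W^*)$ and uses $\|\sum_kv_kV_k\|_\infty\geq\nu(\Omega)$; part (c) uses the same finite/infinite split via Weyl's inequality. The only step left implicit, that $\mathcal{D}(T)=\lambda_{\max}(M)$ in (b) and the finiteness criterion used in (c), both come from Theorem~\ref{thm:all f}.
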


\begin{proof}
(a) By Theorem~\ref{thm:all f}(b) and Lemma~\ref{lem:partial trace}, as in the proof of Theorem~\ref{thm:unbounded}(b).

(b) We may assume $\mathcal{D}(T)<\infty$. Let $v$ be a unit eigenvector of $T$ for $\lambda$. Then \[h_f(T)\geq m_f\mathbf{1}_{\mathbb{C}^r}+(h_f(\lambda)-m_f)vv^*\] so, since conjugation by $W$ and the partial trace are positive,
\[
M=\Tr_{\mathcal{Y}}(W\,h_f(T)\,W^*)\geq m_f\Tr_{\mathcal{Y}}(WW^*)+(h_f(\lambda)-m_f)\Tr_{\mathcal{Y}}(Wvv^*W^*).
\]
With $B=\sum_kv_kV_k$ we have $Wv=\vectorise(B)$ and $\Tr_{\mathcal{Y}}(\vectorise(B)\vectorise(B)^*)=(B^*B)^\top$, whose largest eigenvalue is $\|B\|_\infty^2\geq\nu(\Omega)^2$. Moreover $\Tr_{\mathcal{Y}}(WW^*)=\Omega^*(\mathbf{1}_{\mathcal{Y}})^\top$. The inequality \[\lambda_{\max}(X+Y)\geq\lambda_{\max}(X)+\lambda_{\min}(Y)\] and Theorem~\ref{thm:all f}(b) give the claim.

(c) Let $\mathcal{F}_f=\{t\in[0,1]:h_f(t)<\infty\}$, which is $[0,1]$ minus at most the endpoints. If $h_f(T)$ is finite, choose a compact interval $I\subseteq \mathcal{F}_f$ whose relative interior in $[0,1]$ contains $\mathrm{spec}(T)$. By Weyl's inequality, $\mathrm{spec}(T')\subseteq I$ for $T'$ close to $T$, and (a) together with Lemma~\ref{modulus to zero} gives continuity at $T$. If $h_f(T)$ is not finite, then $\mathcal{D}(T)=+\infty$, and some $\lambda\in\mathrm{spec}(T)$ is an endpoint with $h_f(\lambda)=+\infty$. Every $T'$ close to $T$ has an eigenvalue $\lambda'$ close to $\lambda$, and $h_f(\lambda')\to+\infty$ by Lemma~\ref{lem:hf extended}. So $\mathcal{D}(T')\to+\infty$ by (b), since $\nu(\Omega)>0$.
\end{proof}

Thus the ARN derivative is a coordinate in which every maximal $f$-divergence is continuous, finite or not. This stands in contrast with the completely bounded trace norm (Corollary~\ref{impossibility corollary}). Part~(b) also shows how fast the divergence grows with the max-relative entropy, since $\lambda_{\max}(T)=\vartheta(D_{\max}(\Phi\|\Psi))$ with $\vartheta(x)=e^x/(1+e^x)$ by Lemma~\ref{lem:dmax}.

\begin{corollary}\label{cor:dmax growth}
For every operator convex $f$ and all $\Phi,\Psi\in\mathrm{CP}(\mathcal{X},\mathcal{Y})$ with $\Omega=\Phi+\Psi$,
\[
\widehat{D}_f(\Phi\|\Psi)\geq\nu(\Omega)^2\,h_f\big(\vartheta(D_{\max}(\Phi\|\Psi))\big)-|m_f|\big(\nu(\Omega)^2+\|\Omega^*(\mathbf{1}_{\mathcal{Y}})\|_\infty\big),
\]
and the same holds with $h_f\big(1-\vartheta(D_{\max}(\Psi\|\Phi))\big)$. For the Belavkin--Staszewski relative entropy, $h_f(\vartheta(x))=\vartheta(x)\,x\geq x/2$ for $x\geq0$, and $|m_f|<0.28$, so $\widehat{D}_{\mathrm{BS}}(\Phi\|\Psi)\geq\frac12\nu(\Omega)^2D_{\max}(\Phi\|\Psi)-0.28\big(\nu(\Omega)^2+\|\Omega^*(\mathbf{1}_{\mathcal{Y}})\|_\infty\big)$ whenever $D_{\max}(\Phi\|\Psi)\geq0$.
\end{corollary}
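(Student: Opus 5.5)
The plan is to specialise Theorem~\ref{thm:extended continuity}(b) to an extreme eigenvalue of the ARN derivative, and to identify that eigenvalue through Lemma~\ref{lem:dmax}. Let $T$ be the derivative of $\Phi$ with respect to $\Omega=\Phi+\Psi$, so that $\Phi=\Psi_T$, $\Psi=\Psi_{\mathbf{1}_{\mathbb{C}^r}-T}$ and $\widehat{D}_f(\Phi\|\Psi)=\mathcal{D}(T)$. Put $\lambda=\lambda_{\max}(T)=\|T\|_\infty$.

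For the first bound, Lemma~\ref{lem:dmax} gives $D_{\max}(\Phi\|\Psi)=\log\frac{\lambda}{1-\lambda}$, that is, $\lambda=\vartheta(D_{\max}(\Phi\|\Psi))$. This uses the conventions $\vartheta(-\infty)=0$ and $\vartheta(+\infty)=1$, which cover the cases $T=0$ and $\lambda=1$.
\begin{itemize}
\item If $h_f(\lambda)=+\infty$, then $h_f(T)$ is not finite, so $\mathcal{D}(T)=+\infty$ by Theorem~\ref{thm:all f}(a), and there is nothing to prove.
\item Otherwise, Theorem~\ref{thm:extended continuity}(b) at this $\lambda$ gives
\[
\mathcal{D}(T)\geq\nu(\Omega)^2h_f(\lambda)-\nu(\Omega)^2m_f+\min\{m_f,0\}\|\Omega^*(\mathbf{1}_{\mathcal{Y}})\|_\infty .
\]
Then use $-\nu(\Omega)^2m_f\geq-|m_f|\nu(\Omega)^2$ and $\min\{m_f,0\}\geq-|m_f|$. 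Here $m_f$ is finite by Lemma~\ref{lem:hf extended}.
\end{itemize}

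For the second bound, take instead $\lambda=\lambda_{\min}(T)$. The derivative of $\Psi$ with respect to $\Omega$ is $\mathbf{1}_{\mathbb{C}^r}-T$. Hence Lemma~\ref{lem:dmax}, applied with the roles of $\Phi$ and $\Psi$ exchanged, gives $\lambda_{\max}(\mathbf{1}_{\mathbb{C}^r}-T)=\vartheta(D_{\max}(\Psi\|\Phi))$, that is, $\lambda_{\min}(T)=1-\vartheta(D_{\max}(\Psi\|\Phi))$. The same two steps then apply verbatim, since Theorem~\ref{thm:extended continuity}(b) holds for every eigenvalue of $T$.

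For the Belavkin--Staszewski case take $f(t)=t\log t$, so that $h_f(t)=t\log\frac{t}{1-t}$. At $t=\vartheta(x)$ we have $\frac{t}{1-t}=e^x$, hence $h_f(\vartheta(x))=\vartheta(x)\,x$. For $x\geq0$ we have $\vartheta(x)\geq\frac12$, so $h_f(\vartheta(x))\geq x/2$. Substituting into the first bound, with $|m_f|$ replaced by the larger number $0.28$, gives the stated inequality.

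It remains to show $|m_f|<0.28$, which I expect to be the only real computation. Since $h_f\geq0$ on $[\frac12,1]$, the minimum lies in $(0,\frac12)$. There $h_f$ is convex by Lemma~\ref{lem:hf extended}, and its critical point satisfies $\log\frac{t}{1-t}+\frac{1}{1-t}=0$. Numerically the critical point is near $t\approx0.218$, with minimum value about $-0.2785$. To make this rigorous, I would bracket the critical point by sign checks of $h_f'$ at $t=0.21$ and $t=0.23$. On that interval, $h_f\geq t\log\frac{t}{1-t}$ can be bounded below by combining monotonicity of each factor: $t\leq0.23$ and $\log\frac{t}{1-t}\geq\log\frac{0.21}{0.79}$. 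The resulting crude bound is about $-0.305$, which is too loose, so refine the bracket (for example to $[0.215,0.221]$) until the bound exceeds $-0.28$.
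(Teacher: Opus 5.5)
Your proposal is correct and follows the paper's own argument. Like the paper, you apply Theorem~\ref{thm:extended continuity}(b) at $\lambda_{\max}(T)=\vartheta(D_{\max}(\Phi\|\Psi))$ and at $\lambda_{\min}(T)=1-\vartheta(D_{\max}(\Psi\|\Phi))$ (both identified through Lemma~\ref{lem:dmax}), bound the constant terms below by $-|m_f|\big(\nu(\Omega)^2+\|\Omega^*(\mathbf{1}_{\mathcal{Y}})\|_\infty\big)$, and specialise to $h_f(t)=t\log\frac{t}{1-t}$, where the paper simply quotes the minimum $-0.2785\ldots$.

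One numerical caveat for your rigorous check: your sample bracket $[0.215,0.221]$ only gives $0.221\log\frac{0.215}{0.785}\approx-0.286$, which is still below $-0.28$. A tighter bracket such as $[0.2175,0.2185]$ works and gives about $-0.2797$. More cleanly, at the critical point $u=t/(1-t)$ satisfies $ue^{u}=e^{-1}$ and $h_f(t)=-u$, so $|m_f|<0.28$ follows from $0.28\,e^{1.28}>1$ and the monotonicity of $ue^{u}$.
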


\begin{proof}
Apply Theorem~\ref{thm:extended continuity}(b) with $\lambda=\lambda_{\max}(T)$, respectively $\lambda=\lambda_{\min}(T)=1-\vartheta(D_{\max}(\Psi\|\Phi))$, and use $-\nu(\Omega)^2m_f+\min\{m_f,0\}\|\Omega^*(\mathbf{1}_{\mathcal{Y}})\|_\infty\geq-|m_f|(\nu(\Omega)^2+\|\Omega^*(\mathbf{1}_{\mathcal{Y}})\|_\infty)$. For $f(t)=t\log t$, $h_f(t)=t\log\frac{t}{1-t}$, and its minimum on $[0,1]$ is $-0.2785\ldots$.
\end{proof}

\section{Continuity in the completely bounded trace norm}\label{sec:four maps}

All estimates so far are stated in terms of ARN derivatives. Fault-tolerance thresholds, however, refer to additive worst-case errors, for which the completely bounded trace (diamond) norm is the standard measure \cite{KuengLongDohertyFlammia2016}. In this section we show that, for a fixed $\Omega$, the operator norm distance between derivatives and the completely bounded trace norm distance between the corresponding submaps are equivalent, with constants that do not depend on any dimension (Theorem~\ref{thm:two sided}). All continuity estimates of the previous sections therefore hold in the completely bounded trace norm on the splittings of a fixed map, and the impossibility example of Corollary~\ref{impossibility corollary} is identified as the case in which the constant degenerates. Pairs with different sums are treated in Section~\ref{sec:different sums}.

Let $\Omega\in\mathrm{CP}(\mathcal{X},\mathcal{Y})$ be nonzero with Kraus decomposition $\Omega(X)=\sum_{k=1}^rV_kXV_k^*$, $V_1,\dots,V_r$ linearly independent, and $W=[\vectorise(V_1)\ \cdots\ \vectorise(V_r)]$. Recall from Section~\ref{sec:prelim} the quantity $\nu(\Omega)$,
the smallest operator norm of a normalised linear combination of the Kraus operators. The minimum is attained and positive, since the unit sphere is compact and the $V_k$ are linearly independent. We first record the intrinsic meaning of the operator norm of a derivative.

\begin{lemma}\label{lem:relative distance}
For Hermitian $K\in\mathrm{L}(\mathbb{C}^r)$ and every unitarily invariant norm,
\[
\|K\|_\infty=\big\|J(\Omega)^{+1/2}\,J(\Psi_K)\,J(\Omega)^{+1/2}\big\|_\infty,
\]
where $J(\Omega)^{+1/2}$ is the square root of the Moore--Penrose inverse of $J(\Omega)$. Moreover, for $d\geq0$ we have $\|K\|_\infty\leq d$ if and only if $-d\,\Omega\leq\Psi_K\leq d\,\Omega$ in the completely positive order. In particular, if $\Phi_1+\Psi_1=\Phi_2+\Psi_2=\Omega$ with derivatives $T_1,T_2$, then $\|T_1-T_2\|_\infty$ is the smallest $d\geq0$ such that $-d\,\Omega\leq\Phi_1-\Phi_2\leq d\,\Omega$.
\end{lemma}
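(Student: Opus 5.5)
The plan is to reduce everything to the factorisation $J(\Omega)=WW^*$ with $W$ injective, and then push statements about $\Psi_K$ back to statements about $K$ via the pseudoinverse, exactly as in Lemma~\ref{lem:factor}.

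\emph{Step 1: the identity.} Let $W=U G^{1/2}$ be the polar decomposition, with $G=W^*W$ invertible and $U$ an isometry with range $\range(W)=\range J(\Omega)$. Then $J(\Omega)=UGU^*$, so its Moore--Penrose inverse is $UG^{-1}U^*$ and $J(\Omega)^{+1/2}=UG^{-1/2}U^*$. Since $J(\Psi_K)=WKW^*=UG^{1/2}KG^{1/2}U^*$ by Theorem~\ref{thm:arveson}, we get
\[
J(\Omega)^{+1/2}J(\Psi_K)J(\Omega)^{+1/2}=UG^{-1/2}G^{1/2}KG^{1/2}G^{-1/2}U^*=UKU^*,
\]
and $U$ is an isometry, so $UKU^*$ has the same nonzero singular values as $K$. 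This gives the identity for every unitarily invariant norm, in particular for $\|\cdot\|_\infty$.

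\emph{Step 2: the order characterisation.} By Choi's theorem, $-d\,\Omega\leq\Psi_K\leq d\,\Omega$ is equivalent to $\Psi_{d\mathbf{1}_{\mathbb{C}^r}\mp K}$ being completely positive. By Theorem~\ref{thm:arveson} this means $W(d\mathbf{1}_{\mathbb{C}^r}\mp K)W^*\geq0$. Conjugating by $W^+$ and using $W^+W=\mathbf{1}_{\mathbb{C}^r}$ shows that this is equivalent to $d\mathbf{1}_{\mathbb{C}^r}\mp K\geq0$; the converse direction is conjugation by $W$. For Hermitian $K$, the condition $-d\mathbf{1}_{\mathbb{C}^r}\leq K\leq d\mathbf{1}_{\mathbb{C}^r}$ holds exactly when $\|K\|_\infty\leq d$.

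\emph{Step 3: the last claim.} Apply Step~2 with $K=T_1-T_2$, noting $\Phi_1-\Phi_2=\Psi_{T_1}-\Psi_{T_2}=\Psi_{T_1-T_2}$ by linearity. The set of admissible $d$ is then $[\|T_1-T_2\|_\infty,\infty)$, so the minimum is attained and equals $\|T_1-T_2\|_\infty$.

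The only mildly delicate point is that $J(\Omega)$ may be singular, which is why the pseudoinverse is needed in Step~1. The isometry $U$ handles this cleanly, so I expect no real obstacle.
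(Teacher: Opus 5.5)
Your proof is correct and follows the paper's argument essentially step for step. The identity comes from the polar decomposition $W=UG^{1/2}$; the order characterisation comes from Choi's theorem together with conjugation by $W^+$ and $W$, and is then applied with $K=T_1-T_2$. One small point of attribution: the first equivalence in your Step~2 is really linearity of $T\mapsto\Psi_T$ plus the definition of the completely positive order, and Choi's theorem is only needed at the next step, when you pass to $W(d\mathbf{1}_{\mathbb{C}^r}\mp K)W^*\geq0$.
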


\begin{proof}
Write $W=UG^{1/2}$ with $G=W^*W$ invertible and $U$ an isometry (polar decomposition). Then $J(\Omega)=WW^*=UGU^*$, so $J(\Omega)^{+1/2}=UG^{-1/2}U^*$, and $J(\Psi_K)=WKW^*=UG^{1/2}KG^{1/2}U^*$. Hence $J(\Omega)^{+1/2}J(\Psi_K)J(\Omega)^{+1/2}=UKU^*$, which has the same singular values as $K$. For the second statement, $d\,\Omega\pm\Psi_K=\Psi_{d\mathbf{1}_{\mathbb{C}^r}\pm K}$ has Choi matrix $W(d\mathbf{1}_{\mathbb{C}^r}\pm K)W^*$. By Choi's theorem \cite{Choi1975}, both maps are completely positive if and only if $W(d\mathbf{1}_{\mathbb{C}^r}\pm K)W^*\geq0$, and conjugating by $W^+=(W^*W)^{-1}W^*$ and by $W$ shows that this holds if and only if $d\mathbf{1}_{\mathbb{C}^r}\pm K\geq0$, that is, $\|K\|_\infty\leq d$. The last statement follows with $K=T_1-T_2$, since $\Phi_1-\Phi_2=\Psi_{T_1-T_2}$.
\end{proof}
One of the principal pains with continuity results for channel divergences is dimensional constraints. The next lemma shows that in the same-sum case, we have a natural candidate which depends on the sum being considered. The primary application of this will be control from below of quantum channel divergences which consider two different decompositions of the same completely positive map.

\begin{lemma}\label{lem:nu}
Let $\Omega$ be as above and $J=J(\Omega)$, with $\lambda^+_{\min}(J)$ its smallest nonzero eigenvalue.
\begin{enumerate}
\item $\nu(\Omega)^2=\min\big\{\|\vectorise^{-1}(w)\|_\infty^2\,/\,\langle w,J^+w\rangle:\ w\in\range(J)\setminus\{0\}\big\}$, where $J^+$ is the Moore--Penrose inverse of $J$. In particular $\nu(\Omega)$ does not depend on the Kraus decomposition.
\item $\lambda^+_{\min}(J)/\min\{\dim\mathcal{X},\dim\mathcal{Y}\}\leq\nu(\Omega)^2\leq\lambda^+_{\min}(J)$.
\item If $\Omega'\in\mathrm{CP}(\mathcal{X},\mathcal{Y})$ satisfies $\Omega'\geq\Omega$ in the completely positive order and $\range J(\Omega')=\range J(\Omega)$, then $\nu(\Omega')\geq\nu(\Omega)$.
\end{enumerate}
\end{lemma}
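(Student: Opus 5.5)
For (1), write every $c\in\mathbb{C}^r$ as $w=Wc=\vectorise\big(\sum_kc_kV_k\big)$. As $c$ runs over $\mathbb{C}^r\setminus\{0\}$, $w$ runs over $\range(W)\setminus\{0\}=\range(J)\setminus\{0\}$, bijectively because $W$ is injective. We then need $\|c\|_2^2=\langle w,J^+w\rangle$. With the polar decomposition $W=UG^{1/2}$ used in the proof of Lemma~\ref{lem:relative distance}, we have $J^+=UG^{-1}U^*$, so $\langle Wc,J^+Wc\rangle=\langle G^{1/2}c,G^{-1}G^{1/2}c\rangle=\|c\|_2^2$. The quotient $\|\vectorise^{-1}(w)\|_\infty^2/\langle w,J^+w\rangle$ is therefore $\|\sum_kc_kV_k\|_\infty^2/\|c\|_2^2$. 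Minimising over $c$ gives $\nu(\Omega)^2$, and the right-hand side involves only $J$, so it does not depend on the Kraus decomposition.

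For (2), use the formula from (1). Every $w\in\range(J)$ satisfies $\lambda_{\max}(J)^{-1}\|w\|_2^2\leq\langle w,J^+w\rangle\leq\lambda^+_{\min}(J)^{-1}\|w\|_2^2$. For $B=\vectorise^{-1}(w)$ we also have $\|B\|_2^2/\min\{\dim\mathcal{X},\dim\mathcal{Y}\}\leq\|B\|_\infty^2\leq\|B\|_2^2$, since $\rank B\leq\min\{\dim\mathcal{X},\dim\mathcal{Y}\}$. The lower bound then follows at once: the quotient is at least $\|B\|_2^2/\min\{\dim\mathcal{X},\dim\mathcal{Y}\}$ divided by $\|w\|_2^2/\lambda^+_{\min}(J)$, which equals $\lambda^+_{\min}(J)/\min\{\dim\mathcal{X},\dim\mathcal{Y}\}$. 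For the upper bound, take $w$ to be a unit eigenvector of $J$ for $\lambda^+_{\min}(J)$. Then $\langle w,J^+w\rangle=\lambda^+_{\min}(J)^{-1}$ and $\|B\|_\infty^2\leq\|B\|_2^2=1$, so the quotient is at most $\lambda^+_{\min}(J)$, which gives $\nu(\Omega)^2\leq\lambda^+_{\min}(J)$.

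For (3), put $J'=J(\Omega')$, so that $J'\geq J$ with both matrices having the same range $\mathcal{R}$. Restricted to $\mathcal{R}$ both are invertible, and operator monotonicity of $x\mapsto -x^{-1}$ gives $J'^{+}\leq J^+$ on $\mathcal{R}$. Hence for every $w\in\mathcal{R}\setminus\{0\}$ the quotient in (1) for $J'$ is at least the quotient for $J$. The admissible sets of $w$ coincide because the ranges are equal, so taking minima gives $\nu(\Omega')\geq\nu(\Omega)$.

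The only delicate point is the identity $\langle Wc,J^+Wc\rangle=\|c\|_2^2$ together with the bijection between $c$ and $w$, which is where injectivity of $W$ enters. The rest consists of standard norm comparisons.
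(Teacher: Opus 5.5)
Your proof is correct and follows the paper's argument in all three parts. The only cosmetic difference is in (1): you verify $\langle Wc,J^+Wc\rangle=\|c\|_2^2$ through the polar decomposition $W=UG^{1/2}$, $J^+=UG^{-1}U^*$, whereas the paper writes $c=W^+w$ and uses the identity $(W^+)^*W^+=(WW^*)^+$.
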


\begin{proof}
(1) By linearity of $\vectorise$, $\sum_kc_kV_k=\vectorise^{-1}(Wc)$. As $c$ ranges over the unit sphere of $\mathbb{C}^r$, the vector $w=Wc$ ranges over $\range(W)=\range(WW^*)=\range(J)$, and $c=W^+w$ with $W^+=(W^*W)^{-1}W^*$, so that
\[
\|c\|_2^2=\langle w,(W^+)^*W^+w\rangle=\langle w,J^+w\rangle,
\]
because $(W^+)^*W^+=W(W^*W)^{-2}W^*=(WW^*)^+$. Hence $\nu(\Omega)^2$ is the minimum of $\|\vectorise^{-1}(w)\|_\infty^2$ over $w\in\range(J)$ with $\langle w,J^+w\rangle=1$, and the formula follows by homogeneity. It involves only $J(\Omega)$.

(2) For $w\in\range(J)$ we have $\|w\|_2^2/\lambda_{\max}(J)\leq\langle w,J^+w\rangle\leq\|w\|_2^2/\lambda^+_{\min}(J)$, while
\[
\frac{\|w\|_2^2}{\min\{\dim\mathcal{X},\dim\mathcal{Y}\}}\leq\frac{\|w\|_2^2}{\rank(\vectorise^{-1}(w))}\leq\|\vectorise^{-1}(w)\|_\infty^2\leq\|\vectorise^{-1}(w)\|_2^2=\|w\|_2^2 .
\] The lower bound follows from (1). For the upper bound take $w$ to be an eigenvector of $J$ for the eigenvalue $\lambda^+_{\min}(J)$.

(3) Put $\mathcal{R}=\range J(\Omega)=\range J(\Omega')$. By Choi's theorem \cite{Choi1975}, $J(\Omega')\geq J(\Omega)$; both operators are positive definite on $\mathcal{R}$ and vanish on $\mathcal{R}^\perp$, so operator monotonicity of inversion on $\mathcal{R}$ gives $J(\Omega')^+\leq J(\Omega)^+$, hence $\langle w,J(\Omega')^+w\rangle\leq\langle w,J(\Omega)^+w\rangle$ for $w\in \mathcal{R}$, and (1) gives $\nu(\Omega')^2\geq\nu(\Omega)^2$.
\end{proof}

\begin{remark}[Physical interpretation]\label{rem:phys-nu}
Since $\Omega^c(\rho)=\sum_{k,l}\tr(V_k\rho V_l^*)\,e_ke_l^*$ (proof of Theorem~\ref{effect theorem}), $\langle u,\Omega^c(\rho)u\rangle=\tr(V_u\rho V_u^*)$ with $V_u=\sum_k\overline{u_k}V_k$, whose maximum over states $\rho$ is $\|V_u\|_\infty^2$. Hence $\nu(\Omega)^2=\min_{\|u\|_2=1}\max_\rho\langle u,\Omega^c(\rho)u\rangle$, the excitability of the least excitable environment direction. As $\Psi_{\overline{u}\,\overline{u}^*}=V_u(\cdot)V_u^*$, changing $T$ by $\overline{u}\,\overline{u}^*$ for a minimising $u$, a change of norm $1$, changes the submaps by only $\nu(\Omega)^2$ in $|||\cdot|||_1$: this is the mechanism behind Corollary~\ref{impossibility corollary}, where $\nu(\Omega)\to0$ as $\theta\to0$. Noise helps: if $\Phi$ and $\Psi$ each contain a fraction $p$ of the completely depolarising channel, whose Choi matrix is $\mathbf{1}_{\mathcal{Y}\otimes\mathcal{X}}/n_{\mathcal{Y}}$ with $n_{\mathcal{Y}}=\dim\mathcal{Y}$, then $J(\Omega)\geq(2p/n_{\mathcal{Y}})\mathbf{1}_{\mathcal{Y}\otimes\mathcal{X}}$, and Lemma~\ref{lem:nu}(2) gives $\nu(\Omega)^2\geq2p/(n_{\mathcal{Y}}\min\{\dim\mathcal{X},n_{\mathcal{Y}}\})$; in particular Theorem~\ref{thm:dichotomy}(a) applies.
\end{remark}

\begin{theorem}\label{thm:two sided}
Let $\Omega$, $V_1,\dots,V_r$ be as above. For every Hermitian $K\in\mathrm{L}(\mathbb{C}^r)$,
\[
\nu(\Omega)^2\,\|K\|_\infty\leq|||\Psi_K|||_1\leq\|\Omega^*(\mathbf{1}_{\mathcal{Y}})\|_\infty\,\|K\|_\infty .
\]
The upper bound is attained for $K=\mathbf{1}_{\mathbb{C}^r}$, and the lower bound is attained for the completely dephasing map $\Omega(X)=\sum_{i=1}^nE_{ii}XE_{ii}$ on $\mathbb{C}^n$ and $K=\frac1n\mathbf{j}\mathbf{j}^*$, where $\mathbf{j}$ is the all-ones vector, one has $\nu(\Omega)^2=\frac1n$, $\|K\|_\infty=1$ and $|||\Psi_K|||_1=\frac1n$.
\end{theorem}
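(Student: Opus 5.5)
The upper bound is already available: it is the upper bound of Proposition~\ref{bilipschitz}, which comes from the proof of Lemma~\ref{lipschitz inequality} with $\Phi=\Omega$ and uses only that $K$ is Hermitian. Attainment at $K=\mathbf{1}_{\mathbb{C}^r}$ is immediate, since $\Psi_{\mathbf{1}_{\mathbb{C}^r}}=\Omega$ and $|||\Omega|||_1=\|\Omega^*(\mathbf{1}_{\mathcal{Y}})\|_\infty$.

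For the lower bound, first replace $K$ by $-K$ if necessary. This changes neither side, so we may assume $\|K\|_\infty=\lambda$ is an eigenvalue of $K$, with unit eigenvector $v$. Put $B=\sum_k v_kV_k$, so that $Wv=\vectorise(B)$ and $\|B\|_\infty\geq\nu(\Omega)$. Let $x$ be a unit right singular vector of $B$ and $y=Bx/\|B\|_\infty$, so that $\langle y,Bx\rangle=\|B\|_\infty$.

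The aim is to pick an input $\rho$ (with ancilla) and a test operator $Z$ with $\|Z\|_\infty\leq1$ such that $\tr(Z(\Psi_K\otimes\mathbf{1}_{\mathrm{L}(\mathcal{X})})(\rho))\geq\lambda\|B\|_\infty^2$. The candidate is
\[
\tr\big(yy^*\,\Psi_K(xx^*)\big)=\sum_{k,l}K_{kl}\,\langle y,V_kx\rangle\,\overline{\langle y,V_lx\rangle}=a^*Ka
\]
with $a_k=\overline{\langle y,V_kx\rangle}$. This is $\lambda\|B\|_\infty^2$ only if $a$ is parallel to $v$, which fails in general, and that is the main obstacle. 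To handle it, I would use the ancilla and choose $\rho$, $Z$ so that the effective vector is $v$ itself. The natural tool is the Choi/vec picture, $(\Psi_K\otimes\mathbf{1})(uu^*)=(\mathbf{1}\otimes A^\top)WKW^*(\mathbf{1}\otimes A^\top)^*$, tested against a rank-one projection $ww^*$. This gives $c^*Kc$ with $c=W^*(\mathbf{1}\otimes \overline{A})w$, whose entries are $c_k=\overline{\tr(V_k^*\,\cdot\,)}$-type pairings $\langle \vectorise(V_k),(\mathbf{1}\otimes\overline A)w\rangle$. The test vectors therefore generate the vectors $c=W^*z$ with $z$ ranging over the vectors $(\mathbf{1}\otimes\overline{A})w$, that is, over vec's of rank-one-times-$A$ operators.

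If that direct route does not work, I would instead pass to the dual side. The quantity $|||\Psi_K|||_1$ dominates $\|\Psi_K^*(Y)\|_\infty$ for every contraction $Y$ (duality with the cb operator norm). Writing $\Psi_K^*(Y)=S^*(Y\otimes K^\top)S$ and evaluating on suitable vectors, possibly with an ancilla-amplified $Y$, should produce $\lambda\langle \cdot\rangle$ terms weighted by $\|B\|^2$. I expect the correct choice to use $v$ to define a $2\times2$ (or ancilla) block that isolates the $v$-component, together with the sign structure of $K$ (split $K=\lambda vv^*+K'$ with $K'\leq\lambda$) to control cross terms.

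Finally, for the dephasing example, the Kraus operators $E_{ii}$ give $\|\sum c_iE_{ii}\|_\infty=\max|c_i|$, minimised at $1/\sqrt n$, so $\nu(\Omega)^2=1/n$. With $K=\frac1n\mathbf{j}\mathbf{j}^*$ we get $\Psi_K(X)=\frac1n\,\mathbf{j}'X\mathbf{j}'^*$ where $\mathbf{j}'=\sum_iE_{ii}=\mathbf{1}$, so $\Psi_K=\frac1n\mathbf{1}_{\mathrm{L}(\mathbb{C}^n)}$ and its diamond norm is $\frac1n$.
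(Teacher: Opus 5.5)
Your upper bound, its attainment at $K=\mathbf{1}_{\mathbb{C}^r}$, and the dephasing computation are correct. The lower bound, however, is not proved, and your first route aims at a false target. The inequality $|||\Psi_K|||_1\geq\lambda\|B\|_\infty^2$ with $B=\sum_kv_kV_k$ fails in general. Take $\dim\mathcal{X}=1$, $\mathcal{Y}=\mathbb{C}^2$, $V_1=e_1$, $V_2=e_1+\epsilon e_2$ and $K=\mathrm{diag}(1,-1)$. With $v=e_1$ you get $B=e_1$ and $\lambda\|B\|_\infty^2=1$, whereas $|||\Psi_K|||_1=\|e_1e_1^*-V_2V_2^*\|_1=\epsilon\sqrt{4+\epsilon^2}$. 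Passing to $-K$ and $v=e_2$ gives $\|B\|_\infty^2=1+\epsilon^2$, which is no better. Since $\dim\mathcal{X}=1$, no ancilla can rescue this. So $\|B\|_\infty\geq\nu(\Omega)$ is true but is not the mechanism. The bound is governed not by the size of $\sum_kv_kV_k$, but by how cheaply, in trace norm, the coefficient vector $v$ can be realised by pairing against the Kraus operators.

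Your vec-picture reduction is the right setup, and you stopped just before the key step. With $w=\vectorise(M)$ one has $(\mathbf{1}_{\mathcal{Y}}\otimes\overline{A})w=\vectorise(MA^*)$, so the effective vector is $c_k=\tr(V_k^*Z)$ with $Z=MA^*$. Its normalisation cost is $\|A\|_2\|M\|_2$, and the polar factorisation $Z=U|Z|$, $A=|Z|^{1/2}$, $M=U|Z|^{1/2}$ makes this cost equal to $\|Z\|_1$.

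What is missing is the existence of $Z\in\mathrm{L}(\mathcal{X},\mathcal{Y})$ with $\tr(Z^*V_k)=\overline{v_k}$ for all $k$ and $\|Z\|_1\leq\nu(\Omega)^{-1}$; this is exactly where $\nu(\Omega)$ enters. Consider the functional $\sum_kc_kV_k\mapsto\sum_kc_k\overline{v_k}$ on $\mathrm{span}\{V_1,\dots,V_r\}$. By Cauchy--Schwarz and the definition of $\nu(\Omega)$, its norm with respect to $\|\cdot\|_\infty$ is at most $\nu(\Omega)^{-1}$. Hahn--Banach extends it to all of $\mathrm{L}(\mathcal{X},\mathcal{Y})$ with the same norm, and trace duality represents the extension by such a $Z$.

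The input $\vectorise(A)\vectorise(A)^*/\|A\|_2^2$ and the norm-one test $\vectorise(M)\vectorise(M)^*/\|M\|_2^2$ then give
\[
\frac{|\langle v,Kv\rangle|}{\|Z\|_1^2}=\frac{\|K\|_\infty}{\|Z\|_1^2}\geq\nu(\Omega)^2\|K\|_\infty .
\]
Your alternative ``dual side'' sketch specifies no construction, so it cannot close this gap as written.
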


\begin{proof}
The upper bound is the proof of Lemma~\ref{lipschitz inequality} with $\Phi=\Omega$, which uses only that $K$ is Hermitian, together with $|||\Omega|||_1=\|\Omega^*(\mathbf{1}_{\mathcal{Y}})\|_\infty$.

For the lower bound let $v\in\mathbb{C}^r$ be a unit eigenvector of $K$ with $Kv=\lambda v$ and $|\lambda|=\|K\|_\infty$. On the subspace $\mathcal{S}=\mathrm{span}\{V_1,\dots,V_r\}$ of $\mathrm{L}(\mathcal{X},\mathcal{Y})$ define the linear functional $\varphi_0(\sum_kc_kV_k)=\sum_kc_k\overline{v_k}$, which is well defined by linear independence. By the Cauchy--Schwarz inequality and the definition of $\nu(\Omega)$,
\[
|\varphi_0(\textstyle\sum_kc_kV_k)|\leq\|c\|_2\|v\|_2=\|c\|_2\leq\nu(\Omega)^{-1}\|\textstyle\sum_kc_kV_k\|_\infty,
\]
so $\varphi_0$ has norm at most $\nu(\Omega)^{-1}$ with respect to the operator norm. By the Hahn--Banach theorem it extends to a functional on $\mathrm{L}(\mathcal{X},\mathcal{Y})$ of the same norm, and by the duality between the operator norm and the trace norm \cite[Section~1.1.3]{watrous2018theory} there is $Z\in\mathrm{L}(\mathcal{X},\mathcal{Y})$ with $\|Z\|_1\leq\nu(\Omega)^{-1}$ and $\tr(Z^*V_k)=\overline{v_k}$ for $k=1,\dots,r$.

Let $Z=U|Z|$ be a polar decomposition and put $A=|Z|^{1/2}\in\mathrm{L}(\mathcal{X})$ and $M=U|Z|^{1/2}\in\mathrm{L}(\mathcal{X},\mathcal{Y})$, so that $Z=MA^*$, $Z^*=AM^*$ and $\|A\|_2^2=\|M\|_2^2=\tr|Z|=\|Z\|_1$. Consider the density operator $\rho=\vectorise(A)\vectorise(A)^*/\|A\|_2^2$ on $\mathcal{X}\otimes\mathcal{X}$ and the operator $N=\vectorise(M)\vectorise(M)^*/\|M\|_2^2$ on $\mathcal{Y}\otimes\mathcal{X}$, which satisfies $\|N\|_\infty=1$. Since $\vectorise(A)=(\mathbf{1}_{\mathcal{X}}\otimes A^\top)\vectorise(\mathbf{1}_{\mathcal{X}})$ and $(\mathbf{1}_{\mathcal{Y}}\otimes A^\top)\vectorise(V_k)=\vectorise(V_kA)$ \cite[Section~1.1.2]{watrous2018theory}, the identity $J(\Psi_K)=WKW^*$ from the proof of Theorem~\ref{thm:arveson} gives
\[
(\Psi_K\otimes\mathbf{1}_{\mathrm{L}(\mathcal{X})})\big(\vectorise(A)\vectorise(A)^*\big)=\sum_{k,l=1}^rK_{kl}\,\vectorise(V_kA)\vectorise(V_lA)^*,
\]
and therefore
\begin{align*}
&\tr\Big(N\,(\Psi_K\otimes\mathbf{1}_{\mathrm{L}(\mathcal{X})})\big(\vectorise(A)\vectorise(A)^*\big)\Big)\,\|M\|_2^2\\
&\qquad=\sum_{k,l}K_{kl}\,\langle\vectorise(M),\vectorise(V_kA)\rangle\,\overline{\langle\vectorise(M),\vectorise(V_lA)\rangle}.
\end{align*}
Here $\langle\vectorise(M),\vectorise(V_kA)\rangle=\tr(M^*V_kA)=\tr(V_kAM^*)=\tr(V_kZ^*)=\tr(Z^*V_k)=\overline{v_k}$, so the right-hand side equals $\sum_{k,l}K_{kl}\overline{v_k}v_l=\langle v,Kv\rangle=\lambda$. Since $\|N\|_\infty=1$, the duality between the trace norm and the operator norm gives
\[
\big\|(\Psi_K\otimes\mathbf{1}_{\mathrm{L}(\mathcal{X})})(\rho)\big\|_1\geq\frac{|\lambda|}{\|A\|_2^2\|M\|_2^2}=\frac{\|K\|_\infty}{\|Z\|_1^2}\geq\nu(\Omega)^2\|K\|_\infty ,
\]
and the left-hand side is at most $|||\Psi_K|||_1$ because $\rho$ is a density operator.

For the example, $\|\sum_ic_iE_{ii}\|_\infty=\max_i|c_i|\geq n^{-1/2}$ for unit $c$, with equality for $c=n^{-1/2}\mathbf{j}$, so $\nu(\Omega)^2=\frac1n$; and $\Psi_K(X)=\frac1n\sum_{i,j}E_{ii}XE_{jj}=\frac1nX$, so $|||\Psi_K|||_1=\frac1n$, while $K$ is a rank-one projection.
\end{proof}

\begin{remark}
By Lemma~\ref{lem:nu}(2), Theorem~\ref{thm:two sided} implies
\[
\|K\|_\infty\leq\frac{\min\{\dim\mathcal{X},\dim\mathcal{Y}\}}{\lambda^+_{\min}(J(\Omega))}\,|||\Psi_K|||_1 ,
\]
which is never weaker than the operator norm consequence of Proposition~\ref{bilipschitz}, and is attained, up to the value of $\nu(\Omega)^2$, in the dephasing example. In practice $\lambda^+_{\min}(J(\Omega))$ is computable from the Choi matrix, whereas $\nu(\Omega)^2$ is the minimum of a nonconvex function on a sphere; Lemma~\ref{lem:nu}(2) provides the computable substitute. An earlier lower bound of this kind, with the dimensional factor $(\dim\mathcal{H})^{-2}$ in place of $\nu(\Omega)^2$, is due to Raginsky \cite[eq.~(17)]{Raginsky2003}, for maps on $\mathrm{B}(\mathcal{H})$ and derivatives taken with respect to a fixed reference channel.
\end{remark}

\begin{corollary}\label{cor:cb continuity}
Let $\Phi_1+\Psi_1=\Phi_2+\Psi_2=\Omega\neq0$ with $\Phi_i,\Psi_i\in\mathrm{CP}(\mathcal{X},\mathcal{Y})$, let $T_i$ be the ARN derivative of $\Phi_i$ with respect to $\Omega$, and put $\delta=|||\Phi_1-\Phi_2|||_1/\nu(\Omega)^2=|||\Psi_1-\Psi_2|||_1/\nu(\Omega)^2$. Then $\|T_1-T_2\|_\infty\leq\delta$, and consequently
\[
\big|\widehat{F}(\Phi_1,\Psi_1)-\widehat{F}(\Phi_2,\Psi_2)\big|\leq\frac{\|\Omega^*(\mathbf{1}_{\mathcal{Y}})\|_\infty}{\nu(\Omega)}\,|||\Phi_1-\Phi_2|||_1^{1/2},
\]
while $|||\Phi_1\,\sigma\,\Psi_1-\Phi_2\,\sigma\,\Psi_2|||_1\leq\|\Omega^*(\mathbf{1}_{\mathcal{Y}})\|_\infty\,\omega_{g_\sigma}(\delta)$ for every Kubo--Ando mean $\sigma$. More generally, each of the estimates of Theorems~\ref{Modulus of continuity results}, \ref{main theorem}, \ref{thm:renyi continuity} and~\ref{thm:maxf}(b) remains valid, under its hypotheses, with $\|T_1-T_2\|_\infty$ replaced by $\delta$.
\end{corollary}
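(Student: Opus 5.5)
The plan is to combine the lower bound of Theorem~\ref{thm:two sided} with the derivative-based estimates already proved. First I would identify the difference of the submaps as a single $\Psi_K$. Since $\Phi_i=\Psi_{T_i}$ and $T\mapsto\Psi_T$ is linear, we have $\Phi_1-\Phi_2=\Psi_{T_1-T_2}$. Because $\Psi_i=\Psi_{\mathbf{1}_{\mathbb{C}^r}-T_i}$, also $\Psi_1-\Psi_2=-\Psi_{T_1-T_2}$. So the two completely bounded trace norms agree and $\delta$ is well defined. Applying the lower bound of Theorem~\ref{thm:two sided} to the Hermitian operator $K=T_1-T_2$ gives $\nu(\Omega)^2\|T_1-T_2\|_\infty\leq|||\Phi_1-\Phi_2|||_1$, which is the claim $\|T_1-T_2\|_\infty\leq\delta$. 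Here $\nu(\Omega)>0$ because $\Omega\neq0$ has a Kraus decomposition with linearly independent operators.

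For the fidelity bound, I would insert this into Theorem~\ref{main theorem}. This gives $|\widehat{F}_1-\widehat{F}_2|\leq\|\Omega^*(\mathbf{1}_{\mathcal{Y}})\|_\infty\,\delta^{1/2}$, and $\delta^{1/2}=|||\Phi_1-\Phi_2|||_1^{1/2}/\nu(\Omega)$ gives the stated constant. For the means, I would use Theorem~\ref{Modulus of continuity results}, which needs one extra observation: $\omega_{g}$ is nondecreasing in $\delta$, since the defining supremum is over a set that grows with $\delta$. Hence $\omega_{g_\sigma}(\|T_1-T_2\|_\infty)\leq\omega_{g_\sigma}(\delta)$, and the estimate uses $|||\Omega|||_1=\|\Omega^*(\mathbf{1}_{\mathcal{Y}})\|_\infty$.

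The general statement follows in the same way. Each of the right-hand sides in Theorems~\ref{Modulus of continuity results}, \ref{main theorem}, \ref{thm:renyi continuity} and \ref{thm:maxf}(b) is a nondecreasing function of $\|T_1-T_2\|_\infty$: $t^{1/2}$, $t^\alpha+t^{1-\alpha}$, and the moduli $\omega_g$. So replacing the argument by the larger quantity $\delta$ preserves the inequalities.

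The only point needing care is that $\delta$ may exceed $1$, while the moduli were set up for derivatives with spectra in $[0,1]$. This is harmless. If $\delta\geq1\geq\|T_1-T_2\|_\infty$, the monotone functions still dominate, because $\omega_g$ is defined for all $\delta\geq0$ and the power functions are increasing. I therefore expect no genuine obstacle beyond stating the monotonicity explicitly.
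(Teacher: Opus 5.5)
Your proposal is correct and follows essentially the same route as the paper: apply the lower bound of Theorem~\ref{thm:two sided} to the Hermitian operator $K=T_1-T_2$, using $\Phi_1-\Phi_2=\Psi_{T_1-T_2}=-(\Psi_1-\Psi_2)$, and then use that each cited right-hand side is nondecreasing in $\|T_1-T_2\|_\infty$. Your remark that $\delta$ may exceed $1$ is a harmless extra check, since $\omega_g$ is defined and monotone for all $\delta\geq0$.
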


\begin{proof}
Since $\Phi_1-\Phi_2=\Psi_{T_1-T_2}$ and $\Psi_1-\Psi_2=-(\Phi_1-\Phi_2)$, Theorem~\ref{thm:two sided} gives $\|T_1-T_2\|_\infty\leq\delta$. The right-hand side of each cited estimate is a nondecreasing function of $\|T_1-T_2\|_\infty$, since the modulus $\omega_g$ is a supremum over sets that increase with $\delta$ and $\delta\mapsto\delta^{1/2}$ and $\delta\mapsto\delta^\alpha+\delta^{1-\alpha}$ are increasing. The displayed bound is Theorem~\ref{main theorem} with $\|T_1-T_2\|_\infty^{1/2}\leq\delta^{1/2}$.
\end{proof}

\section{Explicit and sharp moduli for all operator convex functions and all means}\label{sec:loewner}
 
Lemma~\ref{modulus to zero} shows that $\omega^I_g(\delta)\to0$ for every continuous $g$, but gives no rate. Explicit moduli were computed above only for individual functions: for the geometric mean in Proposition~\ref{K6}, for the weighted geometric means in Lemma~\ref{holder lemma}, and for the Belavkin--Staszewski relative entropy and the geometric R\'enyi divergences with $\alpha\in(1,2]$ in Section~\ref{sec:unbounded}. In this section we obtain explicit, dimension-free moduli for every operator convex $f$ and for every Kubo--Ando mean from their integral representations. The kernels of these representations are rational functions with a single pole. We compute their operator Lipschitz constants exactly (Theorem~\ref{thm:one pole}), and this makes the resulting constants optimal in many cases and within a factor $3$ of optimal in general (Theorem~\ref{thm:sharp lipschitz}).
 
\subsection{Operator convex functions}
 We first recall some well-known facts of operator convex functions. This is merely a gathering of facts, many of them can be found in the paper of \cite{HiaiMosonyi2017}. The theorems listed in this paper form the general backbone of the following subsections.

Every operator convex $f:(0,\infty)\to\mathbb{R}$ has a representation \cite[eq.~(2.2)]{HiaiMosonyi2017}
\begin{equation}\label{eq:loewner}
f(x)=a+b(x-1)+c(x-1)^2+\int_{[0,\infty)}\frac{(x-1)^2}{x+s}\,d\mu(s),\qquad x>0,
\end{equation}
with $a=f(1)$, $b=f'(1)$, $c\geq0$ and a positive measure $\mu$ on $[0,\infty)$ with $\int(1+s)^{-1}d\mu(s)<\infty$; conversely every such function is operator convex. Very similarly to the Kubo--Ando case, the value $c$ is the limit $c=\lim_{x\to \infty}f(x)/x^2$. We call $(a,b,c,\mu)$ the \emph{Loewner data} of $f$. Substituting $x=t/(1-t)$ gives, for $0<t<1$,
\begin{equation}\label{eq:hf kernels}
h_f(t)=a(1-t)+b(2t-1)+c\,\kappa_\infty(t)+\int_{[0,\infty)}\kappa_s(t)\,d\mu(s),
\end{equation}
where
\[
\kappa_s(t)=\frac{(2t-1)^2}{s(1-t)+t}\quad(s\geq0),\qquad\kappa_\infty(t)=\frac{(2t-1)^2}{1-t}.
\]
Since $\kappa_s(0)=1/s$ and $\kappa_s(1)=1$, monotone convergence gives $f(0^+)=a-b+c+\int s^{-1}d\mu(s)$ and $f'(\infty)=b+\mu([0,\infty))$ if $c=0$, while $f'(\infty)=+\infty$ if $c>0$. Thus $f$ is bounded in the sense of Definition~\ref{def:maxf} if and only if $c=0$, $\int_{[0,1]}s^{-1}d\mu(s)<\infty$ (in particular $\mu(\{0\})=0$) and $\mu$ is finite. The affine part of \eqref{eq:hf kernels} has slope $2b-a$. Every kernel $\kappa_s$, $s\in[0,\infty]$, is convex on its domain, vanishes to second order at $t=\frac12$, and for $s\neq1$ has a single pole, at $t=s/(s-1)\notin(0,1)$ (at $t=1$ for $s=\infty$), while $\kappa_1(t)=(2t-1)^2$.
 
\subsection{Operator Lipschitz constants and Schur multipliers}
 We will produce concrete, explicit bounds in a similar manner to the methods of Section~\ref{sec:general}. We will use continuity constants in a similar manner, and use that these have to converge to zero when positive semidefinite operators have close operator norms.

We first recall some useful facts which can be found in any foundational textbooks on matrix analysis  (e.g. \cite{hiaipetz2014intro}, \cite{bhatia1997matrix}). 

\begin{definition}[{\cite[Section~1.6]{bhatia1997matrix}}]
    The Schur product or Hadamard product of two matrices $A$ and $B$ is
defined to be the matrix $A \circ B$ whose $(i,j)$-entry is
$A_{ij}B_{ij}$. This has the following significant properties.
\begin{enumerate}
    \item $\|A \circ B\|_\infty \leq \|A\|_\infty\|B\|_\infty$ for all $A,B$.
    \item If $A,B \geq 0$, then $A \circ B \geq 0$.
\end{enumerate}
\end{definition}

\begin{theorem}[Daleckii--Krein formula {\cite[Theorem~3.25]{hiaipetz2014intro}}]
Assume that $f\colon (t_-,t_+)\to\mathbb{R}$ is a $C^1$-function and
$A=\mathrm{diag}(t_1,t_2,\dots,t_n)$ with $t_-<t_i<t_+$ $(1\leq i\leq n)$.
If $B=B^*$, then the derivative $t\mapsto f(A+tB)$ is a Schur product:
\begin{equation}\label{eq:3.16}
  \left.\frac{d}{dt}\, f(A+tB)\right|_{t=0} = M\circ B,
\end{equation}
where $M$ is the divided difference matrix:
\[
  M_{ij}=
  \begin{cases}
    \dfrac{f(t_i)-f(t_j)}{t_i-t_j} & \text{if } t_i-t_j\neq 0,\\[2ex]
    f'(t_i) & \text{if } t_i-t_j=0.
  \end{cases}
\]
\end{theorem}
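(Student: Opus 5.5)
The plan is to prove the formula first for polynomials by direct computation, and then pass to a general $C^1$ function by polynomial approximation in the $C^1$ norm. The delicate point is interchanging the limit with the derivative in $t$.

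\emph{Step 1 (monomials).} For $f(x)=x^k$ the product rule gives
\[
\frac{d}{dt}(A+tB)^k\Big|_{t=0}=\sum_{j=0}^{k-1}A^jBA^{k-1-j}.
\]
Since $A=\mathrm{diag}(t_1,\dots,t_n)$, the $(i,l)$ entry of this is $B_{il}\sum_{j=0}^{k-1}t_i^jt_l^{k-1-j}$. This sum equals $(t_i^k-t_l^k)/(t_i-t_l)$ when $t_i\neq t_l$, and equals $kt_i^{k-1}$ when $t_i=t_l$, which is exactly the divided difference. By linearity the formula holds for every polynomial $p$.

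The same computation holds at any Hermitian base point $H=U\,\mathrm{diag}(s_1,\dots,s_n)\,U^*$:
\[
\frac{d}{dt}\,p(H+tB)\Big|_{t=0}=U\big(M^{p}_{H}\circ(U^*BU)\big)U^*,
\]
where $M^p_H$ is the divided difference matrix of $p$ at the $s_i$. This uses $p(UXU^*)=Up(X)U^*$.

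\emph{Step 2 (approximation).} Fix a compact interval $[a,b]\subseteq(t_-,t_+)$ and $\tau>0$ such that the spectrum of $A+sB$ lies in $[a,b]$ for all $|s|\leq\tau$. This is possible by Weyl's inequality. By Weierstrass, choose polynomials $q_m\to f'$ uniformly on $[a,b]$, and set $p_m(x)=f(a)+\int_a^xq_m$. Then $p_m\to f$ and $p_m'\to f'$ uniformly on $[a,b]$.

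Divided differences are averages of the derivative:
\[
\frac{g(x)-g(y)}{x-y}=\int_0^1g'(y+\theta(x-y))\,d\theta .
\]
So every entry of $M^{p_m}_{H}-M^{f}_{H}$ is bounded by $\sup_{[a,b]}|p_m'-f'|$, uniformly over Hermitian $H$ with spectrum in $[a,b]$.

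\emph{Step 3 (interchange of limit and derivative; the main obstacle).} Put $F_m(s)=p_m(A+sB)$ for $|s|\leq\tau$. By Step 1, $F_m'(s)=U_s(M^{p_m}_{A+sB}\circ U_s^*BU_s)U_s^*$. Using the entrywise bound together with the crude estimate $\|X\circ Y\|_\infty\leq n\max_{ij}|X_{ij}|\,\|Y\|_\infty$, we get
\[
\|F_m'(s)-F_k'(s)\|_\infty\leq n\,\|B\|_\infty\sup_{[a,b]}|p_m'-p_k'|
\]
uniformly in $s$. The dimension factor is harmless here, since $n$ is fixed.

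It follows that $F_m'$ converges uniformly on $[-\tau,\tau]$. Also $F_m(s)\to f(A+sB)$ pointwise, by the spectral theorem and uniform convergence $p_m\to f$. By the standard theorem on uniform convergence of derivatives, $s\mapsto f(A+sB)$ is differentiable and its derivative is $\lim_mF_m'(s)$. At $s=0$ we have $U_0=\mathbf{1}$, so this limit is $\lim_m M^{p_m}_A\circ B$. By Step 2 that equals $M\circ B$, which proves \eqref{eq:3.16}.
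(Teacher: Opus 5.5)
Your proof is correct. The exact monomial computation, the $C^1$ polynomial approximation obtained by integrating a Weierstrass approximant of $f'$, and the uniform bound on $F_m'-F_k'$ together justify interchanging the limit with the derivative; the bound's factor $n$ is harmless here and could even be lowered to $\sqrt n$ via the Frobenius norm. The paper gives no proof of its own, only citing Hiai--Petz (Theorem~3.25) and Bhatia (Theorem~V.3.3), and your argument is essentially the standard polynomial-approximation proof found in those references.
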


For a compact interval $I$ and a function $g$ that is continuously differentiable on a neighbourhood of $I$, put
\[
\mathrm{Lip}^I(g)=\sup\Big\{\frac{\|g(X)-g(Y)\|_\infty}{\|X-Y\|_\infty}:\ X\neq Y\in\mathrm{L}(\mathcal{V})\text{ Hermitian with spectra in }I\Big\},
\]
the supremum being over all finite dimensions, so that $\omega^I_g(\delta)\leq\mathrm{Lip}^I(g)\,\delta$. For a finite set $\Lambda\subseteq I$ let $g^{[1]}_\Lambda=[g^{[1]}(\lambda,\lambda')]_{\lambda,\lambda'\in\Lambda}$ be the divided difference matrix, with $g^{[1]}(\lambda,\lambda)=g'(\lambda)$. For a real matrix $M$ let $\|M\|_{\mathrm S}=\sup\{\|M\circ Z\|_\infty:\|Z\|_\infty\leq1\}$ be the norm of the Schur multiplier $Z\mapsto M\circ Z$.

We can express the Lipschitz continuity modulus in terms of the divided difference matrix.

\begin{lemma}\label{lem:DK}
$\mathrm{Lip}^I(g)=\sup_\Lambda\|g^{[1]}_\Lambda\|_{\mathrm S}$, the supremum being over finite subsets $\Lambda\subseteq I$. In particular $\mathrm{Lip}^I(g)\geq\max_I|g'|$.
\end{lemma}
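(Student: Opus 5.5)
We prove the two inequalities $\mathrm{Lip}^I(g)\leq\sup_\Lambda\|g^{[1]}_\Lambda\|_{\mathrm S}$ and $\geq$ separately, and then read off the final claim by testing the Schur multiplier on a suitable matrix.

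\emph{Upper bound.} Let $X\neq Y$ be Hermitian with spectra in $I$, put $H=Y-X$ and $X_\tau=X+\tau H$ for $\tau\in[0,1]$. By convexity of $I$, each $X_\tau$ has spectrum in $I$. Since $g$ is $C^1$ near $I$, the map $\tau\mapsto g(X_\tau)$ is $C^1$, so
\[
g(Y)-g(X)=\int_0^1\frac{d}{d\tau}g(X_\tau)\,d\tau .
\]
Fix $\tau$ and diagonalise $X_\tau=U\,\mathrm{diag}(\lambda_1,\dots,\lambda_n)U^*$. The Daleckii--Krein formula, applied after conjugation by $U$, gives
\[
\frac{d}{d\tau}g(X_\tau)=U\big(g^{[1]}_{\lambda}\circ(U^*HU)\big)U^*,
\]
where $g^{[1]}_\lambda$ is the divided difference matrix indexed by the list $\lambda_1,\dots,\lambda_n$. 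Repeated eigenvalues give repeated rows and columns. Such a matrix is a compression of $g^{[1]}_{\Lambda'}\otimes\mathbf{j}\mathbf{j}^\top$-type blocks of $g^{[1]}_\Lambda$ with $\Lambda$ the set of distinct eigenvalues. Its Schur multiplier norm therefore equals $\|g^{[1]}_\Lambda\|_{\mathrm S}$: duplicating an index corresponds to Schur multiplying after an isometric embedding $Z\mapsto CZC^*$ with $C$ a 0/1 partial isometry grouping indices, and restricting recovers $\Lambda$. Hence
\[
\Big\|\frac{d}{d\tau}g(X_\tau)\Big\|_\infty\leq\sup_\Lambda\|g^{[1]}_\Lambda\|_{\mathrm S}\,\|H\|_\infty,
\]
and integrating over $\tau$ gives the upper bound.

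\emph{Lower bound.} Take a finite $\Lambda=\{\lambda_1,\dots,\lambda_n\}\subseteq I$ and $Z$ with $\|Z\|_\infty\leq1$. We may first take $Z$ Hermitian by the standard $2\times2$ dilation trick: replace $\Lambda$ by $\Lambda$ listed twice and $Z$ by $\begin{pmatrix}0&Z\\Z^*&0\end{pmatrix}$. This Hermitian matrix has the same norm, and the duplicated divided difference matrix has the same Schur norm, as noted above. Put $A=\mathrm{diag}(\lambda_i)$.

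The difficulty is that $A+tZ$ may leave $I$ when some $\lambda_i$ is an endpoint of $I$. We handle this by first shrinking: replace $\lambda_i$ by $\lambda_i^{(\eta)}=(1-\eta)\lambda_i+\eta c$, with $c$ the midpoint of $I$. For small $t$ the spectrum of $A^{(\eta)}+tZ$ then lies in $I$. Daleckii--Krein gives
\[
\frac{g(A^{(\eta)}+tZ)-g(A^{(\eta)})}{t}\to g^{[1]}_{\Lambda^{(\eta)}}\circ Z ,
\]
so $\|g^{[1]}_{\Lambda^{(\eta)}}\circ Z\|_\infty\leq\mathrm{Lip}^I(g)$. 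Letting $\eta\to0$, the divided differences converge entrywise by continuity of $g^{[1]}$ on $I\times I$, which holds because $g$ is $C^1$. This yields $\|g^{[1]}_\Lambda\circ Z\|_\infty\leq\mathrm{Lip}^I(g)$.

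\emph{Final claim.} For the consequence $\mathrm{Lip}^I(g)\geq\max_I|g'|$, take $\Lambda=\{\lambda\}$ a singleton, for which $\|g^{[1]}_\Lambda\|_{\mathrm S}=|g'(\lambda)|$.

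The main care point is the passage through repeated eigenvalues and through the boundary of $I$. Both are handled by the index-duplication invariance of the Schur norm and by the continuity of the divided differences.
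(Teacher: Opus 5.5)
Your proposal follows the same route as the paper. For ``$\leq$'' you integrate the Daleckii--Krein derivative along a segment. For ``$\geq$'' you use Daleckii--Krein at diagonal matrices, the $2\times2$ Hermitian dilation, and interior approximation plus continuity of $g^{[1]}$. Singletons give the last claim. The lower bound and the final claim are fine. The dilation silently uses that $g^{[1]}_\Lambda$ is real symmetric, so the lower-left block is $(M\circ Z)^*$. You also do not need duplication invariance there, because the Hermitian bound applies directly to $\mathrm{diag}(A,A)$.

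The gap is in the repeated-eigenvalue step of the upper bound. There you need $\|M'\|_{\mathrm S}\leq\|M\|_{\mathrm S}$, where $M=g^{[1]}_\Lambda$ for the set $\Lambda$ of distinct eigenvalues, and $M'$ is $M$ with rows and columns repeated by multiplicity. Your justification is Schur multiplying after $Z\mapsto CZC^*$ with a $0/1$ partial isometry, then restricting. That only controls $M'\circ Z'$ for $Z'$ of the special form $CZC^*$, or for its compression to representative indices. So it proves $\|M\|_{\mathrm S}\leq\|M'\|_{\mathrm S}$, which is the direction you do not use. A general $Z'$ has unrelated entries inside each multiplicity block, and nothing in your argument bounds $\|M'\circ Z'\|_\infty$ for such $Z'$.

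Any of the following closes the gap:
\begin{itemize}
\item The paper uses the Haagerup factorisation $M_{\lambda\mu}=\langle x_\lambda,y_\mu\rangle$ and reuses the same vectors along each fibre (Remark~\ref{rem:schur}).
\item Elementarily: for unit $x,y\in\mathbb{C}^n$ split into multiplicity blocks $x_\lambda,y_\mu$, let $\hat X,\hat Y$ be the isometries sending $e_\lambda$ to the normalised $\lambda$-block (any unit vector in that block if it vanishes). Then $\hat X^*(M'\circ Z')\hat Y=M\circ(\hat X^*Z'\hat Y)$. With $\alpha_\lambda=\|x_\lambda\|$ and $\beta_\mu=\|y_\mu\|$ this gives $\langle x,(M'\circ Z')y\rangle=\langle\alpha,(M\circ\hat X^*Z'\hat Y)\beta\rangle$, which is at most $\|M\|_{\mathrm S}\|Z'\|_\infty$ in modulus.
\item Avoid the issue altogether: perturb the repeated eigenvalues to distinct points of $I$, moving inward at an endpoint. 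Then use continuity of $g^{[1]}$ on $I\times I$ and of $\|\cdot\|_{\mathrm S}$ on matrices of fixed size.
\end{itemize}
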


\begin{remark}[Schur products and multipliers]\label{rem:schur}
The following lemma and proof uses four standard facts. First, the Schur product depends on the choice of basis, but operator norms do not: if $\Xi$ is a linear map on matrices and $U$ is unitary, then $X\mapsto U^*XU$ is an isometric bijection for the operator norm, so the induced norm of $\Xi$ is the same whether $\Xi$ is written in the standard basis or in the basis given by the columns of $U$. Second (Grothendieck--Haagerup), $\|M\|_{\mathrm S}\leq C$ if and only if there are vectors $x_i,y_j$ in a Hilbert space with $M_{ij}=\langle x_i,y_j\rangle$ and $\sup_i\|x_i\|\sup_j\|y_j\|\leq C$, and $\|M\|_{\mathrm S}$ is the least such $C$. Third, if $M'_{ij}=M_{\iota_1(i)\iota_2(j)}$ for surjective maps $\iota_1,\iota_2$, so that $M'$ is obtained from $M$ by repeating rows and columns, then $\|M'\|_{\mathrm S}=\|M\|_{\mathrm S}$. Indeed, a factorization $M_{ij}=\langle x_i,y_j\rangle$ gives $M'_{ij}=\langle x_{\iota_1(i)},y_{\iota_2(j)}\rangle$ with the same bound, so $\|M'\|_{\mathrm S}\leq\|M\|_{\mathrm S}$. Conversely, choose one index in each fibre of $\iota_1$ and of $\iota_2$; for $\|Z\|_\infty\leq1$, let $Z'$ be $Z$ placed in the chosen rows and columns and zero elsewhere, so that $\|Z'\|_\infty=\|Z\|_\infty$ and $M'\circ Z'$ is $M\circ Z$ placed in the same positions, whence $\|M\|_{\mathrm S}\leq\|M'\|_{\mathrm S}$. Fourth, if $M$ is real symmetric and $H$ is Hermitian, then $M\circ H$ is Hermitian, since $\overline{M_{ji}H_{ji}}=M_{ij}H_{ij}$.
\end{remark}

\begin{proof}[Proof of Lemma~\ref{lem:DK}]
Let $Z=\sum_{\lambda}\lambda P_\lambda$ be Hermitian with spectrum $\Lambda\subseteq I$. The Daleckii--Krein formula \cite[Theorem~V.3.3]{bhatia1997matrix} (see also \cite[Theorem~3.25]{hiaipetz2014intro}) gives
\[
  Dg(Z)[H]=\sum_{\lambda,\lambda'}g^{[1]}(\lambda,\lambda')\,P_\lambda HP_{\lambda'}.
\]
Choose an orthonormal basis adapted to the $P_\lambda$, and let $\lambda_i$ be the eigenvalue of $Z$ belonging to the $i$-th basis vector. The $(i,j)$ entry of $Dg(Z)[H]$ is then $g^{[1]}(\lambda_i,\lambda_j)H_{ij}$, with the convention $g^{[1]}(\lambda,\lambda)=g'(\lambda)$, that is,
\[
  Dg(Z)[H]=M\circ H,\qquad M_{ij}=g^{[1]}(\lambda_i,\lambda_j).
\]
The matrix $M$ is obtained from $g^{[1]}_\Lambda=\big(g^{[1]}(\lambda,\lambda')\big)_{\lambda,\lambda'\in\Lambda}$ by repeating each row and column according to the multiplicity of the corresponding eigenvalue. By the repetition and basis-independence facts of Remark~\ref{rem:schur}, $\|M\|_{\mathrm S}=\|g^{[1]}_\Lambda\|_{\mathrm S}$ and the choice of adapted basis is irrelevant, so
\[
  \|Dg(Z)\|\leq\|g^{[1]}_\Lambda\|_{\mathrm S}.
\]
The set of Hermitian operators with spectra in $I$ is convex, so for $X,Y$ in this set the segment $Y+u(X-Y)$, $u\in[0,1]$, stays in it, and
\[
  g(X)-g(Y)=\int_0^1Dg\big(Y+u(X-Y)\big)[X-Y]\,du.
\]
Therefore $\|g(X)-g(Y)\|_\infty\leq\sup_\Lambda\|g^{[1]}_\Lambda\|_{\mathrm S}\,\|X-Y\|_\infty$, which gives $\mathrm{Lip}^I(g)\leq\sup_\Lambda\|g^{[1]}_\Lambda\|_{\mathrm S}$.

For the reverse inequality, let $\Lambda=\{\lambda_1,\dots,\lambda_n\}$ lie in the interior of $I$, and put $Z=\mathrm{diag}(\lambda_1,\dots,\lambda_n)$ and $M=g^{[1]}_\Lambda$, which is real symmetric. For Hermitian $H$ of the same size and small $u$, the spectrum of $Z+uH$ lies in $I$, and by \cite[Theorem~3.25]{hiaipetz2014intro}, $Dg(Z)[H]=M\circ H$, so
\[
  M\circ H=\lim_{u\to0}u^{-1}\big(g(Z+uH)-g(Z)\big).
\]
Hence
\begin{align}\label{eq:hermitian-bound}
  \|M\circ H\|_\infty\leq\mathrm{Lip}^I(g)\,\|H\|_\infty\qquad\text{for all Hermitian }H.
\end{align}
To pass from Hermitian $H$ to arbitrary matrices, let $N$ be any $n\times n$ matrix and set
\[
  \widehat M=\begin{bmatrix}M&M\\M&M\end{bmatrix},\qquad
  \widehat N=\begin{bmatrix}0&N\\N^*&0\end{bmatrix}.
\]
Then $\widehat N$ is Hermitian with $\|\widehat N\|_\infty=\|N\|_\infty$, and
\[
  \widehat M\circ\widehat N=\begin{bmatrix}0&M\circ N\\M\circ N^*&0\end{bmatrix}
  =\begin{bmatrix}0&M\circ N\\(M\circ N)^*&0\end{bmatrix},
\]
where the lower-left block satisfies $M\circ N^*=\overline{M}\circ N^*=(M\circ N)^*$ because $M=M^\top=\overline M$. This matrix is Hermitian with norm $\|M\circ N\|_\infty$. Moreover $\widehat M$ is the matrix associated in the same way with $\mathrm{diag}(Z,Z)$, whose eigenvalues are those of $Z$ repeated twice and still lie in the interior of $I$, so the argument leading to \eqref{eq:hermitian-bound} applies to it and gives
\[
  \|M\circ N\|_\infty=\|\widehat M\circ\widehat N\|_\infty\leq\mathrm{Lip}^I(g)\|\widehat N\|_\infty=\mathrm{Lip}^I(g)\|N\|_\infty.
\]
Taking the supremum over $\|N\|_\infty\leq1$ yields $\|g^{[1]}_\Lambda\|_{\mathrm S}\leq\mathrm{Lip}^I(g)$ for $\Lambda$ in the interior of $I$, and continuity of $g^{[1]}$, and hence of $\Lambda\mapsto\|g^{[1]}_\Lambda\|_{\mathrm S}$, extends this to all finite $\Lambda\subseteq I$.

Finally, take $\Lambda=\{t\}$ with $t\in I$. Then $g^{[1]}_\Lambda$ is the $1\times1$ matrix $(g'(t))$, whose Schur norm is $|g'(t)|$, so $\mathrm{Lip}^I(g)\geq|g'(t)|$ for every $t\in I$, that is, $\mathrm{Lip}^I(g)\geq\max_I|g'|$.
\end{proof}
 
\begin{lemma}\label{lem:schur}
Let $M$ be a real matrix.
\begin{enumerate}
\item[(a)] $\|M\|_{\mathrm S}\geq\max_{i,j}|M_{ij}|$, and $\|M\|_{\mathrm S}=\max_iM_{ii}$ if $M\geq0$. For vectors $x,y$, $\|xy^\top\|_{\mathrm S}=\max_i|x_i|\max_j|y_j|$.
\item[(b)] \emph{(Compression.)} If $M$ is $k\times k$ and $\Pi$ is a real $n\times k$ matrix whose rows have $\ell^1$-norm at most $1$, then $\|\Pi M\Pi^\top\|_{\mathrm S}\leq\|M\|_{\mathrm S}$.
\item[(c)] \emph{($2\times2$ matrices.)} For real $q_1,q,q_2$ put $L=\max\{|q_1|,|q_2|\}$ and, for $t\geq L$, $A(t)=\sqrt{(t-q_1)(t-q_2)}$ and $B(t)=\sqrt{(t+q_1)(t+q_2)}$. Then $\big\|\big[\begin{smallmatrix}q_1&q\\q&q_2\end{smallmatrix}\big]\big\|_{\mathrm S}$ is the smallest $t\geq L$ with $A(t)+B(t)\geq2|q|$.
\item[(d)] If $q_1\leq q\leq q_2$, then $\big\|\big[\begin{smallmatrix}q_1&q\\q&q_2\end{smallmatrix}\big]\big\|_{\mathrm S}\leq\sqrt2\,\max\{|q_1|,|q_2|\}$.
\end{enumerate}
\end{lemma}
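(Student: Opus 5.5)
The four parts are independent. I would prove (a), (b) and (d) directly from the Grothendieck--Haagerup characterisation in Remark~\ref{rem:schur}, and (c) from its semidefinite form.

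For (a), I would test the multiplier on matrix units: $M\circ E_{ij}=M_{ij}E_{ij}$ has operator norm $|M_{ij}|$, which gives $\|M\|_{\mathrm S}\geq\max_{i,j}|M_{ij}|$. If $M\geq0$, write $M=X^*X$ with columns $x_i$; then $M_{ij}=\langle x_i,x_j\rangle$ and $\|x_i\|^2=M_{ii}$, so Grothendieck--Haagerup gives $\|M\|_{\mathrm S}\leq\max_iM_{ii}$. The entry bound gives the reverse inequality. For $xy^\top$, note that $(xy^\top)\circ Z=D_xZD_y$ with $D_x,D_y$ diagonal, which gives $\leq\max_i|x_i|\max_j|y_j|$. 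The entry bound at a maximising pair $(i,j)$ gives equality.

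For (b), I would take a factorisation $M_{ab}=\langle x_a,y_b\rangle$ with $\sup\|x_a\|\sup\|y_b\|=\|M\|_{\mathrm S}$ and set $x'_i=\sum_a\Pi_{ia}x_a$ and $y'_j=\sum_b\Pi_{jb}y_b$. Then $(\Pi M\Pi^\top)_{ij}=\langle x'_i,y'_j\rangle$. The triangle inequality together with the row $\ell^1$ bound gives $\|x'_i\|\leq\sup_a\|x_a\|$, and the same for $y'$.

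For (c), I would use the semidefinite form of Grothendieck--Haagerup. The bound $\|M\|_{\mathrm S}\leq t$ holds if and only if there are Gram vectors of norm $t^{1/2}$, that is, a real positive semidefinite $4\times4$ matrix $\big[\begin{smallmatrix}P&M\\M^\top&Q\end{smallmatrix}\big]$ whose diagonal entries all equal $t$. After rescaling a vector one may ask for norms exactly $t^{1/2}$, and taking real parts of inner products keeps everything real. Write $P=\big[\begin{smallmatrix}t&a\\a&t\end{smallmatrix}\big]$ and $Q=\big[\begin{smallmatrix}t&b\\b&t\end{smallmatrix}\big]$. Since $M$ is symmetric, conjugating by the block swap exchanges $a$ and $b$, so convexity of the feasible set lets me assume $a=b$. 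Then the block matrix is positive semidefinite if and only if $P\pm M\geq0$, that is, $t\geq L$, $|a+q|\leq B(t)$ and $|a-q|\leq A(t)$. Such an $a$ exists if and only if the two intervals meet, that is, $2|q|\leq A(t)+B(t)$. Since $A$ and $B$ are nondecreasing on $[L,\infty)$, the Schur norm is the least such $t$. This step needs the most care about the equivalence with the Gram formulation.

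For (d), it suffices to check that $t=\sqrt2 L$ satisfies $A(t)+B(t)\geq2|q|$. The substitution $(q_1,q,q_2)\mapsto(-q_2,-q,-q_1)$ preserves both the hypothesis and the Schur norm, so I may assume $q\geq0$, hence $0\leq q\leq q_2\leq L$. If $q_1\geq0$, then $(A+B)^2\geq A^2+B^2=2t^2+2q_1q_2\geq4L^2\geq4q^2$.

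If $q_1<0$, the bound is the main obstacle. I would first minimise over $q_1\in[-L,0]$: $A$ increases as $q_1$ decreases while $B$ decreases, and I would try to check that the sum is smallest at $q_1=-L$. That leaves the one-variable inequality
\[
\sqrt{(t+L)(t-q_2)}+\sqrt{(t-L)(t+q_2)}\geq2q_2,\qquad 0\leq q_2\leq L,
\]
with $t=\sqrt2L$. After normalising $L=1$, I would verify it by squaring, or by concavity in $q_2$ with a check at the endpoints. At $q_2=L$ it reads $\sqrt{(3+2\sqrt2)(3-2\sqrt2)}\cdot L\cdot\sqrt{\cdots}$-type expressions, which I expect to reduce to equality, showing that the constant $\sqrt2$ is tight.
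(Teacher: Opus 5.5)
Your proposal is correct and follows the paper's proof essentially step for step: Haagerup factorisations for (a) and (b), symmetrising the diagonal blocks and reducing to $P\pm M\geq0$ for (c), and for (d) the reduction to $q\geq0$, monotonicity in $q_1$ and concavity in $q_2$. Two small repairs are needed. In (c) the diagonal entries are raised to $t$ by adding a nonnegative diagonal matrix (padding the Gram vectors orthogonally), not by rescaling a single vector, which would change $M$. In (d) the monotonicity you only announce follows from $\partial_{q_1}(A+B)=\frac12\big(\sqrt{(t+q_2)/(t+q_1)}-\sqrt{(t-q_2)/(t-q_1)}\big)\geq0$ for $q_1\leq q_2$, and the endpoint value of your one-variable expression at $q_2=L$ is simply $2\sqrt{t^2-L^2}=2L$.
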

 
\begin{proof}
We use Haagerup's characterisation \cite[Chapter~8]{Paulsen2003}: $\|M\|_{\mathrm S}\leq t$ if and only if there are vectors $x_i,y_j$ in a Hilbert space with $M_{ij}=\langle x_i,y_j\rangle$ and $\|x_i\|^2,\|y_j\|^2\leq t$; equivalently, if and only if there are positive semidefinite $Y_1,Y_2$ with diagonal entries at most $t$ such that $\big[\begin{smallmatrix}Y_1&M\\M^\top&Y_2\end{smallmatrix}\big]\geq0$.
 
(a) The first claim follows from $|\langle x_i,y_j\rangle|\leq t$. If $M\geq0$, write $M_{ij}=\langle g_i,g_j\rangle$ with $\|g_i\|^2=M_{ii}$. For $xy^\top$ use one-dimensional vectors, rescaled.
 
(b) Put $x'_i=\sum_l\Pi_{il}x_l$ and $y'_j=\sum_{l'}\Pi_{jl'}y_{l'}$. Then $\langle x'_i,y'_j\rangle=(\Pi M\Pi^\top)_{ij}$ and $\|x'_i\|\leq\sum_l|\Pi_{il}|\,\|x_l\|\leq\sqrt t$.
 
(c) If $(Y_1,Y_2)$ is an admissible pair for the block decomposition from $M$, so is $(Y_2,Y_1)$, by conjugation with the flip of the two blocks, and hence so is $(\frac{Y_1+Y_2}2,\frac{Y_1+Y_2}2)$. Moreover \[\begin{pmatrix}Y_1&M\\M&Y_1\end{pmatrix}\geq0\] if and only if $Y_1\pm M\geq0$, and raising the diagonal of $Y_1$ to $t$ preserves both conditions. Hence $\|M\|_{\mathrm S}\leq t$ if and only if there is a real $\tilde q$ with \[\begin{pmatrix}t\pm q_1&\tilde q\pm q\\\tilde q\pm q&t\pm q_2\end{pmatrix}\geq0\] for both signs, that is, $t\geq L$, $|\tilde q-q|\leq A(t)$ and $|\tilde q+q|\leq B(t)$. Such a $\tilde q$ exists if and only if $2|q|\leq A(t)+B(t)$. Since $A$ and $B$ are nondecreasing on $[L,\infty)$, the claim follows.
 
(d) Replacing $M$ by $-M$ and exchanging the two indices preserves the Schur multiplier norm and maps $(q_1,q,q_2)$ to $(-q_2,-q,-q_1)$, so we may assume $q\geq0$, and by scaling $L=1$; then $0\leq q\leq q_2\leq1$ and $-1\leq q_1\leq q_2$. Let $t=\sqrt2$. Since $\partial_{q_1}(A+B)=\frac12\big(\sqrt{(t+q_2)/(t+q_1)}-\sqrt{(t-q_2)/(t-q_1)}\big)\geq0$ for $q_1\leq q_2$, we have $A(t)+B(t)\geq\varphi(q_2):=\sqrt{(t+1)(t-q_2)}+\sqrt{(t-1)(t+q_2)}$. The function $\varphi$ is concave, $\varphi(0)>0$ and $\varphi(1)=2\sqrt{t^2-1}=2$, so $\varphi(q_2)\geq2q_2\geq2q$ on $[0,1]$, and (c) gives the claim.
\end{proof}

We now combine the previous lemmas into a theorem which will begin to allow us to express the continuity moduli for the operator convex functions in terms of their integral representations. We notice that for the below theorem, the structure of the function is that of \eqref{eq:loewner}.
 
\begin{theorem}\label{thm:one pole}
Let $I=[t_-,t_+]$ with $t_-<t_+$, let $p\in\mathbb{R}\setminus I$, $\wp(t)=|t-p|^{-1}$, and $g=C+\beta t+w\wp$ with $C,\beta,w\in\mathbb{R}$. Then:
\begin{enumerate}
\item[(a)] \[\mathrm{Lip}^I(g)=\big\|g^{[1]}_{\{t_-,t_+\}}\big\|_{\mathrm S}=\Big\|\begin{bmatrix}g'(t_-)&g^{[1]}(t_-,t_+)\\g^{[1]}(t_-,t_+)&g'(t_+)\end{bmatrix}\Big\|_{\mathrm S},\] which is given explicitly by Lemma~\ref{lem:schur}(c);
\item[(b)] $\max_I|g'|\leq\mathrm{Lip}^I(g)\leq\sqrt2\,\max_I|g'|$;
\item[(c)] $\mathrm{Lip}^I(g)=\max_I|g'|$ if $\beta w\,(p-t_-)\geq0$, or if $g'(t_0)=0$ for some $t_0\in I$ with $\wp(t_0)\leq\frac12(\wp(t_-)+\wp(t_+))$. The latter holds whenever $t_0$ lies on the far side of the midpoint of $I$ from $p$, the midpoint included;
\item[(d)] the constant $\sqrt2$ in (b) is optimal: $k_z(t)=t+2z^2/(t+z)$ satisfies $\max_{[0,1]}|k_z'|=1$ and $\mathrm{Lip}^{[0,1]}(k_z)\to\sqrt2$ as $z\downarrow0$.
\end{enumerate}
\end{theorem}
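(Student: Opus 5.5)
The plan is to reduce everything to a rank-two structure of the divided differences of $g$, and then read off (a)--(d) from Lemmas~\ref{lem:DK} and~\ref{lem:schur}. Put $\varepsilon=\operatorname{sign}(p-t_-)$, which is the same for every $t\in I$ because $p\notin I$, so that $\wp(t)=\varepsilon/(p-t)$. A direct computation gives
\[
\wp^{[1]}(s,t)=\frac{\varepsilon}{(p-s)(p-t)}=\varepsilon\,\wp(s)\wp(t),\qquad\text{hence}\qquad g^{[1]}(s,t)=\beta+\varepsilon w\,\wp(s)\wp(t).
\]
This also holds on the diagonal, where $g^{[1]}(t,t)=g'(t)$. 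So for a finite set $\Lambda\subseteq I$ we have
\[
g^{[1]}_\Lambda=X_\Lambda\,D\,X_\Lambda^\top,\qquad D=\mathrm{diag}(\beta,\varepsilon w),
\]
where the rows of $X_\Lambda$ are $(1,\wp(\lambda))$. Since $\wp$ is monotone on $I$, each $\wp(\lambda)$ equals $\theta_\lambda\wp(t_-)+(1-\theta_\lambda)\wp(t_+)$ for some $\theta_\lambda\in[0,1]$. Then $(1,\wp(\lambda))=\theta_\lambda(1,\wp(t_-))+(1-\theta_\lambda)(1,\wp(t_+))$, and therefore
\[
g^{[1]}_\Lambda=\Pi\,g^{[1]}_{\{t_-,t_+\}}\,\Pi^\top,
\]
where the rows of $\Pi$ are $(\theta_\lambda,1-\theta_\lambda)$ and so have $\ell^1$-norm $1$.

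Lemma~\ref{lem:schur}(b) then gives $\|g^{[1]}_\Lambda\|_{\mathrm S}\leq\|g^{[1]}_{\{t_-,t_+\}}\|_{\mathrm S}$. Lemma~\ref{lem:DK} turns this into (a), and the reverse inequality is immediate by taking $\Lambda=\{t_-,t_+\}$.

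For (b), the lower bound is Lemma~\ref{lem:DK}. For the upper bound, note that $g''=2\varepsilon w\,\wp^3$ has constant sign on $I$, so $g'$ is monotone on $I$. By the mean value theorem the off-diagonal entry $g^{[1]}(t_-,t_+)$ then lies between $g'(t_-)$ and $g'(t_+)$. Lemma~\ref{lem:schur}(d), applied after possibly exchanging the indices, gives the bound $\sqrt2\max\{|g'(t_-)|,|g'(t_+)|\}=\sqrt2\max_I|g'|$.

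For (c) I treat the two conditions separately.
\begin{itemize}
\item If $\beta w(p-t_-)\geq0$, then $\beta$ and $\varepsilon w$ have the same sign. So $D$ is semidefinite, and $g^{[1]}_{\{t_-,t_+\}}$ is $\pm$ positive semidefinite. Lemma~\ref{lem:schur}(a) then gives norm $\max|g'(t_\pm)|$.
\item If $g'(t_0)=0$, then $\beta=-\varepsilon w\,\wp(t_0)^2$, and the $2\times2$ matrix is $\varepsilon w\big[\wp(t_i)\wp(t_j)-\wp(t_0)^2\big]_{i,j}$. Normalise $\varepsilon w=1$ and write $a_\pm=\wp(t_\pm)$, $c=\wp(t_0)$, so that $q_{1}=a_-^2-c^2$, $q_{2}=a_+^2-c^2$, $q=a_-a_+-c^2$. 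I would verify the criterion of Lemma~\ref{lem:schur}(c) at $t=L=\max|q_i|$, namely $A(L)+B(L)\geq2|q|$. Here $c$ lies between $a_-$ and $a_+$, so $q_1$ and $q_2$ have opposite signs, and the inequality should reduce to the hypothesis $c\leq\frac12(a_-+a_+)$ after squaring. I expect this algebra to be the main obstacle; if squaring is messy I would instead build an explicit Haagerup factorisation from the vectors $(a_\pm,c)$.
\item The final geometric remark then follows from convexity of $\wp$: on the far side of the midpoint from $p$, $\wp(t_0)$ is at most $\wp$ at the midpoint, which is at most the average of the endpoint values.
\end{itemize}

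For (d), take $I=[0,1]$ and $p=-z$, so that $k_z=t+2z^2\wp$. Then $k_z'=1-2z^2/(t+z)^2$, whose absolute value has maximum $1$, attained at $t=0$ and approached as $t\to\infty$; on $[0,1]$ one checks $|k_z'|\leq1$. The $2\times2$ matrix has entries
\[
q_1=-1,\qquad q_2=1-\frac{2z^2}{(1+z)^2}\to1,\qquad q=1-\frac{2z}{1+z}\to1.
\]
By Lemma~\ref{lem:schur}(c) the Schur norm is the smallest $t\geq1$ with $\sqrt{t^2-1}+\sqrt{t^2-1}\geq2$ in the limit, which gives $t=\sqrt2$. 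Continuity of $A$ and $B$ in the parameters then gives $\mathrm{Lip}^{[0,1]}(k_z)\to\sqrt2$, and this shows the constant in (b) is optimal.
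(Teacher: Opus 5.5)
Your arguments for (a), (b), (d) and for the first and last assertions of (c) are the paper's. They use the identity $g^{[1]}(x,y)=\beta+\varepsilon w\,\wp(x)\wp(y)$, the compression $g^{[1]}_\Lambda=\Pi\,g^{[1]}_{\{t_-,t_+\}}\Pi^\top$ with Lemma~\ref{lem:schur}(b), Lemma~\ref{lem:schur}(d) for (b), semidefiniteness when $\beta\varepsilon w\geq0$, and the endpoint limit $(q_1,q,q_2)\to(-1,1,1)$ for (d). There is one slip: $g''=w\wp''=2w\wp^3$, without the factor $\varepsilon$, but only the constant sign is used.

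The step you have not proved is the second condition in (c). You reduce it to the criterion $A(L)+B(L)\geq2|q|$ of Lemma~\ref{lem:schur}(c) and stop at ``should reduce to the hypothesis after squaring''. That inequality is the whole content of this case, so this is a gap. The inequality does hold, and it is easier than you expect.

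With your normalisation $\varepsilon w=1$, put $a=\max\{a_-,a_+\}$ and $b=\min\{a_-,a_+\}$, so that $0<b\leq c\leq\frac12(a+b)$. Order the indices so that $q_1=a^2-c^2\geq0\geq q_2=b^2-c^2$ and $q=ab-c^2$. You first need $L=q_1$, that is, $2c^2\leq a^2+b^2$, which follows from $c^2\leq\frac14(a+b)^2\leq\frac12(a^2+b^2)$. Then $A(L)=0$, and the criterion becomes the single inequality $B(L)^2=2(a^2-c^2)(a^2+b^2-2c^2)\geq4(ab-c^2)^2$. The difference factors as
\[
(a^2-c^2)(a^2+b^2-2c^2)-2(ab-c^2)^2=(a-b)\big(a^2(a+b)-c^2(3a-b)\big).
\]
The hypothesis $c\leq\frac12(a+b)$ gives $c^2(3a-b)\leq\frac14(a+b)(3a^2+2ab-b^2)=a^2(a+b)-\frac14(a+b)(a-b)^2$. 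Hence $\|g^{[1]}_{\{t_-,t_+\}}\|_{\mathrm S}=L=\max_I|g'|$, and (a) gives the claim.

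The paper handles this case without Lemma~\ref{lem:schur}(c). With $c_0=\wp(t_0)$ and $v=\wp-c_0\mathbf{j}$ it writes $g^{[1]}_\Lambda=\varepsilon w\,(c_0v\mathbf{j}^\top+c_0\mathbf{j}v^\top+vv^\top)$ and bounds the three rank-one terms by Lemma~\ref{lem:schur}(a). This gives $|w|\big((c_0+N)^2-c_0^2\big)$ with $N=\max_\Lambda|\wp-c_0|$, which is at most $|w|(\wp_{\max}^2-c_0^2)=\max_I|g'|$ when $c_0\leq\frac12(\wp_{\min}+\wp_{\max})$. That argument is shorter and bounds every $g^{[1]}_\Lambda$ directly.

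Your exact $2\times2$ route costs the computation above, but the factorisation shows it proves slightly more. The weaker condition $c^2(3a-b)\leq a^2(a+b)$ already suffices, because it still implies $2c^2\leq a^2+b^2$, by $(a^2+b^2)(3a-b)-2a^2(a+b)=(a-b)^3$.
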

 
\begin{proof}
Let $\zeta=+1$ if $p>t_+$ and $\zeta=-1$ if $p<t_-$. We then have $\wp^{[1]}(x,y)=\zeta\wp(x)\wp(y)$ for $x,y\in I$. This is because if $p\notin I$, the sign of $t-p$ is constant on $I$, equal to $-\zeta$, so $\wp(t)=-\zeta/(t-p)$ and $\wp(x)\wp(y)=1/((x-p)(y-p))$. For $x\neq y$ in $I$,
\[
  \wp(x)-\wp(y)=-\zeta\Big(\frac1{x-p}-\frac1{y-p}\Big)=\zeta\,\frac{x-y}{(x-p)(y-p)},
\]
so $\wp^{[1]}(x,y)=\zeta\wp(x)\wp(y)$. For $x=y$, $\wp'(x)=\zeta/(x-p)^2=\zeta\wp(x)^2$, so the formula holds on the diagonal too.
\[
g^{[1]}(x,y)=\beta+\zeta w\,\wp(x)\wp(y).
\]
(a) The function $\wp$ is continuous and monotone on $I$, so for $x\in I$ there is $\lambda(x)\in[0,1]$ with $\wp(x)=\lambda(x)\wp(t_-)+(1-\lambda(x))\wp(t_+)$. Put $\lambda_{t_-}=\lambda$ and $\lambda_{t_+}=1-\lambda$. Since $\lambda_{t_-}+\lambda_{t_+}=1$ and the kernel $\beta+\zeta w\,u_1u_2$ is affine in $u_1$ and in $u_2$,
\[
g^{[1]}(x,y)=\sum_{t,t'\in\{t_-,t_+\}}\lambda_t(x)\,\lambda_{t'}(y)\,g^{[1]}(t,t'),\qquad x,y\in I.
\]
Thus $g^{[1]}_\Lambda=\Pi g^{[1]}_{\{t_-,t_+\}}\Pi^\top$, where $\Pi$ has the nonnegative entries $\lambda_t(x)$ and rows summing to $1$. Lemma~\ref{lem:schur}(b) and Lemma~\ref{lem:DK} give $\mathrm{Lip}^I(g)\leq\|g^{[1]}_{\{t_-,t_+\}}\|_{\mathrm S}$, and the reverse inequality is Lemma~\ref{lem:DK} with $\Lambda=\{t_-,t_+\}$.
 
(b) The first inequality is in Lemma~\ref{lem:DK}. Since $g''$ has constant sign on $I$, $g^{[1]}(t_-,t_+)$ lies between $g'(t_-)$ and $g'(t_+)$, and $g'$ is monotone, so $\max_I|g'|=\max\{|g'(t_-)|,|g'(t_+)|\}$. Lemma~\ref{lem:schur}(d), applied to $g$ or to $-g$, gives the second inequality.
 
(c) If $\beta\zeta w\geq0$, then $\zeta\,\mathrm{sgn}(w)\,g^{[1]}_\Lambda=|\beta|\mathbf{j}\mathbf{j}^\top+|w|\wp\wp^\top\geq0$, where $\wp$ also denotes the vector $(\wp(x))_{x\in\Lambda}$, so Lemma~\ref{lem:schur}(a) gives $\|g^{[1]}_\Lambda\|_{\mathrm S}=\max_\Lambda|g'|$. Otherwise put $c_0=\sqrt{-\beta/(\zeta w)}>0$, so that $g'=\zeta w(\wp^2-c_0^2)$ and, with $v=\wp-c_0\mathbf{j}$,
\[
g^{[1]}_\Lambda=\zeta w\big(\wp\wp^\top-c_0^2\mathbf{j}\mathbf{j}^\top\big)=\zeta w\big(c_0\,v\mathbf{j}^\top+c_0\,\mathbf{j}v^\top+vv^\top\big).
\]
By Lemma~\ref{lem:schur}(a), $\|g^{[1]}_\Lambda\|_{\mathrm S}\leq|w|(2c_0N+N^2)=|w|\big((c_0+N)^2-c_0^2\big)$ with $N=\max_\Lambda|\wp-c_0|$. If $c_0\leq\frac12(\wp_{\min}+\wp_{\max})$, where $\wp_{\min},\wp_{\max}$ are the values of $\wp$ at the endpoints, then $N\leq\wp_{\max}-c_0$ and the bound is $|w|(\wp_{\max}^2-c_0^2)=|g'(t_1)|$, where $t_1$ is the endpoint closest to $p$. If $g'(t_0)=0$ then $\wp(t_0)=c_0$, which gives the stated condition. Finally, $\wp$ is convex and decreases away from $p$; if $t_0$ lies on the far side of the midpoint $t_{\mathrm{mid}}$, then $\wp(t_0)\leq\wp(t_{\mathrm{mid}})\leq\frac12(\wp(t_-)+\wp(t_+))$.
 
(d) Here $k_z'(t)=1-2z^2/(t+z)^2$ increases from $-1$ to $1-2z^2/(1+z)^2$, and $k_z^{[1]}(0,1)=1-2z/(1+z)$. The endpoint matrix tends to $\big[\begin{smallmatrix}-1&1\\1&1\end{smallmatrix}\big]$, whose Schur multiplier norm is $\sqrt2$ by Lemma~\ref{lem:schur}(c), since $A(t)+B(t)=2\sqrt{t^2-1}$. The norm in Lemma~\ref{lem:schur}(c) depends continuously on $(q_1,q,q_2)$, and (a) gives the claim.
\end{proof}
 
Theorem~\ref{thm:one pole}(d) shows that operator convexity does not imply $\mathrm{Lip}^I(g)=\max_I|g'|$, even for a single pole. The operator Lipschitz constant can exceed the scalar one because $g'$ changes sign on $I$, from $-1$ at the endpoint near the pole to almost $1$ on the rest of the interval.

We now construct sharp constants for the kernels described in \eqref{eq:hf kernels}. This will allow us to define continuity constants for operator divergences defined through convex perspectives. 
\begin{corollary}[Sharp constants for the kernels]\label{cor:kernels}
Let $I\subseteq[0,1]$ be a compact interval and $\varrho_I=\max_{t\in I}|2t-1|$.
\begin{enumerate}
\item[(a)]
Let $s\ge0$ and $\varpi_s(t)=s(1-t)+t$, and assume $d_s=\min_{t\in I}\varpi_s(t)>0$
(for $t\in[0,1]$ the function $\varpi_s(t)=(1-t)s+t$ is a convex combination of $s$ and $1$,
so this fails only if $s=0$ and $0\in I$). Then for $t\in I$,
\[
  |\kappa_s'(t)|=|2t-1|\Big(\frac{2}{\varpi_s(t)}+\frac{1+s}{\varpi_s(t)^2}\Big),
\]
the bracket is positive, and therefore
\[
  |\kappa_s'(t)|
  \leq\varrho_I\Big(\frac{2}{d_s}+\frac{1+s}{d_s^2}\Big).
\] and
\[
\mathrm{Lip}^I(\kappa_s)\leq\varrho_I\Big(\frac{2}{d_s}+\frac{1+s}{d_s^2}\Big),
\]
with equality, and equal to $\max_I|\kappa_s'|$, if $I$ is symmetric about $\frac12$. The same holds for $\kappa_\infty$, with $\varpi_\infty(t)=1-t$ and $1+s$ replaced by $1$. In addition $\|\kappa_s(X)-\kappa_s(Y)\|_\infty\leq\mathrm{osc}_I\kappa_s=\max_I\kappa_s-\min_I\kappa_s$.
\item[(b)] $\mathrm{Lip}^{[0,1-\epsilon]}(\kappa_\infty)=\epsilon^{-2}-4$ for $0<\epsilon\leq\frac13$.
\item[(c)] For $s>0$, the kernel $\eta_s(t)=(1+s)t(1-t)/(1+(s-1)t)$ of \eqref{eq:mean kernels} below satisfies $\mathrm{Lip}^{[0,1]}(\eta_s)=\max_{[0,1]}|\eta_s'|=(1+s)\max\{1,s^{-1}\}$.
\item[(d)] For $s>0$, $g_s(t)=\frac{t}{1+s}-\frac{t}{t+s}$ satisfies $\mathrm{Lip}^{[0,1]}(g_s)=\frac{1}{s(1+s)}$ and $-\frac{1}{1+s}\leq g_s\leq0$ on $[0,1]$.
\item[(e)] $d_s\geq(1+s)\min_{t\in I}\min\{t,1-t\}$.
\end{enumerate}
\end{corollary}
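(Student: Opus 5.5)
The plan is to reduce every part to Theorem~\ref{thm:one pole}. Each kernel is a quadratic or linear polynomial divided by a linear function, so partial fractions write it as $C+\beta t+w\wp(t)$ with a single pole $p\notin I$. Then $\mathrm{Lip}^I$ is the Schur norm of the $2\times2$ endpoint matrix, and it equals $\max_I|g'|$ whenever condition (c) of that theorem holds. For $\kappa_s$ with $s\neq1$ the pole is at $p=s/(s-1)$; for $s=1$ the kernel is the polynomial $(2t-1)^2$, handled directly by Lemma~\ref{lem:schur}.

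For the derivative formula in (a), differentiate $\kappa_s=(2t-1)^2/\varpi_s$ and use the identity $(1-s)(2t-1)=2\varpi_s(t)-(1+s)$. This gives
\[
\kappa_s'(t)=(2t-1)\big(2\varpi_s(t)+1+s\big)/\varpi_s(t)^2 ,
\]
which is the stated formula. The bracket is positive, and the pointwise bound follows from $|2t-1|\le\varrho_I$ and $\varpi_s\ge d_s$. Since $\kappa_s'(1/2)=0$, Theorem~\ref{thm:one pole}(c) applies with $t_0=\frac12$ whenever $I$ is symmetric about $\frac12$ (the midpoint is included). This gives $\mathrm{Lip}^I(\kappa_s)=\max_I|\kappa_s'|$, and evaluating at the endpoint of $I$ nearest the pole gives equality with the bound. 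For a general $I$ I would instead repeat the factorisation from the proof of Theorem~\ref{thm:one pole}(c) with $c_0=\wp(\frac12)$. It bounds $\|\kappa_s{}^{[1]}_\Lambda\|_{\mathrm S}$ by $|w|\big((c_0+N)^2-c_0^2\big)$, where $N=\max_I|\wp-c_0|$. Since $|w|(\wp(t)^2-c_0^2)=|\kappa_s'(t)|$ and $\wp\le 1/d_s$ up to the normalisation of $w$, I then compare this with $\varrho_I(2/d_s+(1+s)/d_s^2)$. The comparison is straightforward when $\wp$ exceeds $c_0$ somewhere on $I$. The case where $I$ lies entirely on the pole-far side, so that $\wp<c_0$ on $I$, is the main obstacle. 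There I would try the alternative that $\beta w(p-t_-)$ has the favourable sign, or else pass to a larger interval on which $\min\varpi_s$ is still $d_s$. The $\kappa_\infty$ case is identical. The oscillation bound holds because $\kappa_s(X)$ and $\kappa_s(Y)$ both have spectra in $\kappa_s(I)$.

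For (b), write $\kappa_\infty(t)=4(1-t)-4+1/(1-t)$, so that $\beta=-4$, $w=1$ and $p=1$. Then $\kappa_\infty'(t)=-4+(1-t)^{-2}$ vanishes at $t_0=\frac12$. On $[0,1-\epsilon]$ the values at the endpoints are $-3$ and $\epsilon^{-2}-4\ge5$. The second condition of Theorem~\ref{thm:one pole}(c) reads $\wp(\frac12)=2\le\frac12(1+\epsilon^{-1})$, which is exactly $\epsilon\le\frac13$. Hence $\mathrm{Lip}=\epsilon^{-2}-4$.

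For (c), decompose $\eta_s$ by partial fractions, with pole $p=1/(1-s)\notin[0,1]$. I will check that $\eta_s'$ vanishes at a point on the far side of the midpoint from the pole, or else that $\beta w(p-t_-)\ge0$. Either way (c) of the theorem gives $\max|\eta_s'|$, attained at an endpoint: $\eta_s'(0)=(1+s)$ and $\eta_s'(1)=-(1+s)/s$.

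For (d), write $g_s=\frac{t}{1+s}-1+\frac{s}{t+s}$. Its endpoint data are $g_s'(0)=-\frac1{s(1+s)}$, $g_s'(1)=0$ and $g_s^{[1]}(0,1)=g_s(1)-g_s(0)=0$, so the endpoint matrix is diagonal. Its Schur norm is $\frac1{s(1+s)}$ by Lemma~\ref{lem:schur}(c) with $q=0$. The range bound follows from $g_s(t)=-\frac{t(1-t)}{(t+s)(1+s)}$.

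Part (e) is immediate: $\varpi_s(t)=s(1-t)+t\ge(1+s)\min\{t,1-t\}$.
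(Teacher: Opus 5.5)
There are two genuine gaps, in (a) for general $I$ and in (c); the rest is fine. That includes (b), (e), the derivative formula, the pointwise and oscillation bounds, and the symmetric case of (a) via Theorem~\ref{thm:one pole}(c) with $t_0=\frac12$.

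The first gap is that the bound $\mathrm{Lip}^I(\kappa_s)\leq\varrho_I\big(\frac{2}{d_s}+\frac{1+s}{d_s^2}\big)$ for general $I$ is not proved. Your estimate $|w|\big((c_0+N)^2-c_0^2\big)$ exceeds the target when $I$ lies on the far side of $\frac12$ from the pole. For $s=0$ and $I=[0.9,1]$ one has $\kappa_0=4t-4+1/t$, $w=1$, $\wp=1/t$, $c_0=2$ and $N=1$, so your bound is $5$, while the target is $2/0.9+1/0.81<3.5$. Neither of your fallbacks works:
\begin{itemize}
\item Since $\kappa_s'=\beta+\zeta w\wp^2$ vanishes at $\frac12$, $\beta\zeta w=-w^2\wp(\frac12)^2<0$. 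So the sign condition $\beta w(p-t_-)\geq0$ never holds for any $\kappa_s$.
\item $\min_I\varpi_s$ is attained at the endpoint nearest the pole. An enlargement that preserves $d_s$ can therefore only extend away from the pole, which keeps the interval on the far side and can only increase $\varrho_I$.
\end{itemize}
The same applies to $\kappa_\infty$. The paper avoids Schur multipliers here. With $R(X)=(2X-\mathbf{1}_{\mathcal{V}})\varpi_s(X)^{-1}$ one has $R(X)-R(Y)=(1+s)\varpi_s(X)^{-1}(X-Y)\varpi_s(Y)^{-1}$, hence $\kappa_s(X)-\kappa_s(Y)=2(X-Y)R(Y)+(1+s)R(X)(X-Y)\varpi_s(Y)^{-1}$. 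The bounds $\|R(\cdot)\|_\infty\leq\varrho_I/d_s$ and $\|\varpi_s(Y)^{-1}\|_\infty\leq1/d_s$ then give the estimate for every $I$ at once. Within your framework, the asymmetric factorisation $c_0^2\mathbf{j}\mathbf{j}^\top-\wp\wp^\top=(c_0\mathbf{j}-\wp)\,c_0\mathbf{j}^\top+\wp\,(c_0\mathbf{j}-\wp)^\top$ also closes the far-side case. It gives $|w|(c_0-\min_I\wp)(c_0+\max_I\wp)$, which is below the target by the factor $\min_I\wp/\max_I\wp$.

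The second gap is that for every $s\neq1$ neither criterion you plan to check in (c) is true. Since $\eta_s'(0)=1+s>0>\eta_s'(1)$, the sign condition is impossible. The zero $t_0=1/(1+\sqrt s)$ lies on the same side of $\frac12$ as the pole $p=1/(1-s)$; for $s>1$, $t_0<\frac12$ and $p<0$. What you need is the general hypothesis $\wp(t_0)\leq\frac12(\wp(0)+\wp(1))$ of Theorem~\ref{thm:one pole}(c). For $s>1$, with $\wp(t)=(s-1)/(1+(s-1)t)$, it reads $(s-1)/\sqrt s\leq(s-1)(s+1)/(2s)$, i.e.\ $2\sqrt s\leq1+s$. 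The case $s<1$ then follows from $\eta_s(1-t)=\eta_{1/s}(t)$. The case $s=1$ has no pole and is missing from your plan. Treat it as a quadratic: divided differences are affine in each variable, so the compression argument applies, with endpoint matrix $\mathrm{diag}(2,-2)$.

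Finally, in (d) you have $g_s'(1)=1/(1+s)^2$, not $0$. The endpoint matrix is still diagonal because $g_s(0)=g_s(1)=0$, so your conclusion $\frac1{s(1+s)}$ stands. That route is shorter than the paper's, which checks the zero-location condition of Theorem~\ref{thm:one pole}(c).
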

 
\begin{proof}
(a) The identity \[(1-s)(2t-1)=2\varpi_s(t)-(1+s)\] gives \[\kappa_s'=(2t-1)\big(2\varpi_s+(1+s)\big)/\varpi_s^2\] which is the formula. For the bound, with $X,Y\in\mathrm{L}(\mathcal{V})$ Hermitian with spectra in $I$, put \[\varpi_s(X)=s(\mathbf{1}_{\mathcal{V}}-X)+X\] and \[R(X)=(2X-\mathbf{1}_{\mathcal{V}})\varpi_s(X)^{-1}\] so that \[\kappa_s(X)=(2X-\mathbf{1}_{\mathcal{V}})R(X)\] \[\|\varpi_s(X)^{-1}\|_\infty\leq d_s^{-1}\] and $\|R(X)\|_\infty\leq\varrho_I/d_s$. Expanding, \[(2X-\mathbf{1}_{\mathcal{V}})\varpi_s(Y)-\varpi_s(X)(2Y-\mathbf{1}_{\mathcal{V}})=(1+s)(X-Y)\] since the terms in $XY$ cancel. Hence \[R(X)-R(Y)=(1+s)\,\varpi_s(X)^{-1}(X-Y)\varpi_s(Y)^{-1}\] using that $\varpi_s(X)^{-1}$ commutes with $2X-\mathbf{1}_{\mathcal{V}}$, and
\[
\kappa_s(X)-\kappa_s(Y)=2(X-Y)R(Y)+(1+s)\,R(X)(X-Y)\varpi_s(Y)^{-1}
\]
which gives \[\|\kappa_s(X)-\kappa_s(Y)\|_\infty\leq\varrho_I\big(\frac2{d_s}+\frac{1+s}{d_s^2}\big)\|X-Y\|_\infty\] If $I$ is symmetric about $\frac{1}{2}$, then at the endpoint where $\varpi_s$ is minimal both $|2t-1|=\varrho_I$ and $\varpi_s=d_s$, so $\max_I|\kappa_s'|$ equals the bound, and Lemma~\ref{lem:DK} gives equality. For $\kappa_\infty$ use \[\varpi_\infty(X)=\mathbf{1}_{\mathcal{V}}-X\] and \[(2X-\mathbf{1}_{\mathcal{V}})(\mathbf{1}_{\mathcal{V}}-Y)-(\mathbf{1}_{\mathcal{V}}-X)(2Y-\mathbf{1}_{\mathcal{V}})=X-Y\] The last statement holds because $\kappa_s(X)$ and $\kappa_s(Y)$ have spectra in $[\min_I\kappa_s,\max_I\kappa_s]$.
 
(b) We have \[\kappa_\infty(t)=(1-t)^{-1}-4t\] a one-pole function with $p=1$, $\beta=-4$ and $w=1$, and $\kappa_\infty'$ vanishes at $t_0=\frac12$. On $[0,1-\epsilon]$, $\wp(t)=(1-t)^{-1}$ takes the values $1$ and $\epsilon^{-1}$ at the endpoints and $\wp(t_0)=2$, so the condition of Theorem~\ref{thm:one pole}(c) is \[2\leq\frac12(1+\epsilon^{-1})\] that is, $\epsilon\leq\frac13$. Then \[\mathrm{Lip}=\max\{|\kappa_\infty'(0)|,\kappa_\infty'(1-\epsilon)\}=\max\{3,\epsilon^{-2}-4\}=\epsilon^{-2}-4\]
 
(c) For $s=1$, $\eta_1(t)=2t(1-t)$, and for every quadratic $g$ the matrix $g^{[1]}_\Lambda$ is a compression of $g^{[1]}_{\{t_-,t_+\}}$ as in the proof of Theorem~\ref{thm:one pole}(a), with $\lambda$ affine; here \[g^{[1]}_{\{0,1\}}=\begin{pmatrix}2&0\\0&-2\end{pmatrix}\] has $\infty$-norm $2$.

For $s>1$, $-\eta_s$ is a one-pole function with pole $p=-1/(s-1)<0$; then \[\wp(t)=(s-1)/(1+(s-1)t)\] with $\wp(0)=s-1$ and $\wp(1)=(s-1)/s$. The derivative $\eta_s'$ vanishes at $t_0=1/(1+\sqrt s)$, where $\wp(t_0)=(s-1)/\sqrt s$. The condition $\wp(t_0)\leq\frac12(\wp(0)+\wp(1))$ reads $2\sqrt s\leq1+s$, so Theorem~\ref{thm:one pole}(c) gives \[\mathrm{Lip}=\max\{\eta_s'(0),|\eta_s'(1)|\}=\max\{1+s,(1+s)/s\}=1+s\] For $s<1$ use $\eta_s(1-t)=\eta_{1/s}(t)$.
 
(d) We have \[g_s(t)=\frac{t}{1+s}-1+\frac{s}{t+s}\] which is a one-pole function with $p=-s$, $\beta=\frac1{1+s}$ and $w=s$. Its derivative \[g_s'(t)=\frac1{1+s}-\frac{s}{(t+s)^2}\] vanishes at \[t_0=\sqrt{s(1+s)}-s\in(0,1)\] where $\wp(t_0)=(s(1+s))^{-1/2}$. The condition of Theorem~\ref{thm:one pole}(c) is \[(s(1+s))^{-1/2}\leq\frac12\big(\frac1s+\frac1{1+s}\big)\] that is, $2\sqrt{s(1+s)}\leq1+2s$, which holds. Hence \[\mathrm{Lip}=\max\{|g_s'(0)|,g_s'(1)\}=\max\{\frac1{s(1+s)},\frac1{(1+s)^2}\}=\frac1{s(1+s)}\] Finally $|g_s(t)|=\frac{t(1-t)}{(1+s)(t+s)}\leq\frac{1-t}{1+s}$.
 
(e) $s(1-t)+t\geq(1+s)\min\{t,1-t\}$.
\end{proof}
 We finally come to the explicit modulus data for the integral representative of the operator perspective channel divergences. 
\begin{theorem}\label{thm:integral modulus}
Let $f$ be operator convex with Loewner data $(a,b,c,\mu)$, and let $I\subseteq[0,1]$ be a compact interval on which $h_f$ is finite. With $\varrho_I$, $d_s$ and $d_\infty$ as in Corollary~\ref{cor:kernels}, for $\delta\geq0$,
\[
\omega^I_{h_f}(\delta)\leq|2b-a|\,\delta+c\,\varrho_I\Big(\frac{2}{d_\infty}+\frac{1}{d_\infty^2}\Big)\delta+\int_{[0,\infty)}\min\Big\{\varrho_I\Big(\frac{2}{d_s}+\frac{1+s}{d_s^2}\Big)\delta,\ \mathrm{osc}_I\kappa_s\Big\}\,d\mu(s),
\]
where the $c$-term is absent if $c=0$. Consequently, Theorems~\ref{thm:maxf}(b), \ref{thm:unbounded}(b) and~\ref{thm:extended continuity}(a) hold with this explicit modulus.
\end{theorem}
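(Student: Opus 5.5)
The plan is to decompose $h_f$ via \eqref{eq:hf kernels} and apply the triangle inequality term by term to $h_f(X)-h_f(Y)$, for Hermitian $X,Y$ with spectra in $I$ and $\|X-Y\|_\infty\le\delta$.

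First I would justify the kernel decomposition at the operator level. Since $h_f$ is finite on $I$, either $c=0$ or $1\notin I$. Likewise either $\mu(\{0\})=0$ and $\int s^{-1}d\mu$ converges near $0$, or $0\notin I$. For each $t\in I$ the integral $\int\kappa_s(t)\,d\mu(s)$ converges. Because $X$ and $Y$ are diagonalisable, the functional calculus gives
\[
h_f(X)=a(\mathbf{1}-X)+b(2X-\mathbf{1})+c\,\kappa_\infty(X)+\int\kappa_s(X)\,d\mu(s),
\]
with the integral converging entrywise in an eigenbasis of $X$; the same holds for $Y$. Endpoint values in $I$ are covered by continuity of $h_f$ (Lemma~\ref{lem:hf extended}) and monotone convergence, as in the derivation of $f(0^+)$ and $f'(\infty)$.

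Next I bound the terms. The affine part is $(a-b)\mathbf{1}+(2b-a)X$, so its difference has norm at most $|2b-a|\,\delta$. For the $c$-term, Corollary~\ref{cor:kernels}(a) applied to $\kappa_\infty$ gives $c\,\varrho_I(2/d_\infty+1/d_\infty^2)\delta$; here $d_\infty>0$ because $1\notin I$ when $c>0$. For the integral I pass the norm inside:
\[
\Big\|\int(\kappa_s(X)-\kappa_s(Y))\,d\mu\Big\|_\infty\le\int\|\kappa_s(X)-\kappa_s(Y)\|_\infty\,d\mu .
\]
I would justify this as a Bochner integral in the finite-dimensional space $\mathrm{L}(\mathcal{V})$, after checking that $s\mapsto\kappa_s(X)$ is measurable and integrable. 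Inside the integral, Corollary~\ref{cor:kernels}(a) gives the bound $\min\{\mathrm{Lip}\cdot\delta,\ \mathrm{osc}_I\kappa_s\}$. The case $s=0$ with $0\in I$ has $d_0=0$, but then $\mu(\{0\})=0$, so it does not contribute.

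The step I expect to be delicate is the integrability in $s$. For large $s$ I would use $\mathrm{osc}_I\kappa_s\le\max_I\kappa_s\lesssim 1/s$ when $0\in I$, which is integrable against $\int(1+s)^{-1}d\mu<\infty$. If the integral still diverges, the inequality holds trivially. Either way the right-hand side is a valid upper bound in $[0,\infty]$.

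Taking the supremum over $X,Y$ gives the bound on $\omega^I_{h_f}(\delta)$. Finally, the three cited continuity theorems bound the divergence difference by $\|\Omega^*(\mathbf{1}_{\mathcal{Y}})\|_\infty\,\omega^I_{h_f}(\|T_1-T_2\|_\infty)$, so inserting this explicit modulus gives the consequence.
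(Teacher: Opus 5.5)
Your proposal is correct and follows the paper's proof. Like the paper, you expand $h_f(X)-h_f(Y)$ through \eqref{eq:hf kernels}, take the norm inside the $\mu$-integral, and apply Corollary~\ref{cor:kernels}(a) to each kernel, using that finiteness of $h_f$ on $I$ rules out the $c$-term when $1\in I$ and the atom at $s=0$ when $0\in I$.

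One small correction: your large-$s$ estimate $\max_I\kappa_s\lesssim 1/s$ is false when $1\in I$, since $\kappa_s(1)=1$. Nothing rests on it. Because $\kappa_s\geq0$ and $\int\kappa_s(\lambda)\,d\mu(s)<\infty$ at each eigenvalue $\lambda$, the map $s\mapsto\kappa_s(X)$ is already Bochner integrable. Moreover, when $1\in I$ the measure $\mu$ is finite anyway.
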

 
\begin{proof}
If $1\in I$ then $c=0$, and if $0\in I$ then $\mu(\{0\})=0$, since $h_f$ is finite on $I$; so all constants are finite where they occur. For $X\in\mathrm{L}(\mathcal{V})$ with spectrum in $I$, \eqref{eq:hf kernels} and the spectral theorem give $h_f(X)=(a-b)\mathbf{1}_{\mathcal{V}}+(2b-a)X+c\,\kappa_\infty(X)+\int\kappa_s(X)\,d\mu(s)$, the integral being a finite sum of convergent scalar integrals times spectral projections. Subtract the same identity for $Y$, take norms inside the integral, and apply Corollary~\ref{cor:kernels}(a).
\end{proof}
 
\subsection{Sharp Lipschitz constants}
 
On the intervals $[\epsilon,1-\epsilon]$ that correspond to a two-sided bound on the max-relative entropy, the kernel constants of Corollary~\ref{cor:kernels} are attained. For $0<\epsilon\leq\frac12$ put $\varrho=1-2\epsilon$,
\[
M_s(\epsilon)=\varrho\Big(\frac{2}{d_s}+\frac{1+s}{d_s^2}\Big),\quad d_s=\epsilon(1+s)+\varrho\min\{1,s\},\qquad M_\infty(\epsilon)=\frac{1-4\epsilon^2}{\epsilon^2},
\]
so that $M_s(\epsilon)=\max_{[\epsilon,1-\epsilon]}|\kappa_s'|$ for $s\in[0,\infty]$ by Corollary~\ref{cor:kernels}(a), and
\[
L_f(\epsilon)=|2b-a|+c\,M_\infty(\epsilon)+\int_{[0,\infty)}M_s(\epsilon)\,d\mu(s).
\]
 
\begin{theorem}\label{thm:sharp lipschitz}
Let $f$ be operator convex with Loewner data $(a,b,c,\mu)$, let $0<\epsilon\leq\frac12$ and $I=[\epsilon,1-\epsilon]$.
\begin{enumerate}
\item[(a)] $\max_I|h_f'|\leq\mathrm{Lip}^I(h_f)\leq L_f(\epsilon)\leq3\max_I|h_f'|$.
\item[(b)] $\mathrm{Lip}^I(h_f)=L_f(\epsilon)=h_f'(1-\epsilon)$ if $\mu$ is carried by $[1,\infty)$ and $2b\geq a$; and $\mathrm{Lip}^I(h_f)=L_f(\epsilon)=-h_f'(\epsilon)$ if $c=0$, $\mu$ is carried by $[0,1]$ and $2b\leq a$.
\item[(c)] If $h_f(1-t)=h_f(t)$ for all $t$, equivalently $f(x)=xf(1/x)$, then $\mathrm{Lip}^I(h_f)=\max_I|h_f'|=h_f'(1-\epsilon)$.
\end{enumerate}
\end{theorem}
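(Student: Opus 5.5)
The plan is to work throughout with the kernel decomposition \eqref{eq:hf kernels},
\[
h_f(t)=(a-b)+(2b-a)t+c\,\kappa_\infty(t)+\int_{[0,\infty)}\kappa_s(t)\,d\mu(s),
\]
together with two structural facts about the kernels. First, each $\kappa_s$, $s\in[0,\infty]$, vanishes to second order at $t=\frac12$. Second, by the formula of Corollary~\ref{cor:kernels}(a), $\kappa_s'(t)$ has the sign of $2t-1$, and its modulus is maximised on $I=[\epsilon,1-\epsilon]$ at an endpoint. That endpoint is the one where $\varpi_s$ is smallest: it is $1-\epsilon$ for $s\geq1$ (and for $s=\infty$), and $\epsilon$ for $s\leq1$. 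Hence $M_s(\epsilon)=\max\{\kappa_s'(1-\epsilon),-\kappa_s'(\epsilon)\}$.

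For (a), the lower bound $\max_I|h_f'|\leq\mathrm{Lip}^I(h_f)$ is Lemma~\ref{lem:DK}. For the middle inequality I will use that $\mathrm{Lip}^I$ is subadditive and passes under the integral. This follows by taking norms inside the integral as in the proof of Theorem~\ref{thm:integral modulus}. The affine part then contributes $|2b-a|$, and each kernel contributes $\mathrm{Lip}^I(\kappa_s)=M_s(\epsilon)$ by Corollary~\ref{cor:kernels}(a), since $I$ is symmetric about $\frac12$.

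For the upper bound by $3\max_I|h_f'|$ I will use the sign structure:
\begin{itemize}
\item Evaluating $h_f'$ at $\frac12$ kills every kernel, so $h_f'(\tfrac12)=2b-a$ and hence $|2b-a|\leq\max_I|h_f'|$.
\item Subtracting $h_f'(\epsilon)$ from $h_f'(1-\epsilon)$ cancels the affine slope and gives
\[
h_f'(1-\epsilon)-h_f'(\epsilon)=c\big(\kappa_\infty'(1-\epsilon)+|\kappa_\infty'(\epsilon)|\big)+\int\big(\kappa_s'(1-\epsilon)+|\kappa_s'(\epsilon)|\big)\,d\mu(s).
\]
\item Each bracket is at least the corresponding $M_s(\epsilon)$. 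So $cM_\infty(\epsilon)+\int M_s(\epsilon)\,d\mu(s)\leq2\max_I|h_f'|$, and adding the two bounds gives the factor $3$.
\end{itemize}

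For (b), suppose $\mu$ is carried by $[1,\infty)$ and $2b\geq a$. Then every term of $L_f(\epsilon)$ is the value at $1-\epsilon$ of the corresponding term of $h_f'$, and all these values are nonnegative. So $L_f(\epsilon)=h_f'(1-\epsilon)\leq\max_I|h_f'|$, and (a) closes the chain to equality. The second case is the mirror argument at $t=\epsilon$, using $c=0$ so that $\kappa_\infty$, whose worst endpoint is $1-\epsilon$, does not appear.

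For (c), symmetry of $h_f$ forces $h_f'$ to be odd about $\frac12$, so $2b-a=h_f'(\frac12)=0$ and $\max_I|h_f'|=h_f'(1-\epsilon)$. Using $\kappa_s(1-t)=s^{-1}\kappa_{1/s}(t)$ (and the analogous relation between $\kappa_\infty$ and $\kappa_0$), I will rewrite $h_f$ as a constant plus an integral of the symmetrised kernels $q_s(t)=\kappa_s(t)+\kappa_s(1-t)$ against a positive measure. Each $q_s'$ is odd about $\frac12$ and nonnegative at $1-\epsilon$, so $h_f'(1-\epsilon)$ is the integral of the $q_s'(1-\epsilon)$. By subadditivity, it then suffices to prove
\[
\mathrm{Lip}^I(q_s)=\max_I|q_s'|=q_s'(1-\epsilon)\quad\text{for each }s.
\]
This is the main obstacle, because $q_s$ has two poles, on opposite sides of $I$, so Theorem~\ref{thm:one pole} does not apply directly. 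My first attempt is to write $q_s^{[1]}(x,y)$ as a combination of $\wp_1(x)\wp_1(y)$ and $\wp_2(x)\wp_2(y)$ with opposite signs $\zeta$. The symmetry $x\mapsto1-x$ exchanges the two pieces, which suggests a Haagerup factorisation of $q_s^{[1]}_\Lambda$ with vectors of squared norm $q_s'(1-\epsilon)$, checked via Lemma~\ref{lem:schur}(a),(c). If this does not go through, the fallback is to show that $q_s^{[1]}_\Lambda$ is a compression, in the sense of Lemma~\ref{lem:schur}(b), of the divided-difference matrix on $\{\epsilon,\tfrac12,1-\epsilon\}$ or on $\{\epsilon,1-\epsilon\}$, and to evaluate that small matrix with Lemma~\ref{lem:schur}(c).
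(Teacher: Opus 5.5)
Your parts (a) and (b) match the paper's proof, and so does your reduction of (c) to the claim $\mathrm{Lip}^I(q_s)=q_s'(1-\epsilon)$ for the symmetrised kernels. Getting $|2b-a|\leq\max_I|h_f'|$ from $h_f'(\frac12)=2b-a$ is a tidier version of the paper's case split. Averaging $h_f=\frac12\big(h_f+h_f(1-\cdot)\big)$ gives the symmetrised representation directly, with no need for a reflection property of $\mu$.

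The gap is that the claim itself, the only nontrivial step in (c), is not proved. You offer a ``first attempt'' and a ``fallback'' but carry out neither. The first attempt, as described, does not reach the sharp constant. Let $p$ be the pole of $\kappa_s$, $1-p$ its mirror, $u=(\wp_p(x))_{x\in\Lambda}$ and $\tilde u=(\wp_{1-p}(x))_{x\in\Lambda}$. Then $\pm(q_s)^{[1]}_\Lambda=w(uu^\top-\tilde u\tilde u^\top)$. The factorisation suggested by the exchange symmetry is $x_i=\sqrt{|w|}(u_i,\tilde u_i)$, $y_j=\sqrt{|w|}(u_j,-\tilde u_j)$. It gives only the bound $|w|\max_\Lambda(u^2+\tilde u^2)$, which is strictly larger than $q_s'(t_+)=|w|\,|\wp_p(t_+)^2-\wp_{1-p}(t_+)^2|$, where $t_+=1-\epsilon$. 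You do not say how to do better.

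The paper's missing idea is the difference-of-squares splitting
\[
uu^\top-\tilde u\tilde u^\top=\tfrac12\big((u-\tilde u)(u+\tilde u)^\top+(u+\tilde u)(u-\tilde u)^\top\big).
\]
By Lemma~\ref{lem:schur}(a) it bounds the Schur norm by $|w|\max_\Lambda|u-\tilde u|\,\max_\Lambda(u+\tilde u)$. On the symmetric interval $I$ both maxima occur at the same endpoint $t_+$, because $\wp_p-\wp_{1-p}$ is monotone and odd about $\frac12$, while $\wp_p+\wp_{1-p}$ is convex and even. The product is then exactly $|w|\,|\wp_p(t_+)^2-\wp_{1-p}(t_+)^2|$.

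Your fallback can also be completed, but only with a fact you have not identified. The point $v(x)=(\wp_p(x),\wp_{1-p}(x))$ satisfies $1/\wp_p+1/\wp_{1-p}=|2p-1|$, so it runs along a convex arc on which $\wp_{1-p}/\wp_p$ is monotone. Hence $v(x)$ lies in the triangle with vertices $0$, $v(t_-)$, $v(t_+)$, where $t_-=\epsilon$. Since $(q_s)^{[1]}(x,y)$ is bilinear in $(v(x),v(y))$, this gives $(q_s)^{[1]}_\Lambda=\Pi\,(q_s)^{[1]}_{\{t_-,t_+\}}\Pi^\top$ with nonnegative rows of sum at most $1$. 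The sum is not equal to $1$, so this is not the affine compression of Theorem~\ref{thm:one pole}(a), but Lemma~\ref{lem:schur}(b) still applies. By symmetry the endpoint matrix is $\mathrm{diag}(-q_s'(t_+),q_s'(t_+))$, whose Schur norm is $q_s'(t_+)$. The case $s=1$ has no pole and needs its own check: $q_1$ is quadratic, so its divided differences are affine in each variable and the convex compression onto the endpoints applies.
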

 
\begin{proof}
(a) The first inequality is Lemma~\ref{lem:DK}, and the second follows from \eqref{eq:hf kernels} as in the proof of Theorem~\ref{thm:integral modulus}.

For the third, write $t_-=\epsilon$ and $t_+=1-\epsilon$. Each $\kappa_s$, $s\in[0,\infty]$, is convex with $\kappa_s'(\frac12)=0$, so \[\kappa_s'(t_-)\leq0\leq\kappa_s'(t_+)\] and $M_s(\epsilon)=\max\{|\kappa_s'(t_-)|,\kappa_s'(t_+)\}$.

Writing \[\Sigma_{+}=c\,\kappa_\infty'(t_+)+\int\kappa_s'(t_+)\,d\mu(s)\] and \[\Sigma_{-}=c\,|\kappa_\infty'(t_-)|+\int|\kappa_s'(t_-)|\,d\mu(s)\] so that \[L_f(\epsilon)\leq|2b-a|+\Sigma_{+}+\Sigma_{-}\] \[h_f'(t_+)=(2b-a)+\Sigma_{+}\] and \[h_f'(t_-)=(2b-a)-\Sigma_{-}.\] Since $h_f$ is convex, $N:=\max_I|h_f'|=\max\{|h_f'(t_-)|,|h_f'(t_+)|\}$ . Hence \[N\geq(2b-a)+\Sigma_{+}\] and $N\geq \Sigma_{-}-(2b-a)$, so $\Sigma_{+}+\Sigma_{-}\leq2N$. Moreover $N\geq|2b-a|$: if $2b\geq a$ this follows from the first inequality, and otherwise from the second. Thus $L_f(\epsilon)\leq3N$.
 
(b) For $s\geq1$ the function $\varpi_s$ is nonincreasing, so $\max_I|\kappa_s'|$ is attained at $t_+$, by the formula in Corollary~\ref{cor:kernels}(a); the same holds for $\kappa_\infty$. If $\mu$ is carried by $[1,\infty)$ and $2b\geq a$, then $L_f(\epsilon)=(2b-a)+c\,\kappa_\infty'(t_+)+\int\kappa_s'(t_+)\,d\mu(s)=h_f'(t_+)\leq\max_I|h_f'|$, and (a) gives equality throughout. The second case is the mirror image.
 
(c) Write $\check g(t)=g(1-t)$. Since $h_f=\frac12(h_f+\check h_f)$ and the affine part of the right-hand side has slope $0$, \eqref{eq:hf kernels} gives
\[
h_f=\mathrm{const}+\tfrac c2\,\kappa^{\mathrm{sym}}_\infty+\tfrac12\int_{[0,\infty)}\kappa^{\mathrm{sym}}_s\,d\mu(s),\qquad \kappa^{\mathrm{sym}}_s(t)=\kappa_s(t)+\kappa_s(1-t).
\]
We claim that $\mathrm{Lip}^I(\kappa^{\mathrm{sym}}_s)=(\kappa^{\mathrm{sym}}_s)'(t_+)$ for every $s\in[0,\infty]$. Then $\mathrm{Lip}^I(h_f)\leq\frac c2(\kappa^{\mathrm{sym}}_\infty)'(t_+)+\frac12\int(\kappa^{\mathrm{sym}}_s)'(t_+)\,d\mu(s)=h_f'(t_+)$, and Lemma~\ref{lem:DK} gives equality. For $s=1$, $\kappa^{\mathrm{sym}}_1=2\kappa_1$ is quadratic, and the claim follows as in the proof of Corollary~\ref{cor:kernels}(c). For $s\neq1$, $\kappa_s=C+\beta t+w\wp_p$ with one pole $p\notin(0,1)$ and $\wp_p(t)=|t-p|^{-1}$; then $\kappa_s(1-t)=C'+\beta' t+w\wp_{1-p}$, with the mirrored pole on the other side of $[0,1]$, and the slopes cancel because $\kappa^{\mathrm{sym}}_s$ is symmetric. By the computation in the proof of Theorem~\ref{thm:one pole}, with $u=(\wp_p(x))_{x\in\Lambda}$ and $\tilde u=(\wp_{1-p}(x))_{x\in\Lambda}$,
\[
\pm(\kappa^{\mathrm{sym}}_s)^{[1]}_\Lambda=w\big(uu^\top-\tilde u\tilde u^\top\big)=\tfrac w2\big((u-\tilde u)(u+\tilde u)^\top+(u+\tilde u)(u-\tilde u)^\top\big).
\]
By Lemma~\ref{lem:schur}(a) its Schur multiplier norm is at most $|w|\max_\Lambda|u-\tilde u|\,\max_\Lambda(u+\tilde u)$. On the symmetric interval $I$, the function $\wp_p-\wp_{1-p}$ is monotone and antisymmetric about $\frac12$, and $\wp_p+\wp_{1-p}$ is convex and symmetric, so both maxima are attained at $t_+$, and the bound equals $|w|\,|\wp_p(t_+)^2-\wp_{1-p}(t_+)^2|=|(\kappa^{\mathrm{sym}}_s)'(t_+)|$. Finally $(\kappa^{\mathrm{sym}}_s)'(t_+)\geq0$, because $\kappa^{\mathrm{sym}}_s$ is convex and symmetric about $\frac12$.
\end{proof}
 
Part (c) applies to the Jeffreys, symmetrised $\chi^2$, Le Cam, Hellinger and Jensen--Shannon divergences, and part (b) to the Pearson and Neyman $\chi^2$-divergences. For these, Table~\ref{tab:lipschitz} gives the exact Lipschitz constants. For the Belavkin--Staszewski and R\'enyi divergences the Loewner data are two-sided and the theorem gives only $\max_I|h_f'|\leq\mathrm{Lip}^I(h_f)\leq L_f(\epsilon)$.

The next proposition shows that if the identity $\mathrm{Lip}^I(g)=\|g^{[1]}_{\{t_-,t_+\}}\|_{\mathrm S}$ of Theorem~\ref{thm:one pole}(a) holds for all combinations of two kernels $\kappa_s$, then it holds for every $h_f$.
 
\begin{proposition}\label{prop:two poles}
Let $I=[t_-,t_+]\subseteq(0,1)$. If $\mathrm{Lip}^I(g)=\|g^{[1]}_{\{t_-,t_+\}}\|_{\mathrm S}$ for every $g=\beta t+w_1\kappa_{s_1}+w_2\kappa_{s_2}$ with $\beta\in\mathbb{R}$, $w_1,w_2\geq0$ and $s_1,s_2\in[0,\infty]$, then $\mathrm{Lip}^I(h_f)=\|(h_f)^{[1]}_{\{t_-,t_+\}}\|_{\mathrm S}$ for every operator convex $f$.
\end{proposition}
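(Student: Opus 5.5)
The plan is to fix a finite set $\Lambda\subseteq I$ and show $\|(h_f)^{[1]}_\Lambda\|_{\mathrm S}\leq\|(h_f)^{[1]}_{\{t_-,t_+\}}\|_{\mathrm S}$. Taking the supremum over $\Lambda$ and using Lemma~\ref{lem:DK} then gives $\mathrm{Lip}^I(h_f)\leq\|(h_f)^{[1]}_{\{t_-,t_+\}}\|_{\mathrm S}$. The reverse inequality is Lemma~\ref{lem:DK} with $\Lambda=\{t_-,t_+\}$.

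The key observation is that everything is linear in the Loewner data. By \eqref{eq:hf kernels}, constants drop out of divided differences, and
\[
(h_f)^{[1]}=(2b-a)+c\,\kappa_\infty^{[1]}+\int_{[0,\infty)}\kappa_s^{[1]}\,d\mu(s).
\]
Since $sk\kappa_s\to\kappa_\infty$ uniformly on $I$ as $s\to\infty$ (and $I\subseteq(0,1)$ keeps every kernel bounded on $I$), I will reparametrise as follows. Set $\nu=(1+s)^{-1}\mu$ on $[0,\infty)$, put an atom of mass $c$ at $s=\infty$, and use the normalised kernels $\tilde\kappa_s=(1+s)\kappa_s$, which extend continuously to $s\in[0,\infty]$ with $\tilde\kappa_\infty=\kappa_\infty$. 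Then $\nu$ is a finite positive measure on the compact space $[0,\infty]$. Write $\beta=2b-a$ and regard $(\beta,\nu)$ as the variables.

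Next I fix the target endpoint matrix $E=(h_f)^{[1]}_{\{t_-,t_+\}}$. The feasible set is the set of pairs $(\beta',\nu')$, with $\nu'\geq0$, that produce the same endpoint matrix. I eliminate $\beta'$, which shifts all three entries of $E$ equally. This leaves two linear constraints on $\nu'$, with densities
\[
\phi_1(s)=\tilde\kappa_s'(t_+)-\tilde\kappa_s^{[1]}(t_-,t_+),\qquad \phi_2(s)=\tilde\kappa_s^{[1]}(t_-,t_+)-\tilde\kappa_s'(t_-).
\]
Both densities are continuous on $[0,\infty]$, and both are strictly positive because every kernel is strictly convex on $I$ (it vanishes to second order at $\tfrac12$ and has a pole outside $[0,1]$, or is $(2t-1)^2$). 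Hence the constraints $\int\phi_i\,d\nu'=\text{const}$ cut out a weak-$*$ compact convex set of measures. By Winkler's (Carath\'eodory-type) theorem on moment sets, its extreme points are measures with at most two atoms.

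The objective $(\beta',\nu')\mapsto\|(\beta'\mathbf{j}\mathbf{j}^\top+\int\tilde\kappa_s^{[1]}\,d\nu')_\Lambda\|_{\mathrm S}$ is convex, because the Schur multiplier norm is a norm. It is also weak-$*$ continuous, because $\Lambda$ is finite and $s\mapsto\tilde\kappa_s^{[1]}$ is continuous on $[0,\infty]$. A convex continuous function on a compact convex set attains its maximum at an extreme point (Bauer maximum principle). So the maximum is attained at some $g=\beta' t+w_1\kappa_{s_1}+w_2\kappa_{s_2}$ with $w_i\geq0$ and $s_i\in[0,\infty]$, having the same endpoint matrix $E$. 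The hypothesis of the proposition gives $\|g^{[1]}_\Lambda\|_{\mathrm S}\leq\mathrm{Lip}^I(g)=\|E\|_{\mathrm S}$. Since $h_f$ itself is feasible, $\|(h_f)^{[1]}_\Lambda\|_{\mathrm S}\leq\|E\|_{\mathrm S}$, as required.

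The main obstacle is the compactness and extreme-point step. One must check that the two densities $\phi_i$ are bounded below uniformly on $[0,\infty]$, including the limits $s\to0$ and $s\to\infty$, so that the feasible set is bounded. One must also check that the degenerate case where $\nu$ is not in the interior, and the quadratic kernel at $s=1$, cause no trouble. I would verify the lower bound directly from the formula for $\kappa_s'$ in Corollary~\ref{cor:kernels}(a), using strict convexity and continuity in $s$ on the compactified parameter space.
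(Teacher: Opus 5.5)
Your argument is correct, but it is organised differently from the paper's proof.

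\textbf{The paper's route.} It first treats $h=\beta t+\sum_{k=1}^N w_k\kappa_{s_k}$ with finitely many kernels. The weight vectors $w'\in[0,\infty)^N$ satisfying the two endpoint constraints form a compact polytope whose vertices have at most two nonzero coordinates. The paper writes $w$ as a convex combination of vertices, readjusts the slopes so that $h_j'(t_-)=h'(t_-)$, and uses the triangle inequality for $\|\cdot\|_{\mathrm S}$ to get $\|h^{[1]}_\Lambda\|_{\mathrm S}\leq\|h^{[1]}_{\{t_-,t_+\}}\|_{\mathrm S}$. The general case follows by approximating $(1+s)^{-1}d\mu$, together with an atom of mass $c$ at $\infty$, by finitely supported measures, so that $h_n\to h_f$ in $C^1(I)$, and passing to the limit in the finite divided-difference matrices.

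\textbf{Your route.} You work in one step with the full measure $\nu$ on the compactified parameter space $[0,\infty]$. You use weak-$*$ compactness of the moment set, an extreme-point theorem giving at most two atoms, and a maximum principle for the convex, weak-$*$ continuous objective.

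\textbf{Comparison.} Your version removes the approximation step and treats the $c$-term uniformly as an atom at $\infty$. The price is infinite-dimensional machinery, plus the uniform bound $\inf_{[0,\infty]}(\phi_1+\phi_2)>0$ needed for boundedness of the feasible set. That bound does hold: $\phi_1+\phi_2=\tilde\kappa_s'(t_+)-\tilde\kappa_s'(t_-)$ is pointwise positive and continuous on the compact set $[0,\infty]$. In the paper's finite-dimensional version, compactness of the polytope is automatic from positivity of finitely many numbers.

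Both routes rest on the same fact: $s\mapsto(1+s)\kappa_s$ is continuous from $[0,\infty]$ into $C^1(I)$, which is where $I\subseteq(0,1)$ enters.

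\textbf{Points to tighten.}
\begin{itemize}
\item Your ``$sk\kappa_s$'' should read $(1+s)\kappa_s$.
\item You need $C^1$ convergence, not just uniform convergence, to control the diagonal entries $\tilde\kappa_s'$.
\item Winkler's theorem is usually stated for probability measures. Either normalise by $\phi_1+\phi_2$, which leaves a single moment constraint, or argue directly that an extreme point of a set of positive measures cut out by two linear equalities cannot charge three disjoint sets.
\end{itemize}
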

 
\begin{proof}
Suppose first that $h=\beta t+\sum_{k=1}^Nw_k\kappa_{s_k}$ with $w\in[0,\infty)^N$; the $c$-term is the case $s_k=\infty$, and constants do not affect divided differences. Put $u_k=\kappa_{s_k}^{[1]}(t_-,t_+)-\kappa_{s_k}'(t_-)$ and $u'_k=\kappa_{s_k}'(t_+)-\kappa_{s_k}^{[1]}(t_-,t_+)$, which are positive by strict convexity. The endpoint data $\big(h'(t_-),h^{[1]}(t_-,t_+),h'(t_+)\big)$ are determined by $h'(t_-)$, $\sum_kw_ku_k$ and $\sum_kw_ku'_k$. The set $\mathcal{P}$ of $w'\in[0,\infty)^N$ with $\sum_kw'_ku_k=\sum_kw_ku_k$ and $\sum_kw'_ku'_k=\sum_kw_ku'_k$ is a compact polytope containing $w$, whose vertices have at most two nonzero coordinates. Write $w=\sum_j\lambda_jv^{(j)}$ as a convex combination of vertices, and let $h_j=\beta_jt+\sum_kv^{(j)}_k\kappa_{s_k}$, with $\beta_j$ chosen so that $h_j'(t_-)=h'(t_-)$. Then each $h_j$ has the same endpoint data as $h$, and $h-\sum_j\lambda_jh_j$ is affine with derivative $0$ at $t_-$, hence constant. For every finite $\Lambda$, $\|h^{[1]}_\Lambda\|_{\mathrm S}\leq\sum_j\lambda_j\mathrm{Lip}^I(h_j)=\|h^{[1]}_{\{t_-,t_+\}}\|_{\mathrm S}$.
 
In general, $\int\kappa_s\,d\mu(s)=\int(1+s)\kappa_s\,d\tilde\mu(s)$ with the finite measure $d\tilde\mu=(1+s)^{-1}d\mu$, and $s\mapsto(1+s)\kappa_s$ is continuous from $[0,\infty]$ into $C^1(I)$, with value $\kappa_\infty$ at $s=\infty$, because $I\subseteq(0,1)$. Approximating $\tilde\mu+c\,\boldsymbol{\delta}_\infty$ by finitely supported positive measures along finer and finer partitions of $[0,\infty]$, uniform continuity gives functions $h_n$ of the above form with $h_n\to h_f$ in $C^1(I)$. Then $(h_n)^{[1]}\to(h_f)^{[1]}$ uniformly on $I\times I$, and for every finite $\Lambda$, $\|(h_f)^{[1]}_\Lambda\|_{\mathrm S}=\lim_n\|(h_n)^{[1]}_\Lambda\|_{\mathrm S}\leq\lim_n\|(h_n)^{[1]}_{\{t_-,t_+\}}\|_{\mathrm S}=\|(h_f)^{[1]}_{\{t_-,t_+\}}\|_{\mathrm S}$.
\end{proof}

\begin{corollary}[Lipschitz moduli under a Thompson bound]\label{cor:thompson lipschitz}
Let $f$ be operator convex with Loewner data $(a,b,c,\mu)$ and $\epsilon\in(0,\frac12]$. Then $\omega^{[\epsilon,1-\epsilon]}_{h_f}(\delta)\leq L_f(\epsilon)\,\delta$, where $L_f(\epsilon)$ is within a factor $3$ of the optimal constant $\mathrm{Lip}^{[\epsilon,1-\epsilon]}(h_f)$, and
\[
L_f(\epsilon)\leq|2b-a|+\frac{1-4\epsilon^2}{\epsilon^2}\Big(c+\int_{[0,\infty)}\frac{d\mu(s)}{1+s}\Big).
\]
Hence, if $\Phi_i=\Psi_{T_i}$ and $\Psi_i=\Psi_{\mathbf{1}_{\mathbb{C}^r}-T_i}$ satisfy $\max\{D_{\max}(\Phi_i\|\Psi_i),D_{\max}(\Psi_i\|\Phi_i)\}\leq\log\frac{1-\epsilon}{\epsilon}$ for $i=1,2$, then
\[
\big|\widehat{D}_f(\Phi_1\|\Psi_1)-\widehat{D}_f(\Phi_2\|\Psi_2)\big|\leq\|\Omega^*(\mathbf{1}_{\mathcal{Y}})\|_\infty\,L_f(\epsilon)\,\|T_1-T_2\|_\infty .
\]
\end{corollary}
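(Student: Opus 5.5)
The corollary combines results already in the paper, so the proof will be assembled in three steps.

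\emph{Step 1: the modulus bound.} By definition, $\omega^I_g(\delta)\leq\mathrm{Lip}^I(g)\,\delta$. Theorem~\ref{thm:sharp lipschitz}(a) gives $\max_I|h_f'|\leq\mathrm{Lip}^I(h_f)\leq L_f(\epsilon)\leq3\max_I|h_f'|$, and hence also $L_f(\epsilon)\leq3\,\mathrm{Lip}^I(h_f)$. So $\omega^{[\epsilon,1-\epsilon]}_{h_f}(\delta)\leq L_f(\epsilon)\delta$, and $L_f(\epsilon)$ is within a factor $3$ of the optimal constant. This step is immediate.

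\emph{Step 2: the explicit bound on $L_f(\epsilon)$.} It suffices to show the pointwise estimates $M_\infty(\epsilon)=(1-4\epsilon^2)/\epsilon^2$, which holds by definition, and
\[
M_s(\epsilon)\leq\frac{1-4\epsilon^2}{\epsilon^2(1+s)}\qquad(s\in[0,\infty)).
\]
Set $\varrho=1-2\epsilon$ and $d_s=\epsilon(1+s)+\varrho\min\{1,s\}\geq\epsilon(1+s)$, as in Corollary~\ref{cor:kernels}(e). Then
\[
M_s=\varrho\Big(\frac2{d_s}+\frac{1+s}{d_s^2}\Big)\leq\varrho\Big(\frac{2}{\epsilon(1+s)}+\frac{1}{\epsilon^2(1+s)}\Big)=\frac{\varrho(1+2\epsilon)}{\epsilon^2(1+s)}=\frac{1-4\epsilon^2}{\epsilon^2(1+s)}.
\]
This is the neat factorisation $(1-2\epsilon)(1+2\epsilon)$. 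Integrating against $\mu$ and adding $c\,M_\infty$ and $|2b-a|$ gives the displayed bound on $L_f(\epsilon)$. The only point needing care is that the crude bound $d_s\geq\epsilon(1+s)$ is applied in both terms, which is legitimate because both terms are decreasing in $d_s$.

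\emph{Step 3: the divergence estimate.} By Lemma~\ref{lem:dmax}, the hypothesis $D_{\max}(\Phi_i\|\Psi_i)\leq\log\frac{1-\epsilon}\epsilon$ gives $T_i\leq(1-\epsilon)\mathbf{1}_{\mathbb{C}^r}$. Applying the same lemma with the roles exchanged, using that $\mathbf{1}_{\mathbb{C}^r}-T_i$ is the derivative of $\Psi_i$, the bound on $D_{\max}(\Psi_i\|\Phi_i)$ gives $T_i\geq\epsilon\mathbf{1}_{\mathbb{C}^r}$. So $\mathrm{spec}(T_i)\subseteq I=[\epsilon,1-\epsilon]$. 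The function $h_f$ is finite on $I$, since $I\subseteq(0,1)$ when $\epsilon>0$. Theorem~\ref{thm:extended continuity}(a) then gives
\[
|\mathcal{D}(T_1)-\mathcal{D}(T_2)|\leq\|\Omega^*(\mathbf{1}_{\mathcal{Y}})\|_\infty\,\omega^I_{h_f}(\|T_1-T_2\|_\infty).
\]
Combining this with Step 1 concludes the proof.

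I expect no real obstacle here. The most delicate point is Step 2, specifically checking the algebra and verifying that $d_s\geq\epsilon(1+s)$ holds for all $s$, including $s<1$. That case follows because $d_s=\min_I\varpi_s$ and $\varpi_s\geq(1+s)\min\{t,1-t\}\geq(1+s)\epsilon$ on $I$.
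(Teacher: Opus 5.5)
Your proof is correct and follows the paper's own argument step for step. Both proofs take the modulus bound and the factor $3$ from Theorem~\ref{thm:sharp lipschitz}(a), and both get the explicit estimate from $d_s\geq\epsilon(1+s)$ (Corollary~\ref{cor:kernels}(e)), which gives $M_s(\epsilon)\leq(1-4\epsilon^2)/(\epsilon^2(1+s))$; for the divergence bound, both read the two $D_{\max}$ hypotheses as $\epsilon\mathbf{1}_{\mathbb{C}^r}\leq T_i\leq(1-\epsilon)\mathbf{1}_{\mathbb{C}^r}$ via Lemma~\ref{lem:dmax} and then apply Theorem~\ref{thm:extended continuity}(a), with you merely spelling out the role swap and the factorisation $(1-2\epsilon)(1+2\epsilon)$ that the paper leaves implicit.
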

 
\begin{proof}
The first statement is Theorem~\ref{thm:sharp lipschitz}(a). Corollary~\ref{cor:kernels}(e) gives $d_s\geq\epsilon(1+s)$, hence $M_s(\epsilon)\leq\varrho\,\frac{1+2\epsilon}{\epsilon^2(1+s)}=\frac{1-4\epsilon^2}{\epsilon^2(1+s)}$. The last statement follows that we identify the hypothesis with $\epsilon\mathbf{1}_{\mathbb{C}^r}\leq T_i\leq(1-\epsilon)\mathbf{1}_{\mathbb{C}^r}$, and Theorem~\ref{thm:extended continuity}(a).
\end{proof}
 
Every operator convex $f$ therefore gives a divergence that is Lipschitz in the derivative under a two-sided bound on the max-relative entropy. The factor $\epsilon^{-2}$ is attained for the $\chi^2$-divergences, for which $L_f(\epsilon)$ is exact. For logarithmic $f$ the order is $\epsilon^{-1}$ (Table~\ref{tab:lipschitz}).
 
\begin{corollary}[Bounded $f$: rates]\label{cor:rates}
If $f$ is bounded, then for all $\delta\geq0$,
\[
\omega_{h_f}(\delta)\leq|2b-a|\,\delta+\int_{(0,1]}\min\Big\{\frac{(1+3s)\,\delta}{s^2},\frac1s\Big\}d\mu(s)+\int_{(1,\infty)}\min\{(3+s)\delta,1\}\,d\mu(s),
\]
with the exact kernel constants of Corollary~\ref{cor:kernels}(a), and the right-hand side tends to $0$ as $\delta\to0$. If moreover $d\mu(s)\leq Cs^{\lambda}ds$ on $(0,1]$ and $d\mu(s)\leq Cs^{-1-\lambda'}ds$ on $(1,\infty)$ for some $\lambda,\lambda'>0$, then $\omega_{h_f}(\delta)\leq C'\delta^{\min\{\lambda,\lambda',1\}}$, with an extra factor $\log\frac1\delta$ if $\min\{\lambda,\lambda'\}=1$. Here $C'$ depends only on $a,b,C,\lambda,\lambda'$.
\end{corollary}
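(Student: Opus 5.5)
The plan is to start from Theorem~\ref{thm:integral modulus} with $I=[0,1]$. Since $f$ is bounded, the Loewner data satisfy $c=0$, $\mu(\{0\})=0$, $\mu$ is finite and $\int_{(0,1]}s^{-1}d\mu(s)<\infty$, so the modulus is
\[
\omega_{h_f}(\delta)\leq|2b-a|\,\delta+\int_{(0,\infty)}\min\Big\{\varrho\Big(\frac{2}{d_s}+\frac{1+s}{d_s^2}\Big)\delta,\ \mathrm{osc}_{[0,1]}\kappa_s\Big\}\,d\mu(s).
\]
On $[0,1]$ we have $\varrho=1$, and $d_s=\min\{s,1\}$ because $\varpi_s$ is a convex combination of $s$ and $1$. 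I will check the two regimes separately.

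For $s\in(0,1]$ we have $d_s=s$, so the Lipschitz constant is $2/s+(1+s)/s^2=(1+3s)/s^2$. The kernel satisfies $\kappa_s(0)=1/s$, $\kappa_s(1)=1$ and $\kappa_s(1/2)=0$, and it is convex and nonnegative, so its oscillation is $\max\{1/s,1\}=1/s$.

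For $s>1$ we have $d_s=1$, giving the constant $3+s$, and the oscillation is $\max\{1/s,1\}=1$.

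This yields the displayed bound. Convergence to $0$ as $\delta\to0$ then follows from dominated convergence: the integrands tend to $0$ pointwise and are dominated by $1/s$ on $(0,1]$ and by $1$ on $(1,\infty)$, both integrable by boundedness of $f$.

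For the rates I substitute the density bounds. On $(0,1]$ the integrand is at most $Cs^\lambda\min\{(1+3s)\delta/s^2,1/s\}\leq Cs^\lambda\min\{4\delta/s^2,1/s\}$. Split at $s=4\delta$: below it the integral of $Cs^{\lambda-1}$ is $O(\delta^\lambda)$, and above it $4C\delta\int_{4\delta}^1 s^{\lambda-2}ds$ is $O(\delta^{\lambda})$ if $\lambda<1$, $O(\delta\log\frac1\delta)$ if $\lambda=1$, and $O(\delta)$ if $\lambda>1$. On $(1,\infty)$ the integrand is at most $Cs^{-1-\lambda'}\min\{4s\delta,1\}$. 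Split at $s=1/\delta$ (for $\delta<1$): $4C\delta\int_1^{1/\delta}s^{-\lambda'}ds$ is $O(\delta^{\lambda'})$, $O(\delta\log\frac1\delta)$ or $O(\delta)$ according as $\lambda'<1$, $=1$, $>1$, and the tail $\int_{1/\delta}^\infty Cs^{-1-\lambda'}ds=O(\delta^{\lambda'})$. For $\delta\geq1$ the trivial bound by the finite total mass (controlled by $C,\lambda,\lambda'$) suffices, since $\delta^{\min}\geq1$ there. Adding $|2b-a|\delta\leq|2b-a|\delta^{\min\{\lambda,\lambda',1\}}$ for $\delta\leq1$ gives the claim, with constants depending only on $a,b,C,\lambda,\lambda'$.

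The main care point is bookkeeping the logarithmic factor. It appears exactly when the minimum exponent equals $1$ and is attained by $\lambda$ or $\lambda'$. Beyond that, I only need to verify that $\mathrm{osc}$ is computed correctly, using convexity together with $\kappa_s(1/2)=0$.
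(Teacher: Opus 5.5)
Your proposal is correct and is essentially the paper's own proof. Both apply Theorem~\ref{thm:integral modulus} on $I=[0,1]$ with $d_s=\min\{s,1\}$, bound the oscillation by $\max\{1/s,1\}$ via convexity, use dominated convergence, and split the two integrals the same way after $(1+3s)s^{-2}\leq4s^{-2}$ and $3+s\leq4s$ (your split at $1/\delta$ instead of $1/(4\delta)$ changes nothing). The one loose end is the affine term $|2b-a|\delta$ for $\delta\geq1$; it is handled by noting $\omega_{h_f}(\delta)=\omega_{h_f}(1)$ there, since $\|E-E'\|_\infty\leq1$ for effects.
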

 
\begin{proof}
Take $I=[0,1]$ in Theorem~\ref{thm:integral modulus}, so that $\varrho_I=1$ and $d_s=\min\{1,s\}$. For $s\leq1$, $\frac2s+\frac{1+s}{s^2}=\frac{1+3s}{s^2}$, and for $s>1$, $2+1+s=3+s$. Since $\kappa_s$ is convex and nonnegative, $\mathrm{osc}\,\kappa_s\leq\max\{\kappa_s(0),\kappa_s(1)\}=\max\{\frac1s,1\}$. The limit is dominated convergence, since $\int_{(0,1]}s^{-1}d\mu+\mu((1,\infty))<\infty$. For the rate, use $(1+3s)s^{-2}\leq4s^{-2}$ on $(0,1]$ and $3+s\leq4s$ on $(1,\infty)$, and split $\int_0^1\min\{4\delta s^{-2},s^{-1}\}s^\lambda ds$ at $s=4\delta$, and $\int_1^\infty\min\{4s\delta,1\}s^{-1-\lambda'}ds$ at $s=1/(4\delta)$.
\end{proof}
 
For $f(t)=-t^\alpha$, $\alpha\in(0,1)$, one has $d\mu=\frac{\sin\pi\alpha}{\pi}\frac{s^{\alpha}}{(1+s)^2}ds$, so $\lambda=\alpha$ and $\lambda'=1-\alpha$. Corollary~\ref{cor:rates} then recovers the order $\delta^{\min\{\alpha,1-\alpha\}}$ of Lemma~\ref{holder lemma}, with a worse constant.

\begin{example}
    Table~\ref{tab:loewner} collects the operator perspective divergences and their Loewner data, along with the measures their integrals are taken with respect to in their integral representations, as well as the function emerging from the operator perspective version of Theorem~\ref{fundamental principle}. Table~\ref{tab:lipschitz} is its continuation, collecting where the divergences are finite (in terms of the spectrum of the ARN derivatives) and where in the paper the moduli can be found, along with the optimal Lipschitz constants in the case of bounds in terms of the max-relative entropy. 
\end{example}

\begin{table}[htbp]
\centering\small
\begin{tabular}{@{}lllll@{}}
\hline
divergence & $f(x)$ & $(a,b,c)$ & $d\mu(s)$ & $h_f(t)$ \\
\hline
Belavkin--Staszewski & $x\log x$ & $(0,1,0)$ & $\frac{s}{(1+s)^2}ds$ & $t\log\frac{t}{1-t}$ \\[2pt]
reverse BS & $-\log x$ & $(0,-1,0)$ & $\frac{1}{(1+s)^2}ds$ & $(1-t)\log\frac{1-t}{t}$ \\[2pt]
Jeffreys & $(x-1)\log x$ & $(0,0,0)$ & $\frac{1}{1+s}ds$ & $(2t-1)\log\frac{t}{1-t}$ \\[2pt]
Pearson $\chi^2$ & $(x-1)^2$ & $(0,0,1)$ & $0$ & $\frac{(2t-1)^2}{1-t}$ \\[2pt]
Neyman $\chi^2$ & $\frac{(x-1)^2}{x}$ & $(0,0,0)$ & $\boldsymbol{\delta}_0$ & $\frac{(2t-1)^2}{t}$ \\[2pt]
symmetric $\chi^2$ & $(x-1)^2(1+\frac1x)$ & $(0,0,1)$ & $\boldsymbol{\delta}_0$ & $\frac{(2t-1)^2}{t(1-t)}$ \\[2pt]
Le Cam & $\frac{(x-1)^2}{x+1}$ & $(0,0,0)$ & $\boldsymbol{\delta}_1$ & $(2t-1)^2$ \\[2pt]
Hellinger & $(\sqrt x-1)^2$ & $(0,0,0)$ & $\frac2\pi\frac{\sqrt s}{(1+s)^2}ds$ & $(\sqrt t-\sqrt{1-t})^2$ \\[2pt]
Jensen--Shannon & $x\log x-(1+x)\log\frac{1+x}{2}$ & $(0,0,0)$ & $\frac{\min\{s,1\}}{(1+s)^2}ds$ & $\log2-H_{\mathrm{bin}}(t)$ \\[2pt]
R\'enyi, $\alpha\in(1,2)$ & $x^\alpha$ & $(1,\alpha,0)$ & $\frac{\sin\pi(\alpha-1)}{\pi}\frac{s^\alpha}{(1+s)^2}ds$ & $t^\alpha(1-t)^{1-\alpha}$ \\[2pt]
R\'enyi, $\alpha\in(-1,0)$ & $x^\alpha$ & $(1,\alpha,0)$ & $\frac{-\sin\pi\alpha}{\pi}\frac{s^\alpha}{(1+s)^2}ds$ & $t^\alpha(1-t)^{1-\alpha}$ \\
\hline
\end{tabular}
\caption{Loewner data and extended $h_f$ for the standard examples. For the Jensen--Shannon divergence, $H_{\mathrm{bin}}$ is the binary entropy in nats.}
\label{tab:loewner}
\end{table}
 
\begin{table}[htbp]
\centering\small
\begin{tabular}{@{}llll@{}}
\hline
divergence & finite on & modulus on that interval & $\mathrm{Lip}^{[\epsilon,1-\epsilon]}(h_f)$ \\
\hline
Belavkin--Staszewski & $[0,1)$ & Cor.~\ref{cor:BS} & $\log\frac{1-\epsilon}{\epsilon}+\frac1\epsilon$\ \ $\dagger$ \\[2pt]
reverse BS & $(0,1]$ & mirror of Cor.~\ref{cor:BS} & $\log\frac{1-\epsilon}{\epsilon}+\frac1\epsilon$\ \ $\dagger$ \\[2pt]
Jeffreys & $(0,1)$ & last column & $2\log\frac{1-\epsilon}{\epsilon}+\frac{1-2\epsilon}{\epsilon(1-\epsilon)}$ \\[2pt]
Pearson $\chi^2$ & $[0,1)$ & $(\epsilon^{-2}-4)\delta$ if $\epsilon\leq\frac13$ & $\epsilon^{-2}-4$ \\[2pt]
Neyman $\chi^2$ & $(0,1]$ & $(\epsilon^{-2}-4)\delta$ if $\epsilon\leq\frac13$ & $\epsilon^{-2}-4$ \\[2pt]
symmetric $\chi^2$ & $(0,1)$ & last column & $\epsilon^{-2}-(1-\epsilon)^{-2}$ \\[2pt]
Le Cam & $[0,1]$ & $4\delta$ & $4(1-2\epsilon)$ \\[2pt]
Hellinger & $[0,1]$ & $2\sqrt\delta$ & $\frac{1-2\epsilon}{\sqrt{\epsilon(1-\epsilon)}}$ \\[2pt]
Jensen--Shannon & $[0,1]$ & $2\delta\log\frac1\delta+4\log2\cdot\delta$ & $\log\frac{1-\epsilon}{\epsilon}$ \\[2pt]
R\'enyi, $\alpha\in(1,2)$ & $[0,1)$ & Cor.~\ref{cor:renyi large alpha} & $(\alpha-1+\epsilon)\,\frac{(1-\epsilon)^{\alpha-1}}{\epsilon^{\alpha}}$\ \ $\dagger$ \\[2pt]
R\'enyi, $\alpha\in(-1,0)$ & $(0,1]$ & mirror of Cor.~\ref{cor:renyi large alpha} & $(\epsilon-\alpha)\,\frac{(1-\epsilon)^{-\alpha}}{\epsilon^{1-\alpha}}$\ \ $\dagger$ \\
\hline
\end{tabular}
\caption{Moduli and optimal Lipschitz constants. In the third column, $\epsilon$ is the distance of the spectrum of $T$ from the infinite endpoints, and $\delta=\|T_1-T_2\|_\infty$. The last column is the exact operator Lipschitz constant of $h_f$ on $[\epsilon,1-\epsilon]$, except for the entries marked $\dagger$. These are lower bounds.}
\label{tab:lipschitz}
\end{table}
 
\subsection{All Kubo--Ando means}
 We now repeat the above process using the integral representation of the Kubo--Ando mean, inspired by our foray into the operator perspectives.
Every operator connection $\sigma$ has a representing function of the form \cite{KuboAndo1980}
\[
f_\sigma(x)=a_\sigma+b_\sigma x+\int_{(0,\infty)}\frac{x(1+s)}{x+s}\,dm_\sigma(s),\qquad a_\sigma,b_\sigma\geq0,
\]
with a finite positive measure $m_\sigma$; then $a_\sigma=f_\sigma(0^+)$, $b_\sigma=f_\sigma'(\infty)$, and $\sigma$ is a mean if and only if $a_\sigma+b_\sigma+m_\sigma((0,\infty))=1$. The function $\varphi_\sigma(t)=tf_\sigma((1-t)/t)$  is therefore
\begin{equation}\label{eq:mean kernels}
\varphi_\sigma(t)=a_\sigma\,t+b_\sigma\,(1-t)+\int_{(0,\infty)}\eta_s(t)\,dm_\sigma(s),
\end{equation}
where
\[
\eta_s(t)=\frac{(1+s)\,t(1-t)}{1+(s-1)t}=\Big(\frac{1}{1+s}\,t^{-1}+\frac{s}{1+s}\,(1-t)^{-1}\Big)^{-1},
\]
a mixture of weighted harmonic means of $t$ and $1-t$.
 
By Corollary~\ref{cor:kernels}(c), $\mathrm{Lip}^{[0,1]}(\eta_s)=\max_{[0,1]}|\eta_s'|=(1+s)\max\{1,s^{-1}\}$ for $s>0$.

\begin{lemma}\label{lem:harmonic kernels}
For $s>0$, $0\leq\eta_s\leq\frac{1+s}{(1+\sqrt s)^2}\leq1$ on $[0,1]$.
\end{lemma}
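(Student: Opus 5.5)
The plan is to use the second expression for $\eta_s$ in \eqref{eq:mean kernels}, which writes it as a weighted harmonic mean of $t$ and $1-t$, and then bound it by Cauchy--Schwarz.

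\emph{Nonnegativity.} For $t\in[0,1]$ the denominator is $1+(s-1)t=(1-t)+st$. This is a convex combination of $1$ and $s$, so it is at least $\min\{1,s\}>0$. The numerator $(1+s)t(1-t)$ is nonnegative on $[0,1]$. Hence $\eta_s\geq0$, with $\eta_s(0)=\eta_s(1)=0$, so only $t\in(0,1)$ needs an upper bound.

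\emph{Upper bound.} For $0<t<1$ put $\alpha=\frac1{1+s}$ and $\beta=\frac{s}{1+s}$, so that $\eta_s(t)=\big(\alpha t^{-1}+\beta(1-t)^{-1}\big)^{-1}$. By the Cauchy--Schwarz inequality applied to the vectors $(\sqrt{\alpha/t},\sqrt{\beta/(1-t)})$ and $(\sqrt t,\sqrt{1-t})$,
\[
(\sqrt\alpha+\sqrt\beta)^2\leq\Big(\frac{\alpha}{t}+\frac{\beta}{1-t}\Big)\big(t+(1-t)\big)=\frac{\alpha}{t}+\frac{\beta}{1-t}.
\]
Taking reciprocals gives $\eta_s(t)\leq(\sqrt\alpha+\sqrt\beta)^{-2}=\frac{1+s}{(1+\sqrt s)^2}$. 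Equality holds at $t=\sqrt\alpha/(\sqrt\alpha+\sqrt\beta)=1/(1+\sqrt s)$, the critical point $t_0$ found in the proof of Corollary~\ref{cor:kernels}(c). So the bound is in fact the maximum, although the lemma only needs the inequality.

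\emph{Final inequality.} Since $(1+\sqrt s)^2=1+s+2\sqrt s\geq1+s$, we get $\frac{1+s}{(1+\sqrt s)^2}\leq1$.

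There is no real obstacle here. The only point needing care is the endpoints, where the harmonic-mean form is undefined; these are handled directly by the rational formula, which gives the value $0$ there.
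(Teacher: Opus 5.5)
Your proof is correct, and it takes a slightly different route from the paper's. The paper's proof is a single sentence: the maximum of the weighted harmonic mean is attained at $t=1/(1+\sqrt s)$, the critical point of $\eta_s$ already found in the proof of Corollary~\ref{cor:kernels}(c), and the bound is the value there. That argument implicitly uses calculus together with the fact that this critical point is a global maximum. The latter holds, for example, because $\eta_s$ is concave, being a weighted harmonic mean of the affine functions $t$ and $1-t$.

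Your Cauchy--Schwarz step, $(\sqrt\alpha+\sqrt\beta)^2\leq\alpha/t+\beta/(1-t)$, certifies the global upper bound directly and gives the equality case. No derivative or concavity argument is needed. You also make explicit what the paper leaves to the reader: nonnegativity, the endpoint values, and the final inequality $\frac{1+s}{(1+\sqrt s)^2}\leq1$.

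The paper's version is shorter because it reuses the earlier computation of the critical point; yours is self-contained.
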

 
\begin{proof}
The maximum of the weighted harmonic mean is attained at $t=1/(1+\sqrt s)$.
\end{proof}
 
\begin{theorem}\label{thm:means modulus}
For every operator connection $\sigma$ with representation as above,
\[
\omega_{\varphi_\sigma}(\delta)\leq|a_\sigma-b_\sigma|\,\delta+\int_{(0,1]}\min\Big\{\frac{(1+s)\,\delta}{s},1\Big\}\,dm_\sigma(s)+\int_{(1,\infty)}\min\{(1+s)\delta,1\}\,dm_\sigma(s)\ \xrightarrow[\delta\to0]{}\ 0 .
\]
Consequently Theorem~\ref{Modulus of continuity results} and Theorem~\ref{thm:four maps}(a) hold with this explicit modulus for every mean. If $dm_\sigma(s)\leq Cs^{\lambda-1}ds$ on $(0,1]$ and $dm_\sigma(s)\leq Cs^{-\lambda'-1}ds$ on $(1,\infty)$ with $\lambda,\lambda'\in(0,1)$, then $\omega_{\varphi_\sigma}(\delta)\leq C'\delta^{\min\{\lambda,\lambda'\}}$.
\end{theorem}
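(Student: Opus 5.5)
The plan is to mirror the proof of Theorem~\ref{thm:integral modulus}, using the representation \eqref{eq:mean kernels} in place of \eqref{eq:hf kernels}.

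\emph{Termwise bound.} For Hermitian $X,Y$ with spectra in $[0,1]$, the spectral theorem applied to \eqref{eq:mean kernels} gives
\[
\varphi_\sigma(X)=b_\sigma\mathbf{1}+(a_\sigma-b_\sigma)X+\int_{(0,\infty)}\eta_s(X)\,dm_\sigma(s).
\]
As in Theorem~\ref{thm:integral modulus}, the integral is a finite sum of scalar integrals times spectral projections, and it converges because $0\le\eta_s\le1$ (Lemma~\ref{lem:harmonic kernels}) and $m_\sigma$ is finite. Subtracting the same identity for $Y$ and taking norms inside the integral yields
\[
\|\varphi_\sigma(X)-\varphi_\sigma(Y)\|_\infty\le|a_\sigma-b_\sigma|\,\|X-Y\|_\infty+\int\|\eta_s(X)-\eta_s(Y)\|_\infty\,dm_\sigma(s).
\]

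\emph{Kernel estimates.} Each $\|\eta_s(X)-\eta_s(Y)\|_\infty$ is bounded by the minimum of two quantities.
\begin{itemize}
\item A Lipschitz bound. Corollary~\ref{cor:kernels}(c) gives $\mathrm{Lip}^{[0,1]}(\eta_s)=(1+s)\max\{1,s^{-1}\}$. This equals $(1+s)/s$ for $s\le1$ and $1+s$ for $s>1$.
\item An oscillation bound of $1$. The spectra of $\eta_s(X)$ and $\eta_s(Y)$ lie in $[0,1]$ by Lemma~\ref{lem:harmonic kernels}.
\end{itemize}
Taking the supremum over $\|X-Y\|_\infty\le\delta$ gives the displayed bound. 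Then $\omega_{\varphi_\sigma}(\delta)\to0$ by dominated convergence: the integrand is at most $1$, $m_\sigma$ is finite, and the integrand tends to $0$ pointwise.

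\emph{Consequences.} Theorem~\ref{Modulus of continuity results} holds with this modulus, since $\varphi_\sigma=g_\sigma$ for means. The same applies to Theorem~\ref{thm:four maps}(a), whose estimate is stated in terms of $\omega_{g_\sigma}$.

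\emph{Rate.} Split the two integrals at the points where the minima switch:
\begin{itemize}
\item On $(0,1]$, use $(1+s)/s\le 2/s$ and split at $s=2\delta$:
\[
\int_0^{2\delta}Cs^{\lambda-1}\,ds+2\delta\int_{2\delta}^1Cs^{\lambda-2}\,ds=O(\delta^\lambda),
\]
since $\lambda<1$.
\item On $(1,\infty)$, use $1+s\le2s$ and split at $s=1/(2\delta)$:
\[
2\delta\int_1^{1/(2\delta)}Cs^{-\lambda'}\,ds+\int_{1/(2\delta)}^\infty Cs^{-\lambda'-1}\,ds=O(\delta^{\lambda'}),
\]
since $\lambda'<1$.
\end{itemize}
The linear term $|a_\sigma-b_\sigma|\delta$ is dominated by these for $\delta\le1$. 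For $\delta>1$ the trivial bound $\omega\le\sup\varphi_\sigma-\inf\varphi_\sigma$ (finite) suffices after enlarging $C'$.

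The main obstacle is modest. It is checking that the spectral integral may be interchanged with the norm estimate, and handling uniformity near $s\to0$ and $s\to\infty$. The oscillation bound $\le1$ makes both tails harmless.
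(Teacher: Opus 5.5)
Your proposal is correct and is essentially the paper's argument. You insert the kernel representation \eqref{eq:mean kernels}, bound each $\|\eta_s(X)-\eta_s(Y)\|_\infty$ by the minimum of the exact Lipschitz constant from Corollary~\ref{cor:kernels}(c) and the bound $\max\eta_s\leq1$ of Lemma~\ref{lem:harmonic kernels}, and obtain the rate by splitting the integrals at $s=2\delta$ and $s=1/(2\delta)$, which fills in the step the paper leaves as ``as in Corollary~\ref{cor:rates}''.
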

 
\begin{proof}
Insert \eqref{eq:mean kernels} into $\varphi_\sigma(X)-\varphi_\sigma(Y)$; the affine part has slope $a_\sigma-b_\sigma$. Use Corollary~\ref{cor:kernels}(c), together with the bound $\|\eta_s(X)-\eta_s(Y)\|_\infty\leq\max\eta_s\leq1$ of Lemma~\ref{lem:harmonic kernels}. The rate follows as in Corollary~\ref{cor:rates}.
\end{proof}

\section{Pairs with different sums}\label{sec:different sums}
We now come to the more general and noncommutative setting. In general, it is a lot to ask four completely positive maps to have the same sum. In this section we collect the results of what happens when we use the four maps together to dominate. Due to the noncommutative setting, we cannot quite hope for inequalities as nice as before, hence we pick up an extra term in the Thompson metric. What helps us here is that this behaves well enough as a continuity constant.

Let $\Phi_1,\Psi_1,\Phi_2,\Psi_2\in\mathrm{CP}(\mathcal{X},\mathcal{Y})$, put $\Omega_i=\Phi_i+\Psi_i$, $\Omega'=\Omega_1+\Omega_2$ and $\overline{\Omega}=\frac12\Omega'$, and assume $\Omega'\neq0$. Let $\Omega'(X)=\sum_{k=1}^rV_kXV_k^*$ with linearly independent $V_1,\dots,V_r$ and $W=[\vectorise(V_1)\ \cdots\ \vectorise(V_r)]$. Each of the four maps lies below $\Omega'$ in the completely positive order, so by Theorem~\ref{thm:arveson} there are unique $0\leq A_i,B_i\leq\mathbf{1}_{\mathbb{C}^r}$ with $\Phi_i=\Psi_{A_i}$ and $\Psi_i=\Psi_{B_i}$, all taken with respect to $V_1,\dots,V_r$. We put
\[
C_i=A_i+B_i,\qquad\gamma=\|C_1-\tfrac12\mathbf{1}_{\mathbb{C}^r}\|_\infty,\qquad\delta=\max\big\{\|A_1-A_2\|_\infty,\ \|B_1-B_2\|_\infty\big\}.
\]
Since $\Psi_{A_1+B_1+A_2+B_2}=\Omega'=\Psi_{\mathbf{1}_{\mathbb{C}^r}}$, uniqueness in Theorem~\ref{thm:arveson} gives $C_1+C_2=\mathbf{1}_{\mathbb{C}^r}$; in particular $C_1$ and $C_2$ commute and $\gamma=\|C_2-\frac12\mathbf{1}_{\mathbb{C}^r}\|_\infty=\frac12\|C_1-C_2\|_\infty$. Note that $\|\Omega'^*(\mathbf{1}_{\mathcal{Y}})\|_\infty=2\|\overline{\Omega}^*(\mathbf{1}_{\mathcal{Y}})\|_\infty$. By Lemma~\ref{lem:relative distance}, applied to $\Omega'$ with $K=A_1-A_2$ and $K=B_1-B_2$, $\delta$ is the smallest $d\geq0$ such that $-d\,\Omega'\leq\Phi_1-\Phi_2\leq d\,\Omega'$ and $-d\,\Omega'\leq\Psi_1-\Psi_2\leq d\,\Omega'$.

Thus $\delta$ measures the distance between the two pairs \emph{relative} to their total $\Omega'$, in the completely positive order. The second parameter is the distance between the two sums. With $D_{\max}$ as in Lemma~\ref{lem:dmax}, we define the Thompson distance
\[
d_T(\Omega_1,\Omega_2)=\max\big\{D_{\max}(\Omega_1\|\Omega_2),\ D_{\max}(\Omega_2\|\Omega_1)\big\}\in[0,+\infty].
\]
This is the natural object to introduce, as in the commuting case, the infinite behaviour of the divergences was controlled by the max-relative entropy. Physically (Remark~\ref{rem:phys-dmax}), $d_T(\Omega_1,\Omega_2)\leq d$ says that every outcome probability of the device with its outcome forgotten, on every input with ancilla, is reproduced within a factor $e^{\pm d}$: a multiplicative calibration error. Theorem~\ref{thm:four maps} separates the in-pair change $\tau$ from this drift, and Corollary~\ref{cor:four maps cb} and Lemma~\ref{lem:one parameter} bound both by the additive errors $|||\Phi_1-\Phi_2|||_1$ and $|||\Psi_1-\Psi_2|||_1$ (through $\delta$, when $\delta<\frac12$).
\begin{lemma}\label{lem:thompson}
The following are equivalent:
\begin{enumerate}
\item[(i)] $d_T(\Omega_1,\Omega_2)<\infty$;
\item[(ii)] $\range J(\Omega_1)=\range J(\Omega_2)$;
\item[(iii)] $C_1$ and $C_2$ are invertible;
\item[(iv)] $\gamma<\frac12$.
\end{enumerate}
When they hold,
\[
2\gamma=\|C_1-C_2\|_\infty=\tanh\big(\tfrac12d_T(\Omega_1,\Omega_2)\big).
\]
\end{lemma}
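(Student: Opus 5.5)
The whole argument runs through the ARN coordinates of $\Omega'$. Theorem~\ref{thm:arveson} gives $\Omega_i=\Psi_{C_i}$ with $J(\Omega_i)=WC_iW^*$ and $C_1+C_2=\mathbf{1}_{\mathbb{C}^r}$. So $C_1$ and $C_2$ commute, have a common eigenbasis, and have spectra in $[0,1]$, with eigenvalues $c$ and $1-c$ paired. Every condition in the lemma then becomes a statement about the common spectrum of $C_1$.

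For (ii)$\iff$(iii), since $W$ is injective, $\range J(\Omega_i)=W\range(C_i)$. Equality of the ranges is therefore equivalent to $\range C_1=\range C_2$. For commuting $C_1,C_2$ with $C_1+C_2=\mathbf{1}_{\mathbb{C}^r}$, $\ker C_1$ is the eigenspace of $C_2$ for the eigenvalue $1$. So $\range C_1=\range C_2$ forces both kernels to be trivial, which gives (iii); the converse is immediate. For (iii)$\iff$(iv), $\gamma=\max_{c\in\mathrm{spec}(C_1)}|c-\tfrac12|$. Since the spectrum lies in $[0,1]$, $\gamma<\tfrac12$ holds exactly when $0$ and $1$ are not eigenvalues of $C_1$, that is, when both $C_1$ and $C_2=\mathbf{1}-C_1$ are invertible.

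For (i)$\iff$(iii) and the formula, I would imitate the proof of Lemma~\ref{lem:dmax}. By Choi's theorem and injectivity of $W$ (conjugate by $W^+$), $\Omega_1\leq\lambda\Omega_2$ holds iff $WC_1W^*\leq\lambda WC_2W^*$, iff $C_1\leq\lambda C_2$. In the common eigenbasis this reads $c\leq\lambda(1-c)$ for every eigenvalue $c$ of $C_1$. Hence
\[
e^{D_{\max}(\Omega_1\|\Omega_2)}=\max_{c}\frac{c}{1-c},
\]
which is finite iff $1\notin\mathrm{spec}(C_1)$. Symmetrically, $e^{D_{\max}(\Omega_2\|\Omega_1)}=\max_c\frac{1-c}{c}$, which is finite iff $0\notin\mathrm{spec}(C_1)$. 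This proves (i)$\iff$(iii).

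Under (iii), write each eigenvalue as $c=\tfrac12(1+\tanh x)$, so that $\frac{c}{1-c}=e^{2x}$ and $\frac{1-c}{c}=e^{-2x}$. Then
\[
d_T(\Omega_1,\Omega_2)=\max_c 2|x_c|,
\]
while $2\gamma=\max_c|2c-1|=\max_c\tanh|x_c|$. Because $\tanh$ is increasing, $2\gamma=\tanh\big(\tfrac12 d_T\big)$. Finally, $\|C_1-C_2\|_\infty=\|2C_1-\mathbf{1}\|_\infty=2\gamma$.

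The only step needing care is reducing the inequality $WC_1W^*\leq\lambda WC_2W^*$ to $C_1\leq\lambda C_2$. Conjugation by $W^+$ and by $W$ handles it exactly as in Lemma~\ref{lem:relative distance}. I also need the convention $D_{\max}=+\infty$ when the set of admissible $\lambda$ is empty. I expect no other obstacle.
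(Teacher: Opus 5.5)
Your proof is correct and follows essentially the same route as the paper: reduce $\Omega_1\le\lambda\Omega_2$ to $C_1\le\lambda(\mathbf{1}_{\mathbb{C}^r}-C_1)$ via Choi's theorem and injectivity of $W$, read every condition off the spectrum of $C_1$, and compare ranges through $\range J(\Omega_i)=W\range C_i$. The differences are cosmetic. For (ii)$\Rightarrow$(iii) you identify $\ker C_1$ with the $1$-eigenspace of $C_2$, where the paper uses $\range(C_1+C_2)=\mathbb{C}^r$. For the formula you parametrise the eigenvalues by $\tanh$, where the paper uses the identity $\tanh(\tfrac12\log x)=\frac{x-1}{x+1}$.
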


\begin{proof}
Let $\lambda_-$ and $\lambda_+$ be the smallest and largest eigenvalues of $C_1$, so that $\gamma=\max\{\lambda_+-\frac12,\frac12-\lambda_-\}$ and the spectrum of $C_2=\mathbf{1}_{\mathbb{C}^r}-C_1$ is $\{1-\lambda:\ \lambda\in\mathrm{spec}(C_1)\}$. As in the proof of Lemma~\ref{lem:dmax}, $\Omega_1\leq c\,\Omega_2$ if and only if $C_1\leq cC_2=c(\mathbf{1}_{\mathbb{C}^r}-C_1)$, that is, $(1+c)C_1\leq c\mathbf{1}_{\mathbb{C}^r}$, that is, $\lambda_+\leq c/(1+c)$. Hence $D_{\max}(\Omega_1\|\Omega_2)=\log\frac{\lambda_+}{1-\lambda_+}$ if $\lambda_+<1$ and $+\infty$ if $\lambda_+=1$; similarly $D_{\max}(\Omega_2\|\Omega_1)=\log\frac{1-\lambda_-}{\lambda_-}$ if $\lambda_->0$ and $+\infty$ if $\lambda_-=0$. Thus (i) holds if and only if $0<\lambda_-\leq\lambda_+<1$, which is (iii), and (iii) is equivalent to (iv) because $\gamma<\frac12$ means exactly $\lambda_->0$ and $\lambda_+<1$.

For (ii), Theorem~\ref{thm:arveson} gives $J(\Omega_i)=WC_iW^*=(WC_i^{1/2})(WC_i^{1/2})^*$, whose range is $W(\range C_i)$; since $W$ is injective, (ii) holds if and only if $\range C_1=\range C_2$. If $C_1$ and $C_2$ are invertible this is clear. Conversely, if $\range C_1=\range C_2=\mathcal{R}$, then $\mathcal{R}\supseteq\range(C_1+C_2)=\mathbb{C}^r$, so both are invertible. Finally, $\tanh(\frac12\log x)=\frac{x-1}{x+1}$ gives $\tanh(\frac12\log\frac{\lambda_+}{1-\lambda_+})=2\lambda_+-1$ and $\tanh(\frac12\log\frac{1-\lambda_-}{\lambda_-})=1-2\lambda_-$; since $\tanh$ is increasing, $\tanh(\frac12d_T)=\max\{2\lambda_+-1,1-2\lambda_-\}=2\gamma$, and $\|C_1-C_2\|_\infty=\|2C_1-\mathbf{1}_{\mathbb{C}^r}\|_\infty=2\gamma$.
\end{proof}

For the rest of this section we assume that $\range J(\Omega_1)=\range J(\Omega_2)$, that is, that the two sums have the same Choi support. By Lemma~\ref{lem:thompson}, $C_1$ and $C_2$ are then invertible, and we put
\[
a_1=C_1^{1/2},\qquad a_2=C_2^{1/2},\qquad T_i=C_i^{-1/2}A_iC_i^{-1/2},\qquad\tau=\|T_1-T_2\|_\infty .
\]

\begin{lemma}\label{lem:inpair}
Under this assumption, $0\leq T_i\leq\mathbf{1}_{\mathbb{C}^r}$ and $\mathbf{1}_{\mathbb{C}^r}-T_i=C_i^{-1/2}B_iC_i^{-1/2}$. Moreover, $T_i$ is the ARN derivative of $\Phi_i$ with respect to $\Omega_i$, relative to the Kraus decomposition of $\Omega_i$ whose column matrix is $WC_i^{1/2}$. Finally, $a_1^2+a_2^2=\mathbf{1}_{\mathbb{C}^r}$, the operators $a_1$ and $a_2$ commute, and
\[
\|a_1-a_2\|_\infty\leq2\gamma,\qquad\|a_1+a_2\|_\infty\leq\sqrt2 .
\]
\end{lemma}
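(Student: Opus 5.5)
The plan is to verify the three groups of claims in order: order bounds on $T_i$, the ARN identification, and the scalar inequalities for $a_1,a_2$.

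\emph{Order bounds.} Since $C_i=A_i+B_i$ with $A_i,B_i\geq0$, we have $0\leq A_i\leq C_i$. By Lemma~\ref{lem:thompson}, $C_i$ is invertible under the standing assumption. Conjugating by $C_i^{-1/2}$ preserves the Loewner order, so $0\leq T_i\leq C_i^{-1/2}C_iC_i^{-1/2}=\mathbf{1}_{\mathbb{C}^r}$. Moreover
\[
\mathbf{1}_{\mathbb{C}^r}-T_i=C_i^{-1/2}(C_i-A_i)C_i^{-1/2}=C_i^{-1/2}B_iC_i^{-1/2}.
\]

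\emph{ARN identification.} Put $W_i=WC_i^{1/2}$. This matrix is injective, since $W$ is injective and $C_i^{1/2}$ is invertible. Its columns are $\vectorise(V^{(i)}_j)$ with $V^{(i)}_j=\sum_k(C_i^{1/2})_{kj}V_k$, which are linearly independent. Then
\[
W_iW_i^*=WC_iW^*=J(\Psi_{C_i})=J(\Omega_i),
\]
so the $V^{(i)}_j$ form a Kraus decomposition of $\Omega_i$ by linearly independent operators; the recovery of a map from its Choi matrix is \cite[Section~2.2.2]{watrous2018theory}. Also
\[
W_iT_iW_i^*=WC_i^{1/2}C_i^{-1/2}A_iC_i^{-1/2}C_i^{1/2}W^*=WA_iW^*=J(\Phi_i).
\]
The uniqueness part of Theorem~\ref{thm:arveson}, applied with $W_i$ in place of $W$, then shows that $T_i$ is the ARN derivative of $\Phi_i$ with respect to $\Omega_i$ in this decomposition.

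\emph{The $a_i$.} We have $a_1^2+a_2^2=C_1+C_2=\mathbf{1}_{\mathbb{C}^r}$. Since $C_2=\mathbf{1}_{\mathbb{C}^r}-C_1$ is a function of $C_1$, the square roots $a_1,a_2$ are both functions of $C_1$ and so commute. Both norm bounds therefore reduce, by the spectral theorem, to scalar inequalities on the eigenvalues $\lambda\in[0,1]$ of $C_1$. For the sum, $(\sqrt\lambda+\sqrt{1-\lambda})^2=1+2\sqrt{\lambda(1-\lambda)}\leq2$, giving $\|a_1+a_2\|_\infty\leq\sqrt2$. For the difference, $\sqrt\lambda-\sqrt{1-\lambda}=(2\lambda-1)/(\sqrt\lambda+\sqrt{1-\lambda})$. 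The denominator is at least $1$, because its square is at least $\lambda+(1-\lambda)=1$. Hence $|\sqrt\lambda-\sqrt{1-\lambda}|\leq|2\lambda-1|$, and taking the maximum over the spectrum gives $\|a_1-a_2\|_\infty\leq\|2C_1-\mathbf{1}_{\mathbb{C}^r}\|_\infty=2\gamma$.

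Nothing here is deep. The only point needing care is the identification step: one must check that $WC_i^{1/2}$ is the column matrix of a genuine linearly independent Kraus decomposition of $\Omega_i$, so that Theorem~\ref{thm:arveson} applies verbatim.
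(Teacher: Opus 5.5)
Your proposal is correct and follows essentially the same route as the paper: congruence by $C_i^{-1/2}$ for the order bounds, the column matrix $WC_i^{1/2}$ with $J(\Omega_i)=WC_iW^*$ and $J(\Phi_i)=(WC_i^{1/2})T_i(WC_i^{1/2})^*$ for the ARN identification, and the same scalar inequalities on $\mathrm{spec}(C_1)$ for $a_1\pm a_2$. The only cosmetic differences are that you get $T_i\leq\mathbf{1}_{\mathbb{C}^r}$ from $A_i\leq C_i$ rather than from the two positive pieces summing to $\mathbf{1}_{\mathbb{C}^r}$, and that you bound $\sqrt\lambda+\sqrt{1-\lambda}\leq\sqrt2$ by squaring instead of by Cauchy--Schwarz.
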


\begin{proof}
Positivity of $A_i$ and $B_i$ gives $T_i\geq0$ and $C_i^{-1/2}B_iC_i^{-1/2}\geq0$, and these two operators add up to $C_i^{-1/2}C_iC_i^{-1/2}=\mathbf{1}_{\mathbb{C}^r}$. The columns of $WC_i^{1/2}$ are $\vectorise(\widetilde V_j)$, where $\widetilde V_j=\sum_k(C_i^{1/2})_{kj}V_k$; these are linearly independent because $C_i^{1/2}$ is invertible. By Theorem~\ref{thm:arveson}, $J(\Omega_i)=WC_iW^*=(WC_i^{1/2})(WC_i^{1/2})^*$, so the $\widetilde V_j$ form a Kraus decomposition of $\Omega_i$, and $J(\Phi_i)=WA_iW^*=(WC_i^{1/2})\,T_i\,(WC_i^{1/2})^*$ identifies $T_i$ as the derivative of $\Phi_i$ in this decomposition. The operators $a_1$ and $a_2$ are functions of $C_1$, hence commute, and $a_1^2+a_2^2=C_1+C_2=\mathbf{1}_{\mathbb{C}^r}$. If $\lambda$ runs over the spectrum of $C_1$, the eigenvalues of $a_1\mp a_2$ are $\sqrt\lambda\mp\sqrt{1-\lambda}$, and
\[
|\sqrt\lambda-\sqrt{1-\lambda}|=\frac{|2\lambda-1|}{\sqrt\lambda+\sqrt{1-\lambda}}\leq|2\lambda-1|\leq2\gamma,\qquad\sqrt\lambda+\sqrt{1-\lambda}\leq\sqrt2,
\]
by $\sqrt\lambda+\sqrt{1-\lambda}\geq1$ and the Cauchy--Schwarz inequality.
\end{proof}

\begin{lemma}\label{lem:sandwich}
Let $\varphi:[0,1]\to\mathbb{R}$ be continuous, with $\|\varphi\|_\infty=\max_{[0,1]}|\varphi|$. Then
\[
\big\|a_1\,\varphi(T_1)\,a_1-a_2\,\varphi(T_2)\,a_2\big\|_\infty\leq\tfrac12\,\omega_\varphi(\tau)+2\sqrt2\,\gamma\,\|\varphi\|_\infty .
\]
If $T_1,T_2\leq(1-\epsilon)\mathbf{1}_{\mathbb{C}^r}$ for some $\epsilon\in(0,1)$ and $\varphi$ is only assumed continuous on $[0,1-\epsilon]$, the same holds with $\omega_\varphi$ replaced by $\omega^{(\epsilon)}_\varphi$ and $\|\varphi\|_\infty$ by $\max_{[0,1-\epsilon]}|\varphi|$. More generally, if $\mathrm{spec}(T_1),\mathrm{spec}(T_2)\subseteq I$ for a compact interval $I$ on which $\varphi$ is continuous, the same holds with $\omega^I_\varphi$ and $\max_I|\varphi|$.
\end{lemma}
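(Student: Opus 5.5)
The plan is to split $a_1\varphi(T_1)a_1-a_2\varphi(T_2)a_2$ into a part driven by $\varphi(T_1)-\varphi(T_2)$ and a part driven by the drift $a_1-a_2$. Put $X=\varphi(T_1)$ and $Y=\varphi(T_2)$, which are Hermitian, and introduce the commuting Hermitian operators $p=a_1+a_2$ and $m=a_1-a_2$, so that $a_1=\tfrac12(p+m)$ and $a_2=\tfrac12(p-m)$. By Lemma~\ref{lem:inpair}, $\|p\|_\infty\leq\sqrt2$ and $\|m\|_\infty\leq2\gamma$. Also $p^2+m^2=2(a_1^2+a_2^2)=2\cdot\mathbf{1}_{\mathbb{C}^r}$, since $a_1$ and $a_2$ commute and $a_1^2+a_2^2=\mathbf{1}_{\mathbb{C}^r}$.

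Expanding both products and subtracting, with $\Delta=X-Y$ and $\Sigma=X+Y$, I expect the identity
\[
a_1Xa_1-a_2Ya_2=\tfrac14\big(p\Delta p+m\Delta m\big)+\tfrac14\big(p\Sigma m+m\Sigma p\big).
\]
For the cross term, submultiplicativity gives a bound of $\tfrac14\cdot2\cdot\sqrt2\cdot2\gamma\cdot\|\Sigma\|_\infty$. Since $\|\Sigma\|_\infty\leq2\|\varphi\|_\infty$ by the spectral theorem, this is at most $2\sqrt2\,\gamma\,\|\varphi\|_\infty$.

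The main point is not to estimate $p\Delta p$ and $m\Delta m$ separately. Doing so leaves an extra term of size $\gamma^2\omega_\varphi(\tau)$. Instead, treat $Z\mapsto\tfrac14(pZp+mZm)$ as a single positive map. From $-\|\Delta\|_\infty\mathbf{1}_{\mathbb{C}^r}\leq\Delta\leq\|\Delta\|_\infty\mathbf{1}_{\mathbb{C}^r}$ and $\tfrac14(p^2+m^2)=\tfrac12\mathbf{1}_{\mathbb{C}^r}$ we get
\[
-\tfrac12\|\Delta\|_\infty\mathbf{1}_{\mathbb{C}^r}\leq\tfrac14(p\Delta p+m\Delta m)\leq\tfrac12\|\Delta\|_\infty\mathbf{1}_{\mathbb{C}^r},
\]
so this term has norm at most $\tfrac12\|\Delta\|_\infty$. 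Since $0\leq T_i\leq\mathbf{1}_{\mathbb{C}^r}$ and $\|T_1-T_2\|_\infty=\tau$, the definition of the modulus gives $\|\Delta\|_\infty\leq\omega_\varphi(\tau)$. Adding the two bounds yields the claim.

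For the variants, the identity and the positivity argument are unchanged. The spectra of $T_1,T_2$ lie in $[0,1-\epsilon]$, or more generally in $I$, where $\varphi$ is continuous. So $\|\Delta\|_\infty\leq\omega^{(\epsilon)}_\varphi(\tau)$ (respectively $\omega^I_\varphi(\tau)$), and $\|\Sigma\|_\infty\leq2\max_{[0,1-\epsilon]}|\varphi|$ (respectively $2\max_I|\varphi|$). I expect the only delicate step to be spotting that the $\Delta$-terms combine into one unital-up-to-$\tfrac12$ positive map; the rest is bookkeeping.
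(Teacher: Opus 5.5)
Your proof is correct and is essentially the paper's own argument: the identity $a_1Xa_1-a_2Ya_2=\tfrac12(a_1\Delta a_1+a_2\Delta a_2)+\tfrac14\big((a_1-a_2)\Sigma(a_1+a_2)+(a_1+a_2)\Sigma(a_1-a_2)\big)$ is your $p,m$ decomposition. The paper also bounds the $\Delta$-part as a single positive map using $a_1^2+a_2^2=\mathbf{1}_{\mathbb{C}^r}$, and the cross term by submultiplicativity with $\|a_1-a_2\|_\infty\leq2\gamma$ and $\|a_1+a_2\|_\infty\leq\sqrt2$. (Minor: $p^2+m^2=2(a_1^2+a_2^2)$ holds without commutativity.)
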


\begin{proof}
Put $H_-=\varphi(T_1)-\varphi(T_2)$ and $H_+=\varphi(T_1)+\varphi(T_2)$, both Hermitian, with $\|H_-\|_\infty\leq\omega_\varphi(\tau)$ and $\|H_+\|_\infty\leq2\|\varphi\|_\infty$. Then
\[
a_1\varphi(T_1)a_1-a_2\varphi(T_2)a_2=\tfrac12\big(a_1H_-a_1+a_2H_-a_2\big)+\tfrac14\Big((a_1-a_2)H_+(a_1+a_2)+(a_1+a_2)H_+(a_1-a_2)\Big),
\]
as one checks by expanding the right-hand side. Since $-\|H_-\|_\infty\mathbf{1}_{\mathbb{C}^r}\leq H_-\leq\|H_-\|_\infty\mathbf{1}_{\mathbb{C}^r}$ and $a_1^2+a_2^2=\mathbf{1}_{\mathbb{C}^r}$, conjugating by $a_1$ and by $a_2$ and adding gives $-\|H_-\|_\infty\mathbf{1}_{\mathbb{C}^r}\leq a_1H_-a_1+a_2H_-a_2\leq\|H_-\|_\infty\mathbf{1}_{\mathbb{C}^r}$, so the first term has norm at most $\frac12\|H_-\|_\infty\leq\frac12\omega_\varphi(\tau)$. By Lemma~\ref{lem:inpair}, the second term has norm at most $\frac12\|a_1-a_2\|_\infty\|H_+\|_\infty\|a_1+a_2\|_\infty\leq\frac12\cdot2\gamma\cdot2\|\varphi\|_\infty\cdot\sqrt2=2\sqrt2\,\gamma\,\|\varphi\|_\infty$. The second statement is proved in the same way.
\end{proof}

\begin{theorem}\label{thm:four maps}
Let $\Phi_1,\Psi_1,\Phi_2,\Psi_2\in\mathrm{CP}(\mathcal{X},\mathcal{Y})$ be such that $\range J(\Phi_1+\Psi_1)=\range J(\Phi_2+\Psi_2)$, and let $\overline{\Omega}$, $\tau$ and $d_T=d_T(\Phi_1+\Psi_1,\Phi_2+\Psi_2)$ be as above.
\begin{enumerate}
\item[(a)] For every Kubo--Ando mean $\sigma$,
\[
|||\Phi_1\,\sigma\,\Psi_1-\Phi_2\,\sigma\,\Psi_2|||_1\leq\|\overline{\Omega}^*(\mathbf{1}_{\mathcal{Y}})\|_\infty\Big(\omega_{g_\sigma}(\tau)+2\sqrt2\,\|g_\sigma\|_\infty\tanh\tfrac{d_T}{2}\Big).
\]
\item[(b)] For the geometric channel fidelity,
\[
\big|\widehat{F}(\Phi_1,\Psi_1)-\widehat{F}(\Phi_2,\Psi_2)\big|\leq\|\overline{\Omega}^*(\mathbf{1}_{\mathcal{Y}})\|_\infty\Big(\tau^{1/2}+\sqrt2\tanh\tfrac{d_T}{2}\Big).
\]
\item[(c)] For $\alpha\in(0,1)$,
\[
\big|\widehat{Q}_\alpha(\Phi_1\|\Psi_1)-\widehat{Q}_\alpha(\Phi_2\|\Psi_2)\big|\leq\|\overline{\Omega}^*(\mathbf{1}_{\mathcal{Y}})\|_\infty\Big(\tau^\alpha+\tau^{1-\alpha}+2\sqrt2\tanh\tfrac{d_T}{2}\Big).
\]
\item[(d)] For $f$ as in Lemma~\ref{lem:maxf decomposition},
\[
\big|\widehat{D}_f(\Phi_1\|\Psi_1)-\widehat{D}_f(\Phi_2\|\Psi_2)\big|\leq\|\overline{\Omega}^*(\mathbf{1}_{\mathcal{Y}})\|_\infty\Big(\omega_{h_f}(\tau)+2\sqrt2\,\|h_f\|_\infty\tanh\tfrac{d_T}{2}\Big).
\]
\item[(e)] For $f$ as in Section~\ref{sec:unbounded}, if moreover $D_{\max}(\Phi_i\|\Psi_i)\leq\log\frac{1-\epsilon}{\epsilon}$ for $i=1,2$ and some $\epsilon\in(0,1)$, then (d) holds with $\omega_{h_f}$ replaced by $\omega^{(\epsilon)}_{h_f}$ and $\|h_f\|_\infty$ by $\max_{[0,1-\epsilon]}|h_f|$.
\item[(f)] For every operator convex $f$ and every compact interval $I\subseteq[0,1]$ on which $h_f$ is finite, if $\mathrm{spec}(T_i)\subseteq I$ for $i=1,2$, then
\[
\big|\widehat{D}_f(\Phi_1\|\Psi_1)-\widehat{D}_f(\Phi_2\|\Psi_2)\big|\leq\|\overline{\Omega}^*(\mathbf{1}_{\mathcal{Y}})\|_\infty\Big(\omega^I_{h_f}(\tau)+2\sqrt2\,\max_I|h_f|\,\tanh\tfrac{d_T}{2}\Big).
\]
For $I=[\epsilon,1-\epsilon]$ the hypothesis says $\max\{D_{\max}(\Phi_i\|\Psi_i),D_{\max}(\Psi_i\|\Phi_i)\}\leq\log\frac{1-\epsilon}{\epsilon}$, and Corollary~\ref{cor:thompson lipschitz} gives $\omega^I_{h_f}(\tau)\leq L_f(\epsilon)\tau$.
\end{enumerate}
If $\Phi_1+\Psi_1=\Phi_2+\Psi_2$, then $d_T=0$, $\overline{\Omega}=\Phi_1+\Psi_1$, and $\tau=\|T_1-T_2\|_\infty$ for the derivatives $T_i$ of $\Phi_i$ with respect to this common sum, so that (a)--(e) reduce to Theorems~\ref{Modulus of continuity results}, \ref{main theorem}, \ref{thm:renyi continuity}, \ref{thm:maxf}(b) and~\ref{thm:unbounded}(b) with the same constants.
\end{theorem}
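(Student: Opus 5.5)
The plan is to reduce every part to Lemma~\ref{lem:sandwich}, by expressing each Choi-level object of the pair $(\Phi_i,\Psi_i)$ as $W\,a_i\,\varphi(T_i)\,a_i\,W^*$ with the common column matrix $W$ of $\Omega'$.

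\emph{Step 1: the common representation.} By Lemma~\ref{lem:inpair}, $T_i$ is the ARN derivative of $\Phi_i$ with respect to $\Omega_i$ in the Kraus decomposition with column matrix $W_i=WC_i^{1/2}=Wa_i$, which is injective. Theorem~\ref{fundamental principle}, applied with $W_i$, then gives
\[
J(\Phi_i)\,\sigma\,J(\Psi_i)=W a_i\,g_\sigma(T_i)\,a_iW^* .
\]
In the same way, the identity $Z_f=W_ih_f(T_i)W_i^*$ from the proof of Theorem~\ref{thm:maxf}, or from Theorems~\ref{thm:unbounded}(a) and~\ref{thm:all f}(b), gives $Wa_ih_f(T_i)a_iW^*$. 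Likewise $h_\alpha$ arises through Lemma~\ref{renyi closed form}. It follows that the differences are $\Psi_K$ and $WKW^*$ with
\[
K=a_1\varphi(T_1)a_1-a_2\varphi(T_2)a_2 ,
\]
where $\varphi$ is respectively $g_\sigma$, $g_\#$, $h_\alpha$ or $h_f$.

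\emph{Step 2: norm transfer with respect to $\Omega'$.} For (a) we use Lemma~\ref{lipschitz inequality}, or more precisely its Hermitian upper bound from Proposition~\ref{bilipschitz} applied with $\Omega'$. This gives $|||\Psi_K|||_1\le\|\Omega'^*(\mathbf{1}_{\mathcal{Y}})\|_\infty\|K\|_\infty=2\|\overline{\Omega}^*(\mathbf{1}_{\mathcal{Y}})\|_\infty\|K\|_\infty$. For (b)--(f) we use Lemma~\ref{lem:partial trace} with $N_i=a_i\varphi(T_i)a_i$, which gives the same factor $2\|\overline{\Omega}^*(\mathbf{1}_{\mathcal{Y}})\|_\infty$ in front of $\|K\|_\infty$ for the differences of $\lambda_{\min}$ and of $\lambda_{\max}$. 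The closed forms are Proposition~\ref{lambda min formula}, Lemma~\ref{renyi closed form}, and Theorems~\ref{thm:maxf}(a), \ref{thm:unbounded}(a) and~\ref{thm:all f}(b).

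\emph{Step 3: apply the sandwich lemma.} Lemma~\ref{lem:sandwich} gives
\[
2\|K\|_\infty\le\omega_\varphi(\tau)+4\sqrt2\,\gamma\,\|\varphi\|_\infty ,
\]
and Lemma~\ref{lem:thompson} gives $2\gamma=\tanh(d_T/2)$, which yields the form in (a). The specialisations are as follows.
\begin{itemize}
\item[(b)] Here $\|g_\#\|_\infty=\tfrac12$ and $\omega_{g_\#}(\tau)\le\tau^{1/2}$ by Proposition~\ref{K6}(2), so the constant is $\sqrt2$.
\item[(c)] Here $\|h_\alpha\|_\infty\le1$ and Lemma~\ref{holder lemma} controls $\omega_{h_\alpha}$.
\item[(d)] This is direct.
\item[(e)] By Lemma~\ref{lem:dmax}, $D_{\max}$ is invariant under the change of Kraus decomposition, since it is intrinsic. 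It therefore gives $T_i\le(1-\epsilon)\mathbf{1}_{\mathbb{C}^r}$, and we use the $[0,1-\epsilon]$ version of Lemma~\ref{lem:sandwich}.
\item[(f)] We use the interval version of Lemma~\ref{lem:sandwich} together with Corollary~\ref{cor:thompson lipschitz}.
\end{itemize}

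\emph{The reduction when the sums agree.} If $\Omega_1=\Omega_2=\Omega$, then $\Omega'=2\Omega$ and $C_i=\tfrac12\mathbf{1}_{\mathbb{C}^r}$, so $\gamma=0$ and $d_T=0$. In that case $Wa_i=W/\sqrt2$ is the column matrix of a Kraus decomposition of $\Omega$, in which $T_i$ is the derivative.

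\emph{Main obstacle.} The main difficulty is Step 1 in the degenerate or unbounded cases (e)--(f). There one must check that the closed forms hold with the rescaled injective matrix $Wa_i$ and not with a column matrix of linearly independent Kraus operators of $\Omega_i$ chosen beforehand. This is immediate once one notes that $Wa_i$ is such a column matrix. In (e) one must also confirm that the boundedness hypothesis is invariant in the way described above.
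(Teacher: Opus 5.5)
Your proposal is correct and follows essentially the paper's proof. You write each Choi-level quantity as $Wa_i\varphi(T_i)a_iW^*$ in the frame of $\Omega'$, transfer norms with the factor $\|\Omega'^*(\mathbf{1}_{\mathcal{Y}})\|_\infty=2\|\overline{\Omega}^*(\mathbf{1}_{\mathcal{Y}})\|_\infty$ (via Theorem~\ref{thm:two sided} or Lemma~\ref{lem:partial trace}), and finish with Lemma~\ref{lem:sandwich} and $2\gamma=\tanh\frac{d_T}{2}$, which is exactly the paper's chain of steps. The differences are cosmetic: you apply Theorem~\ref{fundamental principle} directly in the Kraus frame $Wa_i$ of $\Omega_i$, and you read off $T_i\leq(1-\epsilon)\mathbf{1}_{\mathbb{C}^r}$ in (e) from Lemma~\ref{lem:dmax} in that frame, whereas the paper uses \eqref{eq:W transfer} with $W$, then the transformer equality for the invertible $a_i$, and an explicit Choi-conjugation argument; the two routes are equivalent.
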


\begin{proof}
By Lemma~\ref{lem:inpair}, $A_i=a_1T_ia_1$, $B_i=a_1(\mathbf{1}_{\mathbb{C}^r}-T_i)a_1$ for $i=1$, and likewise with $a_2$ for $i=2$. By Lemma~\ref{lem:thompson}, $2\gamma=\tanh\frac{d_T}{2}$, so Lemma~\ref{lem:sandwich} reads
\begin{equation}\label{eq:sandwich}
\big\|a_1\varphi(T_1)a_1-a_2\varphi(T_2)a_2\big\|_\infty\leq\tfrac12\,\omega_\varphi(\tau)+\sqrt2\,\|\varphi\|_\infty\tanh\tfrac{d_T}{2}.
\end{equation}

 Let $\sigma'$ be an operator connection. By \eqref{eq:W transfer}, which holds for connections by Remark~\ref{rem:connections}, $J(\Phi_i)\,\sigma'\,J(\Psi_i)=W(A_i\,\sigma'\,B_i)W^*$. The transformer equality for the invertible operator $a_i$ \cite{KuboAndo1980}, followed by Remark~\ref{rem:connections}(2) for the commuting pair $T_i$, $\mathbf{1}_{\mathbb{C}^r}-T_i$, gives
\[
A_i\,\sigma'\,B_i=a_i\big(T_i\,\sigma'\,(\mathbf{1}_{\mathbb{C}^r}-T_i)\big)a_i=a_i\,\varphi_{\sigma'}(T_i)\,a_i,
\]
and in the same way $B_i\,\sigma'\,A_i=a_i\,\varphi_{\sigma'}(\mathbf{1}_{\mathbb{C}^r}-T_i)\,a_i$. For a mean $\sigma$ we have $\varphi_\sigma=g_\sigma$.

 If $N_1,N_2\in\mathrm{L}(\mathbb{C}^r)$ are Hermitian, Lemma~\ref{lem:partial trace}, applied to $\Omega'$, gives
\begin{align*}
\big\|\Tr_{\mathcal{Y}}(WN_1W^*)-\Tr_{\mathcal{Y}}(WN_2W^*)\big\|_\infty&\leq\|\Omega'^*(\mathbf{1}_{\mathcal{Y}})\|_\infty\,\|N_1-N_2\|_\infty\\
&=2\|\overline{\Omega}^*(\mathbf{1}_{\mathcal{Y}})\|_\infty\|N_1-N_2\|_\infty,
\end{align*}
and the same bound holds for the difference of the smallest, or of the largest, eigenvalues. The factor $2$ combines with the factor $\frac12$ in \eqref{eq:sandwich}.

(a) By definition of the mean and the first step, $J(\Phi_i\,\sigma\,\Psi_i)=J(\Phi_i)\,\sigma\,J(\Psi_i)=W(A_i\,\sigma\,B_i)W^*$, so $\Phi_i\,\sigma\,\Psi_i=\Psi_{N_i}$ with respect to $V_1,\dots,V_r$, where $N_i=a_ig_\sigma(T_i)a_i$. Theorem~\ref{thm:two sided}, applied to $\Omega'$, gives $|||\Psi_{N_1}-\Psi_{N_2}|||_1\leq\|\Omega'^*(\mathbf{1}_{\mathcal{Y}})\|_\infty\|N_1-N_2\|_\infty$. Now apply \eqref{eq:sandwich} with $\varphi=g_\sigma$.

(b) By Proposition~\ref{lambda min formula} and the first step, $\widehat{F}(\Phi_i,\Psi_i)=\lambda_{\min}\big(\Tr_{\mathcal{Y}}(WN_iW^*)\big)$ with $N_i=a_ig_\#(T_i)a_i$. Apply the partial-trace estimate and \eqref{eq:sandwich} with $\varphi=g_\#$, using $\omega_{g_\#}(t)\leq t^{1/2}$ (Proposition~\ref{K6}(2)) and $\|g_\#\|_\infty=\frac12$.

(c) By Lemma~\ref{renyi closed form}, $\widehat{Q}_\alpha(\Phi_i\|\Psi_i)=\lambda_{\min}\big(\Tr_{\mathcal{Y}}(J(\Psi_i)\,\#_\alpha\,J(\Phi_i))\big)$, and since $g_{\#_{1-\alpha}}=h_\alpha$, the first step gives
\[
J(\Psi_i)\,\#_\alpha\,J(\Phi_i)=J(\Phi_i)\,\#_{1-\alpha}\,J(\Psi_i)=W\,a_ih_\alpha(T_i)a_i\,W^*.
\]
Apply the partial-trace estimate and \eqref{eq:sandwich} with $\varphi=h_\alpha$, using $\omega_{h_\alpha}(t)\leq t^\alpha+t^{1-\alpha}$ (Lemma~\ref{holder lemma}) and $\|h_\alpha\|_\infty\leq1$.

(d) By Lemma~\ref{lem:inpair}, $T_i$ is the derivative of $\Phi_i$ with respect to $\Omega_i$, relative to the Kraus decomposition with column matrix $Wa_i$. Theorem~\ref{thm:maxf}(a), applied to $\Omega_i$ with this matrix in place of $W$, gives $\widehat{D}_f(\Phi_i\|\Psi_i)=\lambda_{\max}\big(\Tr_{\mathcal{Y}}(W\,a_ih_f(T_i)a_i\,W^*)\big)$. Apply the partial-trace estimate and \eqref{eq:sandwich} with $\varphi=h_f$.

(e) Put $c=(1-\epsilon)/\epsilon$. The set of $c'\geq0$ with $\Phi_i\leq c'\Psi_i$ is closed and upward closed, so the definition of $D_{\max}$ in Lemma~\ref{lem:dmax} gives $\Phi_i\leq c\Psi_i$, that is, $W(cB_i-A_i)W^*\geq0$ by Choi's theorem; conjugating by $W^+$ gives $cB_i\geq A_i$, and conjugating by $a_i^{-1}$ gives $c(\mathbf{1}_{\mathbb{C}^r}-T_i)\geq T_i$, that is, $T_i\leq(1-\epsilon)\mathbf{1}_{\mathbb{C}^r}$. In particular $B_i$ is invertible, so Lemma~\ref{lem:unb congruence} with $C=W$, Lemma~\ref{lem:unb congruence pd} with $R=a_i$, and the computation in the proof of Theorem~\ref{thm:unbounded}(a) give
\[
\mathsf{P}_f\big(J(\Phi_i),J(\Psi_i)\big)=W\,\mathsf{P}_f(A_i,B_i)\,W^*=W\,a_ih_f(T_i)a_i\,W^*.
\]
The remainder of the proof of Theorem~\ref{thm:unbounded}(a) uses only $\Phi_i\leq c\Psi_i$ and this representation, and shows that $\widehat{D}_f(\Phi_i\|\Psi_i)=\lambda_{\max}\big(\Tr_{\mathcal{Y}}(W\,a_ih_f(T_i)a_i\,W^*)\big)$. Apply the partial-trace estimate and the second statement of Lemma~\ref{lem:sandwich}, in the form \eqref{eq:sandwich}, with $\varphi=h_f$.

(f) By Lemma~\ref{lem:inpair}, $T_i$ is the derivative of $\Phi_i$ with respect to $\Omega_i$, relative to the Kraus decomposition with column matrix $Wa_i$. Theorem~\ref{thm:all f}(b), applied to $\Omega_i$ with this matrix in place of $W$, gives $\widehat{D}_f(\Phi_i\|\Psi_i)=\lambda_{\max}\big(\Tr_{\mathcal{Y}}(W\,a_ih_f(T_i)a_i\,W^*)\big)$, since $h_f(T_i)$ is finite. Apply the partial-trace estimate and the last statement of Lemma~\ref{lem:sandwich} with $\varphi=h_f$. The identification of the hypothesis for $I=[\epsilon,1-\epsilon]$ is Lemma~\ref{lem:dmax} applied within each pair, in the frame $Wa_i$, to $T_i$ and to $\mathbf{1}_{\mathbb{C}^r}-T_i$.

Finally, if $\Phi_1+\Psi_1=\Phi_2+\Psi_2=\Omega$, then $\Omega'=2\Omega$ and $d_T=0$; the derivatives with respect to $\Omega'$ are $A_i=\frac12T_i$ and $B_i=\frac12(\mathbf{1}_{\mathbb{C}^r}-T_i)$, where $T_i$ is the derivative of $\Phi_i$ with respect to $\Omega$, so $C_1=C_2=\frac12\mathbf{1}_{\mathbb{C}^r}$ and the operators $T_i$ defined above coincide with these derivatives.
\end{proof}

\begin{lemma}\label{lem:one parameter}
Without any assumption on the supports, $\gamma\leq\delta$. If $\delta<\frac12$, then $\range J(\Omega_1)=\range J(\Omega_2)$ and
\[
\tau\leq\frac{2\sqrt2\,\delta}{\sqrt{1-4\gamma^2}}\leq\frac{2\sqrt2\,\delta}{\sqrt{1-4\delta^2}} .
\]
The constant $2\sqrt2$ is optimal: in Example~\ref{ex:one parameter sharp}, $\tau/\delta\to2\sqrt2$ as $\delta\to0$.
\end{lemma}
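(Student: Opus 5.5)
The first claim and the support statement are direct; all the effort goes into the bound on $\tau$. For $\gamma\leq\delta$ I would use $C_1-C_2=(A_1-A_2)+(B_1-B_2)$, so $\|C_1-C_2\|_\infty\leq2\delta$. Lemma~\ref{lem:thompson} identifies $\|C_1-C_2\|_\infty$ with $2\gamma$, but only under its equivalent conditions. For an arbitrary support I would instead use $C_1+C_2=\mathbf{1}_{\mathbb{C}^r}$ directly: $C_1-C_2=2C_1-\mathbf{1}_{\mathbb{C}^r}$, so $\|C_1-C_2\|_\infty=2\gamma$ holds without any assumption. Hence $\gamma\leq\delta$. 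If $\delta<\frac12$ then $\gamma<\frac12$, and Lemma~\ref{lem:thompson}(iv)$\Rightarrow$(ii) gives $\range J(\Omega_1)=\range J(\Omega_2)$. The objects $a_i$, $T_i$, $\tau$ of Lemma~\ref{lem:inpair} are then defined. The second inequality follows from the first because $\gamma\mapsto(1-4\gamma^2)^{-1/2}$ is increasing on $[0,\frac12)$.

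For the main inequality I would work in the commuting frame generated by $C_1$. Write $a_1,a_2$ as in Lemma~\ref{lem:inpair}, put $p=a_1a_2=(C_1(\mathbf{1}_{\mathbb{C}^r}-C_1))^{1/2}$, and let $\lambda\in\mathrm{spec}(C_1)$. Since $\lambda(1-\lambda)=\frac14(1-(2\lambda-1)^2)\geq\frac14(1-4\gamma^2)$, we get
\[
\|p^{-1}\|_\infty\leq\frac{2}{\sqrt{1-4\gamma^2}},\qquad \|p^{-1/2}\|_\infty^2\leq\frac{2}{\sqrt{1-4\gamma^2}}.
\]
This suggests the target
\[
\big\|p^{1/2}(T_1-T_2)p^{1/2}\big\|_\infty\leq\sqrt2\,\delta ,
\]
since then $\tau\leq\|p^{-1/2}\|_\infty^2\sqrt2\,\delta$, which is exactly the claimed bound. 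The tools are the identities $A_i=a_iT_ia_i$ and $B_i=a_i(\mathbf{1}_{\mathbb{C}^r}-T_i)a_i$. I would also use that $T_1-T_2$ equals minus the corresponding difference of the complements $\mathbf{1}_{\mathbb{C}^r}-T_i$. This lets me average the $A$-representation and the $B$-representation, which is where a $\sqrt2$ can arise as in Lemma~\ref{lem:sandwich}.

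The hardest step is this last estimate. As a first attempt I would conjugate by $p$ itself. Then
\[
p(T_1-T_2)p=a_2A_1a_2-a_1A_2a_1,\qquad -p(T_1-T_2)p=a_2B_1a_2-a_1B_2a_1 .
\]
Adding and subtracting, and using $a_1^2+a_2^2=\mathbf{1}_{\mathbb{C}^r}$ together with $A_1+B_1=C_1=a_1^2$ and $A_2+B_2=a_2^2$, should express $p(T_1-T_2)p$ through $A_1-A_2$ and $B_1-B_2$ sandwiched by $a_1\pm a_2$. These sandwiches are controlled by Lemma~\ref{lem:inpair}: $\|a_1+a_2\|_\infty\leq\sqrt2$, and conjugating by $a_1$ and $a_2$ and adding is a contraction. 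That route alone gives a factor $\|p^{-1}\|_\infty^2$, which is too weak. So I would then try to take a "square root" of the identity, writing $p^{1/2}(T_1-T_2)p^{1/2}$ as a combination $\sum_j X_j^*(A_1-A_2\ \text{or}\ B_1-B_2)Y_j$ with $\sum X_jX_j^*,\sum Y_j^*Y_j\leq\sqrt2\cdot\mathbf{1}_{\mathbb{C}^r}$. That would be a Haagerup-type factorization as in Lemma~\ref{lem:schur}. If this fails, I would fall back to the weaker bound with $(1-4\gamma^2)^{-1}$. I would check the constant $2\sqrt2$ against Example~\ref{ex:one parameter sharp} as $\delta\to0$, where $p\to\frac12\mathbf{1}_{\mathbb{C}^r}$ and the factor $2$ comes from $\|p^{-1/2}\|_\infty^2$.
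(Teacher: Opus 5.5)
Your treatment of $\gamma\le\delta$ (via $C_1-C_2=2C_1-\mathbf{1}_{\mathbb{C}^r}$), of the support statement (Lemma~\ref{lem:thompson}(iv)$\Rightarrow$(ii)) and of the second inequality is correct. So is your reduction. Since $\lambda_{\min}(p)=\frac12\sqrt{1-4\gamma^2}$, the first inequality reads $\tau\le\sqrt2\,\delta/\lambda_{\min}(p)$, and it would follow from $\|p^{1/2}(T_1-T_2)p^{1/2}\|_\infty\le\sqrt2\,\delta$. But that estimate is the entire content of the lemma, and you do not prove it.

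It is also a weighted strengthening of the lemma. It coincides with the lemma when $p$ is a multiple of $\mathbf{1}_{\mathbb{C}^r}$, as in Example~\ref{ex:one parameter sharp}, and is strictly stronger otherwise. Its truth is therefore not evident, and the Haagerup-type factorization you hope for is never specified.

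Your first attempt does not get there either. From $p(T_1-T_2)p=a_2A_1a_2-a_1A_2a_1=a_1B_2a_1-a_2B_1a_2$, no combination involves only $A_1-A_2$ and $B_1-B_2$. Averaging leaves the non-small operator $A_1+A_2-B_1-B_2$, multiplied by $a_1-a_2$. Bounding that term by norms gives only $\tau\le c\,\delta/(1-4\gamma^2)$. This has the wrong power of $1-4\gamma^2$, and with the obvious estimates $c=2+2\sqrt2>2\sqrt2$. So the fallback proves neither the stated inequality nor the optimality of $2\sqrt2$.

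The missing idea is to keep $T_-=T_1-T_2$ on both sides of an exact identity, and to use $C_1-C_2=(A_1-A_2)+(B_1-B_2)$ to absorb the large term. Put $u=a_1+a_2$, $v=a_1-a_2$, $\bar T=\frac12(T_1+T_2)$ and $\bar T'=\mathbf{1}_{\mathbb{C}^r}-\bar T$. Then $2T_{1,2}-\mathbf{1}_{\mathbb{C}^r}=(\bar T-\bar T')\pm T_-$. The large term is $a_1(\bar T-\bar T')a_1-a_2(\bar T-\bar T')a_2=\frac12\big(u(\bar T-\bar T')v+v(\bar T-\bar T')u\big)$.

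Because $uv=C_1-C_2$, conjugating by $u^{-1}$ (not by $p$) turns $N=u^{-1}v$ into $\tilde E+\tilde F$, where $\tilde E=u^{-1}(A_1-A_2)u^{-1}$ and $\tilde F=u^{-1}(B_1-B_2)u^{-1}$. This gives $T_-+NT_-N=2Y$ with $Y=\tilde E\bar T'+\bar T'\tilde E-\tilde F\bar T-\bar T\tilde F$. Write $Y$ as the row $[\bar T'\ \ -\bar T]$ times the column with entries $\tilde E,\tilde F$, plus the row $[\tilde E\ \ -\tilde F]$ times the column with entries $\bar T',\bar T$. This gives $\|Y\|_\infty\le2\sqrt2\max\{\|\tilde E\|_\infty,\|\tilde F\|_\infty\}\le2\sqrt2\,\delta/(1+s_\gamma)$, where $s_\gamma=\sqrt{1-4\gamma^2}$ and $u^2\ge(1+s_\gamma)\mathbf{1}_{\mathbb{C}^r}$. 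Moreover $\|N\|_\infty^2\le(1-s_\gamma)/(1+s_\gamma)<1$. Solving $(1-\|N\|_\infty^2)\tau\le2\|Y\|_\infty$ then gives $\tau\le2\sqrt2\,\delta/s_\gamma$ directly, with no weighted estimate.
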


\begin{proof}
We have $2\gamma=\|C_1-C_2\|_\infty\leq\|A_1-A_2\|_\infty+\|B_1-B_2\|_\infty\leq2\delta$. If $\delta<\frac12$ then $\gamma<\frac12$, and Lemma~\ref{lem:thompson} gives the support condition. Put $u=a_1+a_2$ and $v=a_1-a_2$, which commute with $C_1$, and
\[
\bar T=\tfrac12(T_1+T_2),\qquad\bar T'=\mathbf{1}_{\mathbb{C}^r}-\bar T,\qquad M=\bar T-\bar T',\qquad T_-=T_1-T_2 .
\]
Since $0\leq T_i\leq\mathbf{1}_{\mathbb{C}^r}$, also $0\leq\bar T,\bar T'\leq\mathbf{1}_{\mathbb{C}^r}$. By Lemma~\ref{lem:inpair}, $A_i-B_i=a_i(2T_i-\mathbf{1}_{\mathbb{C}^r})a_i$, and $2T_1-\mathbf{1}_{\mathbb{C}^r}=M+T_-$, $2T_2-\mathbf{1}_{\mathbb{C}^r}=M-T_-$, so
\[
(A_1-A_2)-(B_1-B_2)=(a_1Ma_1-a_2Ma_2)+(a_1T_-a_1+a_2T_-a_2).
\]
For every $X$, $a_1Xa_1+a_2Xa_2=\frac12(uXu+vXv)$ and $a_1Xa_1-a_2Xa_2=\frac12(uXv+vXu)$, as one checks by expanding. Moreover $uv=a_1^2-a_2^2=C_1-C_2=(A_1-A_2)+(B_1-B_2)$. Put
\[
\tilde E=u^{-1}(A_1-A_2)u^{-1},\qquad\tilde F=u^{-1}(B_1-B_2)u^{-1},\qquad N=u^{-1}v .
\]
Since $u$ and $v$ commute, $N$ is Hermitian and $N=u^{-1}(uv)u^{-1}=\tilde E+\tilde F$. Conjugating the displayed identity by $u^{-1}$ gives $\tilde E-\tilde F=\frac12(NM+MN)+\frac12(T_-+NT_-N)$. Since $M=\bar T-\bar T'$ and $\mathbf{1}_{\mathbb{C}^r}=\bar T+\bar T'$, this is the exact identity
\begin{equation}\label{eq:one parameter identity}
T_-+NT_-N=2Y,\qquad Y=\tilde E\bar T'+\bar T'\tilde E-\tilde F\bar T-\bar T\tilde F .
\end{equation}
Write $Y=\begin{bmatrix}\bar T'&-\bar T\end{bmatrix}\begin{bmatrix}\tilde E\\\tilde F\end{bmatrix}+\begin{bmatrix}\tilde E&-\tilde F\end{bmatrix}\begin{bmatrix}\bar T'\\\bar T\end{bmatrix}$ as a sum of products of a row and a column of operators. We have $\|[\bar T'\ \ -\bar T]\|_\infty^2=\|\bar T'^2+\bar T^2\|_\infty\leq\|\bar T'+\bar T\|_\infty=1$, and likewise for the column $(\bar T',\bar T)$. With $\tilde\delta=\max\{\|\tilde E\|_\infty,\|\tilde F\|_\infty\}$, the column $(\tilde E,\tilde F)$ and the row $[\tilde E\ \ -\tilde F]$ have norm at most $\sqrt2\,\tilde\delta$. Hence $\|Y\|_\infty\leq2\sqrt2\,\tilde\delta$.

Let $\lambda$ run over the spectrum of $C_1$, which lies in $[\frac12-\gamma,\frac12+\gamma]$, and put $\theta_\lambda=2\sqrt{\lambda(1-\lambda)}$ and $s_\gamma=\sqrt{1-4\gamma^2}$. Since $4\lambda(1-\lambda)=1-(2\lambda-1)^2$, we have $\theta_\lambda\geq s_\gamma$. The eigenvalues of $u^2=\mathbf{1}_{\mathbb{C}^r}+2a_1a_2$ are $1+\theta_\lambda\geq1+s_\gamma$, so $\tilde\delta\leq\delta/(1+s_\gamma)$. Those of $N^2=v^2u^{-2}$ are $(1-\theta_\lambda)/(1+\theta_\lambda)\leq(1-s_\gamma)/(1+s_\gamma)$. By \eqref{eq:one parameter identity}, $\|T_-\|_\infty\leq2\|Y\|_\infty+\|N\|_\infty^2\|T_-\|_\infty$, and therefore
\[
\tau=\|T_-\|_\infty\leq\frac{2\|Y\|_\infty}{1-\|N\|_\infty^2}\leq2\cdot\frac{2\sqrt2\,\delta}{1+s_\gamma}\cdot\frac{1+s_\gamma}{2s_\gamma}=\frac{2\sqrt2\,\delta}{s_\gamma}.
\]
Finally $\gamma\leq\delta$.
\end{proof}

\begin{example}\label{ex:one parameter sharp}
Let $r=2$, $Z=\mathrm{diag}(1,-1)$, $X=\big[\begin{smallmatrix}0&1\\1&0\end{smallmatrix}\big]$, and let $U_0=(Z+X)/\sqrt2$ and $U_1=(Z-X)/\sqrt2$, which are symmetric unitaries. For $0<\delta<\frac12$ put $T_0=\mathrm{diag}(\delta,1-\delta)$ and
\[
A_{1,2}=\tfrac12\big(T_0\pm\delta U_0\big),\qquad B_{1,2}=\tfrac12\big(\mathbf{1}_{\mathbb{C}^2}-T_0\pm\delta U_1\big).
\]
These are positive semidefinite, since all four have positive diagonal entries and determinant at least $\frac{\sqrt2-1}{4\sqrt2}\,\delta(1-2\delta)>0$. They sum to $\mathbf{1}_{\mathbb{C}^2}$, and $\|A_1-A_2\|_\infty=\|B_1-B_2\|_\infty=\delta$; by Theorem~\ref{thm:arveson} they are the derivatives of four completely positive maps with respect to any linearly independent pair $V_1,V_2$. Here $C_1=\frac12(\mathbf{1}_{\mathbb{C}^2}+\sqrt2\,\delta Z)$, so $u=\sqrt2\,\mathbf{1}_{\mathbb{C}^2}+O(\delta)$, $N=O(\delta)$, $\tilde E=\frac\delta2U_0+O(\delta^2)$, $\tilde F=\frac\delta2U_1+O(\delta^2)$ and $\bar T=\mathrm{diag}(0,1)+O(\delta)$. A direct computation gives
\[
U_0\bar T'_0+\bar T'_0U_0-U_1\bar T_0-\bar T_0U_1=\sqrt2\,\mathbf{j}\mathbf{j}^\top,\qquad\bar T_0=\mathrm{diag}(0,1),\ \bar T'_0=\mathrm{diag}(1,0),
\]
so $Y=\frac{\delta}{\sqrt2}\mathbf{j}\mathbf{j}^\top+O(\delta^2)$, and \eqref{eq:one parameter identity} gives $\tau=2\sqrt2\,\delta+O(\delta^2)$. Since $2\sqrt2>2$ and $\frac{2\delta}{1-2\delta}=2\delta+O(\delta^2)$, we get $\tau>\frac{2\delta}{1-2\delta}$ for small $\delta$, so the bound $2\delta/(1-2\delta)$ of the commuting case does not hold in general.
\end{example}

\begin{corollary}\label{cor:universal}
For all $\Phi_1,\Psi_1,\Phi_2,\Psi_2\in\mathrm{CP}(\mathcal{X},\mathcal{Y})$ with $\Omega'\neq0$,
\[
\big|\widehat{F}(\Phi_1,\Psi_1)-\widehat{F}(\Phi_2,\Psi_2)\big|\leq\tfrac{11}{4}\,\|\overline{\Omega}^*(\mathbf{1}_{\mathcal{Y}})\|_\infty\,\delta^{1/2},
\]
where $\delta$ is the relative distance of Lemma~\ref{lem:relative distance}. For channels the constant is $\frac{11}{2}$.
\end{corollary}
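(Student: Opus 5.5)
We split into two cases according to the size of $\delta$. Write $S=\|\overline{\Omega}^*(\mathbf{1}_{\mathcal{Y}})\|_\infty$.

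\emph{Large $\delta$.} Suppose $\delta\geq\delta_0$ for a threshold $\delta_0<\frac12$ to be fixed later. We use a trivial bound on $\widehat{F}(\Phi_i,\Psi_i)$ itself. By Proposition~\ref{lambda min formula}, $0\leq\widehat{F}(\Phi_i,\Psi_i)=\lambda_{\min}(\Tr_{\mathcal{Y}}(J(\Phi_i)\#J(\Psi_i)))$. By Theorem~\ref{fundamental principle} applied to $\Omega_i$, the mean $J(\Phi_i)\#J(\Psi_i)$ is dominated by $\frac12 J(\Omega_i)$, since $g_\#\leq\frac12$. Hence $\widehat{F}(\Phi_i,\Psi_i)\leq\frac12\|\Omega_i^*(\mathbf{1}_{\mathcal{Y}})\|_\infty\leq\|\overline{\Omega}^*(\mathbf{1}_{\mathcal{Y}})\|_\infty$, because $\Omega_i\leq\Omega'=2\overline{\Omega}$. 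The difference of the two fidelities is therefore at most $S\leq S\,\delta^{1/2}/\delta_0^{1/2}$.

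\emph{Small $\delta$.} Suppose $\delta<\delta_0$. By Lemma~\ref{lem:one parameter}, the supports coincide, so Theorem~\ref{thm:four maps}(b) applies. Two inputs control its right-hand side. First, Lemma~\ref{lem:one parameter} gives $\tau\leq 2\sqrt2\,\delta/\sqrt{1-4\delta^2}$. Second, Lemma~\ref{lem:thompson} together with $\gamma\leq\delta$ gives $\tanh\frac{d_T}{2}=2\gamma\leq2\delta$. Substituting,
\[
\frac{|\widehat{F}_1-\widehat{F}_2|}{S}\leq\Big(\frac{2\sqrt2}{\sqrt{1-4\delta^2}}\Big)^{1/2}\delta^{1/2}+2\sqrt2\,\delta
\leq\delta^{1/2}\Big(\frac{8^{1/4}}{(1-4\delta_0^2)^{1/4}}+2\sqrt2\,\delta_0^{1/2}\Big).
\]

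\emph{Choosing the threshold.} Now pick $\delta_0$ so that both cases give at most $\frac{11}{4}$. The large-$\delta$ case needs $\delta_0^{-1/2}\leq\frac{11}{4}$, which means $\delta_0\geq 16/121\approx0.132$. Try $\delta_0=0.14$. Then $8^{1/4}\approx1.682$, and $(1-4\cdot0.0196)^{-1/4}\approx1.021$, so the first term is about $1.717$. The second term is $2\sqrt2\cdot0.374\approx1.058$. Their sum is about $2.775$, slightly above $2.75$.

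This is the main obstacle: the constant is tight. The fallback is to sharpen the small-$\delta$ analysis. One option is to use $\gamma$ directly instead of bounding it by $\delta$. Another is to improve the large-$\delta$ bound: both fidelities are nonnegative and each is at most $\frac12\|\Omega_i^*(\mathbf{1}_{\mathcal{Y}})\|_\infty$, and this bound can be refined when $\gamma$ is known. If needed, drop the $\sqrt{1-4\delta^2}$ slack by choosing $\delta_0$ near $16/121$ and verifying the numerics carefully.

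\emph{Channels.} If all four maps are channels, then $\Omega_i^*(\mathbf{1}_{\mathcal{Y}})=2\cdot\mathbf{1}_{\mathcal{X}}$, so $S=2$. The bound becomes $\frac{11}{4}\cdot2\,\delta^{1/2}=\frac{11}{2}\,\delta^{1/2}$.
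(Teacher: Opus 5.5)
\textbf{The gap is the numerical verification of the constant; everything else matches the paper.} Your two-case structure is exactly the paper's. For large $\delta$ you use the trivial bound $0\leq\widehat F(\Phi_i,\Psi_i)\leq S$, with $S=\|\overline{\Omega}^*(\mathbf{1}_{\mathcal{Y}})\|_\infty$. You obtain $J(\Phi_i)\#J(\Psi_i)\leq\frac12J(\Omega_i)$ from $g_\#\leq\frac12$; the paper obtains it from the arithmetic--geometric mean inequality, which is the same fact. For small $\delta$ you use Lemma~\ref{lem:one parameter}, Lemma~\ref{lem:thompson} and Theorem~\ref{thm:four maps}(b), as the paper does. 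However, as written the proposal does not prove $\frac{11}{4}$. With $\delta_0=0.14$ your small-$\delta$ constant is $\approx2.775$, as you computed, and the argument ends with a list of fallbacks rather than a proof.

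\textbf{The fix is a better threshold, and none of the fallbacks is needed.} ``Dropping'' the factor $(1-4\delta^2)^{-1/2}$ is not available anyway, since it comes from Lemma~\ref{lem:one parameter}. The admissible window for $\delta_0$ is narrow:
\begin{itemize}
\item the large-$\delta$ case needs $\delta_0\geq16/121\approx0.1322$;
\item the increasing function $c(\delta_0)=8^{1/4}(1-4\delta_0^2)^{-1/4}+2\sqrt2\,\delta_0^{1/2}$ exceeds $\frac{11}{4}$ once $\delta_0$ is beyond roughly $0.134$.
\end{itemize}
So $0.14$ just misses. The paper takes $\delta_0=\frac2{15}$, which gives:
\begin{itemize}
\item $\tau\leq2\sqrt2\,\delta/\sqrt{1-16/225}<2.935\,\delta$;
\item a small-$\delta$ constant below $1.714+1.033<\frac{11}{4}$;
\item a large-$\delta$ constant $(15/2)^{1/2}<2.74$.
\end{itemize}
Your own suggestion $\delta_0=16/121$ also works, giving a small-$\delta$ constant of about $2.741$. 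With either choice the rest of your argument goes through unchanged, including the channel case $S=2$.
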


\begin{proof}
Suppose first that $\delta\leq\frac2{15}$. By Lemma~\ref{lem:one parameter}, $\tau\leq2\sqrt2\,\delta/\sqrt{1-16/225}<2.935\,\delta$. By Lemmas~\ref{lem:thompson} and~\ref{lem:one parameter}, $\tanh\frac{d_T}{2}=2\gamma\leq2\delta\leq2(\frac2{15})^{1/2}\delta^{1/2}$. Theorem~\ref{thm:four maps}(b) therefore gives the bound with the constant
\[
2.935^{1/2}+2\sqrt2\,\big(\tfrac2{15}\big)^{1/2}<1.714+1.033<\tfrac{11}{4}.
\]
If $\delta>\frac2{15}$, we use that $0\leq\widehat{F}(\Phi_i,\Psi_i)\leq\|\overline{\Omega}^*(\mathbf{1}_{\mathcal{Y}})\|_\infty$: indeed, $J(\Phi_i)\#J(\Psi_i)\leq\frac12(J(\Phi_i)+J(\Psi_i))=\frac12J(\Omega_i)$ by the arithmetic--geometric mean inequality \cite{KuboAndo1980}, so Proposition~\ref{lambda min formula} gives $\widehat{F}(\Phi_i,\Psi_i)\leq\frac12\lambda_{\min}(\Omega_i^*(\mathbf{1}_{\mathcal{Y}})^\top)\leq\frac12\|\Omega_i^*(\mathbf{1}_{\mathcal{Y}})\|_\infty\leq\|\overline{\Omega}^*(\mathbf{1}_{\mathcal{Y}})\|_\infty$, using $\Omega_i\leq\Omega'=2\overline{\Omega}$. Hence the left-hand side is at most $\|\overline{\Omega}^*(\mathbf{1}_{\mathcal{Y}})\|_\infty<(\frac{15}{2})^{1/2}\|\overline{\Omega}^*(\mathbf{1}_{\mathcal{Y}})\|_\infty\delta^{1/2}$, and $(\frac{15}{2})^{1/2}<2.74$. For channels, $\overline{\Omega}^*(\mathbf{1}_{\mathcal{Y}})=2\cdot\mathbf{1}_{\mathcal{X}}$.
\end{proof}

\begin{corollary}\label{cor:four maps cb}
For all $\Phi_1,\Psi_1,\Phi_2,\Psi_2\in\mathrm{CP}(\mathcal{X},\mathcal{Y})$ with $\Omega'\neq0$,
\[
\delta\leq\frac{\max\big\{|||\Phi_1-\Phi_2|||_1,\ |||\Psi_1-\Psi_2|||_1\big\}}{\nu(\Omega')^2},
\]
and if $\range J(\Phi_2+\Psi_2)\subseteq\range J(\Phi_1+\Psi_1)$, then $\nu(\Omega')\geq\nu(\Phi_1+\Psi_1)$. Consequently, in the latter case,
\[
\big|\widehat{F}(\Phi_1,\Psi_1)-\widehat{F}(\Phi_2,\Psi_2)\big|\leq\frac{11\,\|\overline{\Omega}^*(\mathbf{1}_{\mathcal{Y}})\|_\infty}{4\,\nu(\Phi_1+\Psi_1)}\,\max\big\{|||\Phi_1-\Phi_2|||_1,\ |||\Psi_1-\Psi_2|||_1\big\}^{1/2}.
\]
\end{corollary}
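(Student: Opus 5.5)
The plan has three steps: bound $\delta$ by the two-sided estimate of Theorem~\ref{thm:two sided}, compare $\nu(\Omega')$ with $\nu(\Phi_1+\Psi_1)$ via Lemma~\ref{lem:nu}(3), and then substitute into Corollary~\ref{cor:universal}.

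\emph{Step 1: the bound on $\delta$.} Work in the fixed Kraus decomposition $V_1,\dots,V_r$ of $\Omega'$. Since $\Phi_1-\Phi_2=\Psi_{A_1-A_2}$ with $A_1-A_2$ Hermitian, the lower bound of Theorem~\ref{thm:two sided}, applied to $\Omega'$, gives
\[
\nu(\Omega')^2\|A_1-A_2\|_\infty\leq|||\Phi_1-\Phi_2|||_1 .
\]
The same argument with $B_1-B_2$ gives $\nu(\Omega')^2\|B_1-B_2\|_\infty\leq|||\Psi_1-\Psi_2|||_1$. Taking the maximum gives the first inequality, because $\nu(\Omega')>0$.

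\emph{Step 2: comparing $\nu$.} Put $\Omega_1=\Phi_1+\Psi_1$. In the completely positive order, $\Omega'=\Omega_1+\Omega_2\geq\Omega_1$. At the Choi level, $J(\Omega')=J(\Omega_1)+J(\Omega_2)$. For positive semidefinite operators the range of a sum is the sum of the ranges. The hypothesis $\range J(\Omega_2)\subseteq\range J(\Omega_1)$ therefore gives $\range J(\Omega')=\range J(\Omega_1)$. Lemma~\ref{lem:nu}(3) then yields $\nu(\Omega')\geq\nu(\Omega_1)$. (If $\Omega_1=0$, the hypothesis forces $\Omega_2=0$ and hence $\Omega'=0$, which is excluded.)

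\emph{Step 3: the fidelity bound.} Corollary~\ref{cor:universal} gives
\[
\big|\widehat{F}(\Phi_1,\Psi_1)-\widehat{F}(\Phi_2,\Psi_2)\big|\leq\tfrac{11}{4}\|\overline{\Omega}^*(\mathbf{1}_{\mathcal{Y}})\|_\infty\,\delta^{1/2}
\]
with no support assumption. Its right-hand side is increasing in $\delta$, so we may insert
\[
\delta^{1/2}\leq\frac{\max\{\cdots\}^{1/2}}{\nu(\Omega')}\leq\frac{\max\{\cdots\}^{1/2}}{\nu(\Phi_1+\Psi_1)},
\]
which uses Steps 1 and 2. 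This gives the stated constant $\frac{11}{4}\|\overline{\Omega}^*(\mathbf{1}_{\mathcal{Y}})\|_\infty/\nu(\Phi_1+\Psi_1)$.

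There is no real obstacle. The only point needing care is the range equality in Step 2: Lemma~\ref{lem:nu}(3) requires equal Choi ranges, not just domination, and the inclusion hypothesis is exactly what supplies this.
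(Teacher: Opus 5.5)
Your proposal is correct and follows the paper's proof step for step. You apply the lower bound of Theorem~\ref{thm:two sided} to $\Omega'$ to control $\delta$, you use the range-of-a-sum identity together with Lemma~\ref{lem:nu}(3) to compare $\nu(\Omega')$ with $\nu(\Phi_1+\Psi_1)$, and you substitute into Corollary~\ref{cor:universal}; your remark that the inclusion hypothesis excludes $\Phi_1+\Psi_1=0$ is a small point the paper leaves implicit.
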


\begin{proof}
With respect to the Kraus decomposition of $\Omega'$ we have $\Phi_1-\Phi_2=\Psi_{A_1-A_2}$ and $\Psi_1-\Psi_2=\Psi_{B_1-B_2}$, so the first inequality is Theorem~\ref{thm:two sided} applied to $\Omega'$. If $\range J(\Phi_2+\Psi_2)\subseteq\range J(\Phi_1+\Psi_1)$, then, since the range of a sum of positive semidefinite operators is the sum of their ranges, $\range J(\Omega')=\range J(\Phi_1+\Psi_1)$, and $\Omega'\geq\Phi_1+\Psi_1$ in the completely positive order, so Lemma~\ref{lem:nu}(3) gives $\nu(\Omega')\geq\nu(\Phi_1+\Psi_1)$. The last statement follows from Corollary~\ref{cor:universal}.
\end{proof}

\begin{remark}
Corollary~\ref{cor:four maps cb} shows that the geometric channel fidelity is locally H\"older continuous of order $\frac12$ in the completely bounded trace norm at every pair $(\Phi_1,\Psi_1)$, along perturbations that do not enlarge the Choi support of the sum, with a constant that depends on the pair only through $\|\overline{\Omega}^*(\mathbf{1}_{\mathcal{Y}})\|_\infty$ and $\nu(\Phi_1+\Psi_1)$, and not on any dimension. The same argument applies to the other quantities in Theorem~\ref{thm:four maps}, with the corresponding moduli. The support condition is needed here only to control $\tau$ by $\delta$: if $\Phi_2$ adds a small multiple $t\Xi$ of a map whose Choi matrix has range outside $\range J(\Phi_1+\Psi_1)$, then $\nu(\Omega')^2$ may be of order $t$, and Corollary~\ref{impossibility corollary} shows that some dependence on $\nu$ is unavoidable. Section~\ref{sec:different supports} extends Theorem~\ref{thm:four maps} itself to arbitrary supports. In the commuting case $\|T_1-T_2\|_\infty\leq2\delta/(1-2\delta)$, as a scalar computation shows. Example~\ref{ex:one parameter sharp} shows that noncommutativity costs a factor $\sqrt2$ in the leading constant of Lemma~\ref{lem:one parameter}, and no more.
\end{remark}

\section{Sums with different Choi supports}\label{sec:different supports}

We now drop the hypothesis $\range J(\Omega_1)=\range J(\Omega_2)$ of Section~\ref{sec:different sums}. Its only role there was to make $C_1$ and $C_2$ invertible. Lemma~\ref{lem:sandwich} then measures the change of the sums by $\gamma=\|C_1-\frac12\mathbf{1}_{\mathbb{C}^r}\|_\infty$, a supremum over the spectrum of $C_1$. A single direction that is present in one sum and absent from the other gives $\gamma=\frac12$, however small its weight, and the estimate becomes trivial. We replace this supremum by weighted estimates, organised by the Lebesgue decomposition of Section~\ref{sec:all f}.

We keep the notation of Section~\ref{sec:different sums}: $\Omega'$, $V_1,\dots,V_r$, $W$, $A_i$, $B_i$, $C_i$, $a_1=C_1^{1/2}$ and $a_2=C_2^{1/2}$. Put
\[
\Delta(Z)=\Tr_{\mathcal{Y}}(WZW^*),\qquad Z\in\mathrm{L}(\mathbb{C}^r),
\]
a completely positive map. For $K\geq0$, $\Delta(K)=\Tr_{\mathcal{Y}}J(\Psi_K)$ is the transpose of $\Psi_K^*(\mathbf{1}_{\mathcal{Y}})$, so $\|\Delta(\mathbf{1}_{\mathbb{C}^r})\|_\infty=\|\Omega'^*(\mathbf{1}_{\mathcal{Y}})\|_\infty=2\|\overline{\Omega}^*(\mathbf{1}_{\mathcal{Y}})\|_\infty$ and $\|\Delta(C_i)\|_\infty=\|\Omega_i^*(\mathbf{1}_{\mathcal{Y}})\|_\infty$. We use two elementary facts. If $Z$ is Hermitian and $-K\leq Z\leq K$, then
\begin{equation}\label{eq:cp domination}
\|\Delta(Z)\|_\infty\leq\|\Delta(K)\|_\infty,\qquad|||\Psi_Z|||_1\leq\|\Psi_K^*(\mathbf{1}_{\mathcal{Y}})\|_\infty .
\end{equation}
The first holds because $\Delta$ is positive. For the second, $\Psi_K\pm\Psi_Z=\Psi_{K\pm Z}$ is completely positive, so $-(\Psi_K\otimes\mathbf{1}_{\mathrm{L}(\mathcal{X})})(\rho)\leq(\Psi_Z\otimes\mathbf{1}_{\mathrm{L}(\mathcal{X})})(\rho)\leq(\Psi_K\otimes\mathbf{1}_{\mathrm{L}(\mathcal{X})})(\rho)$ for every state $\rho$, and therefore $\|(\Psi_Z\otimes\mathbf{1}_{\mathrm{L}(\mathcal{X})})(\rho)\|_1\leq\tr(\Psi_K\otimes\mathbf{1}_{\mathrm{L}(\mathcal{X})})(\rho)\leq\|\Psi_K^*(\mathbf{1}_{\mathcal{Y}})\|_\infty$.

\subsection{The Lebesgue decomposition of the sums}

Let $Q_1$, $Q_R$ and $Q_2$ be the spectral projections of $C_1$ for $\{1\}$, $(0,1)$ and $\{0\}$. They commute with $a_1$ and $a_2$, and $P_1=Q_1+Q_R$ and $P_2=Q_R+Q_2$ are the range projections of $C_1$ and $C_2=\mathbf{1}_{\mathbb{C}^r}-C_1$. By Proposition~\ref{prop:lebesgue cp}, applied to the pair $(\Omega_1,\Omega_2)$, whose derivative with respect to $\Omega'$ is $C_1$, the singular part of $\Omega_1$ relative to $\Omega_2$ is $(\Omega_1)_{\rm s}=\Psi_{Q_1}$, and that of $\Omega_2$ relative to $\Omega_1$ is $(\Omega_2)_{\rm s}=\Psi_{Q_2}$. The absolutely continuous parts $\Omega_i^{\rm ac}=\Omega_i-(\Omega_i)_{\rm s}=\Psi_{Q_RC_i}$ have the same Choi range, and the supports of the sums agree if and only if $Q_1=Q_2=0$.

\begin{lemma}\label{lem:inpair general}
Let $T_1=a_1^+A_1a_1^+$ and $T_2=a_2^+A_2a_2^+$, where $^+$ denotes the Moore--Penrose inverse, regarded as operators on $\range P_1$ and $\range P_2$.
\begin{enumerate}
\item[(a)] $0\leq T_i\leq P_i$, $A_i=a_iT_ia_i$ and $B_i=a_i(P_i-T_i)a_i$.
\item[(b)] If $U_i$ is an isometry with range $\range P_i$, then $U_i^*T_iU_i$ is the derivative of $\Phi_i$ with respect to $\Omega_i$, relative to the Kraus decomposition of $\Omega_i$ whose column matrix is $Wa_iU_i$.
\item[(c)] For every operator connection $\sigma'$, $A_i\,\sigma'\,B_i=a_i\,\varphi_{\sigma'}(T_i)\,a_i$, where $\varphi_{\sigma'}(T_i)$ is formed by the functional calculus in $\mathrm{L}(\range P_i)$ and extended by $0$.
\item[(d)] Let $N_i=a_i\varphi(T_i)a_i$ in this sense. Then $J(\Phi_i)\,\sigma\,J(\Psi_i)=WN_iW^*$ for a mean $\sigma$ and $\varphi=g_\sigma$, so that $\widehat F(\Phi_i,\Psi_i)=\lambda_{\min}(\Delta(N_i))$ for $\varphi=g_\#$ and $\widehat Q_\alpha(\Phi_i\|\Psi_i)=\lambda_{\min}(\Delta(N_i))$ for $\varphi=h_\alpha$. For every operator convex $f$ with $h_f(T_i)$ finite, $\widehat D_f(\Phi_i\|\Psi_i)=\lambda_{\max}(\Delta(N_i))$ for $\varphi=h_f$.
\end{enumerate}
\end{lemma}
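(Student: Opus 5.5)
I would prove the four parts in order, reducing everything to the invertible case of Lemma~\ref{lem:inpair} by restricting to the range of $P_i$. The basic observation is that $A_i,B_i\leq C_i=a_i^2$. Consequently $\range A_i,\range B_i\subseteq\range C_i=\range P_i$. Also $a_i$ restricted to $\range P_i$ is invertible, with inverse $a_i^+$ there, and $a_i^+a_i=a_ia_i^+=P_i$.

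For (a), positivity of $T_i=a_i^+A_ia_i^+$ is immediate. From $A_i+B_i=a_i^2$ I get $a_i^+A_ia_i^++a_i^+B_ia_i^+=P_i$. Hence $T_i\leq P_i$ and $P_i-T_i=a_i^+B_ia_i^+$. Conjugating back by $a_i$ and using $P_iA_iP_i=A_i$ gives $a_iT_ia_i=A_i$ and $a_i(P_i-T_i)a_i=B_i$.

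For (b), I would follow the proof of Lemma~\ref{lem:inpair}. Let $\widetilde W=Wa_iU_i$. It is injective, because $W$ is injective and $a_iU_i$ is injective, since $a_i$ is invertible on $\range P_i$. Its columns are therefore linearly independent combinations of the $V_k$. Next, $\widetilde W\widetilde W^*=Wa_iP_ia_iW^*=WC_iW^*=J(\Omega_i)$, so these columns give a Kraus decomposition of $\Omega_i$. Finally, $\widetilde W(U_i^*T_iU_i)\widetilde W^*=Wa_iP_iT_iP_ia_iW^*=WA_iW^*=J(\Phi_i)$. Uniqueness in Theorem~\ref{thm:arveson} then identifies the derivative, and $0\leq U_i^*T_iU_i\leq\mathbf{1}$ follows from (a).

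For (c), I would apply \eqref{eq:W transfer}, which holds for connections by Remark~\ref{rem:connections}, with the injective map $a_iU_i$ playing the role of $W$. It gives
\[
(a_iU_iXU_i^*a_i)\,\sigma'\,(a_iU_iYU_i^*a_i)=a_iU_i(X\,\sigma'\,Y)U_i^*a_i .
\]
I take $X=U_i^*T_iU_i$ and $Y=\mathbf{1}-X$. Remark~\ref{rem:connections}(2) gives $X\,\sigma'\,Y=\varphi_{\sigma'}(X)$. Since $U_i\varphi_{\sigma'}(X)U_i^*$ is exactly $\varphi_{\sigma'}(T_i)$ formed in $\mathrm{L}(\range P_i)$ and extended by $0$, (c) follows. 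The only care needed is that \eqref{eq:W transfer} was stated for injective $W:\mathbb{C}^r\to\mathcal{V}$; here I use it with domain $\range P_i$, which is harmless.

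For (d), the mean statement follows from \eqref{eq:W transfer} for $W$ together with (c): $J(\Phi_i)\,\sigma\,J(\Psi_i)=W(A_i\,\sigma\,B_i)W^*=WN_iW^*$. The fidelity formula then comes from Proposition~\ref{lambda min formula}. The $\widehat Q_\alpha$ formula comes from Lemma~\ref{renyi closed form} via $J(\Psi_i)\#_\alpha J(\Phi_i)=J(\Phi_i)\#_{1-\alpha}J(\Psi_i)$ and $g_{\#_{1-\alpha}}=h_\alpha$. For the $f$-divergence I apply Theorem~\ref{thm:all f}(b) to $\Omega_i$ with the column matrix $\widetilde W$ from (b) and derivative $U_i^*T_iU_i$. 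That theorem gives $\widehat D_f=\lambda_{\max}\big(\Tr_{\mathcal{Y}}(\widetilde Wh_f(U_i^*T_iU_i)\widetilde W^*)\big)$, and $\widetilde Wh_f(U_i^*T_iU_i)\widetilde W^*=WN_iW^*$.

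The main obstacle is the bookkeeping on $\range P_i$ rather than any new estimate. The point to watch is that $h_f$ and $\varphi$ are evaluated only on the compressed operator, so that the kernel of $C_i$ contributes $0$ and not $h_f(0)$. The degenerate case $\Omega_i=0$ (so $P_i=0$) must be treated separately and is trivial.
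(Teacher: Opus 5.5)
Your proposal is correct and matches the paper's proof step for step in (a), (b) and (d). The only difference is in (c). There you apply \eqref{eq:W transfer} (valid for connections by Remark~\ref{rem:connections}) to the injective map $a_iU_i$, whereas the paper splits off $\ker P_i$ as a direct summand and uses the transformer equality for the invertible restriction of $a_i$ to $\range P_i$. Your version bundles the same ingredients into one step and is equally valid.
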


\begin{proof}
(a) Since $0\leq A_i\leq C_i=a_i^2$, we have $\range A_i\subseteq\range C_i=\range P_i$, so $A_i=P_iA_iP_i=a_ia_i^+A_ia_i^+a_i=a_iT_ia_i$. Moreover $T_i\leq a_i^+C_ia_i^+=P_i$, and $B_i=C_i-A_i=a_i(P_i-T_i)a_i$.

(b) $J(\Omega_i)=WC_iW^*=(Wa_iU_i)(Wa_iU_i)^*$, and $Wa_iU_i$ is injective because $W$ is injective and $a_i$ is injective on $\range P_i$. By (a), $J(\Phi_i)=WA_iW^*=(Wa_iU_i)\,U_i^*T_iU_i\,(Wa_iU_i)^*$, and Theorem~\ref{thm:arveson} gives the claim.

(c) The operators $A_i$, $B_i$ and $a_i$ vanish on $\ker P_i$ and leave $\range P_i$ invariant. A connection respects such direct sums, by its defining formula for invertible arguments and continuity from above, and $0\,\sigma'\,0=0$. On $\range P_i$ the operator $a_i$ is invertible, so the transformer equality and Remark~\ref{rem:connections}(2), applied in $\mathrm{L}(\range P_i)$, give $A_i\,\sigma'\,B_i=a_i\big(T_i\,\sigma'\,(P_i-T_i)\big)a_i=a_i\varphi_{\sigma'}(T_i)a_i$.

(d) The first statements follow from (c) and \eqref{eq:W transfer}, exactly as in the proof of Theorem~\ref{thm:four maps}, and from Proposition~\ref{lambda min formula} and Lemma~\ref{renyi closed form}. For the last, apply Theorem~\ref{thm:all f}(b) to $\Omega_i$ with the Kraus decomposition of (b).
\end{proof}

We now define the quantities that replace $\tau$ and $\gamma$. Put
\[
\tau_{\rm ac}=\|Q_R(T_1-T_2)Q_R\|_\infty,\qquad\varkappa_i=\|Q_iT_iQ_R\|_\infty\ (i=1,2),\qquad\gamma_{\rm ac}=\|Q_R(C_1-\tfrac12\mathbf{1}_{\mathbb{C}^r})Q_R\|_\infty .
\]
These are the in-pair distance on the common part of the supports, the coupling inside each pair between the singular and the common directions, and the half-distance of $C_1$ from $\frac12$ on the common part. Since $Q_R$ is the spectral projection for $(0,1)$ and $r<\infty$, we have $\gamma_{\rm ac}<\frac12$. The proof of Lemma~\ref{lem:thompson}, applied to $\Omega_1^{\rm ac}$ and $\Omega_2^{\rm ac}$, whose derivatives with respect to their sum $\Psi_{Q_R}$ are the restrictions of $C_1$ and $C_2$ to $\range Q_R$, gives
\[
\tanh\big(\tfrac12d_{\rm ac}\big)=2\gamma_{\rm ac},\qquad d_{\rm ac}=d_T\big(\Omega_1^{\rm ac},\Omega_2^{\rm ac}\big)<\infty .
\]
Let $\Omega_1\#\Omega_2$ be the completely positive map with Choi matrix $J(\Omega_1)\#J(\Omega_2)$. By \eqref{eq:W transfer} this Choi matrix is $W(C_1\#C_2)W^*=Wa_1a_2W^*$, since $C_1$ and $C_2$ commute, and $a_1a_2=Q_Ra_1a_2Q_R$. Hence $\Omega_1^{\rm ac}+\Omega_2^{\rm ac}-2\,\Omega_1\#\Omega_2=\Psi_{Q_R(a_1-a_2)^2}$ is completely positive, and we put
\[
h_{\rm ac}=\big\|\big(\Omega_1^{\rm ac}+\Omega_2^{\rm ac}-2\,\Omega_1\#\Omega_2\big)^*(\mathbf{1}_{\mathcal{Y}})\big\|_\infty^{1/2},\qquad m_{\rm s}=\big\|\big((\Omega_1)_{\rm s}+(\Omega_2)_{\rm s}\big)^*(\mathbf{1}_{\mathcal{Y}})\big\|_\infty .
\]
Thus $h_{\rm ac}$ is a Hellinger-type distance between the absolutely continuous parts of the sums, and $m_{\rm s}$ is the size of their singular parts.

\begin{lemma}\label{lem:support quantities}
\begin{enumerate}
\item[(a)] If $\range J(\Omega_1)=\range J(\Omega_2)$, then $m_{\rm s}=\varkappa_1=\varkappa_2=0$, $\tau_{\rm ac}=\tau$, $\gamma_{\rm ac}=\gamma$ and $d_{\rm ac}=d_T$.
\item[(b)] $\varkappa_1,\varkappa_2\leq\frac12$.
\item[(c)] \[h_{\rm ac}^2\leq\min\big\{\|\Delta(Q_R|C_1-C_2|)\|_\infty,\ 4\gamma_{\rm ac}^2\|\Delta(Q_R)\|_\infty\big\},\] and $4\gamma_{\rm ac}^2\|\Delta(Q_R)\|_\infty\leq8\gamma_{\rm ac}^2\|\overline{\Omega}^*(\mathbf{1}_{\mathcal{Y}})\|_\infty$. Here $\Delta(Q_R|C_1-C_2|)$ is the transpose of $\Psi_{Q_R|C_1-C_2|}^*(\mathbf{1}_{\mathcal{Y}})$, the total variation of the absolutely continuous parts measured in the frame of $\Omega'$.
\end{enumerate}
\end{lemma}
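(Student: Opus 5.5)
Each of the three parts reduces to elementary spectral facts about the commuting family $C_1$, $a_1$, $a_2$, $Q_1$, $Q_R$, $Q_2$. We work in a basis diagonalising $C_1$, so that all of these are simultaneously diagonal with $C_1$ having eigenvalues $\lambda\in[0,1]$ and $C_2=\mathbf{1}_{\mathbb{C}^r}-C_1$.

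For (a), we invoke Lemma~\ref{lem:thompson}: equal supports means $C_1,C_2$ are invertible, so $Q_1=Q_2=0$ and $Q_R=\mathbf{1}_{\mathbb{C}^r}$. Then $(\Omega_i)_{\rm s}=\Psi_{Q_i}=0$ gives $m_{\rm s}=0$. The couplings $\varkappa_i=\|Q_iT_iQ_R\|_\infty$ vanish, and the Moore--Penrose inverses become ordinary inverses, so $T_i$ coincides with the operator of Section~\ref{sec:different sums}. Hence $\tau_{\rm ac}=\tau$ and $\gamma_{\rm ac}=\gamma$. Since $\Omega_i^{\rm ac}=\Omega_i$, we also get $d_{\rm ac}=d_T$.

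For (b), $Q_1$ and $Q_R$ are orthogonal projections, and Lemma~\ref{lem:inpair general}(a) gives $0\le T_1\le P_1=Q_1+Q_R$. Compress $T_1$ to $\range P_1$ and write it as a $2\times2$ block matrix with respect to $Q_1\oplus Q_R$. Both $T_1$ and $P_1-T_1$ are positive there, so each off-diagonal block $X=Q_1T_1Q_R$ satisfies $\|X\|^2\le\|Q_1TQ_1\|\,\|Q_RTQ_R\|$, and likewise with $P_1-T_1$. The cleanest route is to note that $0\le T_1\le P_1$ forces the block matrix $2T_1-P_1$ to have norm at most $1$. The off-diagonal block of $2T_1-P_1$ is $2X$, and an off-diagonal block of a contraction is a contraction, so $\|2X\|\le1$, i.e.\ $\varkappa_1\le\frac12$. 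The argument for $\varkappa_2$ is symmetric.

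For (c), by definition $h_{\rm ac}^2=\|\Psi_{Q_R(a_1-a_2)^2}^*(\mathbf{1}_{\mathcal{Y}})\|_\infty$, which equals $\|\Delta(Q_R(a_1-a_2)^2)\|_\infty$ via the transpose identity for positive $K$. On $\range Q_R$, with $\lambda\in(0,1)$, the scalar inequality
\[
(\sqrt\lambda-\sqrt{1-\lambda})^2=\frac{(2\lambda-1)^2}{(\sqrt\lambda+\sqrt{1-\lambda})^2}\le|2\lambda-1|=|\lambda-(1-\lambda)|
\]
holds because $(\sqrt\lambda+\sqrt{1-\lambda})^2\ge1$ and $|2\lambda-1|\le1$. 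This gives $Q_R(a_1-a_2)^2\le Q_R|C_1-C_2|$ as commuting operators. Positivity of $\Delta$ then gives the first bound, with the monotonicity passing to norms via \eqref{eq:cp domination}. The same computation bounds $(2\lambda-1)^2\le4\gamma_{\rm ac}^2$ on $\range Q_R$, so $Q_R(a_1-a_2)^2\le4\gamma_{\rm ac}^2Q_R$, which gives the second bound. Finally, $Q_R\le\mathbf{1}_{\mathbb{C}^r}$ yields $\|\Delta(Q_R)\|_\infty\le\|\Delta(\mathbf{1}_{\mathbb{C}^r})\|_\infty=2\|\overline\Omega^*(\mathbf{1}_{\mathcal{Y}})\|_\infty$. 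The transpose description of $\Delta(Q_R|C_1-C_2|)$ is the stated identity $\Delta(K)=\Psi_K^*(\mathbf{1}_{\mathcal{Y}})^\top$ for $K\ge0$.

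The only step requiring care is (b), specifically justifying the contraction claim for the block compressed to $\range P_1$. Everything else is direct.
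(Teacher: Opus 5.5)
Your proposal is correct, and for (a) and (c) it follows the paper. In (a) you observe that $Q_1=Q_2=0$ and $Q_R=\mathbf{1}_{\mathbb{C}^r}$ when the supports agree. In (c) your scalar bound $(\sqrt\lambda-\sqrt{1-\lambda})^2\leq|2\lambda-1|\leq2\gamma_{\rm ac}$ is the spectral form of the paper's $(a_1-a_2)^2\leq|a_1-a_2|(a_1+a_2)=|C_1-C_2|$, and both finish with positivity of $\Delta$.

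For (b) you take a different route. The paper chooses unit vectors $u\in\range Q_1$, $v\in\range Q_R$ with $|\langle u,T_1v\rangle|=\varkappa_1$ and compresses $T_1$ to $\mathrm{span}\{u,v\}$. It then combines $|x|^2\leq st$ with $|x|^2\leq(1-s)(1-t)$ to get $|x|^4\leq s(1-s)t(1-t)\leq\frac1{16}$. You observe instead that $0\leq T_1\leq P_1$ gives $\|2T_1-P_1\|_\infty\leq1$, and that $Q_1(2T_1-P_1)Q_R=2Q_1T_1Q_R$ because $Q_1P_1Q_R=0$.

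Your argument is a one-line operator estimate that needs no norm-attaining vectors and no determinant bookkeeping. It also shows more generally that $\|ETF\|_\infty\leq\frac12$ whenever $0\leq T\leq P$ and $E,F$ are mutually orthogonal projections below $P$. The paper's computation is more hands-on. Both give the constant $\frac12$, which is attained in Example~\ref{ex:coupling}.

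The route you sketch first and then drop would not give $\frac12$ on its own. Bounding $\|Q_1T_1Q_R\|_\infty^2$ by products of norms of diagonal blocks of $T_1$ and of $P_1-T_1$ fails because $\|Q_1T_1Q_1\|_\infty$ and $\|Q_1(P_1-T_1)Q_1\|_\infty$ can both equal $1$. That is why the paper passes to a $2\times2$ scalar compression. Your contraction argument avoids this, so the step you flagged as needing care is in fact fine.
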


\begin{proof}
(a) is immediate from the definitions. (b) Let $u\in\range Q_1$ and $v\in\range Q_R$ be unit vectors with $|\langle u,T_1v\rangle|=\varkappa_1$. The compression of $T_1$ to $\mathrm{span}\{u,v\}$ is a $2\times2$ matrix $\big[\begin{smallmatrix}s&x\\\bar x&t\end{smallmatrix}\big]$ with eigenvalues in $[0,1]$ and $|x|=\varkappa_1$. Then $|x|^2\leq st$ and $|x|^2\leq(1-s)(1-t)$, so $|x|^4\leq s(1-s)t(1-t)\leq\frac1{16}$. The same argument applies to $\varkappa_2$. (c) For commuting $a_1,a_2\geq0$, $|a_1-a_2|\leq a_1+a_2$, so $(a_1-a_2)^2\leq|a_1-a_2|(a_1+a_2)=|a_1^2-a_2^2|=|C_1-C_2|$. Also $|\sqrt\lambda-\sqrt{1-\lambda}|\leq|2\lambda-1|\leq2\gamma_{\rm ac}$ for $\lambda$ in the spectrum of $C_1$ on $\range Q_R$, so $Q_R(a_1-a_2)^2\leq4\gamma_{\rm ac}^2Q_R$. Apply $\Delta$, and use $\|\Delta(Q_R)\|_\infty\leq\|\Delta(\mathbf{1}_{\mathbb{C}^r})\|_\infty=2\|\overline{\Omega}^*(\mathbf{1}_{\mathcal{Y}})\|_\infty$.
\end{proof}

\subsection{The estimate}

\begin{theorem}\label{thm:different supports}
Let $\Phi_1,\Psi_1,\Phi_2,\Psi_2\in\mathrm{CP}(\mathcal{X},\mathcal{Y})$ be arbitrary, with $\Omega'\neq0$, and let $T_1,T_2$ be as in Lemma~\ref{lem:inpair general}. Let $I\subseteq[0,1]$ be a compact interval containing $\mathrm{spec}(T_1)\cup\mathrm{spec}(T_2)$, let $\varphi:I\to\mathbb{R}$ be continuous, $\|\varphi\|_I=\max_I|\varphi|$, and $N_i=a_i\varphi(T_i)a_i$. Then $\|\Delta(N_1)-\Delta(N_2)\|_\infty$ and $|||\Psi_{N_1}-\Psi_{N_2}|||_1$ are both at most
\begin{align*}
R_\varphi={}&\|\overline{\Omega}^*(\mathbf{1}_{\mathcal{Y}})\|_\infty\,\omega^I_\varphi(\tau_{\rm ac})+\|\varphi\|_I\Big(m_{\rm s}+\min\Big\{2\sqrt2\,\|\overline{\Omega}^*(\mathbf{1}_{\mathcal{Y}})\|_\infty\tanh\tfrac{d_{\rm ac}}{2},\ 2\|\overline{\Omega}^*(\mathbf{1}_{\mathcal{Y}})\|_\infty^{1/2}h_{\rm ac}\Big\}\Big)\\
&+\|\Omega_1^*(\mathbf{1}_{\mathcal{Y}})\|_\infty\,\omega^I_\varphi(\varkappa_1)+\|\Omega_2^*(\mathbf{1}_{\mathcal{Y}})\|_\infty\,\omega^I_\varphi(\varkappa_2).
\end{align*}
Consequently:
\begin{enumerate}
\item[(a)] $|||\Phi_1\,\sigma\,\Psi_1-\Phi_2\,\sigma\,\Psi_2|||_1\leq R_{g_\sigma}$ for every Kubo--Ando mean $\sigma$, with $I=[0,1]$;
\item[(b)] $|\widehat F(\Phi_1,\Psi_1)-\widehat F(\Phi_2,\Psi_2)|\leq R_{g_\#}$, and with $I=[0,1]$, $\omega_{g_\#}(t)\leq t^{1/2}$ and $\|g_\#\|_\infty=\frac12$,
\[
\begin{aligned}
R_{g_\#}\leq{}&\|\overline{\Omega}^*(\mathbf{1}_{\mathcal{Y}})\|_\infty\tau_{\rm ac}^{1/2}+\tfrac12m_{\rm s}+\sum_{i=1,2}\|\Omega_i^*(\mathbf{1}_{\mathcal{Y}})\|_\infty\varkappa_i^{1/2}\\
&+\min\Big\{\sqrt2\,\|\overline{\Omega}^*(\mathbf{1}_{\mathcal{Y}})\|_\infty\tanh\tfrac{d_{\rm ac}}{2},\ \|\overline{\Omega}^*(\mathbf{1}_{\mathcal{Y}})\|_\infty^{1/2}h_{\rm ac}\Big\};
\end{aligned}
\]
\item[(c)] $|\widehat Q_\alpha(\Phi_1\|\Psi_1)-\widehat Q_\alpha(\Phi_2\|\Psi_2)|\leq R_{h_\alpha}$ for $\alpha\in(0,1)$, with $I=[0,1]$;
\item[(d)] $|\widehat D_f(\Phi_1\|\Psi_1)-\widehat D_f(\Phi_2\|\Psi_2)|\leq R_{h_f}$ for every operator convex $f$ such that $h_f$ is finite on $I$.
\end{enumerate}
If $\range J(\Omega_1)=\range J(\Omega_2)$, then, taking the first entry of the minimum, $R_\varphi$ equals the bound of Theorem~\ref{thm:four maps} for the corresponding quantity.
\end{theorem}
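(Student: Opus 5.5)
The plan is to split $N_1-N_2$ along the Lebesgue decomposition $\mathbf{1}_{\mathbb{C}^r}=Q_1+Q_R+Q_2$. Every term will be dominated in the Loewner order by a positive operator $K$, and \eqref{eq:cp domination} then bounds $\|\Delta(\cdot)\|_\infty$ and $|||\Psi_{(\cdot)}|||_1$ at the same time. Since $a_1$ is $\mathbf{1}$ on $\range Q_1$ and $0$ on $\range Q_2$, we have $a_1=Q_1+Q_Ra_1=:D_1$, and likewise $a_2=Q_2+Q_Ra_2=:D_2$.

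\emph{Step 1: decoupling inside each pair.} Let $T_i'=Q_iT_iQ_i+Q_RT_iQ_R$ be the pinching of $T_i$ (on $\range P_i$). Then $T_i-T_i'=Q_iT_iQ_R+Q_RT_iQ_i$ is Hermitian and off-diagonal, so $\|T_i-T_i'\|_\infty=\varkappa_i$. The spectrum of $T_i'$ stays in $I$, since compressions of an operator with spectrum in an interval have spectra in that interval. Put $E_i=\varphi(T_i)-\varphi(T_i')$. Then $-\omega^I_\varphi(\varkappa_i)\mathbf{1}\leq E_i\leq\omega^I_\varphi(\varkappa_i)\mathbf{1}$, and conjugating by $D_i$ gives $\pm D_iE_iD_i\leq\omega^I_\varphi(\varkappa_i)D_i^2\leq\omega^I_\varphi(\varkappa_i)C_i$. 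Since $\|\Delta(C_i)\|_\infty=\|\Omega_i^*(\mathbf{1}_{\mathcal{Y}})\|_\infty$, \eqref{eq:cp domination} produces the two $\varkappa_i$ terms of $R_\varphi$.

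\emph{Step 2: the singular part.} After Step 1, $D_i\varphi(T_i')D_i=Q_i\varphi(Q_iT_iQ_i)Q_i+a_i\varphi(S_i)a_i$, where $S_i=Q_RT_iQ_R$ is taken on $\range Q_R$. The singular difference $Q_1\varphi(\cdot)Q_1-Q_2\varphi(\cdot)Q_2$ is dominated by $\|\varphi\|_I(Q_1+Q_2)$. Because $\Psi_{Q_1+Q_2}=(\Omega_1)_{\rm s}+(\Omega_2)_{\rm s}$, this gives the term $\|\varphi\|_Im_{\rm s}$.

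\emph{Step 3: the absolutely continuous part, and the expected obstacle.} On $\range Q_R$ the operators $a_1,a_2$ are invertible, commute, and satisfy $a_1^2+a_2^2=Q_R$, with $\|S_1-S_2\|_\infty=\tau_{\rm ac}$. I would reuse the identity in the proof of Lemma~\ref{lem:sandwich},
\[
a_1\varphi(S_1)a_1-a_2\varphi(S_2)a_2=\tfrac12(a_1H_-a_1+a_2H_-a_2)+\tfrac14\big((a_1-a_2)H_+(a_1+a_2)+(a_1+a_2)H_+(a_1-a_2)\big).
\]
The first term is dominated by $\tfrac12\omega^I_\varphi(\tau_{\rm ac})Q_R$. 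Applying \eqref{eq:cp domination} with $\|\Delta(Q_R)\|_\infty\leq2\|\overline{\Omega}^*(\mathbf{1}_{\mathcal{Y}})\|_\infty$ then gives the $\tau_{\rm ac}$ term.

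The second term is where the main obstacle lies, because it needs two different bounds. The first is the norm bound of Lemma~\ref{lem:sandwich} with $\gamma_{\rm ac}$, combined with $2\gamma_{\rm ac}=\tanh\frac{d_{\rm ac}}2$. The second is the $h_{\rm ac}$ alternative, and it must be an order domination, not a norm estimate. For it I would use $XY^*+YX^*\leq t^{-1}XX^*+tYY^*$ and its negative, with $X=(a_1-a_2)H_+^{1/2}$-type factors. A cleaner route is to write $\pm H_+\leq2\|\varphi\|_I$, which bounds the symmetrised product by $\tfrac14\cdot2\|\varphi\|_I\big(t^{-1}(a_1-a_2)^2+t(a_1+a_2)^2\big)$. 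Using $(a_1+a_2)^2\leq2Q_R$ and $\|\Delta((a_1-a_2)^2Q_R)\|_\infty=h_{\rm ac}^2$, and then optimising over $t$, gives $\|\varphi\|_I\cdot2\|\overline{\Omega}^*(\mathbf{1}_{\mathcal{Y}})\|_\infty^{1/2}h_{\rm ac}$. This estimate works for $\Delta$ directly. For $|||\cdot|||_1$ the same order bound feeds into the second half of \eqref{eq:cp domination}, where $\Psi_{Q_R(a_1-a_2)^2}^*(\mathbf{1}_{\mathcal{Y}})$ is again controlled by $h_{\rm ac}^2$.

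Adding the three steps by the triangle inequality gives $R_\varphi$. Parts (a)--(d) then follow from Lemma~\ref{lem:inpair general}(d), Weyl's inequality for $\lambda_{\min}$ and $\lambda_{\max}$, and the moduli $\omega_{g_\#}(t)\leq t^{1/2}$ (Proposition~\ref{K6}) and $\|g_\#\|_\infty=\frac12$. The final reduction to Theorem~\ref{thm:four maps} is Lemma~\ref{lem:support quantities}(a): there $Q_1=Q_2=0$, so Steps 1 and 2 vanish and Step 3 coincides with Lemma~\ref{lem:sandwich}.
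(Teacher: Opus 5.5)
Your proposal is correct and follows the paper's proof essentially step by step. It uses the same pinching $T_i\mapsto Q_iT_iQ_i+Q_RT_iQ_R$ at cost $\omega^I_\varphi(\varkappa_i)C_i$, the same domination of the singular blocks by $\|\varphi\|_I(Q_1+Q_2)$, and the identity from the proof of Lemma~\ref{lem:sandwich} with the two alternative bounds on the cross term, all fed through \eqref{eq:cp domination}.

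Your order bound $\pm(XH_+X'+X'H_+X)\leq\|H_+\|_\infty(t^{-1}X^2+tX'^2)$, with $X=a_1-a_2$ and $X'=a_1+a_2$, is the paper's inequality $\pm(XH_+X'+X'H_+X)\leq\eta X^2+\eta^{-1}\|H_+\|_\infty^2X'^2$ with $\eta=\|H_+\|_\infty/t$. It follows from expanding $ZZ^*\geq0$ with $Z=\eta^{1/2}X\mp\eta^{-1/2}X'H_+$, or from Cauchy--Schwarz on vectors; it does not follow from substituting $\pm H_+\leq2\|\varphi\|_I$ into the product, and $H_+^{1/2}$ need not exist.
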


\begin{proof}
Let $T_1^{\rm d}=Q_1T_1Q_1+Q_RT_1Q_R$ and $T_2^{\rm d}=Q_RT_2Q_R+Q_2T_2Q_2$ be the block-diagonal parts of $T_i$ with respect to the spectral decomposition of $C_1$. Compressions of an operator with spectrum in the interval $I$ have spectra in $I$, so $\varphi(T_i^{\rm d})$ is defined, and $\|T_i-T_i^{\rm d}\|_\infty=\varkappa_i$, since $T_1-T_1^{\rm d}=Q_1T_1Q_R+Q_RT_1Q_1$ and similarly for $T_2$. Put $N_i^{\rm d}=a_i\varphi(T_i^{\rm d})a_i$. By \eqref{eq:cp domination}, all three steps below give the same bound for $\|\Delta(\cdot)\|_\infty$ and for $|||\Psi_{(\cdot)}|||_1$.

 Since $\|\varphi(T_1)-\varphi(T_1^{\rm d})\|_\infty\leq\omega^I_\varphi(\varkappa_1)$, we have $\pm(N_1-N_1^{\rm d})\leq\omega^I_\varphi(\varkappa_1)\,a_1^2=\omega^I_\varphi(\varkappa_1)\,C_1$, and $\|\Delta(C_1)\|_\infty=\|\Psi_{C_1}^*(\mathbf{1}_{\mathcal{Y}})\|_\infty=\|\Omega_1^*(\mathbf{1}_{\mathcal{Y}})\|_\infty$. The same holds for the second pair.

 On $\range Q_1$ we have $a_1=\mathbf{1}_{\range Q_1}$ and $a_2=0$, and on $\range Q_2$ we have $a_1=0$ and $a_2=\mathbf{1}_{\range Q_2}$. Since $T_i^{\rm d}$ is block diagonal,
\[
\begin{aligned}N_1^{\rm d}-N_2^{\rm d}&=Q_1\varphi(Q_1T_1Q_1)Q_1-Q_2\varphi(Q_2T_2Q_2)Q_2+Z_R,\\ Z_R&=a_1\varphi(Q_RT_1Q_R)a_1-a_2\varphi(Q_RT_2Q_R)a_2,\end{aligned}
\]
where in $Z_R$ the functional calculus is taken in $\mathrm{L}(\range Q_R)$. The first two terms lie between $\mp\|\varphi\|_I(Q_1+Q_2)$, and $\|\Delta(Q_1+Q_2)\|_\infty=m_{\rm s}$.

 On $\range Q_R$ put $H_-=\varphi(Q_RT_1Q_R)-\varphi(Q_RT_2Q_R)$ and $H_+=\varphi(Q_RT_1Q_R)+\varphi(Q_RT_2Q_R)$, so that $\|H_-\|_\infty\leq\omega^I_\varphi(\tau_{\rm ac})$ and $\|H_+\|_\infty\leq2\|\varphi\|_I$. As in the proof of Lemma~\ref{lem:sandwich},
\[
Z_R=\tfrac12(a_1H_-a_1+a_2H_-a_2)+Y,\qquad Y=\tfrac14\big((a_1-a_2)H_+(a_1+a_2)+(a_1+a_2)H_+(a_1-a_2)\big),
\]
with all operators restricted to $\range Q_R$. Since $a_1^2+a_2^2=Q_R$ there, $\pm\frac12(a_1H_-a_1+a_2H_-a_2)\leq\frac12\|H_-\|_\infty Q_R$, and $\|\Delta(Q_R)\|_\infty\leq2\|\overline{\Omega}^*(\mathbf{1}_{\mathcal{Y}})\|_\infty$. For $Y$ we give two bounds. First, as in Lemma~\ref{lem:sandwich}, on $\range Q_R$ we have
\[
\|Y\|_\infty\leq\tfrac12\|a_1-a_2\|_\infty\|H_+\|_\infty\|a_1+a_2\|_\infty\leq\tfrac12\cdot2\gamma_{\rm ac}\cdot2\|\varphi\|_I\cdot\sqrt2,
\]
so $\pm Y\leq2\sqrt2\,\gamma_{\rm ac}\|\varphi\|_IQ_R$. Second, let $X,X',H_+$ be Hermitian and $\eta>0$. Expanding $(\eta^{1/2}X\mp\eta^{-1/2}X'H_+)(\eta^{1/2}X\mp\eta^{-1/2}H_+X')\geq0$ gives
\[
\pm(XH_+X'+X'H_+X)\leq\eta X^2+\eta^{-1}X'H_+^2X'\leq\eta X^2+\eta^{-1}\|H_+\|_\infty^2X'^2 .
\]
With $X=a_1-a_2$ and $X'=a_1+a_2$, and since $(a_1+a_2)^2=Q_R+2a_1a_2\leq2Q_R$ on $\range Q_R$, this gives $\pm4Y\leq\eta\,Q_R(a_1-a_2)^2+2\eta^{-1}\|H_+\|_\infty^2Q_R$. By definition $\|\Delta(Q_R(a_1-a_2)^2)\|_\infty=h_{\rm ac}^2$, so optimising over $\eta$, the bound is $\frac14\cdot2\|H_+\|_\infty h_{\rm ac}(2\|\Delta(Q_R)\|_\infty)^{1/2}\leq2\|\varphi\|_I\|\overline{\Omega}^*(\mathbf{1}_{\mathcal{Y}})\|_\infty^{1/2}h_{\rm ac}$. Since $2\gamma_{\rm ac}=\tanh\frac{d_{\rm ac}}2$, the first bound contributes at most $2\sqrt2\,\|\varphi\|_I\|\overline{\Omega}^*(\mathbf{1}_{\mathcal{Y}})\|_\infty\tanh\frac{d_{\rm ac}}2$.

Adding the three steps gives $R_\varphi$. Parts (a)--(d) follow from Lemma~\ref{lem:inpair general}(d) and $|\lambda_{\min}(M)-\lambda_{\min}(M')|,|\lambda_{\max}(M)-\lambda_{\max}(M')|\leq\|M-M'\|_\infty$; for (a), $\Phi_i\,\sigma\,\Psi_i=\Psi_{N_i}$ with respect to $V_1,\dots,V_r$. If the supports agree, Lemma~\ref{lem:support quantities}(a) shows that $R_\varphi$ with the first entry of the minimum is $\|\overline{\Omega}^*(\mathbf{1}_{\mathcal{Y}})\|_\infty\big(\omega_\varphi(\tau)+2\sqrt2\|\varphi\|_I\tanh\frac{d_T}{2}\big)$, which is the bound of Theorem~\ref{thm:four maps}.
\end{proof}

Theorem~\ref{thm:different supports} separates three effects. The singular parts of the sums enter linearly, through their size $m_{\rm s}$. The change of the common part enters through the in-pair distance $\tau_{\rm ac}$ and through either the Thompson or the Hellinger distance of the absolutely continuous parts, whichever is smaller. The Hellinger form is better when some common directions carry very different relative weights in the two sums but little absolute weight. The couplings $\varkappa_i$ measure how strongly each pair mixes the directions it does not share with the other sum into the common ones. The next corollary is the cleanest case, in which only the singular part survives.

\begin{corollary}[Adding an independent branch]\label{cor:independent branch}
Let $\Phi_1,\Psi_1,\Theta_\Phi,\Theta_\Psi\in\mathrm{CP}(\mathcal{X},\mathcal{Y})$, put $\Omega_1=\Phi_1+\Psi_1$ and $\Theta=\Theta_\Phi+\Theta_\Psi$, and assume $\range J(\Omega_1)\cap\range J(\Theta)=\{0\}$. Let $\Phi_2=\Phi_1+\Theta_\Phi$ and $\Psi_2=\Psi_1+\Theta_\Psi$. Then for every operator connection $\sigma'$,
\[
J(\Phi_2)\,\sigma'\,J(\Psi_2)=J(\Phi_1)\,\sigma'\,J(\Psi_1)+J(\Theta_\Phi)\,\sigma'\,J(\Theta_\Psi),
\]
and consequently
\[
\widehat F(\Phi_1,\Psi_1)+\widehat F(\Theta_\Phi,\Theta_\Psi)\leq\widehat F(\Phi_2,\Psi_2)\leq\widehat F(\Phi_1,\Psi_1)+\tfrac12\|\Theta^*(\mathbf{1}_{\mathcal{Y}})\|_\infty .
\]
For every mean $\sigma$, $|||\Phi_2\,\sigma\,\Psi_2-\Phi_1\,\sigma\,\Psi_1|||_1=|||\Theta_\Phi\,\sigma\,\Theta_\Psi|||_1\leq\|\Theta^*(\mathbf{1}_{\mathcal{Y}})\|_\infty$. In the notation of Theorem~\ref{thm:different supports}, $(\Omega_1)_{\rm s}=0$, $(\Omega_2)_{\rm s}=\Theta$, and $\tau_{\rm ac}=\varkappa_1=\varkappa_2=d_{\rm ac}=0$.
\end{corollary}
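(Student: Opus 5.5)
The Choi-level identity is the core, and the plan is to reduce it to a direct-sum statement for connections, using the ARN frame of $\Omega'=\Omega_1+\Omega_2$. Write $J(\Omega_1)=W_1W_1^*$ and $J(\Theta)=W_\Theta W_\Theta^*$ with $W_1,W_\Theta$ injective (columns the vectorised linearly independent Kraus operators). Because $\range J(\Omega_1)\cap\range J(\Theta)=\{0\}$, the block matrix $W=[W_1\ W_\Theta]$ is injective, and its columns are a linearly independent Kraus family for $\Omega_2=\Omega_1+\Theta$. By Lemma~\ref{lem:factor} applied separately to $J(\Phi_1)\leq W_1W_1^*$ and $J(\Theta_\Phi)\leq W_\Theta W_\Theta^*$, we have $J(\Phi_2)=W(T\oplus S)W^*$ and $J(\Psi_2)=W((\mathbf{1}-T)\oplus(\mathbf{1}-S))W^*$, where $T$ and $S$ are the derivatives in the two blocks.

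Next, \eqref{eq:W transfer}, which holds for connections by Remark~\ref{rem:connections}(1), gives $J(\Phi_2)\,\sigma'\,J(\Psi_2)=W\big((T\oplus S)\,\sigma'\,((\mathbf{1}-T)\oplus(\mathbf{1}-S))\big)W^*$. Connections respect direct sums: this follows from the Kubo--Ando formula for invertible arguments together with continuity from above, as already used in the proof of Lemma~\ref{lem:inpair general}(c). The middle term is therefore $(T\,\sigma'\,(\mathbf{1}-T))\oplus(S\,\sigma'\,(\mathbf{1}-S))$. Splitting $W$ back into its blocks and applying \eqref{eq:W transfer} to $W_1$ and to $W_\Theta$ separately yields the claimed sum. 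The step that needs care is the injectivity of $[W_1\ W_\Theta]$. It holds because $\range W_1=\range J(\Omega_1)$ and $\range W_\Theta=\range J(\Theta)$ intersect trivially, so a relation $W_1x+W_\Theta y=0$ forces $W_1x=0$ and $W_\Theta y=0$. Since \eqref{eq:W transfer} was proved with the general transformer inequality, no further regularity is needed.

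For the fidelity, apply Proposition~\ref{lambda min formula} to both sides. Then $\widehat F(\Phi_2,\Psi_2)=\lambda_{\min}(M_1+M_\Theta)$, where $M_1=\Tr_{\mathcal Y}(J(\Phi_1)\#J(\Psi_1))$ and $M_\Theta=\Tr_{\mathcal Y}(J(\Theta_\Phi)\#J(\Theta_\Psi))$ are both positive. The lower bound is superadditivity of $\lambda_{\min}$. The upper bound uses $\lambda_{\min}(M_1+M_\Theta)\leq\lambda_{\min}(M_1)+\|M_\Theta\|_\infty$. By the arithmetic--geometric mean inequality, $M_\Theta\leq\frac12\Tr_{\mathcal Y}J(\Theta)=\frac12\Theta^*(\mathbf{1}_{\mathcal Y})^\top$, which gives $\|M_\Theta\|_\infty\leq\frac12\|\Theta^*(\mathbf{1}_{\mathcal Y})\|_\infty$.

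For a mean $\sigma$, injectivity of $J$ turns the identity into $\Phi_2\,\sigma\,\Psi_2-\Phi_1\,\sigma\,\Psi_1=\Theta_\Phi\,\sigma\,\Theta_\Psi$. The cb-norm bound then follows from Theorem~\ref{effect theorem}: the map $\Theta_\Phi\,\sigma\,\Theta_\Psi=\Psi_{g_\sigma(S)}$ with $0\leq g_\sigma(S)\leq\mathbf{1}$ (Theorem~\ref{fundamental principle}), so its norm is at most $\|\Theta^*(\mathbf{1}_{\mathcal Y})\|_\infty$ by Lemma~\ref{lipschitz inequality}.

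Finally, for the notation of Theorem~\ref{thm:different supports}: in the frame of $\Omega'=2\Omega_1+\Theta$ with column matrix $[\sqrt2W_1\ W_\Theta]$, the derivative of $\Omega_1$ is $C_1=\frac12\mathbf{1}\oplus0$. Hence $Q_1=0$, $Q_R$ is the first block and $Q_2$ the second. This gives $(\Omega_1)_{\rm s}=0$ and $(\Omega_2)_{\rm s}=\Theta$, and $\gamma_{\rm ac}=0$ implies $d_{\rm ac}=0$. On the first block $a_1=a_2$ and $A_1=A_2$, so $\tau_{\rm ac}=0$. The $T_i$ are block diagonal, so $\varkappa_1=\varkappa_2=0$.
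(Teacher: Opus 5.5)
Your proof is correct and follows the paper's argument. The fidelity bounds (superadditivity of $\lambda_{\min}$, Weyl's inequality and the arithmetic--geometric mean inequality), the trace-norm bound and the computation in the frame $[\sqrt2W_1\ W_\Theta]$ of $\Omega'$ all match the paper. The only minor difference is the Choi-level identity: the paper conjugates by an invertible $Z$ making $\range J(\Omega_1)$ and $\range J(\Theta)$ orthogonal and uses the transformer equality, whereas you separate the supports through the injective frame $[W_1\ W_\Theta]$ and \eqref{eq:W transfer}, which lets you reuse that frame in the last part.
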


\begin{proof}
Let $\mathcal{V}_1=\range J(\Omega_1)$ and $\mathcal{V}_2=\range J(\Theta)$. Since $\mathcal{V}_1\cap\mathcal{V}_2=\{0\}$, there is an invertible $Z$ on $\mathcal{Y}\otimes\mathcal{X}$ with $Z\mathcal{V}_1\perp Z\mathcal{V}_2$. The Choi matrices of $\Phi_1,\Psi_1$ have ranges in $\mathcal{V}_1$, and those of $\Theta_\Phi,\Theta_\Psi$ in $\mathcal{V}_2$, because $J(\Phi_1),J(\Psi_1)\leq J(\Omega_1)$ and $J(\Theta_\Phi),J(\Theta_\Psi)\leq J(\Theta)$. After conjugation by $Z$ the two pairs live on orthogonal subspaces, where the connection acts blockwise. The transformer equality for the invertible $Z$ gives the identity. Taking partial traces, $\Tr_{\mathcal{Y}}\big(J(\Phi_2)\#J(\Psi_2)\big)=M_1+M_\Theta$ with $M_1=\Tr_{\mathcal{Y}}(J(\Phi_1)\#J(\Psi_1))$ and $M_\Theta=\Tr_{\mathcal{Y}}(J(\Theta_\Phi)\#J(\Theta_\Psi))$. Then $\lambda_{\min}(M_1)+\lambda_{\min}(M_\Theta)\leq\lambda_{\min}(M_1+M_\Theta)\leq\lambda_{\min}(M_1)+\|M_\Theta\|_\infty$, and $J(\Theta_\Phi)\#J(\Theta_\Psi)\leq\frac12J(\Theta)$ by the arithmetic--geometric mean inequality, so $\|M_\Theta\|_\infty\leq\frac12\|\Theta^*(\mathbf{1}_{\mathcal{Y}})\|_\infty$; Proposition~\ref{lambda min formula} gives the inequalities. For a mean, $0\leq J(\Theta_\Phi)\,\sigma\,J(\Theta_\Psi)\leq J(\Theta)$ in the completely positive sense, which gives the trace-norm bound. For the last statement, choose the column matrix $W=[W_1\ W_\Theta]$ of $\Omega'=2\Omega_1+\Theta$ with $W_1W_1^*=2J(\Omega_1)$ and $W_\Theta W_\Theta^*=J(\Theta)$; its columns are linearly independent because $\mathcal{V}_1\cap\mathcal{V}_2=\{0\}$. Then, splitting $\mathbb{C}^r=\mathbb{C}^{r_1}\oplus\mathbb{C}^{r_\Theta}$ according to the columns of $W_1$ and $W_\Theta$, $C_1=\frac12\mathbf{1}_{\mathbb{C}^{r_1}}\oplus0$, $C_2=\frac12\mathbf{1}_{\mathbb{C}^{r_1}}\oplus\mathbf{1}_{\mathbb{C}^{r_\Theta}}$, $A_2=A_1\oplus T_\Theta$ for the derivative $T_\Theta$ of $\Theta_\Phi$, and $T_1=2A_1$, $T_2=2A_1\oplus T_\Theta$.
\end{proof}

\begin{remark}[Physical interpretation]\label{rem:phys-erasure}
Corollary~\ref{cor:independent branch} covers heralded erasure. Regard channels $\Phi_1,\Psi_1$ as maps into $\mathrm{L}(\mathcal{Y}\oplus\mathbb{C}v_\star)$, and let them output a flag $v_\star v_\star^*$ with probability $p$: $\Phi_2=(1-p)\Phi_1+p\tr(\cdot)\,v_\star v_\star^*$ and likewise $\Psi_2$. The corollary applies to $\big((1-p)\Phi_1,(1-p)\Psi_1\big)$ and $\Theta_\Phi=\Theta_\Psi=p\tr(\cdot)\,v_\star v_\star^*$, since $\range J(\Theta)=\mathbb{C}v_\star\otimes\mathcal{X}$ is orthogonal to $\mathcal{Y}\otimes\mathcal{X}\supseteq\range J(\Phi_1+\Psi_1)$. As $\widehat{F}(\Theta_\Phi,\Theta_\Psi)=p=\frac12\|\Theta^*(\mathbf{1}_{\mathcal{Y}\oplus\mathbb{C}v_\star})\|_\infty$ and $\widehat{F}$ is positively homogeneous, its two bounds coincide and give $\widehat{F}(\Phi_2,\Psi_2)=(1-p)\widehat{F}(\Phi_1,\Psi_1)+p$: an incoherent new branch enters exactly with its weight. Leakage coherently coupled to existing directions \cite{WoodGambetta2018} enters instead through the couplings $\varkappa_i$, and can cost a constant at vanishing weight (Example~\ref{ex:coupling}).
\end{remark}

Without the independence assumption, the quantities of Theorem~\ref{thm:different supports} cannot be bounded by the completely bounded distance between the pairs alone. Take $\dim\mathcal{X}=1$, so that completely positive maps are positive semidefinite matrices, and unit vectors $y_1\neq y_2$ with $y_1\approx y_2$. Let $\Phi_1=\Psi_1=y_1y_1^*$ and $\Phi_2=y_1y_1^*$, $\Psi_2=y_2y_2^*$. Then $\widehat F(\Phi_1,\Psi_1)=1$ and $\widehat F(\Phi_2,\Psi_2)=0$, because the geometric mean of two distinct rank-one projections is $0$, although $\|\Psi_1-\Psi_2\|_1$ is small. In the frame $W=[\sqrt3\,y_1\ \ y_2]$ of $\Omega'=3y_1y_1^*+y_2y_2^*$ one finds $C_1=\mathrm{diag}(\frac23,0)$, $T_1=\frac12$ and $T_2=\mathrm{diag}(1,0)$. Hence $(\Omega_2)_{\rm s}=y_2y_2^*$ has size $m_{\rm s}=1$, however close $y_2$ is to $y_1$, and $\tau_{\rm ac}=\frac12$. The jump is therefore recorded by the theorem, through $\tau_{\rm ac}$ and $m_{\rm s}$, and not hidden. It is the support-changing counterpart of the impossibility example of Corollary~\ref{impossibility corollary}.

\subsection{Continuity in the derivative across support changes}

Theorem~\ref{thm:different supports} is a continuity statement in which the relative distance $\delta$ is replaced by the quantities of the Lebesgue decomposition. When the supports differ, $\delta\geq\frac12$ by Lemma~\ref{lem:one parameter}, so no statement in terms of $\delta$ alone can cover this case.

\begin{corollary}[Continuity across support changes]\label{cor:support continuity}
Fix $(\Phi_1,\Psi_1)$, let $(\Phi_2^{(n)},\Psi_2^{(n)})$ be pairs with $\sup_n\|\Omega_2^{(n)*}(\mathbf{1}_{\mathcal{Y}})\|_\infty<\infty$, and let $\tau_{\rm ac}^{(n)},\varkappa_i^{(n)},m_{\rm s}^{(n)},d_{\rm ac}^{(n)},h_{\rm ac}^{(n)}$ be the quantities of Theorem~\ref{thm:different supports} for the pairs $(\Phi_1,\Psi_1)$ and $(\Phi_2^{(n)},\Psi_2^{(n)})$. If
\[
\tau_{\rm ac}^{(n)}\to0,\qquad\varkappa_1^{(n)},\varkappa_2^{(n)}\to0,\qquad m_{\rm s}^{(n)}\to0,\qquad\min\{d_{\rm ac}^{(n)},h_{\rm ac}^{(n)}\}\to0,
\]
then $\widehat F(\Phi_2^{(n)},\Psi_2^{(n)})\to\widehat F(\Phi_1,\Psi_1)$, $\widehat Q_\alpha(\Phi_2^{(n)}\|\Psi_2^{(n)})\to\widehat Q_\alpha(\Phi_1\|\Psi_1)$ for $\alpha\in(0,1)$, and $|||\Phi_2^{(n)}\sigma\Psi_2^{(n)}-\Phi_1\sigma\Psi_1|||_1\to0$ for every mean $\sigma$. The same holds for $\widehat D_f$ for every operator convex $f$ such that $h_f$ is finite on a compact interval containing the spectra of all the in-pair derivatives. If the supports of the sums agree for all $n$, the hypothesis reduces to $\tau^{(n)}\to0$ and $d_T^{(n)}\to0$.
\end{corollary}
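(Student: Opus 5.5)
The plan is to apply Theorem~\ref{thm:different supports} with the fixed pair $(\Phi_1,\Psi_1)$ and the $n$-th pair $(\Phi_2^{(n)},\Psi_2^{(n)})$, and then show that every term of $R_\varphi$ tends to zero. For the means, $\widehat F$ and $\widehat Q_\alpha$ take $I=[0,1]$ and $\varphi\in\{g_\sigma,g_\#,h_\alpha\}$. For $\widehat D_f$ take the compact interval $I$ of the hypothesis, on which $h_f$ is finite and, by Lemma~\ref{lem:hf extended}, continuous. In every case $\varphi$ is continuous on $I$, so $\|\varphi\|_I<\infty$. Lemma~\ref{modulus to zero} then gives $\omega^I_\varphi(\delta)\to0$ as $\delta\to0$, and $\omega^I_\varphi(0)=0$. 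The conclusions (a)--(d) of the theorem turn a bound $R_\varphi^{(n)}\to0$ into the stated convergences.

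First, I bound the prefactors uniformly in $n$. We have $\Omega'^{(n)}=\Omega_1+\Omega_2^{(n)}$, so $\|\overline{\Omega}^{(n)*}(\mathbf{1}_{\mathcal{Y}})\|_\infty\le\frac12\big(\|\Omega_1^*(\mathbf{1}_{\mathcal{Y}})\|_\infty+\sup_n\|\Omega_2^{(n)*}(\mathbf{1}_{\mathcal{Y}})\|_\infty\big)$, which is finite by hypothesis. The same applies to $\|\Omega_1^*(\mathbf{1}_{\mathcal{Y}})\|_\infty$ and $\|\Omega_2^{(n)*}(\mathbf{1}_{\mathcal{Y}})\|_\infty$. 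Second, I treat the terms of $R_\varphi^{(n)}$ one at a time.
\begin{enumerate}
\item $\omega^I_\varphi(\tau_{\rm ac}^{(n)})\to0$ because $\tau_{\rm ac}^{(n)}\to0$.
\item $\omega^I_\varphi(\varkappa_i^{(n)})\to0$ because $\varkappa_i^{(n)}\to0$.
\item $\|\varphi\|_I\,m_{\rm s}^{(n)}\to0$ directly.
\item For the minimum, pass to subsequences. Along indices where $d_{\rm ac}^{(n)}$ is the smaller of $d_{\rm ac}^{(n)}$ and $h_{\rm ac}^{(n)}$, we have $d_{\rm ac}^{(n)}\to0$, so the first entry $2\sqrt2\,\|\overline{\Omega}^{(n)*}(\mathbf{1}_{\mathcal{Y}})\|_\infty\tanh\frac{d_{\rm ac}^{(n)}}2$ tends to zero. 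Along the remaining indices, $h_{\rm ac}^{(n)}\to0$, so the second entry tends to zero.
\end{enumerate}
Since the minimum is at most either entry, it tends to zero along every subsequence, and hence along the whole sequence.

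The subtle point is the last term, where the hypothesis uses the minimum of $d_{\rm ac}$ and $h_{\rm ac}$, and the minimum in $R_\varphi$ compares $\tanh\frac{d_{\rm ac}}{2}$ with $h_{\rm ac}$ under different prefactors. The case split above handles this, since each entry tends to zero along its own set of indices and both prefactors are bounded. I also need each term in $R_\varphi$ to be well defined. In particular, the choice of Kraus frame for $\Omega'^{(n)}$ does not affect $\tau_{\rm ac}$ or $\varkappa_i$, by the unitary covariance remark after the definition of the ARN derivative. In any case the theorem applies in whatever frame is chosen, so no frame-independence argument is actually required.

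For the final sentence of the statement, suppose the supports of the sums agree for all $n$. Lemma~\ref{lem:support quantities}(a) then gives $m_{\rm s}^{(n)}=\varkappa_i^{(n)}=0$, $\tau_{\rm ac}^{(n)}=\tau^{(n)}$ and $d_{\rm ac}^{(n)}=d_T^{(n)}$. The hypotheses therefore reduce to $\tau^{(n)}\to0$ together with $\min\{d_T^{(n)},h_{\rm ac}^{(n)}\}\to0$. This last condition follows from $d_T^{(n)}\to0$. Alternatively, $d_T^{(n)}\to0$ gives $\gamma^{(n)}\to0$ by Lemma~\ref{lem:thompson}, and then $h_{\rm ac}^{(n)}\to0$ by Lemma~\ref{lem:support quantities}(c), so the stated reduction holds.
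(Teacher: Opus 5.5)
Your proposal is correct and follows the paper's proof. As there, you bound the prefactors of $R_\varphi$ uniformly in $n$, use Lemma~\ref{modulus to zero} for the $\omega^I_\varphi$ terms, and read off the same-support case from Lemma~\ref{lem:support quantities}(a); your explicit subsequence treatment of the minimum, which the paper passes over, is sound, and $\tanh\frac{x}{2}\leq\frac{x}{2}$ would give it in one line by bounding that minimum by a bounded constant times $\min\{d_{\rm ac}^{(n)},h_{\rm ac}^{(n)}\}$.
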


\begin{proof}
Every term of $R_\varphi$ is a constant times $\omega^I_\varphi$ at one of $\tau_{\rm ac}^{(n)},\varkappa_1^{(n)},\varkappa_2^{(n)}$, or a constant times $m_{\rm s}^{(n)}$, $\tanh\frac12d_{\rm ac}^{(n)}$ or $h_{\rm ac}^{(n)}$, and the constants are bounded in $n$. Lemma~\ref{modulus to zero} gives $\omega^I_\varphi(\delta)\to0$ as $\delta\to0$. The last statement is Lemma~\ref{lem:support quantities}(a).
\end{proof}

None of the hypotheses can be dropped. In the example following Corollary~\ref{cor:independent branch}, $m_{\rm s}=1$ and $\tau_{\rm ac}=\frac12$ while the pairs are arbitrarily close. The next example shows that the coupling alone can carry a jump, with every other quantity tending to zero.

\begin{example}[Coupling]\label{ex:coupling}
Let $\dim\mathcal{X}=1$ and $\mathcal{Y}=\mathbb{C}^2$, let $P$ be the projection onto $(e_1+e_2)/\sqrt2$, and put $Z_t=\mathrm{diag}(\sqrt t,1)$. Let $J(\Phi_t)=Z_tPZ_t$, $J(\Psi_t)=Z_t(\mathbf{1}_{\mathbb{C}^2}-P)Z_t$, and $J(\Phi_0)=J(\Psi_0)=\frac12e_2e_2^*$. Then $\|J(\Phi_t)-J(\Phi_0)\|_1$ and $\|J(\Psi_t)-J(\Psi_0)\|_1$ are $O(\sqrt t)$. Now $J(\Phi_t)$ and $J(\Psi_t)$ are distinct rank-one operators, so their geometric mean is $0$ and $\widehat F(\Phi_t,\Psi_t)=0$, whereas $\widehat F(\Phi_0,\Psi_0)=\frac12$. Take $(\Phi_t,\Psi_t)$ as the first pair and $(\Phi_0,\Psi_0)$ as the second. Then $J(\Omega')=\mathrm{diag}(t,2)$, $W=\mathrm{diag}(\sqrt t,\sqrt2)$ and $C_1=\mathrm{diag}(1,\frac12)$, so $Q_1=e_1e_1^*$, $Q_R=e_2e_2^*$ and $Q_2=0$. Moreover $T_1=P$, and $T_2=\frac12$ on $\range Q_R$. Hence $\tau_{\rm ac}=0$, $d_{\rm ac}=h_{\rm ac}=0$, $\varkappa_2=0$ and $m_{\rm s}=t\to0$, but $\varkappa_1=|\langle e_1,Pe_2\rangle|=\frac12$ for every $t$. The jump of size $\frac12$ is carried by the coupling, and the term $\|\Omega_1^*(\mathbf{1}_{\mathcal{Y}})\|_\infty\varkappa_1^{1/2}=(1+t)/\sqrt2$ of Theorem~\ref{thm:different supports}(b) accounts for it.
\end{example}

\section{Where continuity fails: a dichotomy}\label{sec:dichotomy}

Example~\ref{ex:coupling} suggests a general mechanism. A pair whose sum has a non-invertible Choi matrix can be approximated by pairs that use the missing Choi directions with vanishing weight but with full coupling, and this changes the derivative in the limit. We show that this is the only obstruction to continuity, and we compute the size of the jumps. The answer is a dichotomy: each quantity is continuous at a pair exactly when the Choi matrix of the sum is invertible or the quantity takes an extreme value there.

Throughout this section, $\Phi,\Psi\in\mathrm{CP}(\mathcal{X},\mathcal{Y})$, $\Omega=\Phi+\Psi\neq0$ has column matrix $W\in\mathrm{L}(\mathbb{C}^r,\mathcal{Y}\otimes\mathcal{X})$ as in Theorem~\ref{fundamental principle}, $T$ is the derivative of $\Phi$ with respect to $\Omega$, and
\[
c_\Omega=\dim(\mathcal{Y}\otimes\mathcal{X})-r=\dim\ker J(\Omega)
\]
is the Choi codimension of $\Omega$. Pairs converge in any norm on the finite-dimensional space of Hermitian-preserving maps, for instance in the completely bounded trace norm. Every operator connection $\sigma$ has a representing function $f_\sigma(x)=a_\sigma+b_\sigma x+\int_{(0,\infty)}\frac{x(1+s)}{x+s}\,dm_\sigma(s)$ with $a_\sigma,b_\sigma\geq0$ and a finite positive measure $m_\sigma$ \cite{KuboAndo1980}. By \eqref{eq:mean kernels},
\[
\varphi_\sigma=\ell_\sigma+\eta_\sigma,\qquad\ell_\sigma(t)=a_\sigma\,t+b_\sigma\,(1-t),\qquad\eta_\sigma=\int_{(0,\infty)}\eta_s\,dm_\sigma(s)\geq0,
\]
where $\eta_\sigma$ vanishes at $0$ and $1$, and is strictly positive on $(0,1)$ unless $m_\sigma=0$. By Remark~\ref{rem:connections}, $J(\Phi)\,\sigma\,J(\Psi)=W\varphi_\sigma(T)W^*$ and $a_\sigma J(\Phi)+b_\sigma J(\Psi)=W\ell_\sigma(T)W^*$. For an operator convex $f$ with $f(0^+)$ and $f'(\infty)$ finite we write $\ell_f(t)=f(0^+)(1-t)+f'(\infty)t$ for the chord of $h_f$, so that $h_f\leq\ell_f$ on $[0,1]$.

We call $\Phi$ and $\Psi$ \emph{mutually singular} if $\range J(\Phi)\cap\range J(\Psi)=\{0\}$. This holds if and only if $T$ is a projection: an eigenvector $u$ of $T$ with eigenvalue in $(0,1)$ gives $Wu\in W\range T\cap W\range(\mathbf{1}_{\mathbb{C}^r}-T)$, and conversely $W$ is injective. By Proposition~\ref{prop:lebesgue cp}, it means that $\Phi=\Phi_{\rm s}$ and $\Psi=\Psi_{\rm s}$. In that case $J(\Phi)\,\sigma\,J(\Psi)=a_\sigma J(\Phi)+b_\sigma J(\Psi)$, and in particular $J(\Phi)\#J(\Psi)=0$. The equivalence of mutual singularity with $J(\Phi)\#J(\Psi)=0$ is the criterion following Theorem~\ref{range theorem}, which for the geometric mean is due to Kian \cite[Theorem~4.3]{kian2026relative}; for states it is the distinct image property of \cite{cree2020fidelity}.

\begin{lemma}[Envelope]\label{lem:envelope}
\begin{enumerate}
\item[(a)] For positive semidefinite $A,B\in\mathrm{L}(\mathcal{V})$ and every connection $\sigma$, $a_\sigma A+b_\sigma B\leq A\,\sigma\,B$.
\item[(b)] If $A_n\to A$ and $B_n\to B$, then every limit point $L$ of $(A_n\,\sigma\,B_n)$ satisfies $a_\sigma A+b_\sigma B\leq L\leq A\,\sigma\,B$.
\item[(c)] $\widehat F$ and $\widehat Q_\alpha$, $\alpha\in(0,1)$, are upper semicontinuous on $\mathrm{CP}(\mathcal{X},\mathcal{Y})\times\mathrm{CP}(\mathcal{X},\mathcal{Y})$. For every operator convex $f$, $\widehat D_f$ is lower semicontinuous. If $f(0^+)$ and $f'(\infty)$ are finite, then $\widehat D_f(\Phi\|\Psi)\leq\lambda_{\max}\big(\Tr_{\mathcal{Y}}(f(0^+)J(\Psi)+f'(\infty)J(\Phi))\big)$.
\end{enumerate}
\end{lemma}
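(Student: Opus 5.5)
For (a), I would use the integral representation of operator connections recalled before Theorem~\ref{range theorem}: for positive semidefinite $A,B$,
\[
A\,\sigma\,B=a_\sigma A+b_\sigma B+\int_{(0,\infty)}\frac{1+\lambda}{\lambda}\big((\lambda A):B\big)\,dm_\sigma(\lambda).
\]
Each parallel sum $(\lambda A):B$ is positive semidefinite, so the integral term is positive and (a) follows at once. If one prefers to avoid the regularised formula for singular arguments, one can instead prove the bound for $A+\eta\mathbf{1}_{\mathcal{V}}$, $B+\eta\mathbf{1}_{\mathcal{V}}$, where it is the scalar inequality $f_\sigma(x)\geq a_\sigma+b_\sigma x$ in the functional calculus. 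One then lets $\eta\downarrow0$, using continuity from above.

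For (b), the upper bound $L\leq A\,\sigma\,B$ is the standard upper semicontinuity of connections. Fix $\eta>0$. For large $n$, $A_n\leq A+\eta\mathbf{1}_{\mathcal{V}}$ and $B_n\leq B+\eta\mathbf{1}_{\mathcal{V}}$, since $\|A_n-A\|_\infty\to0$. Monotonicity then gives $A_n\,\sigma\,B_n\leq(A+\eta\mathbf{1}_{\mathcal{V}})\,\sigma\,(B+\eta\mathbf{1}_{\mathcal{V}})$. Passing to the limit point gives $L\leq(A+\eta\mathbf{1}_{\mathcal{V}})\,\sigma\,(B+\eta\mathbf{1}_{\mathcal{V}})$, and letting $\eta\downarrow0$ with continuity from above gives $L\leq A\,\sigma\,B$. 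For the lower bound, apply (a) to each $(A_n,B_n)$: $a_\sigma A_n+b_\sigma B_n\leq A_n\,\sigma\,B_n$. The Loewner order is closed, so taking limits gives $a_\sigma A+b_\sigma B\leq L$. Boundedness of $(A_n\,\sigma\,B_n)$, needed for limit points to exist, follows from the same upper bound with $\eta=1$.

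For (c), Proposition~\ref{lambda min formula} and Lemma~\ref{renyi closed form} express $\widehat F$ and $\widehat Q_\alpha$ as $\lambda_{\min}(\Tr_{\mathcal{Y}}(\cdot))$ of a Choi-level mean. Here $\lambda_{\min}\circ\Tr_{\mathcal{Y}}$ is monotone and continuous, and $J$ is linear.
\begin{enumerate}
\item[(i)] Given $(\Phi_n,\Psi_n)\to(\Phi,\Psi)$, pass to a subsequence realising the $\limsup$ and then to a further convergent subsequence of means. Part (b) gives the limit $L\leq J(\Phi)\#J(\Psi)$, hence $\limsup\widehat F(\Phi_n,\Psi_n)\leq\widehat F(\Phi,\Psi)$. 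The same argument with $\#_\alpha$ and the roles of $\Phi,\Psi$ swapped handles $\widehat Q_\alpha$.
\item[(ii)] For $\widehat D_f$, each $\rho\mapsto\widehat D_f(\Phi_\rho\|\Psi_\rho)$ is jointly lower semicontinuous in $(\Phi,\Psi)$ for fixed $\rho$, by Lemma~\ref{lem:unb basic}(2), extended to all operator convex $f$ by Lemma~\ref{lem:equal range}. The map $(\Phi,\Psi)\mapsto(\Phi_\rho,\Psi_\rho)$ is linear. A supremum of lower semicontinuous functions is lower semicontinuous.
\item[(iii)] For the last inequality, I would rely on the step in the proof of Corollary~\ref{maxf impossibility}: with Remark~\ref{rem:normalisation} removing the hypothesis $f(1)=0$, it shows $\widehat D_f(\Phi\|\Psi)=\lambda_{\max}(\Tr_{\mathcal{Y}}Z_f)$, where $Z_f=f(0^+)J(\Psi)+f'(\infty)J(\Phi)-J(\Psi)\,\sigma_h\,J(\Phi)$. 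Since $J(\Psi)\,\sigma_h\,J(\Phi)\geq0$, monotonicity of $\lambda_{\max}\circ\Tr_{\mathcal{Y}}$ gives the bound.
\end{enumerate}

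The main obstacle is minor. Either one must justify the integral representation on singular arguments, which is taken from the cited reference, or, for (c), one must check that the closed form of Theorem~\ref{thm:maxf} was indeed proved for arbitrary pairs. It was: $\Omega=\Phi+\Psi$ is arbitrary there.
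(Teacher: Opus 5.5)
Your proof is correct and follows essentially the paper's route. Part (a) rests on $f_\sigma(x)\geq a_\sigma+b_\sigma x$ (or the equivalent integral representation), (b) on monotonicity, continuity from above and closedness of the cone, and (c) on the $\lambda_{\min}\circ\Tr_{\mathcal{Y}}$ closed forms, a supremum of lower semicontinuous functions, and the domination $h_f\le\ell_f$, which is exactly your inequality $Z_f\le f(0^+)J(\Psi)+f'(\infty)J(\Phi)$.

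The only difference is cosmetic. In (b) you bound $A_n\,\sigma\,B_n$ by $(A+\eta\mathbf{1}_{\mathcal{V}})\,\sigma\,(B+\eta\mathbf{1}_{\mathcal{V}})$ for large $n$, which uses only the connection axioms. The paper instead uses $(A_n+\lambda\mathbf{1}_{\mathcal{V}})\,\sigma\,(B_n+\lambda\mathbf{1}_{\mathcal{V}})$ together with continuity of the Kubo--Ando formula on positive definite pairs, and obtains boundedness from the linear majorant of $f_\sigma$.
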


\begin{proof}
(a) For $A>0$, the inequality $f_\sigma(x)\geq a_\sigma+b_\sigma x$ gives $A\,\sigma\,B=A^{1/2}f_\sigma(A^{-1/2}BA^{-1/2})A^{1/2}\geq a_\sigma A+b_\sigma B$. The general case follows by continuity from above.

(b) Since $x(1+s)/(x+s)\leq1+x$, we have $f_\sigma(x)\leq(a_\sigma+m_0)+(b_\sigma+m_0)x$ with $m_0=m_\sigma((0,\infty))$, so $A_n\,\sigma\,B_n\leq(a_\sigma+m_0)A_n+(b_\sigma+m_0)B_n$ is bounded. The lower bound follows from (a), since the positive cone is closed. For $\lambda>0$, monotonicity gives $A_n\,\sigma\,B_n\leq(A_n+\lambda\mathbf{1}_{\mathcal{V}})\,\sigma\,(B_n+\lambda\mathbf{1}_{\mathcal{V}})$. The right-hand side converges to $(A+\lambda\mathbf{1}_{\mathcal{V}})\,\sigma\,(B+\lambda\mathbf{1}_{\mathcal{V}})$, because the defining formula is continuous on positive definite pairs. Hence $L\leq(A+\lambda\mathbf{1}_{\mathcal{V}})\,\sigma\,(B+\lambda\mathbf{1}_{\mathcal{V}})$ for every $\lambda>0$, and continuity from above gives $L\leq A\,\sigma\,B$.

(c) By Proposition~\ref{lambda min formula}, $\widehat F(\Phi,\Psi)=\lambda_{\min}\big(\Tr_{\mathcal{Y}}(J(\Phi)\#J(\Psi))\big)$, and $\lambda_{\min}\circ\Tr_{\mathcal{Y}}$ is continuous and monotone. Applying (b) along a subsequence that realises the $\limsup$ gives upper semicontinuity, and Lemma~\ref{renyi closed form} gives the same for $\widehat Q_\alpha$. $\widehat D_f$ is a supremum over $\rho$ of the functions $(\Phi,\Psi)\mapsto\widehat D_f(\Phi_\rho\|\Psi_\rho)$, which are lower semicontinuous by Lemmas~\ref{lem:unb basic}(2) and~\ref{lem:equal range}; see also \cite[Section~5]{Hiai2019maximal}, \cite[Theorem~A.4]{MosonyiHiai2024} and \cite{Matsumoto2018newf}. For the last inequality, $h_f\leq\ell_f$ gives $Wh_f(T)W^*\leq f(0^+)J(\Psi)+f'(\infty)J(\Phi)$; apply Theorem~\ref{thm:all f}(b).
\end{proof}

\begin{remark}\label{rem:one sided}
Parts (a) and (b) of Lemma~\ref{lem:envelope} are elementary consequences of the Kubo--Ando representation and of continuity from above, and the upper semicontinuity of the Matsumoto fidelity was already used in the proof of Proposition~\ref{lambda min formula}. The lower semicontinuity of maximal $f$-divergences is noted by Matsumoto \cite{Matsumoto2018newf} as a consequence of his dual formula, and is proved in \cite[Section~5]{Hiai2019maximal} and \cite[Theorem~A.4]{MosonyiHiai2024}. Part (b) also shows that one-sided perturbations cannot produce jumps: if $A_n\geq A$ and $B_n\geq B$ for all $n$, then monotonicity gives $A_n\,\sigma\,B_n\geq A\,\sigma\,B$, so every limit point equals $A\,\sigma\,B$, and $A_n\,\sigma\,B_n\to A\,\sigma\,B$.
\end{remark}

\begin{lemma}[Dilation into the Choi kernel]\label{lem:dilation}
Suppose $c_\Omega\geq1$, let $\Upsilon$ be a matrix whose $c_\Omega$ columns form a basis of $\ker J(\Omega)$, and let
\[
\widehat T=\begin{bmatrix}T&X\\X^*&Z\end{bmatrix}\quad\text{on }\mathbb{C}^r\oplus\mathbb{C}^{c_\Omega},\qquad0\leq\widehat T\leq\mathbf{1}_{\mathbb{C}^r\oplus\mathbb{C}^{c_\Omega}}.
\]
For $t>0$ put $W_t=[\,W\ \ \sqrt t\,\Upsilon\,]$, and let $\Phi_t=\Psi_{\widehat T}$ and $\Psi_t=\Psi_{\mathbf{1}_{\mathbb{C}^r\oplus\mathbb{C}^{c_\Omega}}-\widehat T}$ with respect to the Kraus operators given by the columns of $W_t$. Then $\|J(\Phi_t)-J(\Phi)\|_\infty$ and $\|J(\Psi_t)-J(\Psi)\|_\infty$ are $O(\sqrt t)$. For every connection $\sigma$, $J(\Phi_t)\,\sigma\,J(\Psi_t)=W_t\varphi_\sigma(\widehat T)W_t^*\to W[\varphi_\sigma(\widehat T)]_{11}W^*$ as $t\downarrow0$, where $[\cdot]_{11}$ is the compression to $\mathbb{C}^r$. For every operator convex $f$ with $h_f(\widehat T)$ finite, $\widehat D_f(\Phi_t\|\Psi_t)\to\lambda_{\max}\big(\Tr_{\mathcal{Y}}(W[h_f(\widehat T)]_{11}W^*)\big)$.
\end{lemma}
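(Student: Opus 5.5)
The plan is to treat $W_t$ as an honest Kraus column matrix and reduce everything to results already proved for a fixed frame.

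First, $W_t$ is injective for $t>0$. Since $\range W=\range J(\Omega)$ and $\range\Upsilon=\ker J(\Omega)=\range J(\Omega)^\perp$, the columns of $W$ and of $\Upsilon$ together form a basis of $\mathcal{Y}\otimes\mathcal{X}$. Hence the columns of $W_t$ are vectorisations of linearly independent operators, and they define a Kraus decomposition of the map $\Omega_t$ with Choi matrix $W_tW_t^*=J(\Omega)+t\Upsilon\Upsilon^*$. Because $0\leq\widehat T\leq\mathbf{1}$, Theorem~\ref{thm:arveson} shows that $\Phi_t,\Psi_t$ are completely positive with $\Phi_t+\Psi_t=\Omega_t$, and that $\widehat T$ is the derivative of $\Phi_t$ with respect to $\Omega_t$ in this frame. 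Expanding in blocks gives
\[
J(\Phi_t)=WTW^*+\sqrt t\,(WX\Upsilon^*+\Upsilon X^*W^*)+t\,\Upsilon Z\Upsilon^*.
\]
Since $\|X\|_\infty,\|Z\|_\infty\leq1$, this yields $\|J(\Phi_t)-J(\Phi)\|_\infty=O(\sqrt t)$, and the same computation with $\mathbf{1}-\widehat T$ handles $\Psi_t$.

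Second, the identity for connections follows from Remark~\ref{rem:connections}(1),(2) applied with $W_t$ in place of $W$:
\[
J(\Phi_t)\,\sigma\,J(\Psi_t)=W_t\,\varphi_\sigma(\widehat T)\,W_t^*.
\]
Here $\varphi_\sigma(\widehat T)$ is a fixed operator, independent of $t$. Expanding in blocks as above, the terms involving $\Upsilon$ carry factors $\sqrt t$ or $t$, so the expression tends to $W[\varphi_\sigma(\widehat T)]_{11}W^*$.

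Third, for $f$ with $h_f(\widehat T)$ finite, Theorem~\ref{thm:all f}(b), applied to $\Omega_t$ with the frame $W_t$, gives
\[
\widehat D_f(\Phi_t\|\Psi_t)=\lambda_{\max}\big(\Tr_{\mathcal{Y}}(W_t\,h_f(\widehat T)\,W_t^*)\big).
\]
The same block expansion shows that the argument converges to $\Tr_{\mathcal{Y}}(W[h_f(\widehat T)]_{11}W^*)$, because $h_f(\widehat T)$ is a fixed bounded Hermitian operator. Continuity of $\lambda_{\max}$ (Weyl) then finishes the proof.

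The main point needing care is the first step: verifying that $W_t$ has linearly independent columns, so that Theorem~\ref{thm:arveson}, Remark~\ref{rem:connections} and Theorem~\ref{thm:all f} genuinely apply in the enlarged frame. This rests only on $\range W\oplus\ker J(\Omega)=\mathcal{Y}\otimes\mathcal{X}$. The remaining steps are bookkeeping with the factors $\sqrt t$ and $t$.
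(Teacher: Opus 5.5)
Your proposal is correct and follows essentially the same route as the paper. You both prove that the columns of $W_t$ are linearly independent because $\range W\cap\ker J(\Omega)=\{0\}$ (your ``basis of $\mathcal{Y}\otimes\mathcal{X}$'' phrasing is equivalent), use $W_tW_t^*=J(\Omega)+t\Upsilon\Upsilon^*$, expand $W_tMW_t^*$ in blocks for $M=\widehat T$, $\mathbf{1}-\widehat T$, $\varphi_\sigma(\widehat T)$ and $h_f(\widehat T)$, and then invoke Remark~\ref{rem:connections} and Theorem~\ref{thm:all f}(b) in the enlarged frame together with continuity of $\lambda_{\max}\circ\Tr_{\mathcal{Y}}$.
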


\begin{proof}
The columns of $W_t$ are linearly independent, because $\range\Upsilon\cap\range W=\ker J(\Omega)\cap\range J(\Omega)=\{0\}$. Since $W_tW_t^*=J(\Omega)+t\Upsilon\Upsilon^*$, $W_t$ is the column matrix of $\Phi_t+\Psi_t$, and $\widehat T$ is the derivative of $\Phi_t$. For any block operator $M=[M_{ij}]$ on $\mathbb{C}^r\oplus\mathbb{C}^{c_\Omega}$,
\[
W_tMW_t^*=WM_{11}W^*+\sqrt t\,(WM_{12}\Upsilon^*+\Upsilon M_{21}W^*)+t\,\Upsilon M_{22}\Upsilon^* .
\]
With $M=\widehat T$ and $M=\mathbf{1}_{\mathbb{C}^r\oplus\mathbb{C}^{c_\Omega}}-\widehat T$ this gives the first statement. With $M=\varphi_\sigma(\widehat T)$, Remark~\ref{rem:connections} gives the second. With $M=h_f(\widehat T)$, Theorem~\ref{thm:all f}(b) gives the third, since $\lambda_{\max}\circ\Tr_{\mathcal{Y}}$ is continuous.
\end{proof}

The dilations we use are projections, after Halmos \cite{Halmos1969}. If $\mathcal{U}\subseteq\mathbb{C}^r$ is invariant under $T$, write $T=T_{\mathcal{U}}\oplus T_{\mathcal{U}^\perp}$, $R=(T_{\mathcal{U}}(\mathbf{1}_{\mathcal{U}}-T_{\mathcal{U}}))^{1/2}$, $k=\rank R$, and let $R_0$ be a partial isometry from $\mathcal{U}$ into $\mathbb{C}^{c_\Omega}$ with $R_0^*R_0$ the range projection of $R$, which requires $k\leq c_\Omega$. Put
\begin{equation}\label{eq:halmos}
\widehat T_{\mathcal{U}}=T_{\mathcal{U}^\perp}\oplus\begin{bmatrix}T_{\mathcal{U}}&RR_0^*\\R_0R&R_0(\mathbf{1}_{\mathcal{U}}-T_{\mathcal{U}})R_0^*\end{bmatrix}\quad\text{on }\mathcal{U}^\perp\oplus(\mathcal{U}\oplus\mathbb{C}^{c_\Omega}).
\end{equation}
Since $T_{\mathcal{U}}$, $R$ and $R_0^*R_0$ commute, a direct computation shows that the second summand $\widehat P_{\mathcal{U}}$ is a projection, and its compression to $\mathcal{U}$ is $T_{\mathcal{U}}$. Hence for every continuous $\varphi$ on $[0,1]$ with chord $\ell_\varphi(t)=\varphi(0)(1-t)+\varphi(1)t$, we have $\varphi(\widehat P_{\mathcal{U}})=\ell_\varphi(\widehat P_{\mathcal{U}})$, and
\begin{equation}\label{eq:halmos compression}
[\varphi(\widehat T_{\mathcal{U}})]_{11}=\varphi(T_{\mathcal{U}^\perp})\oplus\ell_\varphi(T_{\mathcal{U}})=\varphi(T)-P_{\mathcal{U}}(\varphi-\ell_\varphi)(T)P_{\mathcal{U}},
\end{equation}
where $P_{\mathcal{U}}$ denotes the orthogonal projection onto $\mathcal{U}$, which commutes with $T$.

\begin{theorem}[Continuity dichotomy]\label{thm:dichotomy}
Let $\Phi,\Psi,\Omega,W,T$ and $c_\Omega$ be as above.
\begin{enumerate}
\item[(a)] If $J(\Omega)$ is invertible, then for every pair $(\Phi',\Psi')$, with $d_{\rm cb}=\max\{|||\Phi-\Phi'|||_1,|||\Psi-\Psi'|||_1\}$ and $\overline{\Omega}=\frac12(\Omega+\Phi'+\Psi')$,
\[
|\widehat F(\Phi,\Psi)-\widehat F(\Phi',\Psi')|\leq\tfrac{11}{4}\,\|\overline{\Omega}^*(\mathbf{1}_{\mathcal{Y}})\|_\infty\,\nu(\Omega)^{-1}d_{\rm cb}^{1/2}.
\]
Theorem~\ref{thm:four maps} and Corollary~\ref{cor:four maps cb} give local moduli for every mean, for $\widehat Q_\alpha$ and for $\widehat D_f$ with bounded $f$. In particular all these quantities are continuous at $(\Phi,\Psi)$.
\item[(b)] Suppose $c_\Omega\geq1$. Let $\mathcal{U}\subseteq\mathbb{C}^r$ be invariant under $T$, with $\rank T_{\mathcal{U}}(\mathbf{1}_{\mathcal{U}}-T_{\mathcal{U}})\leq c_\Omega$, and let $(\Phi_t,\Psi_t)$ be the pairs of Lemma~\ref{lem:dilation} with $\widehat T=\widehat T_{\mathcal{U}}$. Then $\|J(\Phi_t)-J(\Phi)\|_\infty,\|J(\Psi_t)-J(\Psi)\|_\infty=O(\sqrt t)$, and simultaneously for all connections $\sigma$,
\[
J(\Phi_t)\,\sigma\,J(\Psi_t)\longrightarrow W\big(\varphi_\sigma(T)-P_{\mathcal{U}}\eta_\sigma(T)P_{\mathcal{U}}\big)W^*,
\]
and for every operator convex $f$ with $f(0^+),f'(\infty)<\infty$,
\[
\widehat D_f(\Phi_t\|\Psi_t)\longrightarrow\lambda_{\max}\Big(\Tr_{\mathcal{Y}}\big(W(h_f(T)+P_{\mathcal{U}}(\ell_f-h_f)(T)P_{\mathcal{U}})W^*\big)\Big).
\]
\item[(c)] If $c_\Omega\geq\rank T(\mathbf{1}_{\mathbb{C}^r}-T)$, the choice $\mathcal{U}=\range T(\mathbf{1}_{\mathbb{C}^r}-T)$ in (b) makes $\Phi_t$ and $\Psi_t$ mutually singular. Then $\Phi_t\,\sigma\,\Psi_t=a_\sigma\Phi_t+b_\sigma\Psi_t\to a_\sigma\Phi+b_\sigma\Psi$, which is the smallest limit allowed by Lemma~\ref{lem:envelope}. Moreover $\widehat F(\Phi_t,\Psi_t)=\widehat Q_\alpha(\Phi_t\|\Psi_t)=0$, and for bounded $f$, $\widehat D_f(\Phi_t\|\Psi_t)\to\lambda_{\max}\big(\Tr_{\mathcal{Y}}(f(0^+)J(\Psi)+f'(\infty)J(\Phi))\big)$, which is the largest value allowed by Lemma~\ref{lem:envelope}.
\item[(d)] Consequently:
\begin{enumerate}
\item[(i)] for a connection $\sigma$ with $m_\sigma\neq0$, the map $(\Phi',\Psi')\mapsto\Phi'\,\sigma\,\Psi'$ is continuous at $(\Phi,\Psi)$ if and only if $J(\Omega)$ is invertible or $\Phi$ and $\Psi$ are mutually singular;
\item[(ii)] $\widehat F$ is continuous at $(\Phi,\Psi)$ if and only if $J(\Omega)$ is invertible or $\widehat F(\Phi,\Psi)=0$, and the same holds for $\widehat Q_\alpha$, $\alpha\in(0,1)$;
\item[(iii)] for bounded operator convex $f$, $\widehat D_f$ is continuous at $(\Phi,\Psi)$ if $J(\Omega)$ is invertible. If $1\leq c_\Omega$ and $\rank T(\mathbf{1}_{\mathbb{C}^r}-T)\leq c_\Omega$, it is continuous if and only if
\[
\widehat D_f(\Phi\|\Psi)=\lambda_{\max}\big(\Tr_{\mathcal{Y}}(f(0^+)J(\Psi)+f'(\infty)J(\Phi))\big);
\]
\item[(iv)] if $f'(\infty)=+\infty$, $c_\Omega\geq1$ and $\Phi,\Psi$ are not mutually singular, then every neighbourhood of $(\Phi,\Psi)$ contains pairs with $\widehat D_f=+\infty$.
\end{enumerate}
\end{enumerate}
\end{theorem}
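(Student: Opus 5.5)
I will prove the four parts in order; (b) and (c) do most of the work, and (d) assembles them.

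\emph{Part (a).} Assume $J(\Omega)$ is invertible. Then the sum of any competing pair $(\Phi',\Psi')$ has Choi range contained in $\range J(\Omega)$, which is the whole space. So Corollary~\ref{cor:four maps cb} applies with $(\Phi_1,\Psi_1)=(\Phi,\Psi)$. It gives the displayed fidelity bound with $\nu(\Phi+\Psi)=\nu(\Omega)$. For the other quantities I combine the first inequality of Corollary~\ref{cor:four maps cb} with Lemma~\ref{lem:one parameter}: once $\delta<\frac12$ this gives $\tau\leq 2\sqrt2\,\delta/\sqrt{1-4\delta^2}$ and $\tanh\frac{d_T}{2}\leq 2\delta$. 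Theorem~\ref{thm:four maps}(a),(c),(d) together with Lemma~\ref{modulus to zero} then give local moduli, and hence continuity.

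\emph{Part (b).} I first check that $\widehat T_{\mathcal U}$ in \eqref{eq:halmos} is a contraction in the required sense, i.e.\ that $\widehat P_{\mathcal U}$ is a projection. This is the standard Halmos identity. It uses that $T_{\mathcal U}$, $R$ and $R_0^*R_0$ commute, and that $R_0^*R_0R=R$. Note that $\rank R=\rank T_{\mathcal U}(\mathbf 1-T_{\mathcal U})\leq c_\Omega$, so the partial isometry $R_0$ exists. Lemma~\ref{lem:dilation} then gives the $O(\sqrt t)$ estimates and the limit $W[\varphi_\sigma(\widehat T)]_{11}W^*$.

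By \eqref{eq:halmos compression} with $\varphi=\varphi_\sigma$, the compression is $\varphi_\sigma(T)-P_{\mathcal U}(\varphi_\sigma-\ell_\sigma)(T)P_{\mathcal U}$. Here the chord of $\varphi_\sigma$ is $\ell_\sigma$, because $\varphi_\sigma(0)=b_\sigma$, $\varphi_\sigma(1)=a_\sigma$ and $\eta_\sigma$ vanishes at the endpoints. Hence $\varphi_\sigma-\ell_\sigma=\eta_\sigma$, which is the claimed limit. For bounded $f$, $h_f$ is finite on $[0,1]$, so the $f$-part of Lemma~\ref{lem:dilation} applies, and \eqref{eq:halmos compression} with $\varphi=h_f$ and chord $\ell_f$ gives the stated limit.

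\emph{Part (c).} Take $\mathcal U=\range T(\mathbf 1-T)$. Then $T_{\mathcal U^\perp}$ has spectrum in $\{0,1\}$, so $\widehat T$ is a direct sum of projections and hence a projection. By the equivalence stated before Lemma~\ref{lem:envelope}, $\Phi_t$ and $\Psi_t$ are therefore mutually singular. Consequently $J(\Phi_t)\,\sigma\,J(\Psi_t)=a_\sigma J(\Phi_t)+b_\sigma J(\Psi_t)$, and $\widehat F(\Phi_t,\Psi_t)$ and $\widehat Q_\alpha(\Phi_t\|\Psi_t)$ vanish by Theorem~\ref{range theorem}. In the limit formula of (b), $\eta_\sigma(T)$ and $(\ell_f-h_f)(T)$ live on $\mathcal U$ (they vanish on $\mathcal U^\perp$), so the $P_{\mathcal U}$-corrections remove them entirely. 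Thus the limits are $W\ell_\sigma(T)W^*$ and $W\ell_f(T)W^*$, i.e.\ the extremes identified in Lemma~\ref{lem:envelope}.

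\emph{Part (d).} The ``if'' directions come from (a), together with the following extreme cases, which I handle through Lemma~\ref{lem:envelope}:
\begin{itemize}
\item If $\Phi,\Psi$ are mutually singular, every limit point is squeezed between $a_\sigma J(\Phi)+b_\sigma J(\Psi)$ and $J(\Phi)\,\sigma\,J(\Psi)$, and these coincide.
\item If $\widehat F(\Phi,\Psi)=0$, upper semicontinuity of $\widehat F$ and nonnegativity force continuity. The same argument works for $\widehat Q_\alpha$.
\item If $\widehat D_f$ equals its upper envelope, lower semicontinuity together with the upper bound from Lemma~\ref{lem:envelope}(c), which depends continuously on the pair, gives continuity.
\end{itemize}
The ``only if'' directions use the Halmos dilation with $c_\Omega\geq1$. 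For (i), suppose $\Phi,\Psi$ are not mutually singular. Pick a single eigenvector $u$ of $T$ with eigenvalue in $(0,1)$ and take $\mathcal U=\mathbb C u$; its rank-one defect fits since $c_\Omega\geq1$. The limit differs from $J(\Phi)\,\sigma\,J(\Psi)$ by $W P_{\mathcal U}\eta_\sigma(T)P_{\mathcal U}W^*\neq0$, because $\eta_\sigma>0$ on $(0,1)$ when $m_\sigma\neq0$ and $W$ is injective.

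For (ii) I must make $\widehat F$ jump. This is the step I expect to be the main obstacle: a rank-one change of the Choi matrix need not lower $\lambda_{\min}$, and full singularisation via (c) needs $c_\Omega\geq\rank T(\mathbf 1-T)$. My plan is to iterate the dilation. Perturb first to split off one direction, which keeps the new pair within $O(\sqrt t)$. The new sum has a larger column matrix but still a nontrivial Choi kernel as long as it is not invertible; if it becomes invertible I stop. Repeating, one should reach mutually singular pairs arbitrarily close, which gives $\widehat F=0$ there. Alternatively, first shrink $c_\Omega$-independent ranks using a diagonal argument. If neither works, I would fall back on showing $\lambda_{\min}$ drops strictly by testing against the vector realising $\widehat F>0$.

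For (iii) I use (c) directly under the stated rank hypothesis. For (iv), with $f'(\infty)=+\infty$, I choose $\mathcal U=\mathbb C u$ as in (i). Then $\widehat T$ acquires eigenvalue $1$ on a direction that is not singular for $T$, so $h_f(\widehat T)$ is not finite, and Theorem~\ref{thm:all f}(a) gives $\widehat D_f(\Phi_t\|\Psi_t)=+\infty$.
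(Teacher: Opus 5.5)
Your treatment of (a)--(c), of (d)(i), (d)(iii) and (d)(iv), and of all the ``if'' directions follows the paper's proof essentially step for step. The gap is the ``only if'' direction of (d)(ii), both for $\widehat F$ and for $\widehat Q_\alpha$.

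Your main plan is to iterate the dilation until the pair becomes mutually singular. This cannot work in exactly the case that (c) does not cover. First, Lemma~\ref{lem:dilation} uses a full basis $\Upsilon$ of $\ker J(\Omega)$. So after one step $J(\Phi_t+\Psi_t)=J(\Omega)+t\Upsilon\Upsilon^*$ is already invertible, and there is nothing left to iterate. More fundamentally, if $c_\Omega<\rank T(\mathbf{1}_{\mathbb{C}^r}-T)$, there are no mutually singular pairs near $(\Phi,\Psi)$ at all. By lower semicontinuity of rank, and since $W$ is injective, nearby pairs satisfy $\rank J(\Phi')+\rank J(\Psi')\geq\rank T+\rank(\mathbf{1}_{\mathbb{C}^r}-T)=r+\rank T(\mathbf{1}_{\mathbb{C}^r}-T)>r+c_\Omega=\dim(\mathcal{Y}\otimes\mathcal{X})$. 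Trivially intersecting Choi ranges would force this sum to be at most $\dim(\mathcal{Y}\otimes\mathcal{X})$. So you cannot aim for $\widehat F=0$. What is needed is only a strict drop of the limit, and you leave that as an unspecified fallback; the ``diagonal argument'' alternative is not an argument.

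The missing idea is how to choose the single direction to dilate. Let $M=\Tr_{\mathcal{Y}}(Wg_\#(T)W^*)$, and let $x$ be a unit vector with $\langle x,Mx\rangle=\lambda_{\min}(M)=\widehat F(\Phi,\Psi)>0$. Since $g_\#(0)=g_\#(1)=0$, you can write $g_\#(T)=\sum_kg_\#(q_k)u_ku_k^*$, where the $u_k$ are orthonormal eigenvectors of $T$ with eigenvalues $q_k\in(0,1)$. Then $\widehat F(\Phi,\Psi)=\sum_k\beta_k$ with $\beta_k=g_\#(q_k)\langle x,\Tr_{\mathcal{Y}}(Wu_ku_k^*W^*)x\rangle\geq0$, so some $\beta_k>0$. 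Now apply (b) with $\mathcal{U}=\mathbb{C}u_k$, which needs only $c_\Omega\geq1$. Since $a_\#=b_\#=0$, we have $\eta_\#=g_\#$. The limit of $\widehat F(\Phi_t,\Psi_t)$ is therefore $\lambda_{\min}\big(M-g_\#(q_k)\Tr_{\mathcal{Y}}(Wu_ku_k^*W^*)\big)\leq\langle x,Mx\rangle-\beta_k<\widehat F(\Phi,\Psi)$. The same argument with $h_\alpha$ in place of $g_\#$ gives the claim for $\widehat Q_\alpha$. Your worry that a rank-one change need not lower $\lambda_{\min}$ is correct for an arbitrary eigenvector. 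The minimising vector $x$ is what selects one that does.
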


\begin{proof}
(a) Since $J(\Omega)$ is invertible, $\range J(\Phi'+\Psi')\subseteq\range J(\Omega)$, and Corollary~\ref{cor:four maps cb} applies with $(\Phi_1,\Psi_1)=(\Phi,\Psi)$. For the other quantities, the same corollary gives $\delta\leq d_{\rm cb}/\nu(\Omega)^2$. When $\delta<\frac12$, Lemma~\ref{lem:one parameter} gives $\tau\leq2\sqrt2\,\delta/\sqrt{1-4\delta^2}$ and $\tanh\frac{d_T}{2}=2\gamma\leq2\delta$, and Theorem~\ref{thm:four maps} applies.

(b) Since $\rank R=\rank T_{\mathcal{U}}(\mathbf{1}_{\mathcal{U}}-T_{\mathcal{U}})\leq c_\Omega$, the dilation \eqref{eq:halmos} exists and $0\leq\widehat T_{\mathcal{U}}\leq\mathbf{1}_{\mathbb{C}^r\oplus\mathbb{C}^{c_\Omega}}$. By Lemma~\ref{lem:dilation} and \eqref{eq:halmos compression} with $\varphi=\varphi_\sigma$, whose chord is $\ell_\sigma$ because $\varphi_\sigma(1)=a_\sigma$ and $\varphi_\sigma(0)=b_\sigma$, the means converge to $W(\varphi_\sigma(T)-P_{\mathcal{U}}\eta_\sigma(T)P_{\mathcal{U}})W^*$. With $\varphi=h_f$, whose chord is $\ell_f$, the divergences converge as stated.

(c) On $\mathcal{U}^\perp$, $T(\mathbf{1}_{\mathbb{C}^r}-T)=0$, so $T_{\mathcal{U}^\perp}$ is a projection, and so is $\widehat T_{\mathcal{U}}$. Since $W_t$ is injective, $\Phi_t$ and $\Psi_t$ are mutually singular, and $\varphi_\sigma(\widehat T_{\mathcal{U}})=\ell_\sigma(\widehat T_{\mathcal{U}})$ gives $J(\Phi_t)\,\sigma\,J(\Psi_t)=a_\sigma J(\Phi_t)+b_\sigma J(\Psi_t)$. The geometric mean and $\#_{1-\alpha}$ have $a_\sigma=b_\sigma=0$, so $\widehat F(\Phi_t,\Psi_t)=\widehat Q_\alpha(\Phi_t\|\Psi_t)=0$ by Proposition~\ref{lambda min formula} and Lemma~\ref{renyi closed form}. Similarly $h_f(\widehat T_{\mathcal{U}})=\ell_f(\widehat T_{\mathcal{U}})$ gives $\widehat D_f(\Phi_t\|\Psi_t)=\lambda_{\max}\big(\Tr_{\mathcal{Y}}(f(0^+)J(\Psi_t)+f'(\infty)J(\Phi_t))\big)$, which converges as stated.

(d)(i) If $J(\Omega)$ is invertible, use (a) and Theorem~\ref{thm:four maps}(a). If $T$ is a projection, then $\eta_\sigma(T)=0$, so the two bounds in Lemma~\ref{lem:envelope}(b) coincide, and every limit point equals $J(\Phi)\,\sigma\,J(\Psi)$; since the sequence is bounded, it converges. Conversely, let $c_\Omega\geq1$ and let $u$ be a unit eigenvector of $T$ with eigenvalue $q\in(0,1)$. Then $\eta_\sigma(q)>0$ because $m_\sigma\neq0$. With $\mathcal{U}=\mathbb{C}u$, (b) produces the limit $W(\varphi_\sigma(T)-\eta_\sigma(q)uu^*)W^*$, which differs from $W\varphi_\sigma(T)W^*$ because $W$ is injective.

(ii) If $J(\Omega)$ is invertible, use (a). If $\widehat F(\Phi,\Psi)=0$, upper semicontinuity (Lemma~\ref{lem:envelope}(c)) and $\widehat F\geq0$ give continuity. Conversely, let $c_\Omega\geq1$ and $\widehat F(\Phi,\Psi)>0$. Put $M=\Tr_{\mathcal{Y}}(Wg_\#(T)W^*)$ and let $x$ be a unit vector with $\langle x,Mx\rangle=\lambda_{\min}(M)=\widehat F(\Phi,\Psi)$. Write $g_\#(T)=\sum_kg_\#(q_k)u_ku_k^*$, the sum over the eigenvalues $q_k\in(0,1)$ of $T$, since $g_\#(0)=g_\#(1)=0$. Since $\langle x,Mx\rangle>0$, some $k$ has $\beta_k:=g_\#(q_k)\langle x,\Tr_{\mathcal{Y}}(Wu_ku_k^*W^*)x\rangle>0$. With $\mathcal{U}=\mathbb{C}u_k$, (b) gives
\[
\widehat F(\Phi_t,\Psi_t)\to\lambda_{\min}\Big(\Tr_{\mathcal{Y}}\big(W(g_\#(T)-g_\#(q_k)u_ku_k^*)W^*\big)\Big)\leq\langle x,Mx\rangle-\beta_k<\widehat F(\Phi,\Psi).
\]
The same argument applies to $\widehat Q_\alpha$, with $h_\alpha$ in place of $g_\#$.

(iii) The first statement is (a), with Theorem~\ref{thm:four maps}(d). Let $\mathcal{E}_f(\Phi,\Psi)=\lambda_{\max}\big(\Tr_{\mathcal{Y}}(f(0^+)J(\Psi)+f'(\infty)J(\Phi))\big)$, a continuous function of the pair. If $1\leq c_\Omega$ and $\rank T(\mathbf{1}_{\mathbb{C}^r}-T)\leq c_\Omega$, then by (c) there are pairs converging to $(\Phi,\Psi)$ along which $\widehat D_f\to \mathcal{E}_f(\Phi,\Psi)$, so continuity forces $\widehat D_f(\Phi\|\Psi)=\mathcal{E}_f(\Phi,\Psi)$. Conversely, if equality holds, lower semicontinuity gives $\liminf\widehat D_f(\Phi'\|\Psi')\geq\widehat D_f(\Phi\|\Psi)$, and Lemma~\ref{lem:envelope}(c) gives $\limsup\widehat D_f(\Phi'\|\Psi')\leq\lim \mathcal{E}_f(\Phi',\Psi')=\mathcal{E}_f(\Phi,\Psi)=\widehat D_f(\Phi\|\Psi)$.

(iv) Let $u$ be an eigenvector of $T$ with eigenvalue in $(0,1)$ and $\mathcal{U}=\mathbb{C}u$. Then $\widehat T_{\mathcal{U}}$ has the eigenvalue $1$, so by Proposition~\ref{prop:lebesgue cp} $\Phi_t$ has a nonzero singular part relative to $\Psi_t$, and Theorem~\ref{thm:all f}(a) gives $\widehat D_f(\Phi_t\|\Psi_t)=+\infty$.
\end{proof}

The approximants in (c) separate the Choi supports of the two maps at cost $O(\sqrt t)$, and every mean then collapses to its affine part. Theorem~\ref{thm:different supports} records exactly this: for $(\Phi_t,\Psi_t)$ against $(\Phi,\Psi)$ one has $\tau_{\rm ac}=0$, $d_{\rm ac}=0$ and $m_{\rm s}=O(t)$, but the couplings do not tend to zero. The condition $c_\Omega\geq\rank T(\mathbf{1}_{\mathbb{C}^r}-T)$ holds whenever the Kraus rank of $\Omega$ is at most $\frac12\dim\mathcal{X}\dim\mathcal{Y}$. When it fails, (b) still gives a nonzero jump whenever (d) asserts discontinuity. Part (iv) is the channel form of the familiar fact that divergences with $f'(\infty)=\infty$ require a support condition. Parts (i)--(iii) have no classical counterpart: if all Choi matrices involved commute, $\widehat F$ is continuous, since $\sqrt{xy}$ is.

The ingredients of Theorem~\ref{thm:dichotomy} are classical, but we are not aware of an earlier description of the points of continuity. Discontinuities at singular pairs are known: the Matsumoto fidelity of two distinct pure states vanishes \cite{cree2020fidelity}, which is the mechanism of Corollary~\ref{impossibility corollary}, and the Belavkin--Staszewski conditional entropy is discontinuous on singular states, with continuity bounds of order $\sqrt\delta/(\lambda_-(d^{-1}-\lambda_-))$ on states on a $d$-dimensional space whose smallest eigenvalue is at least $\lambda_-$, where $0<\lambda_-<d^{-1}$ \cite{BluhmCapelGondolfPerezHernandez2023}. Theorem~\ref{thm:dichotomy}(a) is an analogue for completely positive maps, with a constant governed by $\nu(\Omega)$, which is comparable to the square root of the smallest eigenvalue of $J(\Omega)$ by Lemma~\ref{lem:nu}(2). The ways of approaching a singular pair that are used in the literature are all compatible with Theorem~\ref{thm:dichotomy}: regularisation from above \cite{KuboAndo1980,AndoLiMathias2004} and positive semidefinite perturbations, as in the perturbation bounds for parallel sums of \cite{LuoSongXu2018}, converge by Remark~\ref{rem:one sided}, and, as observed in the introduction of \cite{Matsumoto2018newf}, maximal $f$-divergences of states are continuous along line segments into the interior of their domain. By Corollary~\ref{cor:four maps cb} and Remark~\ref{rem:one sided}, a jump requires perturbations $(\Phi_n,\Psi_n)$ that enlarge the Choi support of the sum and do not satisfy both $\Phi_n\geq\Phi$ and $\Psi_n\geq\Psi$.

\begin{corollary}[States]\label{cor:states dichotomy}
For positive semidefinite $\rho_1,\rho_2$ on $\mathbb{C}^d$, the Matsumoto fidelity $\tr(\rho_1\#\rho_2)$ is continuous at $(\rho_1,\rho_2)$ if and only if $\rho_1+\rho_2$ is invertible or $\tr(\rho_1\#\rho_2)=0$. If $\rank(\rho_1+\rho_2)\leq d/2$, there are mutually singular pairs $(\rho_{1,t},\rho_{2,t})\to(\rho_1,\rho_2)$, which can be normalised to states when $\rho_1,\rho_2$ are states, and $\tr(\rho_{1,t}\#\rho_{2,t})=0$.
\end{corollary}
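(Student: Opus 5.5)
The plan is to specialise Theorem~\ref{thm:dichotomy} to the case $\dim\mathcal{X}=1$. In that case completely positive maps $\mathrm{CP}(\mathbb{C},\mathbb{C}^d)$ are positive semidefinite matrices: $J(\Phi)=\Phi(1)=\rho_1$, and $\Tr_{\mathcal{Y}}$ is the full trace. Proposition~\ref{lambda min formula} then gives $\widehat F(\rho_1,\rho_2)=\lambda_{\min}(\tr(\rho_1\#\rho_2))=\tr(\rho_1\#\rho_2)$, which is the Matsumoto fidelity. Moreover $J(\Omega)=\rho_1+\rho_2$, so invertibility of the Choi matrix of the sum is invertibility of $\rho_1+\rho_2$. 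The first sentence is therefore exactly Theorem~\ref{thm:dichotomy}(d)(ii), read through this identification. The topology on pairs of maps is a norm on $\mathrm{L}(\mathbb{C}^d)^2$, so continuity means the same thing in both settings.

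For the second sentence I would use Theorem~\ref{thm:dichotomy}(c). Here $r=\rank(\rho_1+\rho_2)$ and $c_\Omega=d-r$. The hypothesis $r\le d/2$ gives $c_\Omega\ge r\ge\rank T(\mathbf{1}_{\mathbb{C}^r}-T)$. (If $r=0$ the claim is trivial with $\rho_{i,t}=0$.) The construction in (c), via Lemma~\ref{lem:dilation} with the Halmos dilation~\eqref{eq:halmos} for $\mathcal{U}=\range T(\mathbf{1}_{\mathbb{C}^r}-T)$, produces mutually singular $\rho_{1,t},\rho_{2,t}$ with $\|\rho_{i,t}-\rho_i\|_\infty=O(\sqrt t)$. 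Part (c) itself gives $\tr(\rho_{1,t}\#\rho_{2,t})=0$, because the geometric mean of mutually singular operators vanishes by Theorem~\ref{range theorem}.

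The remaining point, and the main obstacle, is normalisation when $\rho_1,\rho_2$ are states. The traces $\tr\rho_{i,t}$ tend to $1$ by the $O(\sqrt t)$ estimate, so I would replace $\rho_{i,t}$ by $\rho_{i,t}/\tr\rho_{i,t}$. This is legitimate for $t$ small, when the traces are positive. Rescaling preserves ranges, hence mutual singularity, and the geometric mean is jointly positively homogeneous ($(\lambda A)\#(\mu B)=\sqrt{\lambda\mu}\,A\#B$), so the mean stays $0$. The rescaled pairs still converge to $(\rho_1,\rho_2)$. One should check that dividing by $\tr\rho_{i,t}$ does not interfere with anything else; since only the limit and the vanishing of the mean are claimed, nothing more is needed.
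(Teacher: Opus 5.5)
Your proposal is correct and follows the paper's own proof. You specialise Theorem~\ref{thm:dichotomy} to $\dim\mathcal{X}=1$, read the continuity criterion off part (d)(ii), obtain the mutually singular approximants from part (c) via $\rank T(\mathbf{1}_{\mathbb{C}^r}-T)\leq r\leq d-r=c_\Omega$, and normalise by the traces, which preserves ranges and convergence; your remarks on the case $r=0$ and on positivity of the traces for small $t$ only fill in details the paper leaves implicit.
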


\begin{proof}
Take $\dim\mathcal{X}=1$ in Theorem~\ref{thm:dichotomy}. Then $\widehat F(\rho_1,\rho_2)=\tr(\rho_1\#\rho_2)$, and $\rank T(\mathbf{1}_{\mathbb{C}^r}-T)\leq\rank(\rho_1+\rho_2)\leq d-\rank(\rho_1+\rho_2)=c_\Omega$. Normalising each map by its trace preserves mutual singularity and convergence.
\end{proof}

For example, $\rho_1=\rho_2=\frac12(e_1e_1^*+e_2e_2^*)$ on $\mathbb{C}^4$ has $\tr(\rho_1\#\rho_2)=1$. Coupling $e_1$ to $e_3$ as in \eqref{eq:halmos} gives $\rho_{1,t}=\frac12(vv^*+e_2e_2^*)$ and $\rho_{2,t}=\frac12(ww^*+e_2e_2^*)$ with $v,w=e_1\pm\sqrt t\,e_3$. These are at trace distance $O(\sqrt t)$ from $\rho_1$, and $\tr(\rho_{1,t}\#\rho_{2,t})=\frac12$, because the mean splits over $\mathbb{C}e_2\oplus\mathrm{span}\{e_1,e_3\}$ and $vv^*\#ww^*=0$. Coupling $e_2$ to $e_4$ as well gives fidelity $0$.

\begin{corollary}[Channels]\label{cor:channels dichotomy}
Let $\Phi$ be a channel. Then $\widehat F(\Phi,\Phi)=1$, and if $J(\Phi)$ is invertible, $\widehat F$ is continuous at $(\Phi,\Phi)$. If there are channels $\Phi_n\to\Phi$ with $\range J(\Phi_n)\cap\range J(\Phi)=\{0\}$, then $\widehat F(\Phi,\Phi_n)=0$ for all $n$. This is the case for every isometric channel $\Phi=V(\cdot)V^*$ with $\dim\mathcal{Y}\geq2$: take $\Phi_n=U_nV(\cdot)V^*U_n^*$ with unitaries $U_n\to\mathbf{1}_{\mathcal{Y}}$ such that $U_nV\notin\mathbb{C}V$.
\end{corollary}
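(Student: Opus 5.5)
The statement has three parts: the value $\widehat F(\Phi,\Phi)=1$, continuity when $J(\Phi)$ is invertible, and the vanishing of $\widehat F(\Phi,\Phi_n)$ for mutually singular approximants, followed by the construction for isometric channels. I would treat them in that order, reducing each to results already proved.

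For $\widehat F(\Phi,\Phi)=1$, use Proposition~\ref{lambda min formula}. Since $A\#A=A$, it gives $\widehat F(\Phi,\Phi)=\lambda_{\min}(\Tr_{\mathcal{Y}}J(\Phi))=\lambda_{\min}(\Phi^*(\mathbf{1}_{\mathcal{Y}})^\top)$. For a channel, $\Phi^*(\mathbf{1}_{\mathcal{Y}})=\mathbf{1}_{\mathcal{X}}$, so the value is $1$.

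For continuity, note that the sum is $\Omega=2\Phi$ and $J(\Omega)=2J(\Phi)$. Invertibility of $J(\Phi)$ is therefore exactly the hypothesis of Theorem~\ref{thm:dichotomy}(a), which gives continuity at $(\Phi,\Phi)$ together with the explicit H\"older bound.

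For the vanishing statement, use the range criterion. If $\range J(\Phi_n)\cap\range J(\Phi)=\{0\}$, then $J(\Phi)\#J(\Phi_n)=0$ by Theorem~\ref{range theorem}(1), since the geometric mean has $f(0^+)=f'(\infty)=0$. Proposition~\ref{lambda min formula} then gives $\widehat F(\Phi,\Phi_n)=\lambda_{\min}(0)=0$.

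For the isometric case, $J(V(\cdot)V^*)=\vectorise(V)\vectorise(V)^*$ has rank one, and likewise $J(\Phi_n)=\vectorise(U_nV)\vectorise(U_nV)^*$. Two rank-one ranges intersect trivially exactly when the vectors are not proportional, that is, when $U_nV\notin\mathbb{C}V$. Convergence $\Phi_n\to\Phi$ follows from $U_n\to\mathbf{1}_{\mathcal{Y}}$ and continuity of $U\mapsto UV(\cdot)V^*U^*$ in the diamond norm.

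The remaining task, which is the only step requiring any care, is to exhibit such unitaries $U_n$ when $\dim\mathcal{Y}\geq2$. I would pick a unit vector $y\in\range V$ and a unit vector $z\perp y$; such $z$ exists because $\dim\mathcal{Y}\geq2$. Let $U_n$ be the rotation by angle $1/n$ in $\mathrm{span}\{y,z\}$, acting as the identity on the orthogonal complement. Suppose $U_nV=cV$ for some scalar $c$. Applying this to $x=V^*y$, for which $Vx=y$ since $V$ is an isometry, gives $U_ny=cy$. This is impossible, because $U_ny=\cos(1/n)\,y+\sin(1/n)\,z$ has a nonzero $z$-component. Hence $U_nV\notin\mathbb{C}V$ for every $n$, while $U_n\to\mathbf{1}_{\mathcal{Y}}$.
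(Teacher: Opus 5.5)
Your proof is correct and follows the paper's argument. You get the value $1$ from Proposition~\ref{lambda min formula} with $A\#A=A$, continuity from Theorem~\ref{thm:dichotomy}(a) applied to $\Omega=2\Phi$, and vanishing from the trivial-range criterion for $J(\Phi)\#J(\Phi_n)=0$; the one difference is that the paper takes $U_n=e^{iH/n}$ with $HV\notin\mathbb{C}V$, whereas your explicit rotation in $\mathrm{span}\{y,z\}$ gives $U_nV\notin\mathbb{C}V$ for every $n$ rather than only for all sufficiently large $n$.
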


\begin{proof}
$\widehat F(\Phi,\Phi)=\lambda_{\min}(\Tr_{\mathcal{Y}}J(\Phi))=\lambda_{\min}(\Phi^*(\mathbf{1}_{\mathcal{Y}})^\top)=1$. Continuity is Theorem~\ref{thm:dichotomy}(a). If the Choi ranges intersect trivially, the pair $(\Phi,\Phi_n)$ is mutually singular, so $J(\Phi)\#J(\Phi_n)=0$. For an isometric channel the Choi matrices are $|V\rangle\!\rangle\langle\!\langle V|$ and $|U_nV\rangle\!\rangle\langle\!\langle U_nV|$, whose ranges intersect trivially when $U_nV\notin\mathbb{C}V$. Such $U_n=e^{iH/n}$ exist as soon as $HV\notin\mathbb{C}V$ for some Hermitian $H$, which is possible because $\dim\mathcal{Y}\geq2$.
\end{proof}

Thus the geometric channel fidelity, viewed as a function of its second argument, is discontinuous at every isometric channel, although it equals $1$ there. This is the distinct image property of \cite{cree2020fidelity}, applied to the rank-one Choi matrices of isometric channels. Whether channel-valued approximants exist at every channel with non-invertible Choi matrix, as the CP approximants of Theorem~\ref{thm:dichotomy} do, is left open.

\begin{remark}[Physical interpretation]\label{rem:phys-dichotomy}
Corollary~\ref{cor:channels dichotomy} is a caution: an ideal gate $U(\cdot)U^*$ and a mis-rotated gate $U'(\cdot)U'^*$, $U'\notin\mathbb{C}U$, have $\widehat{F}=0$ however close they are in $|||\cdot|||_1$, so near unitaries $\widehat{F}$ is not a usable plug-in figure of merit. Noisy devices often have invertible Choi matrices, and as $J(\Phi+\Psi)\geq J(\Phi)$, full Kraus rank of one of the maps suffices. The generalised amplitude damping channel with damping $p$ and thermal population $N$, a qubit in a bath at nonzero temperature, has four Kraus operators \cite{KhatriSharmaWilde2020}, linearly independent for $p\in(0,1]$ and $N\in(0,1)$ (two are off-diagonal in different positions, two are diagonal and independent as $p\neq0$), so pairs containing it are continuity points by Theorem~\ref{thm:dichotomy}(a). At zero temperature ($N=0<p<1$) the Kraus rank is $2$, and as $\widehat{F}(\Phi,\Phi)=1$, Theorem~\ref{thm:dichotomy}(d)(ii) shows that $\widehat{F}$ is discontinuous at $(\Phi,\Phi)$ under completely positive perturbations.
\end{remark}

\section{Conclusion and outlook}\label{sec:conclusion}

Channel divergences built from Kubo--Ando means of Choi matrices that annihilate pairs with trivially intersecting Choi ranges (Theorem~\ref{range theorem}), such as the geometric mean, are not continuous in the diamond norm (Corollaries~\ref{impossibility corollary} and~\ref{maxf impossibility}).

The Arveson--Radon--Nikodym derivative $T$ of $\Phi$ with respect to $\Omega=\Phi+\Psi$ is the coordinate in which these quantities behave well. Its transpose is an effect on the environment of $\Omega$, so it records how an instrument splits a fixed device into two branches, and $\|T_1-T_2\|_\infty$ measures perturbations relative to $\Omega$ in the completely positive order. Every quantity considered is a scalar function of $T$, namely $g_\sigma$ or $h_f$, read through the Kraus operators of $\Omega$. Continuity in $T$ therefore holds with the dimension-free constant $\|\Omega^*(\mathbf{1}_{\mathcal{Y}})\|_\infty$. The integral representations of means and of operator convex functions give explicit moduli, which are exact for the one-pole kernels and, for operator convex $f$ on $[\epsilon,1-\epsilon]$, within a factor $3$ of optimal (Section~\ref{sec:loewner}). Quantitative estimates for unbounded divergences need a bound on $D_{\max}$ within each pair (without it they are still continuous in $T$ with values in $(-\infty,+\infty]$), and for a general operator convex $f$ the Lebesgue decomposition of the pair decides finiteness. Theorem~\ref{thm:two sided} converts these estimates into continuity in the diamond norm on the splittings of a fixed $\Omega$; for the fidelity this is H\"older continuity of order $\frac12$ with constant $\|\Omega^*(\mathbf{1}_{\mathcal{Y}})\|_\infty/\nu(\Omega)$, and in the counterexample $\nu(\Omega)\to0$. When the sums differ, their change enters through the Thompson metric or, if their Choi supports differ, through the singular parts, the absolutely continuous parts and the couplings of the Lebesgue decomposition of the sums.

The dichotomy of Theorem~\ref{thm:dichotomy} says where continuity holds. The fidelity, $\widehat{Q}_\alpha$ with $\alpha\in(0,1)$, the means and the bounded maximal $f$-divergences are continuous at every pair whose sum has an invertible Choi matrix. At a pair whose sum has a singular Choi matrix each of them, as a function on pairs of completely positive maps, is continuous exactly when it takes an extreme value there ($0$ for $\widehat{F}$ and $\widehat{Q}_\alpha$, $a_\sigma J(\Phi)+b_\sigma J(\Psi)$ for a mean, $\lambda_{\max}\big(\Tr_{\mathcal{Y}}(f(0^+)J(\Psi)+f'(\infty)J(\Phi))\big)$ for $\widehat{D}_f$); for means this requires the mean not to be a weighted arithmetic mean, and for bounded maximal $f$-divergences the `only if' needs a rank condition. For channels this separates noisy devices of full Kraus rank, which are continuity points, from isometric gates, at which $\widehat{F}(\Phi,\cdot)$ jumps from $1$ to $0$ (Corollary~\ref{cor:channels dichotomy}). Our estimates complement the joint lower semicontinuity of the operator-valued maximal $f$-divergences of \cite[Theorem~4.7]{Okayasu2026maxf}: in finite dimensions and in the coordinate $T$, continuity is two-sided and quantitative. They also complement the continuity results of \cite{MosonyiHiai2024} for R\'enyi divergences of states and channels. They are upper bounds: the two-sided estimate of Theorem~\ref{thm:two sided} concerns the maps, not the divergences.

We close with some open questions.
\begin{enumerate}
\item \emph{Constants.} Lemma~\ref{lem:one parameter} is sharp as $\delta\to0$. Is its factor $(1-4\delta^2)^{-1/2}$ optimal for finite $\delta$? The constants of Corollaries~\ref{cor:BS} and~\ref{cor:renyi large alpha} are not optimised, and we do not know whether the bound of Theorem~\ref{main theorem} is attained, although the underlying operator inequality is sharp (Remark~\ref{rem:sharp}).
\item \emph{Singular pairs.} Let $c_\Omega=\dim\ker J(\Omega)$. When $1\leq c_\Omega<\rank T(\mathbf{1}_{\mathbb{C}^r}-T)$, what is the exact set of limit points? At singular pairs where the fidelity vanishes it is continuous; at what rate? Does the dichotomy hold within the set of channels? Corollary~\ref{cor:channels dichotomy} settles this at the pairs $(\Phi,\Phi)$ with $\Phi$ an isometric channel.
\item \emph{Convex $f$.} For merely convex $f$ the maximal $f$-divergence is still defined through reverse tests \cite{Matsumoto2018newf}, but the perspective formula need not hold. Is it still a function of $T$, and of which function?
\item \emph{Operational consequences.} The geometric R\'enyi divergences with $\alpha\in(1,2]$ give converse bounds for channel capacities \cite{FangFawzi2021}, and geometric distinguishability measures limit channel discrimination and estimation \cite{KatariyaWilde2021}. Our estimates show how such bounds respond to relative perturbations, and Theorem~\ref{thm:dichotomy} shows that continuity fails at pairs with singular sum unless the quantity takes its extreme value there. Can this be made quantitative for a specific task?
\item \emph{Other norms and algebras.} By Ando's inequality in unitarily invariant norms, Proposition~\ref{K6}(1) holds in every unitarily invariant norm and Proposition~\ref{K6}(2) becomes $\|g_\#(E)-g_\#(E')\|_p\leq\|E-E'\|_{p/2}^{1/2}$ for $p\geq2$, but statements for maps need norms adapted to the Schatten classes. The framework of Okayasu \cite{Okayasu2026,Okayasu2026maxf} suggests extensions to normal completely positive maps between von Neumann algebras, with $T$ in the commutant of a Stinespring representation; our estimates are already dimension-free.
\end{enumerate}

\appendix
\section{Norm bounds and a variational characterisation of means of completely positive maps}\label{app:norms}
 
Throughout this appendix $\Phi,\Psi\in\mathrm{CP}(\mathcal{X},\mathcal{Y})$, $\Omega=\Phi+\Psi$ has a Kraus decomposition $\Omega(X)=\sum_{k=1}^rV_kXV_k^*$ with linearly independent $V_1,\dots,V_r$, $W=[\vectorise(V_1)\ \cdots\ \vectorise(V_r)]$, and $T$ is the ARN derivative of $\Phi$ with respect to $\Omega$, so that $\Psi=\Psi_{\mathbf{1}_{\mathbb{C}^r}-T}$. For a linear map $\Xi:\mathrm{L}(\mathcal{X})\to\mathrm{L}(\mathcal{Y})$ and $p\in[1,\infty]$ we write
\[
\|\Xi\|_{p\to p}=\sup\big\{\|\Xi(X)\|_p:\ X\in\mathrm{L}(\mathcal{X}),\ \|X\|_p\leq1\big\},
\]
and we write $\|\cdot\|_{\mathrm{ui}}$ for an arbitrary unitarily invariant norm \cite[Chapter~IV]{bhatia1997matrix}. We use the following fact repeatedly: if $\Xi$ is completely positive, then the supremum defining $\|\Xi\|_{p\to p}$ may be restricted to positive semidefinite $X$ \cite{Watrous2005notes}.
 
\subsection{Norms of means}
 
\begin{proposition}\label{app:mean norm}
Let $\sigma$ be a Kubo--Ando mean. For every $p\in[1,\infty]$,
\[
\|\Phi\,\sigma\,\Psi\|_{p\to p}\leq\|g_\sigma(T)\|_\infty\,\|\Omega\|_{p\to p},\qquad \|g_\sigma(T)\|_\infty=\max_{t\in\mathrm{spec}(T)}g_\sigma(t),
\]
and for $p=1$,
\[
\|\Phi\,\sigma\,\Psi\|_{1\to1}=|||\Phi\,\sigma\,\Psi|||_1=\max_{\rho\in\mathrm{D}(\mathcal{X})}\tr\big(g_\sigma(T)^\top\,\Omega^c(\rho)\big),
\]
where $\Omega^c$ is the complementary map of $\Omega$, that is, Theorem~\ref{effect theorem} applied with $\Omega$ in place of $\Phi$.
\end{proposition}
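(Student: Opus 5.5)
The plan is to reduce everything to the identity $\Phi\,\sigma\,\Psi=\Psi_{g_\sigma(T)}$ of Theorem~\ref{fundamental principle} and then compare $\Psi_{g_\sigma(T)}$ with $\Psi_{c\mathbf{1}_{\mathbb{C}^r}}=c\,\Omega$ in the completely positive order, where $c=\|g_\sigma(T)\|_\infty$. The formula $\|g_\sigma(T)\|_\infty=\max_{t\in\mathrm{spec}(T)}g_\sigma(t)$ is immediate: $g_\sigma\geq0$ on $[0,1]$ by the last paragraph of the proof of Theorem~\ref{fundamental principle}, so $g_\sigma(T)\geq0$, and the spectral mapping theorem gives its spectrum as $g_\sigma(\mathrm{spec}(T))$.

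For the first inequality, observe that $0\leq g_\sigma(T)\leq c\mathbf{1}_{\mathbb{C}^r}$. By Theorem~\ref{thm:arveson} (linearity of $K\mapsto\Psi_K$ together with Choi's theorem), both $\Psi_{g_\sigma(T)}$ and $c\,\Omega-\Psi_{g_\sigma(T)}=\Psi_{c\mathbf{1}-g_\sigma(T)}$ are completely positive. Hence for positive semidefinite $X$ we have $0\leq(\Phi\,\sigma\,\Psi)(X)\leq c\,\Omega(X)$. Since the Schatten $p$-norms are monotone on the positive cone (Weyl monotonicity of eigenvalues), this gives $\|(\Phi\,\sigma\,\Psi)(X)\|_p\leq c\|\Omega(X)\|_p\leq c\|\Omega\|_{p\to p}\|X\|_p$.

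The only delicate point is passing from positive $X$ to general $X$. Here I would use the fact quoted at the start of the appendix: for completely positive $\Xi$ the supremum defining $\|\Xi\|_{p\to p}$ may be restricted to positive semidefinite inputs \cite{Watrous2005notes}. Since $\Phi\,\sigma\,\Psi$ is completely positive, taking the supremum over positive $X$ with $\|X\|_p\leq1$ yields the bound. I expect this reduction to be the main obstacle; it is cited rather than proved, so it must be applied to $\Phi\,\sigma\,\Psi$ itself and not to $\Omega$.

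For $p=1$, I would apply Theorem~\ref{effect theorem} to $\Omega$ in place of $\Phi$, with $T$ replaced by the Hermitian effect $g_\sigma(T)$, where $0\leq g_\sigma(T)\leq\mathbf{1}$. Its final formula gives $|||\Psi_{g_\sigma(T)}|||_1=\max_{\rho}\tr(g_\sigma(T)^\top\Omega^c(\rho))$. The equality $\|\cdot\|_{1\to1}=|||\cdot|||_1$ holds for completely positive maps, as recalled there; both equal $\|\Xi^*(\mathbf{1}_{\mathcal{Y}})\|_\infty$ by Russo--Dye. Combining these with $\Phi\,\sigma\,\Psi=\Psi_{g_\sigma(T)}$ gives the displayed identity.
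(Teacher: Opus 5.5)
Your proposal is correct and follows the paper's proof essentially line by line. It uses the identity $\Phi\,\sigma\,\Psi=\Psi_{g_\sigma(T)}$ from Theorem~\ref{fundamental principle}, the domination $0\leq\Phi\,\sigma\,\Psi\leq c\,\Omega$ via Theorem~\ref{thm:arveson}, Weyl monotonicity on positive inputs with the reduction of \cite{Watrous2005notes}, and, for $p=1$, the last formula of Theorem~\ref{effect theorem} applied to $\Omega$ and $g_\sigma(T)$. Your remark that the positive-input reduction is needed only for $\Phi\,\sigma\,\Psi$, since $\|\Omega(X)\|_p\leq\|\Omega\|_{p\to p}\|X\|_p$ is trivial, is accurate and slightly sharper than the paper's phrasing ``suffices for both completely positive maps''.
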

 
\begin{proof}
By Theorem~\ref{fundamental principle}, $\Phi\,\sigma\,\Psi=\Psi_{g_\sigma(T)}$ with $g_\sigma(T)\geq0$. Put $c=\|g_\sigma(T)\|_\infty$. Since $c\mathbf{1}_{\mathbb{C}^r}-g_\sigma(T)\geq0$, the map $c\,\Omega-\Psi_{g_\sigma(T)}=\Psi_{c\mathbf{1}_{\mathbb{C}^r}-g_\sigma(T)}$ is completely positive by Theorem~\ref{thm:arveson}, that is, $0\leq\Phi\,\sigma\,\Psi\leq c\,\Omega$ in the completely positive order. For positive semidefinite $X$ this gives $0\leq(\Phi\,\sigma\,\Psi)(X)\leq c\,\Omega(X)$, hence $\|(\Phi\,\sigma\,\Psi)(X)\|_p\leq c\,\|\Omega(X)\|_p$ by the Weyl monotonicity principle \cite[Section~III.2]{bhatia1997matrix}. Taking the supremum over positive semidefinite $X$ with $\|X\|_p\leq1$, which suffices for both completely positive maps \cite{Watrous2005notes}, gives the first claim. The formula for $\|g_\sigma(T)\|_\infty$ is the spectral theorem, since $g_\sigma\geq0$. For $p=1$, the induced and completely bounded trace norms agree on completely positive maps \cite[Section~3.3.2]{watrous2018theory}, and the formula is the last statement of Theorem~\ref{effect theorem}, applied to $\Omega$ and the positive semidefinite operator $g_\sigma(T)$.
\end{proof}
 
\begin{remark}
If $\Phi$ and $\Psi$ are channels, then $\|\Omega\|_{1\to1}=\|\Omega^*(\mathbf{1}_{\mathcal{Y}})\|_\infty=2$, so $|||\Phi\,\sigma\,\Psi|||_1\leq2\|g_\sigma(T)\|_\infty$. The bound depends on the spectrum of $T$ only through the values of $g_\sigma$ on it; since $g_\sigma$ is not monotone in general (for instance $g_\#(t)=\sqrt{t(1-t)}$), it cannot be replaced by $g_\sigma(\|T\|_\infty)$.
\end{remark}
 
\subsection{A variational characterisation of the geometric mean}
 
The following extremal property of the geometric mean is due to Ando \cite{Ando1979}; see also \cite[Chapter~4]{Bhatia2007positive}.
 
\begin{lemma}\label{app:ando max}
For positive semidefinite $A,B\in\mathrm{L}(\mathcal{V})$,
\[
A\#B=\max\Big\{X=X^*:\ \begin{pmatrix}A&X\\X&B\end{pmatrix}\geq0\Big\},
\]
the maximum being taken in the Loewner order. In particular, the block matrix with off-diagonal entries $A\#B$ is positive semidefinite.
\end{lemma}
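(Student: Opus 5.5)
I would prove Lemma~\ref{app:ando max} by first reducing to the case $A>0$ and then extending by regularisation from above. For $A>0$, the Schur complement criterion says that
\[
\begin{pmatrix}A&X\\X&B\end{pmatrix}\geq0 \iff B\geq XA^{-1}X .
\]
Conjugating by $A^{-1/2}$ and writing $Y=A^{-1/2}XA^{-1/2}$ and $C=A^{-1/2}BA^{-1/2}$, the condition becomes $Y^2\leq C$ with $Y$ Hermitian.

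The key step is to show that $Y^2\leq C$ implies $Y\leq C^{1/2}$. I would argue this as follows. From $Y^2\leq C$, operator monotonicity of the square root gives $|Y|=(Y^2)^{1/2}\leq C^{1/2}$. Since $Y\leq|Y|$, we get $Y\leq C^{1/2}$. Conversely, $Y=C^{1/2}$ is admissible, because $Y^2=C$. Undoing the conjugation, which preserves the Loewner order, the set of admissible $X$ has the greatest element $A^{1/2}C^{1/2}A^{1/2}=A\#B$, and this element is itself admissible. So the maximum exists and equals $A\#B$ when $A$ is invertible.

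For general $A\geq0$, I would put $A_\eta=A+\eta\mathbf{1}_{\mathcal{V}}$. There are two things to check.

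\emph{Admissibility.} The block matrix with entries $A_\eta$, $A_\eta\#B$, $A_\eta\#B$, $B$ is positive semidefinite for every $\eta>0$. By continuity from above, $A_\eta\#B\to A\#B$ as $\eta\downarrow0$, and the positive cone is closed, so the limit block matrix is positive semidefinite.

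\emph{Maximality.} If $X$ is admissible for $(A,B)$, it is admissible for $(A_\eta,B)$, since we only add $\eta\mathbf{1}_{\mathcal{V}}\oplus0\geq0$ to the block matrix. The invertible case then gives $X\leq A_\eta\#B$ for every $\eta>0$, and letting $\eta\downarrow0$ gives $X\leq A\#B$.

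The only delicate point is the regularisation. Continuity from above in the paper's definition regularises both arguments. Here I regularise only $A$, so I need the one-sided version: $A_\eta\#B$ decreases to $A\#B$ as $\eta\downarrow0$. I would justify this by sandwiching. Monotonicity gives
\[
A\#B\leq A_\eta\#B\leq A_\eta\#B_\eta ,
\]
and the right-hand side converges to $A\#B$ by continuity from above, so the middle term converges to $A\#B$ as well. With this settled, the rest of the argument is routine.
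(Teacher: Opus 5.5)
Your proof is correct and follows the paper's argument. You use the Schur complement reduction to $Y^2\leq A^{-1/2}BA^{-1/2}$, then $Y\leq|Y|\leq(A^{-1/2}BA^{-1/2})^{1/2}$ by operator monotonicity of the square root, and finally regularisation together with closedness of the positive cone. The only difference is that you perturb only $A$, which forces your sandwich $A\#B\leq(A+\eta\mathbf{1}_{\mathcal{V}})\#B\leq(A+\eta\mathbf{1}_{\mathcal{V}})\#(B+\eta\mathbf{1}_{\mathcal{V}})$, whereas the paper adds $\eta\mathbf{1}_{\mathcal{V}}$ to both diagonal blocks and uses continuity from above directly.
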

 
\begin{proof}
Let $A>0$. By the Schur complement criterion \cite[Chapter~1]{Bhatia2007positive}, the block matrix is positive if and only if $B-XA^{-1}X\geq0$. With $M=A^{-1/2}BA^{-1/2}$ and $Y=A^{-1/2}XA^{-1/2}$ this reads $Y^2\leq M$. Since $t\mapsto t^{1/2}$ is operator monotone, $Y\leq|Y|\leq M^{1/2}$, so $X=A^{1/2}YA^{1/2}\leq A^{1/2}M^{1/2}A^{1/2}=A\#B$; and $Y=M^{1/2}$ is admissible, so $A\#B$ is the maximum. For general $A,B\geq0$, adding $\eta\mathbf{1}_{\mathcal{V}}$ to both diagonal blocks preserves positivity, so every admissible $X$ satisfies $X\leq(A+\eta\mathbf{1}_{\mathcal{V}})\#(B+\eta\mathbf{1}_{\mathcal{V}})$, which decreases to $A\#B$ as $\eta\downarrow0$ \cite{KuboAndo1980}. Conversely, the block matrices with off-diagonal entry $(A+\eta\mathbf{1}_{\mathcal{V}})\#(B+\eta\mathbf{1}_{\mathcal{V}})$ are positive, and letting $\eta\downarrow0$ in the closed positive cone shows that $A\#B$ is admissible.
\end{proof}
 
For an adjoint-preserving linear map $\Xi:\mathrm{L}(\mathcal{X})\to\mathrm{L}(\mathcal{Y})$, define
\[
\mathcal{B}_\Xi:\mathrm{L}(\mathcal{X})\to\mathrm{L}(\mathbb{C}^2\otimes\mathcal{Y}),\qquad
\mathcal{B}_\Xi(X)=\begin{pmatrix}\Phi(X)&\Xi(X)\\\Xi(X)&\Psi(X)\end{pmatrix}.
\]
 
This block-positivity characterisation of the geometric mean of completely positive maps is also obtained by Okayasu \cite{Okayasu2026}, for normal completely positive maps between von Neumann algebras, and by Kian \cite{kian2026relative}; we include a short proof in the present notation.

\begin{proposition}\label{app:variational}
The map $\Phi\#\Psi$ is the largest adjoint-preserving map $\Xi$, in the completely positive order, for which $\mathcal{B}_\Xi$ is completely positive. Equivalently,
\[
J(\Phi)\#J(\Psi)=\max\Big\{X=X^*:\ \begin{pmatrix}J(\Phi)&X\\X&J(\Psi)\end{pmatrix}\geq0\Big\}.
\]
In terms of the ARN derivative, the corresponding maximisation over $\mathrm{L}(\mathbb{C}^r)$ is solved by
\[
\max\Big\{X=X^*:\ \begin{pmatrix}T&X\\X&\mathbf{1}_{\mathbb{C}^r}-T\end{pmatrix}\geq0\Big\}=\big(T(\mathbf{1}_{\mathbb{C}^r}-T)\big)^{1/2}=g_\#(T).
\]
\end{proposition}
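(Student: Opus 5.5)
The plan is to prove the three statements in reverse order: first the scalar-level maximisation in $\mathrm{L}(\mathbb{C}^r)$, then transfer it to Choi matrices through $W$, and finally translate the Choi statement into the completely positive order on maps. For the derivative-level claim we apply Lemma~\ref{app:ando max} with $A=T$ and $B=\mathbf{1}_{\mathbb{C}^r}-T$. This gives that the maximum exists and equals $T\#(\mathbf{1}_{\mathbb{C}^r}-T)$. By Theorem~\ref{fundamental principle}, this is $g_\#(T)=(T(\mathbf{1}_{\mathbb{C}^r}-T))^{1/2}$, since $T$ and $\mathbf{1}_{\mathbb{C}^r}-T$ commute. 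The Choi-level identity is likewise Lemma~\ref{app:ando max} applied to $A=J(\Phi)$ and $B=J(\Psi)$, so nothing new is needed there. Still, we will also record the direct transfer. Conjugating the block matrix by $\mathbf{1}_{\mathbb{C}^2}\otimes W$ sends admissible $X$ to admissible $WXW^*$. Conversely, any admissible $X'$ for the Choi matrices has range in $\range W$, because the diagonal blocks do and positivity forces it. Hence $X'=WXW^*$ with $X=W^+X'(W^+)^*$, and conjugating by $\mathbf{1}_{\mathbb{C}^2}\otimes W^+$ shows that $X$ is admissible at the derivative level. This reproduces $J(\Phi)\#J(\Psi)=Wg_\#(T)W^*$.

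Next we identify complete positivity of $\mathcal{B}_\Xi$ with positivity of a block Choi matrix. Writing $\mathcal{B}_\Xi=\sum_{i,j}E_{ij}\otimes\Xi_{ij}$ with $\Xi_{11}=\Phi$, $\Xi_{22}=\Psi$ and $\Xi_{12}=\Xi_{21}=\Xi$, its Choi matrix is, up to the canonical reordering $\mathbb{C}^2\otimes\mathcal{Y}\otimes\mathcal{X}$, the block matrix $\begin{pmatrix}J(\Phi)&J(\Xi)\\J(\Xi)&J(\Psi)\end{pmatrix}$. By Choi's theorem \cite{Choi1975}, $\mathcal{B}_\Xi$ is completely positive exactly when this block matrix is positive semidefinite. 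Since $\Xi$ is adjoint-preserving, $J(\Xi)$ is Hermitian, so it is an admissible $X$ in the Choi-level problem. The Choi map $J$ is a linear bijection between adjoint-preserving maps and Hermitian operators, and $\Xi\leq\Xi'$ in the completely positive order if and only if $J(\Xi)\leq J(\Xi')$. Therefore the largest admissible $\Xi$ exists and is the map with Choi matrix $J(\Phi)\#J(\Psi)$, which is $\Phi\#\Psi$ by Theorem~\ref{fundamental principle}.

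The only step needing care is the Choi matrix of $\mathcal{B}_\Xi$. We have to check that $J(\mathcal{B}_\Xi)=\sum_{i,j}E_{ij}\otimes J(\Xi_{ij})$ after the reordering, and that this reordering is a unitary conjugation, so that it preserves positivity. This is a direct computation from $J(\Xi)=\sum_{a,b}\Xi(E_{ab})\otimes E_{ab}$ and is routine. Everything else is an immediate appeal to Lemma~\ref{app:ando max}, Theorem~\ref{fundamental principle} and Choi's theorem.
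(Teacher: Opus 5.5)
Your proposal is correct and matches the paper's proof: you identify $J(\mathcal{B}_\Xi)$ with the block matrix of Choi matrices, use Choi's theorem to translate both complete positivity and the completely positive order, and apply Lemma~\ref{app:ando max} once to $(J(\Phi),J(\Psi))$ and once to $(T,\mathbf{1}_{\mathbb{C}^r}-T)$. Your extra transfer through $W$ and $W^+$ is a valid independent check of $J(\Phi)\#J(\Psi)=Wg_\#(T)W^*$; the paper instead simply notes consistency with Theorem~\ref{fundamental principle}.
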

 
\begin{proof}
Ordering the tensor factors as $\mathbb{C}^2\otimes(\mathcal{Y}\otimes\mathcal{X})$, the Choi matrix of $\mathcal{B}_\Xi$ is
\[
J(\mathcal{B}_\Xi)=\begin{pmatrix}J(\Phi)&J(\Xi)\\J(\Xi)&J(\Psi)\end{pmatrix},
\]
and $J(\Xi)$ is Hermitian because $\Xi$ preserves adjoints. By Choi's theorem \cite{Choi1975}, $\mathcal{B}_\Xi$ is completely positive if and only if this block matrix is positive semidefinite, and $\Xi_1\leq\Xi_2$ in the completely positive order if and only if $J(\Xi_1)\leq J(\Xi_2)$. The first two statements therefore follow from Lemma~\ref{app:ando max} with $A=J(\Phi)$ and $B=J(\Psi)$, since $J(\Phi\#\Psi)=J(\Phi)\#J(\Psi)$. For the last statement apply Lemma~\ref{app:ando max} with $A=T$ and $B=\mathbf{1}_{\mathbb{C}^r}-T$; since these commute, $T\#(\mathbf{1}_{\mathbb{C}^r}-T)=(T(\mathbf{1}_{\mathbb{C}^r}-T))^{1/2}=g_\#(T)$ by the functional calculus, consistently with $J(\Phi)\#J(\Psi)=W\,g_\#(T)\,W^*$ from Theorem~\ref{fundamental principle}.
\end{proof}
 
\begin{remark}
Since $X\mapsto\lambda_{\min}(\Tr_{\mathcal{Y}}X)$ is monotone in the Loewner order, Propositions~\ref{lambda min formula} and~\ref{app:variational} give
\[
\widehat{F}(\Phi,\Psi)=\max\Big\{\lambda_{\min}(\Tr_{\mathcal{Y}}X):\ X=X^*,\ \begin{pmatrix}J(\Phi)&X\\X&J(\Psi)\end{pmatrix}\geq0\Big\},
\]
which is the semidefinite program for the geometric channel fidelity in \cite[eq.~(20)]{NuradhaWilde2025} (cf.\ \cite{KatariyaWilde2021} for the analogous program for the channel fidelity based on the Uhlmann fidelity); at the level of states it reduces to the semidefinite program for the Matsumoto fidelity \cite{matsumoto2010reverse,cree2020fidelity}.
\end{remark}
 
\subsection{Cauchy--Schwarz bounds}
 
\begin{lemma}\label{app:block cs}
Let $A,B\geq0$ in $\mathrm{L}(\mathcal{V})$ and let $X$ be such that $\begin{pmatrix}A&X\\X^*&B\end{pmatrix}\geq0$. Then
\begin{enumerate}
\item $X=A^{1/2}CB^{1/2}$ for some $C$ with $\|C\|_\infty\leq1$;
\item $\|X\|_{\mathrm{ui}}^2\leq\|A\|_{\mathrm{ui}}\,\|B\|_{\mathrm{ui}}$ for every unitarily invariant norm;
\item in particular $\|A\#B\|_{\mathrm{ui}}^2\leq\|A\|_{\mathrm{ui}}\,\|B\|_{\mathrm{ui}}$.
\end{enumerate}
\end{lemma}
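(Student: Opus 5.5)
The plan is to prove (1) first and derive (2) and (3) from it. I take a regularisation route for (1). Fix $\eta>0$ and put $A_\eta=A+\eta\mathbf{1}_{\mathcal{V}}$ and $B_\eta=B+\eta\mathbf{1}_{\mathcal{V}}$. Adding $\eta\mathbf{1}$ to the diagonal blocks keeps the block matrix positive. Congruence by $A_\eta^{-1/2}\oplus B_\eta^{-1/2}$ then gives
\[\begin{pmatrix}\mathbf{1}_{\mathcal{V}}&C_\eta\\ C_\eta^*&\mathbf{1}_{\mathcal{V}}\end{pmatrix}\geq0,\qquad C_\eta=A_\eta^{-1/2}XB_\eta^{-1/2}.\]
The Schur complement criterion then gives $C_\eta^*C_\eta\leq\mathbf{1}_{\mathcal{V}}$, that is, $\|C_\eta\|_\infty\leq1$. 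The unit ball is compact, so along a subsequence $C_\eta\to C$ with $\|C\|_\infty\leq1$. Since $A_\eta^{1/2}\to A^{1/2}$ and $B_\eta^{1/2}\to B^{1/2}$, the identity $X=A_\eta^{1/2}C_\eta B_\eta^{1/2}$ passes to the limit and gives $X=A^{1/2}CB^{1/2}$. As a fallback I would use the Douglas-type argument of Lemma~\ref{lem:factor} on range projections, but the compactness argument avoids any kernel bookkeeping.

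For (2) I plan to use the Cauchy--Schwarz inequality for unitarily invariant norms, $\|Y^*Z\|_{\mathrm{ui}}^2\leq\|Y^*Y\|_{\mathrm{ui}}\|Z^*Z\|_{\mathrm{ui}}$ (Bhatia, Chapter~IV, derived from the Hölder inequality for symmetric gauge functions). I would apply it with $Y=A^{1/2}$ and $Z=CB^{1/2}$. This gives
\[\|X\|_{\mathrm{ui}}^2\leq\|A\|_{\mathrm{ui}}\,\|B^{1/2}C^*CB^{1/2}\|_{\mathrm{ui}}.\]
Since $C^*C\leq\mathbf{1}$, we have $0\leq B^{1/2}C^*CB^{1/2}\leq B$. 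Weyl monotonicity then gives domination of all singular values, so $\|B^{1/2}C^*CB^{1/2}\|_{\mathrm{ui}}\leq\|B\|_{\mathrm{ui}}$ by the Fan dominance theorem.

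The main obstacle is fixing the exact form of the Cauchy--Schwarz inequality, since its validity for all unitarily invariant norms, and not only Schatten norms, is the nontrivial step. If only the Schatten form were available, I would instead prove it through singular values. Horn's inequality gives weak log-majorisation
\[\prod_{j\le k}s_j(Y^*Z)\le\prod_{j\le k}s_j(Y)\,s_j(Z).\]
Then $s_j(Y)s_j(Z)\leq\frac12\big(t\,s_j(Y)^2+t^{-1}s_j(Z)^2\big)$ for every $t>0$. Weak majorisation, with the triangle inequality for the symmetric gauge function and optimisation over $t$, then yields the bound. Since weak log-majorisation implies weak majorisation, this route is sound.

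For (3) I would apply (2) with $X=A\#B$. The block matrix with off-diagonal entries $A\#B$ is positive by Lemma~\ref{app:ando max}, and $A\#B$ is Hermitian, so $X^*=X$.
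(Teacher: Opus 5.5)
Your proposal is correct and follows the paper's proof essentially step for step. You obtain the contraction $C$ by regularisation and compactness, bound the norm with the unitarily invariant Cauchy--Schwarz inequality for $Y=A^{1/2}$, $Z=CB^{1/2}$ together with $0\leq B^{1/2}C^*CB^{1/2}\leq B$ and Weyl monotonicity, and use Lemma~\ref{app:ando max} for part (3). Your fallback through Horn's log-majorisation and the weighted AM--GM bound optimised over $t$ is also a valid, self-contained proof of the Cauchy--Schwarz step, which the paper only cites (Horn--Mathias, Bhatia--Davis).
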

 
\begin{proof}
Part (1) is standard \cite[Chapter~1]{Bhatia2007positive}: for $A,B>0$, positivity of the block matrix is equivalent to $B-X^*A^{-1}X\geq0$, that is, to $\|A^{-1/2}XB^{-1/2}\|_\infty\leq1$; the general case follows by replacing $A,B$ with $A+\eta\mathbf{1}_{\mathcal{V}},B+\eta\mathbf{1}_{\mathcal{V}}$ and passing to a limit point of the contractions $C_\eta$. Part (2) is due to Horn and Mathias \cite{HornMathias1990}: writing $X=Y^*Z$ with $Y=A^{1/2}$ and $Z=CB^{1/2}$, the Cauchy--Schwarz inequality $\|Y^*Z\|_{\mathrm{ui}}^2\leq\|Y^*Y\|_{\mathrm{ui}}\|Z^*Z\|_{\mathrm{ui}}$ for unitarily invariant norms \cite{HornMathias1990,BhatiaDavis1995} gives $\|X\|_{\mathrm{ui}}^2\leq\|A\|_{\mathrm{ui}}\,\|B^{1/2}C^*CB^{1/2}\|_{\mathrm{ui}}$, and $0\leq B^{1/2}C^*CB^{1/2}\leq B$ gives $\|B^{1/2}C^*CB^{1/2}\|_{\mathrm{ui}}\leq\|B\|_{\mathrm{ui}}$ by the Weyl monotonicity principle \cite[Section~III.2]{bhatia1997matrix}. Part (3) follows from part (2) and Lemma~\ref{app:ando max}.
\end{proof}
 
\begin{theorem}\label{app:cs maps}
For all $\Phi,\Psi\in\mathrm{CP}(\mathcal{X},\mathcal{Y})$, every unitarily invariant norm and every $p\in[1,\infty]$,
\[
\|J(\Phi\#\Psi)\|_{\mathrm{ui}}^2\leq\|J(\Phi)\|_{\mathrm{ui}}\,\|J(\Psi)\|_{\mathrm{ui}},\qquad
\|\Phi\#\Psi\|_{p\to p}^2\leq\|\Phi\|_{p\to p}\,\|\Psi\|_{p\to p}.
\]
The second inequality also holds for the completely bounded versions of the $p\to p$ norms, and in particular $|||\Phi\#\Psi|||_1^2\leq|||\Phi|||_1\,|||\Psi|||_1$.
\end{theorem}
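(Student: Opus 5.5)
The first inequality follows directly from Lemma~\ref{app:block cs}(3) with $A=J(\Phi)$ and $B=J(\Psi)$, because $J(\Phi\#\Psi)=J(\Phi)\#J(\Psi)$ by Theorem~\ref{fundamental principle}. All the work is in the norm statement for the maps. There I would use the block characterisation of Proposition~\ref{app:variational}: with $\Xi=\Phi\#\Psi$, the map $\mathcal{B}_\Xi$ is completely positive. Evaluating at a positive semidefinite $X\in\mathrm{L}(\mathcal{X})$ therefore gives
\[
\begin{pmatrix}\Phi(X)&\Xi(X)\\\Xi(X)&\Psi(X)\end{pmatrix}\geq0 .
\]
Lemma~\ref{app:block cs}(2), applied to the Schatten $p$-norm (which is unitarily invariant), then gives $\|\Xi(X)\|_p^2\leq\|\Phi(X)\|_p\|\Psi(X)\|_p$.

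Next take $X\geq0$ with $\|X\|_p\leq1$. The right-hand side is at most $\|\Phi\|_{p\to p}\|\Psi\|_{p\to p}$. Since $\Xi$ is completely positive (Theorem~\ref{fundamental principle}), the supremum defining $\|\Xi\|_{p\to p}$ may be restricted to positive semidefinite $X$ \cite{Watrous2005notes}. Taking that supremum proves the second inequality.

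For the completely bounded versions, I would repeat the argument with $\Phi\otimes\mathbf{1}_{\mathrm{L}(\mathcal{Z})}$, and likewise for $\Psi$ and $\Xi$. Complete positivity of $\mathcal{B}_\Xi$ means that $\mathcal{B}_\Xi\otimes\mathbf{1}_{\mathrm{L}(\mathcal{Z})}$ is positive. After reordering tensor factors, this is exactly the block positivity of
\[
\begin{pmatrix}(\Phi\otimes\mathbf{1})(X)&(\Xi\otimes\mathbf{1})(X)\\(\Xi\otimes\mathbf{1})(X)&(\Psi\otimes\mathbf{1})(X)\end{pmatrix}
\]
for $X\geq0$. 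So the same bound holds for every amplification. Each amplified map is completely positive, so restricting to positive inputs is again allowed, and the supremum over $\mathcal{Z}$ gives the cb inequality. For $p=1$ this is $|||\Phi\#\Psi|||_1^2\leq|||\Phi|||_1|||\Psi|||_1$.

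The main point needing care is the reduction to positive semidefinite inputs. The block inequality only holds for $X\geq0$, so the result cited from \cite{Watrous2005notes} is essential and must be applied to the completely positive maps $\Xi$ and $\Xi\otimes\mathbf{1}$. The Horn--Mathias step is already packaged in Lemma~\ref{app:block cs}. A fallback for $p=1$ alone is to combine Theorem~\ref{effect theorem} with $\tr\Xi(\rho)\leq(\tr\Phi(\rho)\tr\Psi(\rho))^{1/2}$.
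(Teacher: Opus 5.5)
Your proposal is correct and follows the paper's proof: Lemma~\ref{app:block cs}(3) for the Choi-level bound, then block positivity of $\mathcal{B}_{\Phi\#\Psi}(X)$ for $X\geq0$ from Proposition~\ref{app:variational}, Lemma~\ref{app:block cs}(2) in the Schatten $p$-norm, and the reduction to positive inputs from \cite{Watrous2005notes}. The only difference is in the completely bounded step: you amplify $\mathcal{B}_{\Phi\#\Psi}$ using its complete positivity directly, whereas the paper invokes $(\Phi\otimes\mathbf{1}_{\mathrm{L}(\mathcal{V})})\#(\Psi\otimes\mathbf{1}_{\mathrm{L}(\mathcal{V})})=(\Phi\#\Psi)\otimes\mathbf{1}_{\mathrm{L}(\mathcal{V})}$, so your version is slightly more self-contained because it does not need that tensor identity for the geometric mean.
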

 
\begin{proof}
The first inequality is Lemma~\ref{app:block cs}(3) with $A=J(\Phi)$ and $B=J(\Psi)$. For the second, $\mathcal{B}_{\Phi\#\Psi}$ is completely positive by Proposition~\ref{app:variational}, so for positive semidefinite $X$ the block matrix $\mathcal{B}_{\Phi\#\Psi}(X)$ is positive semidefinite, and Lemma~\ref{app:block cs}(2) with the Schatten $p$-norm gives
\[
\|(\Phi\#\Psi)(X)\|_p^2\leq\|\Phi(X)\|_p\,\|\Psi(X)\|_p\leq\|\Phi\|_{p\to p}\,\|\Psi\|_{p\to p}\,\|X\|_p^2 .
\]
Since $\Phi\#\Psi$ is completely positive, it suffices to take the supremum over positive semidefinite $X$ \cite{Watrous2005notes}. For the completely bounded versions, apply the same argument, for an arbitrary complex Euclidean space $\mathcal{V}$, to $\Phi\otimes\mathbf{1}_{\mathrm{L}(\mathcal{V})}$ and $\Psi\otimes\mathbf{1}_{\mathrm{L}(\mathcal{V})}$, using $(\Phi\otimes\mathbf{1}_{\mathrm{L}(\mathcal{V})})\#(\Psi\otimes\mathbf{1}_{\mathrm{L}(\mathcal{V})})=(\Phi\#\Psi)\otimes\mathbf{1}_{\mathrm{L}(\mathcal{V})}$, which holds because $(A\otimes C)\#(B\otimes C)=(A\#B)\otimes C$ for $A,B,C\geq0$ \cite{KuboAndo1980}.
\end{proof}

\section{Acknowledgements}
The author discussed some of these results of this paper with Dr. Javier Alejandro Ch\'avez-Dom\'inguez, who was the key motivator for the author to consider this area of mathematics. I thank him for his continued support and supervision. Some results in this paper were discussed with Hunter Hamby. Any mistakes in this paper are due to the author and neither of the aforementioned people. This research was originally motivated by understanding the papers of Okayasu \cite{Okayasu2026} and Kian \cite{kian2026relative}, but then sprawled into a general study of the Arveson Radon-Nikodym theorem and how this can be used to study norm estimates of channel means. Throughout this process, the works of Bhatia \cite{bhatia1997matrix} and Hiai \cite{hiai2010matrix} have proven invaluable. Once the author noticed the problem with the geometric channel fidelity, stemming from the work of \cite{cree2020fidelity}, they decided to push just how far they could take this problem and the continuity estimates. Hence, during the preparation of this work the author used Claude (Anthropic; models Opus~5.5 and Fable~5.1). The purpose of this at first was to help to build a roadmap of the current literature which exists on the subject. Once trained on the material and what the author wanted to do using the ARN theorem, these models were used in
studying the background literature, searching for references, suggesting proofs and statements, and suggesting edits to the exposition. This paper would not exist without the fundamental work of Arveson \cite{arveson1969subalgebras}, Raginsky \cite{Raginsky2003}, Kian \cite{kian2026relative}, Okayasu \cite{Okayasu2026}, Cree and Sikora \cite{cree2020fidelity}, Mosonyi and Hiai \cite{HiaiMosonyi2017}, and Watrous \cite{watrous2018theory}. All mathematical statements and proofs in
this paper were checked by the author, all references were verified against the original sources, and the author takes full responsibility for the content of the manuscript. I really enjoy math, discussing it with my friends etc. Thanks to my friends at OU for being there to discuss this kind of thing. This is only a work in progress, there are bound to be errors and typos in here. I plan to add more exposition to guide the reader in later versions. 
\bibliographystyle{amsalpha}
\bibliography{Bibliography}
\end{document}